\DeclarePairedDelimiter{\braces}{\{}{\}}		    
\DeclarePairedDelimiter{\parens}{(}{)}		        
\DeclarePairedDelimiter{\ceil}{\lceil}{\rceil}		
\DeclarePairedDelimiter{\floor}{\lfloor}{\rfloor}	
\DeclarePairedDelimiterX{\setdef}[2]{\{}{\}}{#1 \mid #2}		
\DeclarePairedDelimiterXPP{\exclude}[1]{\mathopen{}\setminus}{\{}{\}}{}{#1}
\tikzstyle{vertex}=[circle, draw, inner sep=2pt,  minimum width=1 pt, minimum size=0.1cm]
\tikzstyle{black_vertex}=[circle, draw, inner sep=2pt, fill=black!100, minimum width=1pt, minimum size=0.1cm]
\tikzstyle{gray_vertex}=[circle, draw, inner sep=2pt, fill=lightgray, minimum width=1pt, minimum size=0.1cm]
\newcommand{\bO}{\mathcal{O}}
\newcommand{\sO}{\bO^{\star}}
\newcommand{\paramstyle}[1]{\text{\rm #1}}
\newcommand{\tw}{\paramstyle{tw}}
\newcommand{\pw}{\paramstyle{pw}}
\newcommand{\td}{\paramstyle{td}}
\newcommand{\vc}{\paramstyle{vc}}
\newcommand{\fvs}{\paramstyle{fvs}}
\newcommand{\cw}{\paramstyle{cw}}
\newcommand{\ms}{\paramstyle{ms}}
\newcommand{\chid}{\chi_{\rm d}}
\newcommand{\BDD}{{\sc Bounded Degree Vertex Deletion}}
\newcommand{\SD}{{\sc(3,4)-XSAT}}
\newcommand{\DC}{{\sc Defective Coloring}}
\newcommand{\kMC}{{\sc $k$-Multicolored Clique}}
\newcommand{\IC}{{\sc Improper Coloring}}
\newcommand{\Z}{\mathbb{Z}}
\newcommand{\N}{\mathbb{N}}
\newlength{\RoundedBoxWidth}
\newsavebox{\GrayRoundedBox}
\newenvironment{GrayBox}[1]{
        \setlength{\RoundedBoxWidth}{.93\textwidth}
        \def\boxheading{#1}
        \begin{lrbox}{\GrayRoundedBox}
        \begin{minipage}{\RoundedBoxWidth}
    }{
        \end{minipage}
        \end{lrbox}
        \begin{center}
            \begin{tikzpicture}%
               \node(Text)[draw=black!20,fill=white,rounded corners,%
                     inner sep=2ex,text width=\RoundedBoxWidth]%
                     {\usebox{\GrayRoundedBox}};
                \coordinate(x) at (current bounding box.north west);
                \node [draw=white,rectangle,inner sep=3pt,anchor=north west,fill=white]
                at ($(x)+(6pt,.75em)$) {\boxheading};
            \end{tikzpicture}
        \end{center}
    }
\newenvironment{defproblemx}[2][]
    {
        \noindent\ignorespaces%
        \FrameSep=6pt%
        \parindent=0pt%
        \vspace*{-1.5em}
        \ifthenelse{\isempty{#1}}{%
            \begin{GrayBox}{\textsc{#2}}%
            }{%
              \begin{GrayBox}{\textsc{#2} parameterized by~{#1}}%
            }
        \begin{tabular*}{\textwidth}{@{\hspace{.1em}} >{\itshape} p{1.8cm} p{0.8\textwidth} @{}}%
    }{
        \end{tabular*}%
        \end{GrayBox}%
        \ignorespacesafterend
    }
\newcommand{\defproblema}[3]{%
    \begin{defproblemx}{#1}
    {\bf Instance:}  & #2 \\
    {\bf Goal:} & #3
    \end{defproblemx}
}%
\newcommand{\problemdef}[3]{\defproblema{#1}{#2}{#3}}
\title{Structural Parameterizations for Two Bounded Degree Problems
Revisited\footnote{An extended abstract of this work has been presented at the
31st Annual European Symposium on Algorithms, ESA 2023~\cite{esa/LampisV23}.}}
\titlerunning{Structural Parameterizations for Two Bounded Degree Problems Revisited}
\author{Michael Lampis}{Universit\'{e} Paris-Dauphine, PSL University, CNRS UMR7243, LAMSADE, Paris, France}{michail.lampis@dauphine.fr}{https://orcid.org/0000-0002-5791-0887}{}
\author{Manolis Vasilakis}{Universit\'{e} Paris-Dauphine, PSL University, CNRS UMR7243, LAMSADE, Paris, France}{emmanouil.vasilakis@dauphine.eu}{https://orcid.org/0000-0001-6505-2977}{}
\authorrunning{M. Lampis and M. Vasilakis}
\keywords{ETH, Parameterized Complexity, SETH}
\begin{document}

\maketitle

\begin{abstract}
We revisit two well-studied problems, \textsc{Bounded Degree Vertex Deletion} and
\textsc{Defective Coloring}, where the input is a graph $G$ and a target degree
$\Delta$ and we are asked either to edit or partition the graph so that the
maximum degree becomes bounded by $\Delta$. Both problems are known to be
parameterized intractable for the most well-known structural parameters, such
as treewidth. 

We revisit the parameterization by treewidth, as well as several related
parameters and present a more fine-grained picture of the complexity of both
problems.
In particular:
\begin{itemize}

\item Both problems admit straightforward DP algorithms with table sizes
$(\Delta+2)^\mathrm{tw}$ and $(\chi_\mathrm{d}(\Delta+1))^{\mathrm{tw}}$ respectively, where tw is
the input graph's treewidth and $\chi_\mathrm{d}$ the number of available colors. We
show that, under the SETH, both algorithms are essentially optimal, for any
non-trivial fixed values of $\Delta, \chi_\mathrm{d}$, even if we replace treewidth by
pathwidth. Along the way, we obtain an algorithm for \textsc{Defective
Coloring} with complexity quasi-linear in the table size, thus settling the
complexity of both problems for treewidth and pathwidth. 

\item Given that the standard DP algorithm is optimal for treewidth and
pathwidth, we then go on to consider the more restricted parameter tree-depth.
Here, previously known lower bounds imply that, under the ETH, \textsc{Bounded
Vertex Degree Deletion} and \textsc{Defective Coloring} cannot be solved in time
$n^{o(\sqrt[4]{\mathrm{td}})}$ and $n^{o(\sqrt{\mathrm{td}})}$ respectively, leaving some hope
that a qualitatively faster algorithm than the one for treewidth may be
possible. We close this gap by showing that neither problem can be solved in
time $n^{o(\mathrm{td})}$, under the ETH, by employing a recursive low tree-depth
construction that may be of independent interest.

\item Finally, we consider a structural parameter that is known to be
restrictive enough to render both problems FPT: vertex cover. For both problems
the best known algorithm in this setting has a super-exponential dependence of
the form $\mathrm{vc}^{\mathcal{O}(\mathrm{vc})}$. We show that this is optimal, as an algorithm with
dependence of the form $\mathrm{vc}^{o(\mathrm{vc})}$ would violate the ETH. Our proof relies
on a new application of the technique of $d$-detecting families introduced by
Bonamy et al.~[ToCT 2019].
    
\end{itemize}

Our results, although mostly negative in nature, paint a clear picture
regarding the complexity of both problems in the landscape of parameterized
complexity, since in all cases we provide essentially matching upper and lower
bounds.

\end{abstract}

\newpage

\section{Introduction}\label{sec:introduction}

Parameterized complexity and in particular the study of structural parameters
such as treewidth is one of the most well-developed approaches for dealing with
NP-hard problems on graphs. Treewidth is of course one of the major success
stories of this field, as a plethora of hard problems become fixed-parameter
tractable when parameterized by this parameter.  Naturally, this success has
motivated the effort to trace the limits of the algorithmic power of treewidth
by attempting to understand what are the problems for which treewidth-based
techniques \emph{cannot} work.

When could the treewidth toolbox fail? One common scenario that seems to be
shared by a multitude of problems which are W[1]-hard\footnote{We assume the
reader is familiar with the basics of parameterized complexity theory, as given
in standard textbooks~\cite{books/CyganFKLMPPS15}.} parameterized by treewidth
is when a natural dynamic programming algorithm does exist, but the DP is
forced to store for each vertex of a bag in the tree decomposition an
arbitrarily large integer -- for example a number related to the degree of the
vertex. Our goal in this paper is to study situations of this type and pose the
natural question of whether one can do better than the ``obvious'' DP, by
obtaining an algorithm with better running time, even at the expense of looking
at a parameter more restrictive than treewidth. 

Given the above, we focus on two problems which are arguably among the most
natural representatives of our scenario: \BDD\ and \DC. In both problems the
input is a graph $G$ and a target degree $\Delta$ and we are asked, in the case
of \BDD\ to delete a minimum number of vertices so that the remaining graph has
degree at most $\Delta$, and in the case of \DC\ to partition $G$ into a
minimum number of color classes such that each class induces a graph of degree
at most $\Delta$. Both problems are well-studied, as they generalize classical
problems (\textsc{Vertex Cover} and \textsc{Coloring} respectively) and we
review some of the previous work below. However, the most relevant aspect of
the two problems for our purposes is the following: (i) both problems admit DP
algorithms with complexity of the form $n^{\bO(\tw)}$ and (ii) both
problems are W[1]-hard parameterized by treewidth; in fact, for \DC,
it is even known that assuming the ETH it cannot be solved in time
$n^{o(\tw)}$~\cite{siamdm/BelmonteLM20}.

Since the $n^{\bO(\tw)}$ algorithms follow from standard DP techniques, it
becomes a natural question whether we can do better. Does a better algorithm
exist? Realistically, one could hope for one of two things: either an algorithm
which still handles the problem parameterized by treewidth and in view of the
aforementioned lower bound only attempts a fine-grained improvement in the
running time; or an algorithm which is qualitatively faster at the expense of
using a more restricted parameter. The results of this paper give strong
negative evidence for both questions: if we parameterize by treewidth (and even
by pathwidth) the running time of the standard DP is optimal under the SETH
even for all fixed values of the other relevant parameters ($\Delta$ and the
number of colors $\chid$); while if we parameterize by more restrictive
parameters, such as tree-depth and vertex cover, we obtain lower bound results
(under the ETH) which indicate that the best algorithm is still essentially to
run a form of the standard treewidth DP, even in these much more restricted
cases. Our results thus paint a complete picture of the structurally
parameterized complexity of these two problems and indicate that the standard
DP is optimal in a multitude of restricted cases.

\subparagraph*{Our contribution in more detail.} Following standard techniques,
the two problems admit DP algorithms with tables of sizes $(\Delta+2)^{\tw}$
and $(\chid (\Delta+1))^{\tw}$ respectively. Our first result is a collection
of reductions proving that, assuming the SETH, no algorithm can improve upon
these dynamic programs, even for pathwidth.  More precisely, we show that no
algorithm can solve \BDD\ and \DC\ in time
$(\Delta+2-\varepsilon)^{\pw}n^{\bO(1)}$ and
$(\chid(\Delta+1)-\varepsilon)^{\pw}n^{\bO(1)}$ respectively, for any
$\varepsilon>0$ and for any combination of fixed values of $\Delta, \chid$
(except the combination $\Delta=0$ and $\chid=2$, which trivially makes \DC\
polynomial-time solvable).  Our reductions follow the general strategy pioneered by
Lokshtanov, Marx, and Saurabh~\cite{talg/LokshtanovMS18} and indeed generalize
their results for \textsc{Vertex Cover} and \textsc{Coloring} (which already
covered the case $\Delta=0$). The main difficulty here is being able to cover
all values of the secondary parameters and for technical reasons we are forced
to give separate versions of our reductions to cover the case $\Delta=1$ for
both problems. Along the way we note that, even though an algorithm with
complexity $(\chid \Delta)^{\bO(\tw)}n^{\bO(1)}$ was given for \DC\
in~\cite{siamdm/BelmonteLM20}, it was not known if an algorithm with complexity
$(\chid(\Delta+1))^{\tw}n^{\bO(1)}$ (that is, with a quasi-linear dependence on
the table size) is possible.  For completeness, we settle this by providing an
algorithm of this running time, using the FFT technique proposed by Cygan and
Pilipczuk~\cite{tcs/CyganP10}.  Taking also into account the \BDD\ algorithm of
running time $(\Delta+2)^{\tw}n^{\bO(1)}$ given by van Rooij~\cite{csr/Rooij21},
we have exactly matching upper and lower bounds for both problems, for both
treewidth and pathwidth.

Given that the results above show rather conclusively that the standard DP is
the best algorithm for parameters treewidth and pathwidth, we then move on to a
more restricted case: tree-depth. We recall that graphs of tree-depth $k$ are a
proper subclass of graphs of pathwidth $k$, therefore one could reasonably hope
to obtain a better algorithm for this parameter. This hope may further be
supported by the fact that known lower bounds do not match the complexity of
the standard algorithm. More precisely, the W[1]-hardness reduction given for
\BDD\ parameterized by tree-depth by Ganian, Klute, and
Ordyniak~\cite{algorithmica/GanianKO21} has a quartic blow-up, thus only
implying that no $n^{o(\sqrt[4]{\td})}$ algorithm is possible; while the
reduction given for \DC\ in~\cite{siamdm/BelmonteLM20} has a quadratic blow-up,
only implying that no $n^{o(\sqrt{\td})}$ algorithm is possible (in both cases
under the ETH). Our contribution is to show that both reductions can be
replaced by more efficient reductions which are \emph{linear} in the parameter;
we thus establish that neither problem can be solved in time $n^{o(\td)}$,
implying that the treewidth-based algorithm remains (qualitatively) optimal
even in this restricted case. One interesting aspect of our reductions is that,
rather than using a modulator to a low tree-depth graph, which is common in
such reductions, we use a recursive construction that leverages the full power
of the parameter and may be of further use in tightening other lower bounds for
the parameter tree-depth.

Finally, we move on to a more special case, parameterizing both problems by
vertex cover. Both problems are FPT for this parameter and, since vertex cover
is very restrictive as a parameter, one would hope that, finally, we should be
able to obtain an algorithm that is more clever than the treewidth-based DP.
Somewhat disappointingly, the known FPT algorithms for both problems have
complexity $\vc^{\bO(\vc)}n^{\bO(1)}$~\cite{siamdm/BelmonteLM20}, and the
super-exponential dependence on the parameter is due to the fact that both
algorithms are simple win/win arguments which, in one case, just execute the
standard treewidth DP.  We show that this is justified, as neither problem can
be solved in time $\vc^{o(\vc)}n^{\bO(1)}$ (under the ETH), meaning that the
algorithm that blindly executes the treewidth-based DP in some cases is still
(qualitatively) best possible. We obtain our result by applying the technique
of $d$-detecting families, introduced by
Bonamy, Kowalik, Pilipczuk, Socala, and Wrochna~\cite{toct/BonamyKPSW19}.
Our results indicate that parameterization by
vertex cover is a domain where this promising, but currently under-used,
technique may find more applications in parameterized complexity.

\subparagraph*{Related work.} Both \BDD\ and \DC\ are well-studied problems with
a rich literature. \BDD\ finds application in a multitude of areas, ranging
from computational biology~\cite{jcss/FellowsGMN11} to some related problems in
voting theory~\cite{algorithmica/BetzlerBBNU14,tcs/BetzlerU09}, and its dual
problem, called $s$-\textsc{Plex Detection}, has numerous applications in
social network
analysis~\cite{ior/BalasundaramBH11,jco/McCloskyH12,jco/MoserNS12}.  Various
approximation algorithms are
known~\cite{dam/Fujito98,ciac/Fujito17,ipl/OkunB03}.
The problem has also been extensively studied under the scope of parameterized
complexity.  It is W[2]-hard for unbounded values of $\Delta$ and parameter
$k$ (the value of the optimal)~\cite{jcss/FellowsGMN11}, while it admits a
linear-size kernel parameterized by $k$~\cite{jcss/FellowsGMN11,jcss/Xiao17},
for any fixed $\Delta \geq 0$; numerous FPT algorithms have been presented in
the latter setting~\cite{jco/MoserNS12,dam/NishimuraRT05,cocoon/Xiao16}. FPT
approximation algorithms were given for \BDD\ in~\cite{icalp/Lampis14}
and~\cite{soda/LokshtanovMRSZ21}. As for \DC, which also appears in the
literature as \IC, it was introduced almost 40 years
ago~\cite{andrews1985generalization,jgt/CowenCW86}.  The main motivation behind
this problem comes from the field of telecommunications, where the colors
correspond to available frequencies and the goal is to assign them to
communication nodes; a small amount of interference between neighboring nodes
may be tolerable, which is modeled by the parameter $\Delta$.  There have been
plenty of works on the problem
(see~\cite{jgaa/AngeliniBLD0KMR17,jgt/Archdeacon87,siamdm/BelmonteLM20,dmtcs/BelmonteLM22,jgt/ChoiE19,networks/HavetKS09}
and the references therein),
especially on unit disk graphs and various classes of grids.

The previous work for both problems that is most relevant to us focuses on
their parameterized complexity for structural parameters, such as treewidth. In
this setting, \BDD\ was one of the first problems to be discovered to be
W[1]-hard parameterized by treewidth~\cite{dam/BetzlerBNU12}, though the
problem does become FPT parameterized by $\tw+\Delta$ or $\tw+k$. This hardness
result was more recently improved by Ganian, Klute, and
Ordyniak~\cite{algorithmica/GanianKO21}, who showed that \BDD\ is W[1]-hard
parameterized by tree-depth and feedback vertex set. \DC\ was shown to be
W[1]-hard parameterized by tree-depth (and hence pathwidth and treewidth)
in~\cite{siamdm/BelmonteLM20}. However,~\cite{siamdm/BelmonteLM20} gave a
hardness reduction for pathwidth that is linear in the parameter, and hence
implies a $n^{o(\pw)}$ lower bound for \DC\ under the ETH, but a hardness
reduction for tree-depth that is quadratic (implying only a $n^{o(\sqrt{\td})}$
lower bound). Similarly, the reduction given by~\cite{algorithmica/GanianKO21}
for \BDD\ parameterized by tree-depth is quartic in the parameter, as it goes
through an intermediate problem (a variant of \textsc{Subset Sum}), implying
only a $n^{o(\sqrt[4]{\td})}$ lower bound. \DC\ is known to be FPT
parameterized by vertex cover using a simple win/win argument which applies the
treewidth-based DP in one case (if $\Delta>\vc$, then the graph is always
$2$-colorable; otherwise the standard DP algorithm runs in FPT time), and the
same is true for \BDD\
(if $\Delta \leq \vc$, we can use the aforementioned FPT algorithm for parameters
$\tw + \Delta$,
else assume that $k<\vc$, as otherwise the problem is trivial,
follow the reduction of~\cite{dam/BetzlerBNU12} to
\textsc{Vector Dominating Set} and notice that at most $\vc$ vertices have degree greater than $\Delta$,
thus \textsc{Vector Dominating Set} can be decided in time $\vc^{\bO(\vc)} n^{\bO(1)}$
due to~\cite{cocoa/RamanSS08}).
Hence, the best algorithms for both problems for this parameter have
complexity $\vc^{\bO(\vc)}n^{\bO(1)}$.

The fine-grained analysis of the complexity of structural parameterizations,
such as by treewidth, is an active field of research. The technique of using
the SETH to establish tight running time lower bounds was pioneered by
Lokshtanov, Marx, and Saurabh~\cite{talg/LokshtanovMS18}. Since then, tight
upper and lower bounds are known for a multitude of problems for
parameterizations by treewidth and related parameters, such as pathwidth and
clique-width~\cite{jacm/CyganKN18,talg/CyganNPPRW22,dmtcs/DubloisLP21,tcs/DubloisLP22,talg/FockeMR22,talg/GanianHKOS22,stacs/GroenlandMNS22,algorithmica/HanakaKLOS20,dam/JaffkeJ23,siamcomp/OkrasaR21,jgaa/GeffenJKM20}. One difficulty of the results we present here is
that we need to present a family of reductions: one for each fixed value of
$\Delta$ and $\chid$. There are a few other problems for which families of
tight lower bounds are known, such as $k$-\textsc{Coloring}, for which the
correct dependence is $k^{\tw}$ for treewidth~\cite{talg/LokshtanovMS18} and
$(2^k-2)^{\cw}$ for clique-width~\cite{siamdm/Lampis20} for all $k\ge 3$;
distance $r$-\textsc{Dominating Set}, for which the correct dependence is
$(2r+1)^{\tw}$~\cite{iwpec/BorradaileL16} and $(3r+1)^{\cw}$~\cite{dam/KatsikarelisLP19}, for all $r\ge 1$; and distance
$d$-\textsc{Independent Set}, for which the correct dependence is $d^{\tw}$~\cite{dam/KatsikarelisLP22}. In all these cases, the optimal algorithm is the ``natural'' DP, and
our results for \BDD\ and \DC\ fit this pattern. 

Even though the previous work mentioned above may make it seem that our
SETH-based lower bounds are not surprising, it is important to stress that it
is not a given that the naïve DP should be optimal for our problems. In
particular, \BDD\ falls into a general category of $(\sigma,\rho)$-domination
problems, which were studied recently in~\cite{soda/FockeMINSSW23} (we refer
the reader there for the definition of $(\sigma,\rho)$-domination).  One of the
main results of that work was to show that significant improvements over the
basic DP are indeed possible in some cases, and in particular when one of
$\sigma,\rho$ is cofinite.  Since \BDD\ is the case where
$\sigma=\{0,\ldots,\Delta\}$ and $\rho = \N$ (that is, $\rho$ is
co-finite), our result falls exactly in the territory left uncharted
by~\cite{soda/FockeMINSSW23}, where more efficient algorithms could still be
found (and where indeed~\cite{soda/FockeMINSSW23} did uncover such algorithms
for some values of $\sigma,\rho$).

\begin{table}[t]
    \caption{Lower bounds established in the current work.
    The results of the first row are under SETH, while all the rest under ETH.}
    \label{table:lb_overview}%
    \begin{tabular}{l|c|c}%
        Parameter & \BDD & \DC\\%
        \hline%
        pathwidth + $\Delta$ & $\sO((\Delta + 2 - \varepsilon)^\pw)$ & $\sO((\chid \cdot (\Delta + 1) - \varepsilon)^\pw)$\\%
        treedepth & $n^{o(\td)}$ & $n^{o(\td)}$\\%
        vertex cover & $\vc^{o(\vc)} n^{\bO(1)}$ & $\vc^{o(\vc)} n^{\bO(1)}$\\%
    \end{tabular}%
\end{table}%


\section{Preliminaries}\label{sec:preliminaries}
Throughout the paper we use standard graph notation~\cite{Diestel17},
and we assume familiarity with the basic notions of parameterized complexity~\cite{books/CyganFKLMPPS15}.
All graphs considered are undirected without loops, unless explicitly stated otherwise.
For a graph $G = (V, E)$ and two integers $\chid \geq 1, \Delta \geq 0$,
we say that $G$ admits a $(\chid, \Delta)$-coloring if one can partition $V$ into $\chid$ sets
such that the graph induced by each set has maximum degree at most $\Delta$.
In that case, \DC{} is the problem of deciding, given $G, \chid, \Delta$, whether $G$ admits a $(\chid, \Delta)$-coloring.
Let $\N$ denote the set of non-negative integers.
For $x, y \in \Z$, let $[x, y] = \setdef{z \in \Z}{x \leq z \leq y}$,
while $[x] = [1,x]$.
Standard $\sO$ notation is used to suppress polynomial factors.
For the pathwidth bounds, we use the notion of \emph{mixed search strategy}~\cite{tcs/TakahashiUK95},
where an edge is cleared by either placing a searcher on both of its endpoints or sliding one along the edge.
We rely on a weaker form of the ETH, which states that 3-SAT on instances with $n$ variables and $m$ clauses
cannot be solved in time $2^{o(n+m)}$.

In \kMC, we are given a graph $G = (V, E)$ and a partition of $V$ into $k$ independent sets $V_1, \ldots, V_k$, each of size $n$,
and we are asked to determine whether $G$ contains a $k$-clique.
It is well-known that this problem does not admit any $f(k) n^{o(k)}$ algorithm,
where $f$ is any computable function,
unless the ETH is false~\cite{books/CyganFKLMPPS15}.

In $q$-CSP-$B$, we are given a \textsc{Constraint Satisfaction} (CSP) instance with $n$ variables and $m$ constraints.
The variables take values in a set $Y$ of size $B$, i.e.~$|Y| = B$.
Each constraint involves at most $q$ variables and is given as a list of satisfying assignments for these variables,
where a satisfying assignment is a $q$-tuple of values from the set $Y$ given to each of the $q$ variables.
The following result was shown by Lampis~\cite{siamdm/Lampis20} to be a natural consequence of the SETH,
and has been used in the past for various hardness results~\cite{dmtcs/DubloisLP21,tcs/DubloisLP22,algorithmica/HanakaKLOS20}.

\begin{theorem}[\cite{siamdm/Lampis20}]\label{thm:q_CSP_B_SETH}
    For any $B \geq 2$ it holds that,
    if the SETH is true,
    then for all $\varepsilon > 0$,
    there exists a $q$ such that $n$-variable {\sc $q$-CSP-$B$}
    cannot be solved in time $\sO \parens*{(B - \varepsilon)^n}$.
\end{theorem}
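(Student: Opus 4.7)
The plan is a reduction from $k$-SAT via SETH, using variable packing: I encode Boolean variables in blocks of size $c$, where each block is represented by $\ell := \lceil c/\log_2 B\rceil$ CSP-$B$ variables via a fixed injection $\{0,1\}^c \hookrightarrow [B]^\ell$. Assume SETH and suppose for contradiction that some $\varepsilon>0$ admits, for every $q$, an algorithm solving $n$-variable $q$-CSP-$B$ in time $\sO((B-\varepsilon)^n)$. I will produce, for every $k$, an $\sO((2-\delta)^N)$-time $k$-SAT algorithm with $\delta>0$ depending only on $\varepsilon$, contradicting SETH.

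Given a $k$-SAT instance $\varphi$ on $N$ variables, partition them into $N/c$ blocks of size $c$, with $c$ chosen large (depending only on $\varepsilon$). Encode each block by an $\ell$-tuple of CSP-$B$ variables, so the CSP-$B$ instance has $n = (N/c)\cdot \ell$ variables. For each block I add a single arity-$\ell$ constraint forbidding the $B^\ell - 2^c$ unused tuples; for each $k$-clause of $\varphi$ I add a constraint of arity at most $k\ell$ on the CSP variables encoding the at most $k$ blocks it touches, listing explicitly all joint block-assignments that satisfy the clause (which takes constant time once $c$ and $k$ are fixed). Setting $q := k\ell$, the hypothetical algorithm now solves the constructed CSP instance in time
\[
\sO\bigl((B-\varepsilon)^{N\ell/c}\bigr) \;=\; \sO\bigl(2^{N\log_2(B-\varepsilon)\cdot\ell/c}\bigr).
\]

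Since $\ell/c \to 1/\log_2 B$ as $c \to \infty$ and $\log_2(B-\varepsilon)/\log_2 B < 1$, fixing $c$ large enough (depending only on $\varepsilon$) makes $\gamma := \log_2(B-\varepsilon)\cdot \ell/c < 1$; then $\delta := 2 - 2^{\gamma} > 0$ gives $k$-SAT in time $\sO((2-\delta)^N)$ for every $k$, refuting SETH. The main subtle point I expect is the case where $B$ is not a power of two: the rounding $\lceil c/\log_2 B\rceil$ inflates $\ell/c$ strictly above $1/\log_2 B$, so one must verify that this inflation can be made smaller than the slack $1-\log_2(B-\varepsilon)/\log_2 B$ by taking $c$ large. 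Everything else—polynomial-time constructibility of the constraints, and the fact that $q$ is a fixed constant once $\varepsilon$ and $k$ are chosen—is routine.
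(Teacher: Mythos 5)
Your proposal is correct; note that the paper itself gives no proof of this theorem but imports it from \cite{siamdm/Lampis20}, and your packing reduction from $k$-SAT --- encoding blocks of $c$ Boolean variables by $\lceil c/\log_2 B\rceil$ $B$-ary variables, with $c$ chosen large enough to absorb both the ceiling loss and the gap between $\log_2(B-\varepsilon)$ and $\log_2 B$ --- is essentially the standard argument behind that cited result. Your quantifier handling (a single $\delta>0$ depending only on $\varepsilon$ and $B$, valid for every $k$, then invoking SETH) and your treatment of the non-power-of-two case are both sound.
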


\section{Treewidth and Maximum Degree}\label{sec:tw_plus_Delta}

In this section we present tight lower bounds on the complexity of solving both
\BDD{} and \DC{} parameterized by the treewidth of the input graph plus the target degree.
In the case of \BDD, this lower bound matches the algorithm of~\cite{soda/FockeMINSSW23},
while for \DC{} we develop an algorithm of matching running time in~\cref{sec:dc_tw_algo}.

Both reductions are similar in nature:
we start from an instance $\phi$ of $q$-CSP-$B$,
and produce an equivalent instance on a graph of pathwidth $\pw = n + \bO(1)$,
where $n$ denotes the number of variables of $\phi$.
An interesting observation however, is that for both problems
we have to distinguish between the case where $\Delta = 1$ and $\Delta \geq 2$;
the whole construction becomes much more complicated in the latter case.

\subsection{Bounded Degree Vertex Deletion}\label{subsec:bdd_tw_lb}

In the following, we will present a reduction from $q$-CSP-$B$ to \BDD,
for any fixed $\Delta \geq 1$, where $\Delta = B-2$.
In that case, if there exists a $\sO ((\Delta + 2 - \varepsilon)^\pw)$ algorithm for \BDD,
where $\varepsilon > 0$,
then there exists a $\sO ((B - \varepsilon)^n)$ algorithm for $q$-CSP-$B$,
for any constant $q$,
which due to~\cref{thm:q_CSP_B_SETH} results in SETH failing.

Our reduction is based on the construction of ``long paths'' of \emph{Block gadgets},
that are serially connected in a path-like manner.
Each such ``path'' corresponds to a variable of the given CSP,
while each column of the construction is associated with one of its constraints.
Intuitively, our aim is to embed the $B^n$ possible variable assignments into the $(\Delta + 2)^\tw$ states of some optimal
dynamic program that would solve the problem on our constructed instance.

Below, we present a sequence of gadgets used in our reduction.
The aforementioned block gadgets,
which allow a solution to choose among $\Delta + 2$ reasonable choices,
are the main ingredient.
Notice that these gadgets will differ significantly depending on whether $\Delta$ is equal to $1$ or not.
We connect these gadgets in a path-like manner that ensures that choices remain consistent throughout the construction,
and connect constraint gadgets in different ``columns'' of the constructed grid in a way that allows us to verify if the choice made represents a satisfying assignment,
without significantly increasing the graph's pathwidth.

\begin{theorem}
    For any constant $\varepsilon > 0$,
    there is no $\sO((3 - \varepsilon)^{\pw})$ algorithm
    deciding \BDD{} for $\Delta = 1$,
    where \pw{} denotes the input graph's pathwidth,
    unless the SETH is false.
\end{theorem}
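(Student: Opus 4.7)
The plan is to reduce from $q$-CSP-$3$ (the case $B=3=\Delta+2$) to $\BDD$ with $\Delta=1$. Given a $q$-CSP-$3$ instance $\phi$ on $n$ variables and $m$ constraints, I would build in polynomial time a graph $G$ and an integer $k^{*}$ such that $G$ admits a deletion set of size at most $k^{*}$ leaving maximum degree $1$ if and only if $\phi$ is satisfiable, while $\pw(G) = n + \bO(1)$ (with the hidden constant depending on $q$). A putative $\sO((3-\varepsilon)^{\pw})$ algorithm would then solve $q$-CSP-$3$ in time $\sO((3-\varepsilon)^{n})$, contradicting \cref{thm:q_CSP_B_SETH} for $q$ chosen large enough.

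The overall layout is grid-like: $n$ horizontal wires (one per variable) and $\bO(m)$ columns (one per constraint, with buffering in between). Each wire is a long chain of \emph{block gadgets} $B_{i}^{1}, B_{i}^{2}, \ldots$ with the property that every block has exactly three equally cheap deletion patterns, which are in bijection with the three possible values of the corresponding variable. Between two consecutive blocks on the same wire I would insert a \emph{consistency} subgraph whose role is to make any change of pattern strictly suboptimal, so that any optimal solution commits to a single value per wire. At every constraint column, a constant-size \emph{constraint gadget} attaches to the blocks of the variables appearing in the constraint, and I would arrange it so that the gadget is feasible with $\Delta=1$ exactly when the wire values form a satisfying assignment of the constraint.

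The technical crux is the design of the block gadget itself in the very tight $\Delta=1$ regime. Because every surviving vertex can retain at most one surviving neighbour, there is essentially no slack: the three patterns should correspond naturally to the three local DP states per vertex (deleted, kept-with-current-degree-$0$, kept-with-current-degree-$1$); they must all have the same deletion cost; they must propagate unambiguously to the next block through the consistency subgraph; and they must still leave sockets for the constraint gadget to probe the wire, all without inadvertently introducing a cheaper fourth pattern. This tightness is precisely why $\Delta=1$ requires a separate construction from the generic $\Delta\ge 2$ case, where an additional unit of degree budget allows much more freedom when wiring blocks together.

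Finally, the pathwidth bound would be established via a mixed search strategy that sweeps the grid from left to right, maintaining one searcher per wire on the ``active'' block of that row plus $\bO(1)$ additional searchers on the current column's constraint gadget and on the two neighbouring consistency subgraphs. This gives a path decomposition of width $n + \bO(1)$, which, combined with the correctness of the reduction and \cref{thm:q_CSP_B_SETH}, yields the claimed lower bound. The main obstacle I anticipate is the block-gadget construction itself: proving that it has exactly three optimal patterns, that any deviation is strictly penalised by the consistency subgraph, and that the constraint gadget correctly reads the wire without disturbing the optimum, will require a careful case analysis that I expect to be the main technical burden of the proof.
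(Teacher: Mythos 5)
Your high-level frame (reduction from $q$-CSP-$3$, one wire per variable built from block gadgets with three states encoding the variable's value, constant-size constraint gadgets per column, and a left-to-right mixed-search sweep giving $\pw = n + \bO(1)$) matches the paper. However, there is a genuine gap at exactly the point you flag as the ``technical crux'': you propose a consistency subgraph between consecutive blocks that makes \emph{any} change of pattern strictly suboptimal, so that every optimal solution commits to one value per wire. No such gadget is constructed, and this is not how the paper's proof works --- nor is it clear it can work in this tight regime. In the paper the budget is exactly tight (exactly one deletion per block gadget, exactly $s(c_j)-1$ deletions per constraint clique), and with a tight budget a local gadget cannot strictly penalise a state change: the construction only guarantees \emph{monotone} propagation (deleting the first vertex of a block forces the first vertex in all later blocks; deleting the middle vertex forces first-or-middle later; deleting the last vertex forces nothing), so a wire can legitimately change its pattern up to twice at no extra cost. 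The missing idea is how to neutralise these cheap ``drifts'': the paper does it by concatenating $\kappa = 2n+1$ copies of the whole variable/constraint grid, observing that each of the $n$ wires contributes at most $2$ inconsistencies over the entire concatenation, and then using pigeonhole to find one copy that is fully consistent, from which the satisfying assignment is read off. Without either this replication-plus-pigeonhole argument or an explicit equality-forcing gadget (which you would additionally have to reconcile with the tight budget accounting and with keeping $\Delta=1$ feasibility in the completeness direction), the core of the correctness proof is absent.

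A secondary, smaller point: your phrasing ``strictly suboptimal'' presupposes an optimisation view, but the instance is a decision instance with budget $k^{*}$; to penalise deviations you would need slack between the forced lower bound on deletions and $k^{*}$, and any such slack must be controlled so that it cannot be spent to cheat a constraint gadget instead. The paper avoids this tension entirely by choosing $k^{*}$ equal to the forced minimum ($\kappa \cdot k'$), which pins down the solution's structure (one deletion per block, clique-minus-one per constraint gadget) and leaves only the bounded-drift issue, handled as above. Your block gadget itself can be as simple as a three-vertex path (the three states being which of the three vertices is deleted), with the constraint clique vertices attached directly to these path vertices and a pendant leaf on each clique vertex; the elaborate ``sockets plus consistency subgraph'' machinery you anticipate is not needed once the replication argument is in place.
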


\begin{proof}
    Fix some positive $\varepsilon > 0$ for which we want to prove the theorem.
    We will reduce $q$-CSP-$3$, for some $q$ that is a constant that only depends on $\varepsilon$,
    to \BDD{} for $\Delta = 1$ in a way that ensures that if the resulting \BDD{} instance could be solved in time $\sO((3-\varepsilon)^{\pw})$,
    then we would obtain an algorithm for $q$-CSP-$3$ that would contradict the SETH due to~\cref{thm:q_CSP_B_SETH}.
    To this end, let $\phi$ be an instance of $q$-CSP-$3$ of $n$ variables $X = \setdef{x_i}{i \in [n]}$ taking values over the set $Y = [3]$
    and $m$ constraints $C = \setdef{c_j}{j \in [m]}$.
    For each constraint we are given a set of at most $q$ variables which are involved in this constraint and a list of satisfying assignments for these variables,
    the size of which is denoted by $s : C \rightarrow [3^q]$,
    i.e.~$s(c_j) \leq 3^q = \bO(1)$ denotes the number of satisfying assignments for constraint $c_j$.
    We will construct in polynomial time an equivalent instance $\mathcal{I} = (G,k)$ of \BDD{} for $\Delta = 1$,
    where $\pw(G) \leq n + \bO (1)$.

    \proofsubparagraph*{Block and Variable Gadgets.}
    For every variable $x_i$ and every constraint $c_j$, construct a path of $3$ vertices $p^1_{i,j}, p^2_{i,j}$ and $p^3_{i,j}$,
    which comprises the \emph{block gadget} $\hat{B}_{i,j}$.
    Intuitively, we will map the deletion of $p^y_{i,j}$ with an assignment where $x_i$ receives value $y$.
    Next, for $j \in [m-1]$, we add an
    edge between $p^3_{i,j}$ and $p^1_{i,j+1}$, thus resulting in $n$ paths $P_1,
    \ldots, P_n$ of length $3m$, called \emph{variable gadgets}.
    
    \begin{figure}[ht]
    \centering
    \begin{tikzpicture}[scale=0.8, transform shape]
    
    \node[] () at (0,5.5) {$P_1$};
    \node[] () at (0,3.5) {$P_2$};
    \node[] () at (0,2.5) {$\vdots$};
    \node[] () at (0,1.5) {$P_n$};
    
    \node[vertex] (vl11) at (0.5,5.5) {};
    \node[] () at (0.7,5.2) {$p^1_{1,1}$};
    \node[vertex] (vl21) at (0.5,3.5) {};
    \node[vertex] (vln1) at (0.5,1.5) {};
    
    \node[vertex] (vl12) at (2.5,5.5) {};
    \node[] () at (2.7,5.2) {$p^2_{1,1}$};
    \node[vertex] (vl22) at (2.5,3.5) {};
    \node[vertex] (vln2) at (2.5,1.5) {};
    
    \node[vertex] (vl13) at (4.5,5.5) {};
    \node[] () at (4.3,5.2) {$p^3_{1,1}$};
    \node[vertex] (vl23) at (4.5,3.5) {};
    \node[vertex] (vln3) at (4.5,1.5) {};
    
    \node[vertex] (vl14) at (6.5,5.5) {};
    \node[] () at (6.7,5.2) {$p^1_{1,2}$};
    \node[vertex] (vl24) at (6.5,3.5) {};
    \node[vertex] (vln4) at (6.5,1.5) {};
    
    \node[vertex] (vl15) at (8.5,5.5) {};
    \node[] () at (8.7,5.2) {$p^2_{1,2}$};
    \node[vertex] (vl25) at (8.5,3.5) {};
    \node[vertex] (vln5) at (8.5,1.5) {};
    
    \node[vertex] (vl16) at (10.5,5.5) {};
    \node[] () at (10.3,5.2) {$p^3_{1,2}$};
    \node[vertex] (vl26) at (10.5,3.5) {};
    \node[vertex] (vln6) at (10.5,1.5) {};
    
    \node[] (vl17) at (12.5,5.5) {$\ldots$};
    \node[] (vl27) at (12.5,3.5) {$\ldots$};
    \node[] (vln7) at (12.5,1.5) {$\ldots$};
    
    
    \draw[dashed] (0.25,1.3) rectangle (4.75,5.8);
    \draw[dashed] (6.25,1.3) rectangle (10.75,5.8);

    
    \draw[] (vl11)--(vl12)--(vl13)--(vl14)--(vl15)--(vl16)--(vl17);
    \draw[] (vl21)--(vl22)--(vl23)--(vl24)--(vl25)--(vl26)--(vl27);
    \draw[] (vln1)--(vln2)--(vln3)--(vln4)--(vln5)--(vln6)--(vln7);
    
    \end{tikzpicture}
    \caption{Sequences of block gadgets comprise the variable gadgets.}
    \label{fig:bdd_tw_lb_D0_block_and_variable_gadgets}
    \end{figure}
    
    \proofsubparagraph*{Constraint Gadget.}
    This gadget is responsible for determining constraint satisfaction,
    based on the choices made in the rest of the graph.
    For constraint $c_j$, construct the \emph{constraint gadget} $\hat{C}_j$ as follows:
    \begin{itemize}
        \item construct a clique of $s(c_j)$ vertices $v^j_1, \ldots, v^j_{s(c_j)}$,
        and fix an arbitrary one-to-one mapping between those vertices and the satisfying assignments of $c_j$,
    
        \item attach to each vertex $v^j_\ell$ a leaf $l^j_\ell$,
    
        \item if variable $x_i$ is involved in the constraint $c_j$ and $v^j_\ell$ corresponds to an assignment where $x_i$ has value $y \in Y$,
        add an edge between $v^j_\ell$ and $p^y_{i,j}$.
    \end{itemize}
    
    Let graph $G_0$ be the graph containing all variable gadgets $P_i$ as well as
    all the constraint gadgets $\hat{C}_j$, for $i \in [n]$ and $j \in [m]$.
    To construct graph $G$,
    introduce $\kappa = 2n + 1$ copies $G_1, \ldots, G_\kappa$ of $G_0$, such that
    they are connected sequentially as follows: for $i \in [n]$ and $j \in [\kappa - 1]$,
    add an edge between $p^3_{i,m}(G_j)$ and $p^1_{i,1}(G_{j+1})$,
    where $p^y_{i,j}(G_z)$ denotes the vertex $p^y_{i,j}$ of graph $G_z$.
    We refer to the block gadget $\hat{B}_{i,j}$, to the variable gadget $P_i$, and to the
    constraint gadget $\hat{C}_j$ of $G_z$ as $\hat{B}^{G_z}_{i,j}$, $P^{G_z}_i$,
    and $\hat{C}^{G_z}_j$ respectively.
    Let $\mathcal{P}^i$ denote the path resulting from $P^{G_1}_i, \ldots, P^{G_\kappa}_i$.
    Set $k = \kappa \cdot k'$,
    where $k' = \sum_{j=1}^m (s(c_j) - 1 + n) = m \cdot n + \sum_{j=1}^m (s(c_j) - 1)$,
    and let $\mathcal{I} = (G,k)$ denote the instance of \BDD{} for $\Delta = 1$.

    \begin{lemma}\label{lem:bdd_tw_lb_D0_cor1}
        If $\phi$ is satisfiable,
        then there exists $S \subseteq V(G)$ such that $G - S$ has maximum degree at most $1$ and $|S| \leq k$.
    \end{lemma}
    
    \begin{proof}
        Let $f : X \rightarrow Y$ denote an assignment which satisfies all the constraints $c_1, \ldots, c_m$.
        In that case, let $p^y_{i,j}(G_z) \in S$, where $j \in [m]$ and $z \in [\kappa]$, if $f(x_i) = y$.
        In other words, from every path $\mathcal{P}^i$, include in $S$ either the first, the second,
        or the third vertex out of every block gadget $\hat{B}_{i,j}$, depending on the value of $f(x_i)$.
        Moreover, since $f$ is a satisfying assignment, for every constraint $c_j$
        there exists a satisfying assignment which is a restriction of $f$ to the at most
        $q$ involved variables of $c_j$ and
        corresponds to a vertex $v^j_\ell(G_z)$ of $\hat{C}^{G_z}_j$.
        Let $v^j_{\ell'}(G_z) \in S$ if $\ell' \neq \ell$.
        It holds that $|S| = \kappa \cdot (m n + \sum_{i=1}^m (s(c_j)-1)) = k$.
    
        It remains to prove that $G - S$ has maximum degree at most $1$.
        Consider the constraint gadget $\hat{C}^{G_z}_j$.
        Any leaf $l^j_i$ has degree either $0$ or $1$ in $G-S$.
        Let $v^j_\ell (G_z) \notin S$ denote the vertex of the clique of the constraint gadget which does not belong to $S$.
        It holds that $\deg_{G-S}(v^j_\ell (G_z)) = 1$, since
        \begin{itemize}
            \item $l^j_\ell (G_z) \notin S$,
            \item $v^j_{\ell'} (G_z) \in S$, for $\ell' \neq \ell$,
            \item $v^j_\ell (G_z)$ has an edge with $p^y_{i,j}(G_z)$, where $y \in Y$,
            only if $x_i$ is involved in $c_j$ and $v^j_\ell (G_z)$ corresponds to an assignment $g$
            where variable $x_i$ has value $y$.
            However, since this assignment is a restriction of $f$, it follows that $g(x_i) = f(x_i) = y$,
            thus $p^y_{i,j}(G_z) \in S$.
        \end{itemize}
        Consequently, there are no edges between vertices of the constraint and the block gadgets in $G-S$.
        Notice that from every three consecutive vertices in $\mathcal{P}^i$, one belongs to $S$.
        Therefore, it holds that if $p^y_{i,j}(G_z) \notin S$, at most one of its neighbors does not belong to $S$,
        thus its degree is at most $1$ in $G-S$.
    \end{proof}
    
    \begin{lemma}\label{lem:bdd_tw_lb_D0_cor2}
        If there exists $S \subseteq V(G)$ such that $G - S$ has maximum degree at most $1$ and $|S| \leq k$,
        then $\phi$ is satisfiable.
    \end{lemma}
    
    \begin{proof}
        Let $S \subseteq V(G)$ such that $G - S$ has maximum degree at most $1$ and $|S| \leq k$.
        First we will prove that $S$ contains a single vertex $p^y_{i,j}$ from each block gadget $\hat{B}_{i,j}$,
        as well as vertices $v^j_l$, where $l \in [s(c_j)] \setminus \braces{\ell}$ for some $\ell \in [s(c_j)]$ from each constraint gadget $\hat{C}_j$.
        
        Notice that $S$ contains at least $1$ vertex from every block gadget $\hat{B}_{i,j}$,
        since otherwise the vertex $p^2_{i,j}$ has degree at least $2$ in $G-S$.
        Additionally, $S$ contains at least $s(c_j) - 1$ vertices from every constraint gadget $\hat{C}_j$,
        since each such gadget has a clique of size $s(c_j)$, every vertex of which has a leaf attached.    
        Therefore, it holds that $|S \cap V(G_z)| \geq k' = m \cdot n + \sum_{j=1}^m (s(c_j) - 1)$,
        for all $z \in [\kappa]$.
        Since $|S| \leq \kappa \cdot k'$, it follows that $|S \cap V(G_z)| = k'$.
        Consequently, $S$ contains exactly one vertex per block gadget and
        exactly $s(c_j) - 1$ vertices per constraint gadget $\hat{C}_j$, which can only be vertices of the clique.
    
        Notice that then it holds that $|S \cap V(G_z)| = k'$, for all $z \in [\kappa]$.
        We will say that an \emph{inconsistency} occurs in a variable gadget $P^{G_z}_i$
        if there exist two consecutive block gadgets $\hat{B}^{G_z}_{i,j}, \hat{B}^{G_z}_{i,j+1}$ such that
        $p^y_{i,j}, p^{y'}_{i,j+1} \in S$, for $y \neq y'$.
        Moreover, we will say that $G_z$ is \emph{consistent} if no inconsistency occurs in any of the variable gadgets $P^{G_z}_i$, for $i \in [n]$.
    
        \begin{claim}
            There exists $\pi \in [\kappa]$ such that $G_\pi$ is consistent.
        \end{claim}
    
        \begin{claimproof}
            Notice that $S$ contains $\kappa \cdot m$ vertices from every path $\mathcal{P}^i$,
            since the latter is comprised of that many block gadgets.
            We will prove that every path $\mathcal{P}^i$ may induce at most $2$ inconsistencies.
            In that case, since there are $n$ such paths and $\kappa = 2n+1$ copies of $G_0$,
            due to the pigeonhole principle there exists some $G_\pi$ with no inconsistencies.
        
            Consider a path $\mathcal{P}^i$ as well as a block gadget $\hat{B}^{G_z}_{i,j}$,
            for some $z \in [\kappa]$ and $j \in [m]$.
            Let $N(\hat{B}^{G_z}_{i,j})$ denote the block gadget right of $\hat{B}^{G_z}_{i,j}$, consisting of vertices $n_1 (\hat{B}^{G_z}_{i,j})$, $n_2 (\hat{B}^{G_z}_{i,j})$ and $n_3 (\hat{B}^{G_z}_{i,j})$,
            where $p^3_{i,j}$ and $n_1 (\hat{B}^{G_z}_{i,j})$ are connected via an edge in $G$.
            Moreover, let $\hat{B}^{G_{z'}}_{i,j'}$, where either a) $z' = z$ and $j' > j$ or
            b) $z' > z$ and $j' \in [m]$,
            denote some block gadget which appears to the right of $\hat{B}^{G_z}_{i,j}$.
            In that case:
            \begin{itemize}
                \item If $p^1_{i,j}(G_z) \in S$, then $p^1_{i,j'} (G_{z'}) \in S$.
                It holds for $N(\hat{B}^{G_z}_{i,j})$, since $p^3_{i,j} (G_z)$ has degree $2$ in $G-S$ otherwise.
                Suppose that it holds for all block gadgets after $\hat{B}^{G_z}_{i,j}$ and up to some block gadget $\hat{B}^{G_{z^*}}_{i,j^*}$.
                Then it holds for $N(\hat{B}^{G_{z^*}}_{i,j^*})$ as well, since $p^3_{i,j^*} (G_{z^*})$ has degree $2$ in $G-S$ otherwise.
            
                \item If $p^2_{i,j}(G_z) \in S$, then, either $p^1_{i,j'} (G_{z'}) \in S$ or $p^2_{i,j'} (G_{z'}) \in S$.
                It holds for $N(\hat{B}^{G_z}_{i,j})$, since $n_3 (\hat{B}^{G_z}_{i,j}) \in S$ otherwise and it follows that $n_1 (\hat{B}^{G_z}_{i,j})$ has degree $2$ in $G-S$.
                Suppose that it holds for all block gadgets after $\hat{B}^{G_z}_{i,j}$ and up to some block gadget $\hat{B}^{G_{z^*}}_{i,j^*}$.
                Then it holds for $N(\hat{B}^{G_{z^*}}_{i,j^*})$ as well, since $n_3 (\hat{B}^{G_{z^*}}_{i,j^*}) \in S$ otherwise and
                it follows that $n_1 (\hat{B}^{G_{z^*}}_{i,j^*})$ has degree $2$ in $G-S$.
            
                \item Lastly, if $p^3_{i,j}(G_z) \in S$, then
                $S$ contains a single vertex from every subsequent block gadget $\hat{B}^{G_{z'}}_{i,j'}$,
                since it contains exactly one vertex per block gadget.
            \end{itemize}
            Thus, it follows that every path can induce at most $2$ inconsistencies, and since there is a total of $n$ paths,
            there exists a copy $G_\pi$ which is consistent.
        \end{claimproof}
    
        Now, consider the assignment $f : X \rightarrow Y$, where $f(x_i) = y$ if $p^y_{i,j}(G_\pi) \in S$.
        This is a valid assignment, since $G_\pi$ is consistent and a single vertex is contained in $S$ from each block gadget.
        We will prove that it satisfies all the constraints.
        Consider a constraint $c_j$.
        Regarding the constraint gadget $\hat{C}^{G_\pi}_j$, it holds that $S$ includes exactly $s(c_j) - 1$ vertices,
        none of which is a leaf vertex.
        Let $v^j_\ell (G_\pi) \notin S$ be the only vertex which is part of the clique and does not belong to $S$.
        Since $l^j_\ell (G_\pi) \notin S$, it follows that every neighbor of $v^j_\ell (G_\pi)$ in the block gadgets $\hat{B}^{G_\pi}_{i,j}$ belongs to $S$.
        In that case, the satisfying assignment corresponding to $v^j_\ell (G_\pi)$ is a restriction of $f$, thus $f$ satisfies $c_j$.
        Since this holds for any $j$, $\phi$ is satisfied.
    \end{proof}
    
    \begin{lemma}\label{lem:bdd_tw_lb_D0_pw}
        It holds that $\pw(G) \leq n + \bO(1)$.
    \end{lemma}
    
    \begin{proof}
        We will prove the statement by providing a mixed search strategy to clean $G$ using at most this many searchers simultaneously.
        Since for the mixed search number \ms{} it holds that $\pw(G) \leq \ms(G) \leq \pw(G) + 1$,
        we will show that $\ms(G) \leq n + 2 \cdot 3^q$ and the statement will follow.
    
        Start with graph $G_1$.
        Place $2s(c_1)$ searchers to all the vertices of $\hat{C}^{G_1}_1$,
        as well as $n$ searchers on vertices $p^1_{i,1} (G_1)$, for $i \in [n]$.
        In this way, all the edges of the constraint gadget are cleared.
        Next we will describe the procedure to clear $\hat{B}^{G_1}_{i,1}$.
        Move the searcher along the edge connecting $p^1_{i,1} (G_1)$ to $p^2_{i,1} (G_1)$ and then along the edge connecting $p^2_{i,1} (G_1)$ and $p^3_{i,1} (G_1)$.
        Repeat the whole process for all $i$, thus clearing all block gadgets $\hat{B}^{G_1}_{i,1}$ as well as the edges between those block gadgets
        and $\hat{C}^{G_1}_1$.
    
        In order to clear the rest of the graph, we first move the searchers from $\hat{C}^{G_z}_j$ to $\hat{C}^{G_z}_{j+1}$ if $j < m$
        or to $\hat{C}^{G_{z+1}}_1$ alternatively (possibly introducing new searchers if required),
        and then proceed by clearing the corresponding block gadgets, by first sliding all searchers from $p^3_{i,j} (G_z)$
        to $p^1_{i,j+1} (G_z)$ if $j < m$ or to $p^1_{i,1} (G_{z+1})$ alternatively.
        By repeating this procedure, in the end we clear all the edges of $G$ by using at most $n + 2 \cdot 3^q = n + \bO (1)$ searchers.
    \end{proof}
    
    Therefore, in polynomial time we can construct a graph $G$,
    of pathwidth $\pw(G) \leq n + \bO(1)$ due to~\cref{lem:bdd_tw_lb_D0_pw},
    such that, due to~\cref{lem:bdd_tw_lb_D0_cor1,lem:bdd_tw_lb_D0_cor2},
    deciding whether there exists $S \subseteq V(G)$ of size $|S| \leq k$ and $G - S$ has maximum degree at most $1$
    is equivalent to deciding whether $\phi$ is satisfiable.
    In that case, assuming there exists a $\sO((3-\varepsilon)^{\pw(G)})$ algorithm for \BDD{} for $\Delta = 1$,
    one could decide $q$-CSP-$3$ in time $\sO((3-\varepsilon)^{\pw(G)}) = \sO((3-\varepsilon)^{n + \bO(1)}) = \sO((3-\varepsilon)^n)$
    for any constant $q$,
    which contradicts the SETH due to~\cref{thm:q_CSP_B_SETH}.
\end{proof}

\begin{theorem}
    For any constant $\varepsilon > 0$,
    there is no $\sO((\Delta + 2 - \varepsilon)^{\pw})$ algorithm
    deciding \BDD{} for $\Delta \geq 2$,
    where \pw{} denotes the input graph's pathwidth,
    unless the SETH is false.
\end{theorem}

\begin{proof}
    Fix some positive $\varepsilon > 0$ for which we want to prove the theorem.
    Let $B \geq 4$.
    We will reduce $q$-CSP-$B$, for some $q$ that is a constant that only depends on $\varepsilon$,
    to \BDD{} for $\Delta = B - 2$ in a way that ensures that if the resulting \BDD{} instance could be solved in time $\sO((\Delta + 2 - \varepsilon)^{\pw})$,
    then we would obtain an algorithm for $q$-CSP-$B$ that would contradict the SETH due to~\cref{thm:q_CSP_B_SETH}.
    To this end, let $\phi$ be an instance of $q$-CSP-$B$ of $n$ variables $X = \setdef{x_i}{i \in [n]}$ taking values over the set $Y = [0, \Delta + 1]$
    and $m$ constraints $C = \setdef{c_j}{j \in [m]}$, where $\Delta = B - 2$.
    For each constraint we are given a set of at most $q$ variables which are involved in this constraint and a list of satisfying assignments for these variables,
    the size of which is denoted by $s : C \rightarrow [B^q]$,
    i.e.~$s(c_j) \leq B^q = \bO(1)$ denotes the number of satisfying assignments for constraint $c_j$.
    We will construct in polynomial time an equivalent instance $\mathcal{I} = (G,k)$ of \BDD{} for $\Delta = B - 2$,
    where $\pw(G) \leq n + \bO (1)$.
    
    \proofsubparagraph*{Block and Variable Gadgets.}
    For every variable $x_i$ and every constraint $c_j$,
    construct a \emph{block gadget} $\hat{B}_{i,j}$, as depicted in~\cref{fig:bdd_tw_lb_D1_block_gadget}.
    In order to do so, we introduce vertices $a, a', b, \chi_i, y_i$, and $q_i$, for $i \in [\Delta]$.
    Then, we attach $\Delta$ leaves on $b$ and $\Delta - 1$ leaves on vertices $q_i$.
    Finally, we add edges $\braces{a,b}, \braces{a',b}, \braces{a,\chi_i}, \braces{a',y_i}, \braces{\chi_i,q_i}$,
    and $\braces{y_i, q_i}$.
    Intuitively, we map the deletion of $b$ as well as of $p$ vertices out of $\setdef{\chi_i}{i \in [\Delta]}$ and $\Delta - p$ vertices out of $\setdef{y_i}{i \in [\Delta]}$ with
    an assignment where $x_i$ receives value $p \in [0, \Delta]$,
    while the deletion of $a$ maps with an assignment where $x_i$ receives value $\Delta+1$.
    Next, for $j \in [m-1]$, we serially connect the block gadgets $\hat{B}_{i,j}$ and $\hat{B}_{i,j+1}$ so that the vertex $a'$ of $\hat{B}_{i,j}$ is the vertex $a$ of $\hat{B}_{i,j+1}$,
    thus resulting in $n$ ``paths'' $P_1, \ldots, P_n$ consisting of $m$ serially connected block gadgets, called \emph{variable gadgets}.
    For an illustration see~\cref{fig:bdd_tw_Delta_block}.
    
    \begin{figure}[ht]
    \centering 
      \begin{subfigure}[b]{0.4\linewidth}
      \centering
        \begin{tikzpicture}[scale=0.75, transform shape]
        
        
        \node[vertex] (a) at (0,5) {};
        \node[] () at (0,5.3) {$a$};
        
        \node[vertex] (x1) at (1.5,6) {};
        \node[] () at (1.5,6.3) {$\chi_1$};
        
        \node[vertex] (xD) at (1.5,4) {};
        \node[] () at (1.5,4.3) {$\chi_\Delta$};
        
        \node[] (dots1) at (1.5,5.1) {$\vdots$};
        
        \node[gray_vertex] (q1) at (3,6) {};
        \node[] () at (3,6.3) {$q_1$};
        \node[vertex] (y1) at (4.5,6) {};
        \node[] () at (4.5,6.3) {$y_1$};
        
        \node[gray_vertex] (qD) at (3,4) {};
        \node[] () at (3,4.3) {$q_\Delta$};
        \node[vertex] (yD) at (4.5,4) {};
        \node[] () at (4.5,4.3) {$y_\Delta$};
        
        \node[] (dots2) at (4.5,5.1) {$\vdots$};
        
        \node[vertex] (a') at (6,5) {};
        \node[] () at (6,5.3) {$a'$};
        
        \node[black_vertex] (b) at (3,2.5) {};
        \node[] () at (3,2.8) {$b$};

        
        \draw[] (a)--(x1)--(q1)--(y1)--(a');
        \draw[] (a)--(xD)--(qD)--(yD)--(a');
        \draw[] (a) edge [bend right] (b);
        \draw[] (a') edge [bend left] (b);
            
        \end{tikzpicture}
        \caption{Block gadget when $\Delta \geq 2$.}
        \label{fig:bdd_tw_lb_D1_block_gadget}
      \end{subfigure}
    \begin{subfigure}[b]{0.4\linewidth}
    \centering
        \begin{tikzpicture}[scale=0.6, transform shape]
        
        
        \node[vertex] (a) at (0,5) {};
        
        \node[vertex] (x1) at (1.5,6) {};
        
        \node[vertex] (xD) at (1.5,4) {};
        
        \node[] (dots1) at (1.5,5.1) {$\vdots$};
        
        \node[gray_vertex] (q1) at (3,6) {};
        \node[vertex] (y1) at (4.5,6) {};
        
        \node[gray_vertex] (qD) at (3,4) {};
        \node[vertex] (yD) at (4.5,4) {};
        
        \node[] (dots2) at (4.5,5.1) {$\vdots$};
            
        \node[vertex] (a') at (6,5) {};
        
        \node[black_vertex] (b) at (3,2.5) {};

        \node[vertex] (x1n) at (7.5,6) {};
        
        \node[vertex] (xDn) at (7.5,4) {};
        
        \node[] (dots1n) at (7.5,5.1) {$\vdots$};
        
        \node[gray_vertex] (q1n) at (9,6) {};
        \node[vertex] (y1n) at (10.5,6) {};
        
        \node[gray_vertex] (qDn) at (9,4) {};
        \node[vertex] (yDn) at (10.5,4) {};
        
        \node[] (dots2n) at (10.5,5.1) {$\vdots$};

        \node[vertex] (a'n) at (12,5) {};
        
        \node[black_vertex] (bn) at (9,2.5) {};

        
        \draw[] (a)--(x1)--(q1)--(y1)--(a');
        \draw[] (a)--(xD)--(qD)--(yD)--(a');
        \draw[] (a) edge [bend right] (b);
        \draw[] (a') edge [bend left] (b);
    
        \draw[] (a')--(x1n)--(q1n)--(y1n)--(a'n);
        \draw[] (a')--(xDn)--(qDn)--(yDn)--(a'n);
        \draw[] (a') edge [bend right] (bn);
        \draw[] (a'n) edge [bend left] (bn);
    
        \end{tikzpicture}
        \caption{Serially connected block gadgets.}
      \end{subfigure}
    \caption{Gray vertices have $\Delta - 1$ and black vertices $\Delta$ leaves attached.}
    \label{fig:bdd_tw_Delta_block}
    \end{figure}  
    
    \proofsubparagraph*{Constraint Gadget.}
    This gadget is responsible for determining constraint satisfaction,
    based on the choices made in the rest of the graph.
    For constraint $c_j$, construct the \emph{constraint gadget} $\hat{C}_j$ as follows:
    \begin{itemize}
        \item construct a clique of $s(c_j)$ vertices $v^j_1, \ldots, v^j_{s(c_j)}$,
        and fix an arbitrary one-to-one mapping between those vertices and the satisfying assignments of $c_j$,
    
        \item attach to each vertex of the clique $\Delta$ unnamed leaves,
    
        \item if variable $x_i$ is involved in the constraint $c_j$ and $v^j_\ell$ corresponds to an assignment where $x_i$ has value $p \in Y$,
        consider two cases:
        \begin{romanenumerate}
            \item If $p \in [0, \Delta]$,
            then add an edge between $v^j_\ell$ and vertex $b$ of $\hat{B}_{i,j}$.
            Moreover, add edges between $v^j_\ell$ and $\setdef{\chi_i}{i \in [p]} \cup \setdef{y_i}{i \in [p+1, \Delta]}$.
            
            \item On the other hand, if $v^j_\ell$ corresponds to an assignment where variable $x_i$ takes value $\Delta + 1$,
            then add an edge between $v^j_\ell$ and the vertex $a$ of $\hat{B}_{i,j}$.
        \end{romanenumerate}
    \end{itemize}

    Let graph $G_0$ correspond to the graph containing all variable gadgets $P_i$ as well as all the constraint gadgets $\hat{C}_j$,
    for $i \in [n]$ and $j \in [m]$.
    We refer to the block gadget $\hat{B}_{i,j}$, to the variable gadget $P_i$,
    and to the constraint gadget $\hat{C}_j$ of $G_z$
    as $\hat{B}^{G_z}_{i,j}$, $P^{G_z}_i$, and $\hat{C}^{G_z}_j$ respectively.
    To construct graph $G$, introduce $\kappa = \kappa_1 \cdot \kappa_2$ copies $G_1, \ldots, G_\kappa$ of $G_0$,
    where $\kappa_1 = n+1$ and $\kappa_2 = (2\Delta + 1) n + 1$,
    such that they are connected sequentially as follows:
    for $i \in [n]$ and $j \in [\kappa - 1]$, 
    the vertex $a'$ of $\hat{B}^{G_j}_{i,m}$ is the vertex $a$ of $\hat{B}^{G_{j+1}}_{i,1}$.
    Let $\mathcal{P}^i$ denote the ``path'' resulting from $P^{G_1}_i, \ldots, P^{G_\kappa}_i$.
    Set $k = n + \kappa \cdot k_c$, where $k_c = m n (\Delta+1) + \sum_{j=1}^m (s(c_j) - 1)$,
    and let $\mathcal{I} = (G,k)$ denote the instance of \BDD{} for $\Delta = B - 2$.
    
    \begin{lemma}\label{lem:bdd_tw_lb_D1_cor1}
        If $\phi$ is satisfiable,
        then there exists $S \subseteq V(G)$ such that $G - S$ has maximum degree at most $\Delta$ and $|S| \leq k$.
    \end{lemma}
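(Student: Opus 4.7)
My plan is to define the deletion set $S$ directly from the given satisfying assignment $f : X \to Y$, then verify $|S| \leq k$ and maximum degree at most $\Delta$ in $G - S$. For each variable $x_i$ with $p = f(x_i)$ and each copy $G_a$, I proceed by cases. If $p \in [0, \Delta]$, then inside every block gadget $\hat{B}^{G_a}_{i,j}$ along $\mathcal{P}^i$ I place in $S$ the vertex $b$ together with $\chi_1, \ldots, \chi_p$ and $y_{p+1}, \ldots, y_\Delta$, for exactly $\Delta + 1$ deletions per block. If $p = \Delta + 1$, I place in $S$ every pivot vertex along the entire path $\mathcal{P}^i$ (the $\kappa m + 1$ shared $a/a'$ vertices) together with every $q_1, \ldots, q_\Delta$ in each of the $\kappa m$ block gadgets of $\mathcal{P}^i$.

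For each constraint $c_j$ in every copy $G_a$ I additionally include in $S$ every clique vertex of $\hat{C}^{G_a}_j$ except the unique $v^j_{\ell_j}$ whose encoded satisfying assignment agrees with the restriction of $f$ to $c_j$'s variables; such a vertex exists because $f$ satisfies $\phi$. Tallying, the variable-gadget contribution is $\kappa m (\Delta+1)$ per variable with $f(x_i) \leq \Delta$ and $\kappa m (\Delta+1) + 1$ per variable with $f(x_i) = \Delta + 1$, while the constraint-gadget contribution totals $\kappa \sum_{j} (s(c_j) - 1)$. Even in the extremal case $f \equiv \Delta + 1$ these sum to $n + \kappa k_c = k$, so $|S| \leq k$ always.

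For the degree bound, I would check each surviving vertex type. The key calculation is for a surviving shared pivot $a = a'$ when $f(x_i) = p \in [0, \Delta]$: its only undeleted neighbors are $\chi_{p+1}, \ldots, \chi_\Delta$ on one side of the shared block boundary and $y_1, \ldots, y_p$ on the other, giving degree exactly $\Delta$. A surviving $b$ only arises when $f(x_i) = \Delta + 1$, in which case both adjacent pivots are gone and $b$ retains only its $\Delta$ leaves; each surviving $q_i$ loses exactly one of $\chi_i, y_i$ and so has degree $\Delta$; surviving $\chi$'s and $y$'s keep at most two non-deleted neighbors. The retained clique vertex $v^j_{\ell_j}$ loses every block-gadget neighbor precisely because $S$ is tailored to $f$ (when $v^j_{\ell_j}$ encodes $x_i = p \leq \Delta$ then $b, \chi_1, \ldots, \chi_p, y_{p+1}, \ldots, y_\Delta$ all lie in $S$, while when it encodes $x_i = \Delta + 1$ then $a$ lies in $S$), and so keeps only its $\Delta$ leaves.

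The delicate step, and the one I expect to spell out most carefully, is the pivot bookkeeping in the $f(x_i) = \Delta + 1$ case: since one vertex simultaneously serves as $a'$ of one block gadget and $a$ of the next (both within a copy and across consecutive copies $G_a, G_{a+1}$), double-counting must be avoided, and one must remember the extra endpoint pivot of $\mathcal{P}^i$, whose aggregate contribution over all $n$ variables is precisely the additive $+n$ term appearing in $k$.
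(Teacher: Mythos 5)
Your proposal is correct and follows essentially the same route as the paper's proof: the identical deletion set (pivot-plus-$q_i$ deletions for $f(x_i)=\Delta+1$, $b$ plus $p$ of the $\chi$'s and $\Delta-p$ of the $y$'s otherwise, all non-matching clique vertices in each constraint gadget), the same count yielding $n+\kappa k_c = k$ with the shared $a/a'$ pivots handled once, and the same degree verification (surviving clique vertex and $b$ keep only their $\Delta$ leaves, each surviving $q_i$ loses one of $\chi_i,y_i$, and a surviving pivot has $\Delta-p$ plus $p$ block-gadget neighbors).
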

    
    \begin{proof}
        Let $f : X \rightarrow Y$ be an assignment that satisfies all constraints $c_j$.
        We will present a set $S \subseteq V(G)$ of size $|S| \leq k$ such that $G - S$ has maximum degree at most $\Delta$.
        First, for every constraint gadget $\hat{C}_j$, since $f$ is a satisfying assignment,
        there exists some vertex $v^j_\ell$ which corresponds to a restriction of $f$ to the at most $q$ variables involved in $c_j$.
        Add in $S$ every vertex $v^j_l$ for $l \in [s(c_j)] \setminus \braces{\ell}$.
        Consequently, $S$ contains exactly $\kappa \cdot \sum_{j=1}^m (s(c_j) - 1)$ such vertices.
        Then, for every $i \in [n]$, consider the following two cases:
        \begin{itemize}
            \item If $f(x_i) = \Delta + 1$, then, for every block gadget $\hat{B}_{i,j}$,
            include in $S$ the vertices $a$, $a'$, and $q_i$, for $i \in [\Delta]$.
            Since the vertex $a'$ of a block gadget is the vertex $a$ of its subsequent block gadget,
            $S$ includes $\kappa \cdot m \cdot (\Delta + 1) + 1$ vertices in this case.
        
            \item If on the other hand $f(x_i) = p \in [0, \Delta]$, then,
            for every block gadget $\hat{B}_{i,j}$,
            include in $S$ the vertices $b$, $\chi_i$, and $y_j$, for $i \in [p]$ and $j \in [p+1, \Delta]$.
            In this case, $S$ includes $\kappa \cdot m \cdot (\Delta + 1)$ vertices.
        \end{itemize}
        Consequently, $S$ contains at most $\kappa \cdot \sum_{j=1}^m (s(c_j) - 1) + n \cdot (\kappa \cdot m \cdot (\Delta + 1) + 1) =
        n + \kappa \cdot k_c = k$ vertices.
    
        It remains to prove that $G - S$ has maximum degree at most $\Delta$.
        Consider any constraint gadget $\hat{C}^{G_z}_j$.
        Any leaf present has degree either $0$ or $1$ in $G-S$.
        Let $v^j_\ell (G_z) \notin S$ denote the only vertex of the clique of the constraint gadget which does not belong to $S$.
        This vertex corresponds to a satisfying assignment for $c_j$, which is a restriction of $f$.
        It holds that $\deg_{G-S}(v^j_\ell (G_z)) = \Delta$, since
        \begin{itemize}
            \item none of its $\Delta$ neighboring leaves belongs to $S$,
            \item $v^j_{\ell'} (G_z) \in S$, for $\ell' \neq \ell$,
            \item $v^j_\ell (G_z)$ has no neighbors in $\hat{B}^{G_z}_{i,j}$, for all $i \in [n]$:
            consider the two cases where either $a$ is a neighbor of $v^j_\ell (G_z)$
            or vertices $b, \chi_i, y_j$, where $i \in [p]$ and $j \in [p+1, \Delta]$, for some $p \in [0,\Delta]$,
            are neighbors of $v^j_\ell (G_z)$.
            In both cases, since $v^j_\ell (G_z)$ corresponds to an assignment $g$ which is a restriction of $f$,
            all of its neighbors in $\hat{B}^{G_z}_{i,j}$ belong to $S$.
        \end{itemize}
        Consequently, there are no edges between vertices of the constraint and the block gadgets in $G-S$.
    
        Lastly, for $i \in [n]$, no vertex in a block gadget $\hat{B}_{i,j}$ of $\mathcal{P}^i$ has degree larger than $\Delta$ in $G-S$:
        \begin{itemize}
            \item if $f(x_i) = \Delta + 1$, then vertex $b$ has degree $\Delta$, while the vertices $\chi_i, y_i$ have degree $0$.
        
            \item on the other hand, if $f(x_i) = p \neq \Delta + 1$, it holds that all vertices $q_i$ have degree $\Delta$, since one of their neighbors belongs to $S$.
            As for the non-deleted vertices $\chi_i$ and $y_i$, they have degree equal to $2 \leq \Delta$.
            Lastly, any vertex $a$ has degree at most $\Delta$, since it has $\Delta - p$ neighbors due to the block gadget it appears as $a$ and $p$ neighbors due to the block gadget it appears as $a'$.
        \end{itemize}
        This concludes the proof.
    \end{proof}
    
    \begin{lemma}\label{lem:bdd_tw_lb_D1_cor2}
        If there exists $S \subseteq V(G)$ such that $G - S$ has maximum degree $\Delta$ and $|S| \leq k$,
        then $\phi$ is satisfiable.
    \end{lemma}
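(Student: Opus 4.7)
The plan is to adapt the proof of~\cref{lem:bdd_tw_lb_D0_cor2} to the richer block gadget used when $\Delta \geq 2$. First, I would establish tight per-gadget lower bounds on $|S|$. Every constraint gadget $\hat{C}^{G_a}_j$ must contain at least $s(c_j) - 1$ of its clique vertices in $S$, since otherwise two clique vertices outside $S$, each carrying $\Delta$ attached leaves and joined by a clique edge, would have degree exceeding $\Delta$ in $G - S$. A case analysis on each block gadget $\hat{B}^{G_a}_{i,j}$, splitting on whether $b \in S$ and on how $a, a'$ and the degree-$(\Delta + 1)$ vertices $q_1, \ldots, q_\Delta$ are handled, shows that each block contributes at least $\Delta + 1$ vertices to $S$ in an amortized sense along the chain $\mathcal{P}^i$. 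Summing these bounds and comparing to $|S| \leq k = n + \kappa k_c$, at most $n$ ``excess'' vertices are available globally.

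Next, I would classify the deletions within each block gadget into $\Delta + 2$ canonical types indexed by $Y = [0, \Delta + 1]$: type $\Delta + 1$ removes $\{a, a', q_1, \ldots, q_\Delta\}$, while for $p \in [0, \Delta]$ type $p$ removes $\{b, \chi_1, \ldots, \chi_p, y_{p+1}, \ldots, y_\Delta\}$. Given the tight per-block budget of $\Delta + 1$, any non-canonical configuration either exceeds the budget or produces an over-degree vertex. I would then prove a propagation lemma along each $\mathcal{P}^i$: for two consecutive blocks of types $p, p' \in [0, \Delta]$, the shared vertex $a = a'$ has degree $p + (\Delta - p')$ in $G - S$, forcing $p \leq p'$ unless an extra vertex is spent at that junction; transitions into or out of type $\Delta + 1$ only ``cost'' the shared $a$-vertex, which is already budgeted by the type-$(\Delta + 1)$ block.

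The third step is a pigeonhole argument over the $\kappa = (n+1)\cdot((2\Delta+1)n+1)$ copies. With at most $n$ excess vertices available globally and each copy contributing only a bounded number of ``free'' transitions per variable, the pigeonhole principle produces an index $\pi \in [\kappa]$ such that within $G_\pi$ every variable gadget $P^{G_\pi}_i$ is \emph{consistent}, i.e., all $m$ of its block gadgets share a common type. Defining $f(x_i)$ as this common type and inspecting the unique non-deleted clique vertex $v^j_\ell(G_\pi)$ of each $\hat{C}^{G_\pi}_j$ (which keeps its $\Delta$ leaves alive and hence must have all its other neighbors in $S$) shows that the satisfying assignment associated with $v^j_\ell(G_\pi)$ agrees with $f$ on every variable involved in $c_j$; therefore $f$ satisfies every constraint.

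The main obstacle I anticipate is the classification and propagation analysis of step two: unlike the $\Delta = 1$ case where each block has only three candidate deletions, here each block has $3\Delta + 3$ non-leaf vertices with many ways to cover the $q_i$'s. Ruling out non-canonical patterns requires carefully combining the tight per-block budget with the specific index pattern used in the constraint-gadget edges (namely $\{\chi_1, \ldots, \chi_p\}$ and $\{y_{p+1}, \ldots, y_\Delta\}$), and I expect this case analysis to dominate the writing effort.
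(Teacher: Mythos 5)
Your proposal follows essentially the same route as the paper's proof: tight per-gadget counting leaving an excess budget of $n$, interpretation of each block's deletions as a value in $[0,\Delta+1]$, the degree bound $p+(\Delta-p')\leq\Delta$ on the shared $a$-vertex forcing monotone propagation with boundedly many transitions per path, a two-level pigeonhole over the $(n+1)\cdot((2\Delta+1)n+1)$ copies to extract a consistent copy, and reading off the assignment from the unique surviving clique vertex of each constraint gadget. The only caveat is that your step-two claim that every non-canonical configuration locally exceeds the budget or creates an over-degree vertex is too strong (deleting a $q_i$ within a triple is locally feasible); the paper instead shows such blocks force strict progress of the deletion counts and are therefore bounded in number, which your own propagation-plus-pigeonhole framework absorbs in the same way.
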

    
    \begin{proof}
        Let $S \subseteq V(G)$ such that $G - S$ has maximum degree $\Delta$ and $|S| \leq k$.
        For $G_z$, consider a mapping between subsets of vertices of $\hat{B}^{G_z}_{i,j}$ that belong to $S$ and the value of $x_i$ for some assignment of the variables of $\phi$.
        In particular, $S$ containing vertex $b$ as well as $p$ vertices out of $\setdef{\chi_i}{i \in [n]}$ and $\Delta - p$ out of $\setdef{y_i}{i \in [n]}$ is mapped with an 
        assignment where $x_i$ receives value $p \in [0, \Delta]$,
        while $S$ containing just vertex $a$ is mapped with an assignment where $x_i$ receives value $\Delta+1$.
        Any other subset will correspond to an \emph{invalid} assignment of $x_i$.
        We will say that an \emph{inconsistency} occurs in a variable gadget $P^{G_z}_i$ if there exist block gadgets $\hat{B}^{G_z}_{i,j}$ and $\hat{B}^{G_z}_{i,j+1}$,
        such that the vertices belonging to $S$ from each block gadget map to either invalid or different assignments of $x_i$.
        We say that $G_z$ is \emph{consistent} if the deletions occurring in its block gadget $\hat{B}^{G_z}_{i,j}$ map to some non-invalid assignment for $x_i$,
        while additionally no inconsistency occurs in its variable gadgets $P^{G_z}_i$, for every $i \in [n]$.
    
        \begin{claim}
            There exists $\pi \in [\kappa]$ such that $G_\pi$ is consistent.
            Moreover, it holds that $S$ contains vertices $v^j_l$, where $l \in [s(c_j)] \setminus \braces{\ell}$,
            for some $\ell \in [s(c_j)]$, from $\hat{C}^{G_\pi}_j$.
        \end{claim}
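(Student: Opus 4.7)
The strategy mirrors the $\Delta = 1$ case: use the tight budget $k = n + \kappa \cdot k_c$ to show that any admissible solution is almost canonical on most copies, then apply pigeonhole over the $\kappa$ copies to isolate a fully consistent one.

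First, I would prove per-gadget lower bounds. For each constraint gadget $\hat{C}^{G_a}_j$, every clique vertex $v^j_\ell$ carries $\Delta$ pendant leaves and has $s(c_j) - 1$ clique neighbours, so at most one clique vertex can be kept; hence $|S \cap V(\hat{C}^{G_a}_j)| \geq s(c_j) - 1$ and every pendant leaf of a kept clique vertex survives. For each block gadget $\hat{B}^{G_a}_{i,j}$, a case analysis on whether $b \in S$, combined with the degree-$(\Delta + 1)$ constraint on each rung vertex $q_r$ and the degrees of the shared endpoints $a, a'$, shows that at least $\Delta + 1$ vertices must be deleted per block (attributing shared endpoints to exactly one adjacent block). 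Summing over $G_a$ yields $|S \cap V(G_a)| \geq k_c$, and since $|S| \leq k$ the total slack across all copies is at most $n$.

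Next, I would classify the \emph{tight} block configurations (those attaining $\Delta + 1$ deletions). These match the intended encoding of $x_i \in [0, \Delta + 1]$: the type-$(\Delta + 1)$ configuration includes $\{a, a'\}$ and one of $\{\chi_r, y_r, q_r\}$ per rung $r$, while the type-$p$ configuration for $p \in [0, \Delta]$ includes $b$ together with $\chi_r$ for $r \leq p$ and $y_r$ for $r > p$. Any other tight pattern violates a degree constraint at $b$, $a$, $a'$, or some $q_r$, so tight blocks always correspond to a non-invalid assignment. I would then argue that each inconsistency between consecutive blocks of the same variable path incurs at least one extra deletion: transitioning between type $\Delta + 1$ and type $p$ contradicts the status of the shared endpoint $a' = a$, while transitioning from type $p_1$ to type $p_2$ with $p_1 > p_2$ forces the shared endpoint to acquire degree $p_1 + (\Delta - p_2) > \Delta$. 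The only ``free'' inconsistencies are monotone increases among type-$p$ blocks, of which at most $\Delta$ occur per path.

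Finally, I would apply pigeonhole on the $\kappa = (n+1)((2\Delta + 1) n + 1)$ copies. Every irregularity --- paid inconsistency, invalid configuration, or free monotone increase --- is localised to one copy; paid events are bounded by the global slack $n$, whereas the free monotone increases total at most $n \Delta$ across all paths. Hence at most $n(2\Delta + 1) < \kappa$ copies are touched by any irregularity, and some $\pi \in [\kappa]$ is entirely regular and consistent. The \emph{moreover} clause then follows immediately from the tightness $|S \cap V(\hat{C}^{G_\pi}_j)| = s(c_j) - 1$, which forces exactly one clique vertex $v^j_\ell$ to remain outside $S$.

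The main obstacle I anticipate is the classification of tight block configurations and the cost accounting for inconsistencies. The gadget admits several $(\Delta + 1)$-deletion patterns that mix $q_r, \chi_r, y_r$ choices across rungs, and pinning down which deviations from the two canonical forms really consume extra slack requires careful degree bookkeeping on $a, a'$, the $q_r$, and the edges added by the constraint gadgets to $b$, $a$, and the $\chi, y$ vertices.
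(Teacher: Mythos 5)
Your overall architecture (per-gadget lower bounds, tight-budget accounting, classification of tight blocks, monotonicity along a path, pigeonhole over copies) is the same as the paper's, but two steps as written do not hold. First, the claim that ``any other tight pattern violates a degree constraint at $b$, $a$, $a'$, or some $q_r$, so tight blocks always correspond to a non-invalid assignment'' is false. A block can be tight ($\Delta+1$ deletions) with $b\in S$ and one vertex per rung where some rungs delete $q_r$ instead of $\chi_r$ or $y_r$; no degree constraint inside the block is violated, because the only consequence is felt at the shared endpoint, via the inequality the paper derives: if $\alpha_p,\beta_p,\gamma_p$ count deleted $\chi$'s, $q$'s, $y$'s in block $p$, feasibility of the shared vertex forces $\gamma_p+\alpha_{p+1}\ge\Delta$, and $\beta_p>0$ merely forces the strict increase $\alpha_{p+1}>\alpha_p$. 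The paper therefore does not exclude these ``invalid'' tight blocks; it charges them to the monotone budget of at most $2\Delta$ increases/decreases per path. Your own closing paragraph concedes this classification is unresolved, and without it your case analysis of which transitions ``cost'' slack is incomplete (also note the transition type-$p$ $\to$ type-$(\Delta+1)$ is feasible and must be counted, only the reverse direction is blocked by the shared endpoint).

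Second, your single global pigeonhole is not justified: the bound ``free monotone increases total at most $n\Delta$ across all paths'' relies on $\alpha$ being non-decreasing along the whole path, but that monotonicity is derived from the coupling inequality between \emph{tight} consecutive blocks. A copy that spends slack (e.g., additionally deleting a shared $a$ vertex) destroys the coupling and can reset $\alpha$ from $\Delta$ back to $0$, after which up to $\Delta$ fresh increases occur; so the number of inconsistent copies is not bounded by $n(2\Delta+1)$ by your argument alone. This is precisely why the paper takes $\kappa=\kappa_1\cdot\kappa_2$ with $\kappa_1=n+1$ and $\kappa_2=(2\Delta+1)n+1$: since the global slack is at most $n$, one of the $n+1$ disjoint windows of $\kappa_2$ \emph{consecutive} copies is entirely tight, and the monotonicity/pigeonhole argument is run only inside that window. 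Your accounting could likely be repaired by bounding resets by the total slack, but as stated the step ``at most $n(2\Delta+1)<\kappa$ copies are touched'' is a gap.
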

        
        \begin{claimproof}
            In the following, we will assume that the vertices $a'$ of block gadgets $\hat{B}^{G_z}_{i,j}$,
            which coincide with the vertices $a$ of $\hat{B}^{G_z}_{i,j+1}$ if $j < m$ and of $\hat{B}^{G_{z+1}}_{i,1}$ alternatively,
            will not belong to the vertices of $\hat{B}^{G_z}_{i,j}$, so that we do not double count them
            (with the exception of the block gadget $\hat{B}^{G_\kappa}_{i,m}$).
            As a consequence, for $z \in [\kappa-1]$, vertices $a'$ of $\hat{B}^{G_z}_{i,m}$ will not belong to $V(G_z)$ and will belong to $V(G_{z+1})$ instead.
            We will first prove that $|S \cap V(G_z)| \geq k_c$, for all $z \in [\kappa]$.
            Notice that from each constraint gadget $\hat{C}_j$, at least $s(c_j) - 1$ vertices belong to $S$:
            each such gadget has a clique of size $s(c_j)$, every vertex of which has $\Delta$ leaves attached.
            Furthermore, for each block gadget $\hat{B}_{i,j}$,
            since $b$ as well as vertices $q_i$ have degree larger than $\Delta$, while no two share any neighbors,
            $\Delta + 1$ deletions are required.
            Notice that, since $b$ has degree $\Delta+2$, even if we delete the vertex $a'$, in the following block gadget $b$
            still has degree over $\Delta$.
            Consequently, from every $G_z$, at least $\sum_{j=1}^m (s(c_j) - 1) + nm (\Delta + 1) = k_c$ vertices are deleted,
            and $|S \cap V(G_z)| \geq k_c$ follows.
        
            Since $\kappa = \kappa_1 \cdot \kappa_2 = (n+1) \cdot \kappa_2$,
            while $|S| \leq n + \kappa \cdot k_c$,
            it follows that there exist $\kappa_2$ sequential copies $G_{z'+1}, \ldots, G_{z' + \kappa_2}$, such that $|S \cap V(G_w)| = k_c$, for $w \in [z'+1, z' + \kappa_2]$.
            Consequently, for any such $G_w$, $S$ contains exactly $s(c_j) - 1$ vertices from each $\hat{C}^{G_w}_j$ as well as
            exactly $\Delta + 1$ vertices from $\hat{B}^{G_w}_{i,j}$, where $i \in [n]$ and $j \in [m]$.
            Out of those $\Delta+1$ vertices, $1$ must belong to the set $\braces{\chi_i, y_i, q_i}$, for every $i \in [n]$.
            The last vertex must be either vertex $b$ or one of its neighbors:
            if that were not the case, then $b$ would have $\Delta+1$ neighbors among the vertices of $\hat{B}^{G_w}_{i,j}$.
            Assume that, in the case $b \notin S$, the neighbor of $b$ belonging to $S$ is $a$:
            if that is not the case, one can construct a deletion set $S'$ of the same cardinality for which the statement holds,
            and since the neighborhood of $a$ is a superset of the neighborhood of any leaf attached to $b$,
            $G - S'$ has maximum degree at most $\Delta$.
        
            Let $\hat{B}^{G_w}_{i,j}$ be one of those block gadgets and let $N(\hat{B}^{G_w}_{i,j})$ denote the block gadget whose $a$ vertex is the vertex $a'$ of $\hat{B}^{G_w}_{i,j}$.
            We will consider two cases: either $a \in S$ or $b \in S$.
            
            If $a \in S$, then for any block gadgets $\hat{B}^{G_{w'}}_{i,j'}$, where either a) $w' = w$ and $j' > j$ or
            b) $w' > w$ and $j' \in [m]$, it holds that $a \in S$:
            the statement holds for $N(\hat{B}^{G_w}_{i,j} (G_w))$, otherwise the vertex $b$ of $\hat{B}^{G_w}_{i,j}$ has degree $\Delta+1$ in $G-S$.
            Suppose that it holds for every block gadget up to some $\hat{B}^{G_{w^*}}_{i,j^*}$.
            Then, it holds for $N(\hat{B}^{G_{w^*}}_{i,j^*})$ as well, otherwise the vertex $b$ of $\hat{B}^{G_{w^*}}_{i,j^*}$ has degree $\Delta + 1$ in $G - S$.
            
            On the other hand, assume that $b \in S$.
            We will prove that $\mathcal{P}^i$ may induce at most $2\Delta$ inconsistencies.
            Notice that as soon as we delete vertex $a$ from some block gadget, then we delete also vertex $a$ from all subsequent block gadgets.
            Thus, assume in the following that exactly $\Delta + 1$ vertices are deleted per block gadget, none of which is vertex $a$.
            If for block gadget $\hat{B}^{G_w}_{i,j}$, $\alpha_p$ vertices are included in $S$ from set $\setdef{\chi_i}{i \in [n]}$,
            $\beta_p$ from set $\setdef{q_i}{i \in [n]}$ and $\gamma_p$ from set $\setdef{y_i}{i \in [n]}$,
            where $p$ is an index denoting the block gadget,
            then $a'(\hat{B}^{G_w}_{i,j})$ has $\Delta - \gamma_p$ incoming edges from $\setdef{y_i}{i \in [n]}$ in $G - S$,
            where $a' (\hat{B}^{G_w}_{i,j})$ denotes the vertex $a'$ of $\hat{B}^{G_w}_{i,j}$.
            Notice that vertices $a' (\hat{B}^{G_w}_{i,j})$ and $a (N(\hat{B}^{G_w}_{i,j}))$ coincide, thus the latter may have at most $\gamma_p$ incoming
            edges from the block gadget $N(\hat{B}^{G_w}_{i,j})$, i.e.~$\gamma_p + \alpha_{p+1} \geq \Delta$.
            Since $\alpha_i + \beta_i + \gamma_i = \Delta$,
            $\gamma_p \geq \gamma_{p+1} + \beta_{p+1}$ and $\alpha_{p+1} \geq \alpha_p + \beta_p$ follow.
            Each inconsistency then corresponds to either $\alpha_{p+1} > \alpha_p$ or $\gamma_{p+1} < \gamma_p$.
            Taking into account the fact that $0 \leq \alpha_i, \gamma_i \leq \Delta$, one can infer that the number of possible inconsistencies is at most $2\Delta$;
            half of them occur due to the increase of $\alpha_p$ while the other half due to the decrease of $\gamma_p$ (one may assume that in the worst case these happen independently).
        
            Therefore, the maximum amount of inconsistencies for each of the $n$ paths $\mathcal{P}_i$,
            without deleting more than $\Delta + 1$ vertices per block gadget,
            is $2\Delta + 1$ (the $+1$ being due to the case where $a \in S$).
        
            Now, suppose that $\beta_p > 0$, for some block gadget $\hat{B}_{i,j}$ of index $p$.
            Then we will show that $\alpha_p \neq \alpha_{p+1}$.
            Suppose that this is not the case.
            Then, it holds that $\gamma_p + \alpha_{p + 1} \geq \Delta \iff \gamma_p + \alpha_p \geq \Delta$,
            which implies that $\beta_p \leq 0$, contradiction.
        
            Since we have $\kappa_2 = (2\Delta+1) n + 1$ repetitions of the whole construction,
            due to the pigeonhole principle,
            there is a copy $G_\pi$, for which all the deletions happening in the block gadgets of $P^{G_\pi}_i$ are mapped to the same non-invalid assignment (this is indeed a non-invalid assignment, since $\beta_p = 0$).
            Moreover, since for every constraint gadget $\hat{C}^{G_\pi}_j$ exactly $s(c_j) - 1$ vertices are included in $S$, and each such gadget has a clique
            of size $s(c_j)$, each vertex of which has $\Delta$ leaves attached, $S$ cannot contain any leaves from $\hat{C}^{G_\pi}_j$.
        \end{claimproof}
    
        Consider the following assignment $f : X \rightarrow Y$ on the variables of $\phi$:
        \begin{itemize}
            \item if $a \in S$, then let $f(x_i) = \Delta + 1$,
            \item alternatively, let $f(x_i) = y$, where $y \in [0, \Delta]$ is equal to the number of vertices of $\setdef{\chi_i}{i \in [\Delta]}$
            belonging to $S$,
        \end{itemize}
        where $a$ and $\chi_i$ refer to the vertices of the block gadget $\hat{B}^{G_\pi}_{i,j}$ for some $j \in [m]$
        (since $G_\pi$ is consistent, the choice of $j$ does not matter).
    
        We will prove that $f$ satisfies all the constraints.
        Consider a constraint $c_j$.
        Regarding the constraint gadget $\hat{C}^{G_\pi}_j$, it holds that $S$ includes exactly $s(c_j) - 1$ vertices,
        none of which can be a leaf vertex.
        Let $v^j_\ell \notin S$ be the only non-leaf vertex of $\hat{C}^{G_\pi}_j$ not belonging to $S$.
        Since none of its leaves belongs to $S$, it follows that every neighbor of $v^j_\ell$ in the block gadgets $\hat{B}^{G_\pi}_{i,j}$ belongs to $S$.
        In that case, notice that the deletion of all the neighbors of $v^j_\ell$ in the block gadget $\hat{B}^{G_\pi}_{i,j}$ is mapped to an assignment of $x_i$:
        if the only neighbor was $a$, then $f(x_i) = \Delta + 1$,
        alternatively the neighborhood was comprised of $b$ as well as vertices $\setdef{\chi_i}{i \in [p]} \cup \setdef{y_i}{i \in [p+1, \Delta]}$,
        which implies that exactly $p$ vertices of $\setdef{\chi_i}{i \in [\Delta]}$ belong to $S$.
        Consequently, the satisfying assignment corresponding to $v^j_\ell$ is a restriction of $f$, thus $f$ satisfies $c_j$.
        Since this holds for any $j$, $\phi$ is satisfied.
    \end{proof}
    
    \begin{lemma}\label{lem:bdd_tw_lb_D1_pw}
        It holds that $\pw(G) = n + \bO(1)$.
    \end{lemma}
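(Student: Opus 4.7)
The plan is to bound the mixed search number, invoking $\ms(G) \leq \pw(G)+1$ as in the proof of \cref{lem:bdd_tw_lb_D0_pw}, and exhibit a mixed search strategy that uses $n + \bO(1)$ searchers simultaneously. The key idea is that the graph has a natural ``column'' structure: each copy $G_a$ can be viewed as $m$ columns, where the $j$-th column consists of the $n$ block gadgets $\hat{B}^{G_a}_{1,j},\dots,\hat{B}^{G_a}_{n,j}$ together with the constraint gadget $\hat{C}^{G_a}_j$. At every moment I will maintain $n$ \emph{persistent} searchers, one per path $\mathcal{P}^i$, placed on the current ``left boundary'' vertex $a$ of the block gadget of $\mathcal{P}^i$ currently being processed. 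All other searchers are used only locally, and since both $\Delta$ and $\max_j s(c_j)\leq B^q$ are constants, each block gadget has $\bO(\Delta)$ internal vertices/leaves and each constraint gadget has $\bO(1)$ vertices, so at most $\bO(1)$ additional searchers are ever needed at any single moment.

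The procedure goes as follows. First, initialize by placing $n$ searchers on the $a$-vertices of $\hat{B}^{G_1}_{i,1}$ for $i\in[n]$, and $\bO(1)$ searchers on every vertex (clique plus leaves) of $\hat{C}^{G_1}_1$; this clears all edges inside $\hat{C}^{G_1}_1$. Next, process column $1$ one path at a time: for each $i$, place $\bO(\Delta)$ temporary searchers on the internal vertices $b,\chi_1,\ldots,\chi_\Delta,y_1,\ldots,y_\Delta,q_1,\ldots,q_\Delta$ of $\hat{B}^{G_1}_{i,1}$ and on their attached leaves. This clears all edges inside the block gadget, as well as all edges from this block gadget to the currently occupied $\hat{C}^{G_1}_1$. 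Then slide the searcher from $a$ along a clearing walk to the vertex $a'$ (which is identified with the vertex $a$ of $\hat{B}^{G_1}_{i,2}$), and remove all temporary searchers; because every neighbor of $a'$ outside the cleared block gadget lies either on a persistent searcher or in a not-yet-touched part of the graph, no recontamination occurs. Once all $n$ block gadgets in the column have been processed, remove the searchers from $\hat{C}^{G_1}_1$ and place $\bO(1)$ new searchers on $\hat{C}^{G_1}_2$, repeating the procedure for column $2$. When the last column of $G_a$ is finished, the persistent searchers already lie on the $a$-vertices of the first column of $G_{a+1}$ (since the $a'$ vertex of $\hat{B}^{G_a}_{i,m}$ is identified with the $a$ vertex of $\hat{B}^{G_{a+1}}_{i,1}$), and the strategy continues seamlessly into the next copy.

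At every moment the number of searchers in use is at most $n$ persistent plus $\bO(1)$ local, so $\ms(G)\leq n+\bO(1)$, giving $\pw(G)\leq n+\bO(1)$ as required.

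The main obstacle I anticipate is checking that the local clearing of a block gadget does not leak contamination. Concretely, when the temporary searchers are lifted after sliding $a$ to $a'$, the only ``exit edges'' of the cleaned block gadget go either to the already-searched constraint gadget of the current column, to the searcher now sitting on $a'$, or to vertices in columns that have not been touched and therefore were already contaminated; so the invariant that every cleared edge stays cleared is preserved. Once this invariant is verified, the counting argument is immediate because both $\Delta$ and $s(c_j)$ are absolute constants.
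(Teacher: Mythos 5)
Your proposal is correct and follows essentially the same route as the paper: bound the mixed search number via $\ms(G)\le\pw(G)+1$ and give a strategy with $n$ persistent searchers on the $a$-vertices of the variable ``paths'' plus a constant number of local searchers per column, keeping the current constraint gadget occupied while its column's block gadgets are cleared one by one. The only cosmetic difference is that the paper slides a few searchers over the leaves rather than occupying all internal vertices and leaves at once (note a block gadget has $\Theta(\Delta^2)$ leaves, not $\bO(\Delta)$, but this is still $\bO(1)$ for fixed $\Delta$), so both strategies yield $n+\bO(1)$.
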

    
    \begin{proof}
        We will prove the statement by providing a mixed search strategy to clean $G$ using at most this many searchers simultaneously.
        Since for the mixed search number \ms{} it holds that $\pw(G) \leq \ms(G) \leq \pw(G) + 1$,
        we will show that $\ms(G) \leq n + 3 + 2 \cdot B^q$ and the statement will follow.
    
        Start with graph $G_1$.
        Place $2s(c_1)$ searchers to the clique vertices of $\hat{C}^{G_1}_1$ as well as to one leaf per clique vertex,
        and $n$ searchers on vertices $a$ of block gadgets $\hat{B}^{G_1}_{i,1}$, for $i \in [n]$.
        By moving the searchers placed on the leaves among the leaf vertices of the constraint gadget, all the edges of the constraint gadget can be cleaned.
        Next we will describe the procedure to clean $\hat{B}^{G_1}_{i,1}$.
        Move three extra searchers to vertices $b$ and $a'$ of the block gadget, as well as to a leaf of $b$.
        Move the latter among all the different leaves of $b$.
        Finally, the searchers placed in $b$ and its leaf can clean the rest of the edges of the gadget:
        put one on $\chi_i$ and the other on a leaf of $q_i$.
        Slide the first along the edge connecting $\chi_i$ with $q_i$.
        Move the latter searcher among all the leaves of $q_i$, and then place it to $y_i$.
        Repeat the whole process for all $i$, and finally remove all searchers apart from the one placed on $a'$.
        By following the same procedure, eventually all block gadgets $\hat{B}^{G_1}_{i,1}$ are cleaned.
    
        In order to clean the rest of the graph, we first move the searchers from $\hat{C}^{G_z}_j$ to $\hat{C}^{G_z}_{j+1}$ if $j < m$
        or to $\hat{C}^{G_{z+1}}_1$ alternatively (possibly introducing new searchers if required),
        clean the latter, and then proceed by cleaning the corresponding block gadgets.
        By repeating this procedure, in the end we clean all the edges of $G$ by using at most $n + 3 + 2 \cdot B^q = n + \bO (1)$ searchers.
    \end{proof}
    
    Therefore, in polynomial time we can construct a graph $G$,
    of pathwidth $\pw(G) \leq n + \bO(1)$ due to~\cref{lem:bdd_tw_lb_D1_pw},
    such that, due to~\cref{lem:bdd_tw_lb_D1_cor1,lem:bdd_tw_lb_D1_cor2},
    deciding whether there exists $S \subseteq V(G)$ of size $|S| \leq k$ and $G - S$ has maximum degree at most $\Delta$
    is equivalent to deciding whether $\phi$ is satisfiable.
    In that case, assuming there exists a $\sO((\Delta + 2 - \varepsilon)^{\pw(G)})$ algorithm for \BDD,
    one could decide $q$-CSP-$B$ in time $\sO((\Delta + 2 - \varepsilon)^{\pw(G)}) = \sO((B-\varepsilon)^{n + \bO(1)}) = \sO((B-\varepsilon)^n)$,
    which contradicts the SETH due to~\cref{thm:q_CSP_B_SETH}.
\end{proof}

\subsection{Defective Coloring}

In this subsection, we present tight lower bounds for \DC{} parameterized by the treewidth of the input graph plus the target degree.
We start by presenting in~\cref{sec:dc_coloring_gadgets} a variety of useful gadgets employed in our constructions,
followed by the lower bounds in~\cref{sec:dc_tw_lb} and
an algorithm of matching running time in~\cref{sec:dc_tw_algo}.

\subsubsection{Coloring Gadgets}\label{sec:dc_coloring_gadgets}

Here we develop various gadgets that will be mainly used in~\cref{sec:dc_tw_lb}.
We heavily rely on the constructions presented in~\cite{siamdm/BelmonteLM20},
some of which we briefly present for the sake of completeness.

Intuitively, our goal is to extend the toolbox of~\cite{siamdm/BelmonteLM20},
and to construct gadgets that, for any $(\chid,\Delta)$-coloring $c$ of a graph $G$,
where $\chid \geq 2$ and $\Delta \geq 1$,
imply the following relationships for vertices $v_1, v_2, c_1, c_2 \in V(G)$
(where $c_1, c_2$ might coincide):
\begin{itemize}
    \item $D(v_1,v_2)$: Vertices $v_1$ and $v_2$ receive distinct colors,
    i.e.~$c(v_1) \neq c(v_2)$,
    
    \item $E(v_1,v_2,c_1,c_2)$: If vertex $v_1$ receives the color of $c_1$,
    then vertex $v_2$ does not receive the color of $c_2$,
    i.e.~$c(v_1) = c(c_1) \implies c(v_2) \neq c(c_2)$,
    
    \item $I(v_1,v_2,c_1,c_2)$: If vertex $v_1$ receives the color of $c_1$,
    then vertex $v_2$ receives the color of $c_2$,
    i.e.~$c(v_1) = c(c_1) \implies c(v_2) = c(c_2)$.
\end{itemize}

We build upon the results of~\cite{siamdm/BelmonteLM20},
who presented the \emph{equality gadget} $Q(v_1, v_2, \chid, \Delta)$
as well as the \emph{palette gadget} $P(v_1,v_2,v_3,\chid,\Delta)$.

\begin{lemma}[Lemma~3.5 of~\cite{siamdm/BelmonteLM20}]\label{lem:dc_equality_gadget}
    Let $G = (V, E)$ be a graph with $v_1, v_2 \in V$,
    and let $G'$ be the graph obtained from $G$ by adding to it a copy of $Q(u_1, u_2, \chid, \Delta)$ and identifying $u_1$ with $v_1$ and $u_2$ with $v_2$.
    Then, any $(\chid, \Delta)$-coloring of $G'$ must give the same color to $v_1, v_2$.
    Furthermore, if there exists a $(\chid, \Delta)$-coloring of $G$ that gives the same color to $v_1, v_2$,
    this coloring can be extended to a $(\chid, \Delta)$-coloring of $G'$.
\end{lemma}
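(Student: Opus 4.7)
The plan is to construct the gadget $Q(u_1, u_2, \chid, \Delta)$ explicitly and then verify both halves of the lemma. A natural starting point is to build a ``color palette'' subgraph whose $(\chid,\Delta)$-colorings are essentially rigid, and then attach $u_1$ and $u_2$ to it symmetrically so that both must occupy the same slot of the palette. Concretely, I would first take a clique $K$ on $\chid(\Delta+1)$ vertices: in any valid $(\chid,\Delta)$-coloring each color class of $K$ induces a subgraph of maximum degree at most $\Delta$, so each class has size at most $\Delta+1$, and by counting it must have size exactly $\Delta+1$. Thus $K$ gets partitioned into $\chid$ monochromatic groups of $\Delta+1$ vertices, one per color, but the assignment of colors to groups is free.

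With this palette in hand, my first construction would insert a ``mediator'' vertex $m$ adjacent to $u_1$, to $u_2$, and (via intermediate copies) to a distinguished group of $K$, together with padding leaves that saturate $m$'s degree in every color except the one that $m$ is forced to take, namely the common color of $u_1$ and $u_2$. Parallel copies of this synchronization component prevent the assignment from ``cheating'' by spreading defects across colors. Alternatively, one can dispense with the clique and link $u_1$ to $u_2$ through many length-two paths via auxiliary vertices whose degree budget is exceeded unless $c(u_1)=c(u_2)$; taking enough parallel copies forces equality while leaving room to extend any compatible coloring.

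The verification splits naturally. For the forward direction, I would reason locally inside $Q$: the rigidity of the palette forces each internal vertex into a unique slot, and then a short case analysis on the two colors $c(u_1)$ and $c(u_2)$ shows that any discrepancy blows the $\Delta$-budget at $m$ (or at the analogous synchronization vertex). For the backward direction, given a $(\chid,\Delta)$-coloring of $G$ with $c(v_1)=c(v_2)=i$, I would exhibit an explicit extension: align the ``free'' group of $K$ with color $i$, permute the remaining groups among the other $\chid-1$ colors, and color the synchronization vertices so that $m$ receives color $i$ and all padded defects stay within budget.

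The main obstacle will be ensuring genuine color-symmetry: the gadget must not silently pin $c(v_1),c(v_2)$ to a specific color, otherwise the extension property fails for arbitrary colorings of $G$ with $c(v_1)=c(v_2)\neq i$. Closely related is the corner-case analysis, in particular $\Delta=0$ (where the palette collapses to a proper $\chid$-coloring of $K_{\chid}$ and the synchronization vertex argument degenerates) and small $\chid$ (where there are too few colors for a permutation argument to give slack). I would handle these by either tailoring the construction per case or by proving a uniform bound of the form ``$Q$ has size polynomial in $\chid$ and $\Delta$'' under the assumption $\chid\geq 2,\Delta\geq 1$ that is used elsewhere in the paper.
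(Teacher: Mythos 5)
First, a point of comparison: the paper does not prove this statement at all. It is imported verbatim as Lemma~3.5 of \cite{siamdm/BelmonteLM20}, and the gadget $Q(u_1,u_2,\chid,\Delta)$ is used as a black box whose definition and proof live in that reference. So what you are really attempting is a re-derivation of the original construction, and the mechanism you propose has genuine gaps.

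The central problem is the forcing mechanism. Your palette clique $K$ on $\chid(\Delta+1)$ vertices is rigid only in the \emph{sizes} of its color classes; which vertices form which class is chosen by the adversary after seeing your wiring, so a ``distinguished group'' of $\Delta+1$ clique vertices need not be monochromatic, and the claim that ``the rigidity of the palette forces each internal vertex into a unique slot'' is false. Likewise, pendant ``padding leaves'' cannot saturate the mediator $m$ in other colors: deficiency counts only same-colored neighbors, and a leaf's color is not forced by anything (forcing a leaf to carry a prescribed color is exactly what an equality gadget is for, so this is circular) -- in any hostile coloring the leaves are simply colored so as to contribute nothing. Moreover, even granting that $m$ could be forced to some fixed color and saturated, that would only \emph{exclude} that one color for $u_1$ and $u_2$; it does not make them equal, and $m$'s forced color cannot depend on the a priori unknown common color of $u_1,u_2$. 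Your fallback construction (many internally disjoint $u_1$--$u_2$ paths of length two) does work for $\chid=2$: with $2\Delta+1$ middle vertices, unequal colors on $u_1,u_2$ overload one of the two endpoints, while equal colors let all middle vertices take the opposite color. But it fails for every $\chid\ge 3$, since each middle vertex can escape to a third color with zero deficiency, and $\chid\ge 3$ is precisely the nontrivial case the cited gadget must handle (the original proof pins down the middle vertices' colors by a careful counting argument rather than by leaves or a free clique palette). As it stands, the proposal therefore does not establish the lemma beyond the case $\chid=2$.
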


\begin{lemma}[Lemma~3.8 of~\cite{siamdm/BelmonteLM20}]\label{lem:dc_palette_gadget}
    Let $G = (V, E)$ be a graph with $v_1, v_2, v_3 \in V$,
    and let $G'$ be the graph obtained from $G$ by adding to it a copy of $P(u_1, u_2, u_3, \chid, \Delta)$ and identifying $u_i$ with $v_i$ for $i \in [3]$.
    Then, in any $(\chid, \Delta)$-coloring of $G'$, at least two of the vertices of $\{ v_1, v_2, v_3 \}$ must share a color.
    Furthermore, if there exists a $(\chid, \Delta)$-coloring of $G$ that gives the same color to two of the vertices of $\{ v_1, v_2, v_3 \}$,
    this coloring can be extended to a $(\chid, \Delta)$-coloring of $G'$.
\end{lemma}

\begin{definition}[Difference Gadget]
    For $\chid \geq 2$, $\Delta \geq 0$, $k \geq 0$,
    let $D(u_1, u_2, \chid, \Delta, k)$ be a graph defined as follows:
    $D$ contains vertices $u_1$ and $u_2$, with the latter having $k$ leaves $l_1, \ldots, l_k$,
    as well as $k$ copies of the equality gadget $Q(x, y, \chid, \Delta)$, on each of which we identify $x$ with $u_1$
    and $y$ with $l_i$, for $i \in [k]$.
\end{definition}

\begin{lemma}\label{lem:dc_difference_gadget}
    Let $G = (V, E)$ be a graph with $v_1 , v_2 \in V$,
    and let $G'$ be the graph obtained from $G$ by adding to it a copy of $D(u_1, u_2, \chid, \Delta, \Delta+1)$ and
    identifying $u_1$ with $v_1$ and $u_2$ with $v_2$.
    Then, any $(\chid,\Delta)$-coloring of $G'$ must give different colors to $v_1$ and $v_2$.
    Furthermore, if there exists a $(\chid,\Delta)$-coloring of $G$ that gives different colors
    to $v_1$ and $v_2$, this coloring can be extended to a $(\chid, \Delta)$-coloring of $G'$.
\end{lemma}

\begin{proof}
    For the first statement, consider a $(\chid,\Delta)$-coloring $c' : V(G') \rightarrow [\chid]$ of $G'$.
    Notice that due to the properties of the equality gadget~\cite[Lemma~3.5]{siamdm/BelmonteLM20},
    it follows that $c'(v_1) = c'(l_i)$ for all $i \in [\Delta+1]$,
    where $l_i$ denote the leaves attached to $v_2$ due to the difference gadget.
    In that case, $v_2$ has at least $\Delta + 1$ neighbors of color $c'(v_1)$, thus $c'(v_2) \neq c'(v_1)$.
    
    For the second statement, let $c : V(G) \rightarrow [\chid]$ be a $(\chid,\Delta)$-coloring of $G$,
    where $c(v_1) \neq c(v_2)$.
    In order to extend it to a $(\chid,\Delta)$-coloring of $G'$,
    color the vertices $l_i$ with the color of $v_1$,
    and appropriately color the vertices of the equality gadgets by using~\cite[Lemma~3.5]{siamdm/BelmonteLM20}.
\end{proof}

\begin{definition}[Exclusion Gadget]
    For $\chid \geq 2$, $\Delta \geq 1$, $C = \setdef{c_i}{i \in [\chid]}$,
    and (not necessarily distinct) $i_1, i_2 \in [\chid]$,
    let $E(u_1,u_2,c_{i_1},c_{i_2},C,\chid,\Delta)$ be a graph defined as follows:
    \begin{itemize}
        \item If either a) $i_1 = i_2$, or b) $i_1 \neq i_2$ and $\chid = 2$,
        then let $E$ contain vertices $u_1, u_2, u'_1, u'_2, c_{i_1}$, and $a$.
        Add edges between $a$ and $u'_1, u'_2$,
        as well as gadgets $Q(u_1,u'_1,\chid, \Delta)$,
        $Q(c_{i_1},a,\chid, \Delta)$, and $D(c_{i_1},a,\chid, \Delta,\Delta-1)$.
        If $i_1 = i_2$ add the gadget $Q(u_2,u'_2,\chid, \Delta)$,
        else the gadget $D(u_2,u'_2,\chid, \Delta, \Delta+1)$.
        
        \item Otherwise, i.e.~when $i_1 \neq i_2$ and $\chid > 2$,
        let $E$ contain vertices $u_1, u_2, u'_1, u'_2, c_{i_1}, c_{i_2}, c_{i_3}$ and $a_1, a_2, a_3$,
        where $i_3 \in [\chid] \setminus \braces{i_1, i_2}$.
        Let vertices $u'_1, a_1, a_2, a_3, u'_2$ form a path and add the following gadgets:
        \begin{itemize}
            \item $Q(u_1, u'_1, \chid, \Delta)$,
            \item $Q(u_2, u'_2, \chid, \Delta)$,
            \item $P(c_{i_1}, c_{i_3}, a_1, \chid, \Delta)$, $D(c_{i_1}, a_1, \chid, \Delta, \Delta)$, $D(c_{i_3}, a_1, \chid, \Delta, \Delta)$,
            \item $P(c_{i_1}, c_{i_3}, a_2, \chid, \Delta)$, $D(c_{i_1}, a_2, \chid, \Delta, \Delta)$, $D(c_{i_3}, a_2, \chid, \Delta, \Delta)$,
            \item $P(c_{i_1}, c_{i_2}, a_3, \chid, \Delta)$, $D(c_{i_1}, a_3, \chid, \Delta, \Delta)$, $D(c_{i_2}, a_3, \chid, \Delta, \Delta)$.
        \end{itemize}
    \end{itemize}
\end{definition}

\begin{figure}[ht]
    \centering 
        \begin{subfigure}[b]{0.4\linewidth}
        \centering
        \begin{tikzpicture}[scale=0.75, transform shape]
            \node[vertex] (u1) at (0,1) {};
            \node[above =0.2 of u1] {$u_1$};
            
            \node[vertex, right =1.5 of u1] (u1') {};
            \node[above =0.2 of u1'] {$u'_1$};
        
            \path (u1) -- (u1') node[midway,above] {$=$};
            
            \node[black_vertex, right = of u1'] (a) {};
            \node[above =0.2 of a] {$a$};
        
            \node[vertex, below = of a] (c) {};
            \node[right =0.2 of c] {$c_{i_1}$};
        
            \path (a) -- (c) node[midway,left] {$=$};
        
            \node[vertex, right = of a] (u2') {};
            \node[above =0.2 of u2'] {$u'_2$};
        
            \node[vertex, right =1.5 of u2'] (u2) {};
            \node[above =0.2 of u2] {$u_2$};
        
            \path (u2) -- (u2') node[midway,above] {$=$ or $\neq$};
            
            \draw[] (u1')--(a)--(u2');
            
            \draw[dashed] (u1)--(u1');
            \draw[dashed] (u2)--(u2');
            \draw[dashed] (a)--(c);

        \end{tikzpicture}
        \caption{Either $i_1 = i_2$, or $i_1 \neq i_2$ and $\chid = 2$.}
        \label{fig:dc_exclusion_gadget1}
        \end{subfigure}
    \begin{subfigure}[b]{0.4\linewidth}
    \centering
        \begin{tikzpicture}[scale=0.75, transform shape]
            \node[vertex] (u1) at (0,1) {};
            \node[above =0.2 of u1] {$u_1$};
            
            \node[vertex, right = of u1] (u1') {};
            \node[above =0.2 of u1'] {$u'_1$};
        
            \path (u1) -- (u1') node[midway,above] {$=$};
            
            \node[vertex, right = of u1'] (a1) {};
            \node[above =0.2 of a1] {$a_1$};
            \node[below =0.2 of a1] {$[c_{i_1},c_{i_3}]$};
        
            \node[vertex, right =1.2 of a1] (a2) {};
            \node[above =0.2 of a2] {$a_2$};
            \node[below =0.2 of a2] {$[c_{i_1},c_{i_3}]$};
        
            \node[vertex, right =1.2 of a2] (a3) {};
            \node[above =0.2 of a3] {$a_3$};
            \node[below =0.2 of a3] {$[c_{i_1},c_{i_2}]$};
        
            \node[vertex, right = of a3] (u2') {};
            \node[above =0.2 of u2'] {$u'_2$};
        
            \node[vertex, right = of u2'] (u2) {};
            \node[above =0.2 of u2] {$u_2$};
        
            \path (u2) -- (u2') node[midway,above] {$=$};
            
            \draw[] (u1')--(a1)--(a2)--(a3)--(u2');
            
            \draw[dashed] (u1)--(u1');
            \draw[dashed] (u2)--(u2');        
        \end{tikzpicture}
        \caption{Case where $i_1 \neq i_2$ and $\chid > 2$.}
        \label{fig:dc_exclusion_gadget2}
    \end{subfigure}
\caption{Exclusion gadget $E(u_1,u_2,c_{i_1},c_{i_2},C,\chid,\Delta)$.
Black vertex $a$ has $\Delta - 1$ leaves $l$ attached,
each taking part in an equality gadget $Q(c_{i_1},l,\chid, \Delta)$.
The bracket $[x,y]$ under vertex $v$ denotes that there exists a palette gadget $P(x, y, v, \chid, \Delta)$,
as well as that $v$ has $2\Delta$ leaves $l$ attached, with half taking part in a gadget
$Q(x,l,\chid, \Delta)$, and the other half in $Q(y,l,\chid, \Delta)$.}
\label{fig:dc_exclusion_gadget}
\end{figure}

\begin{lemma}\label{lem:dc_exclusion_gadget}
    Let $G = (V, E)$ be a graph with $v_1 , v_2 \in V$ and $P = \setdef{p^i}{i \in [\chid]} \subseteq V$,
    and let,
    for some (not necessarily distinct) $i_1,i_2 \in [\chid]$,    
    $G'$ be the graph obtained from $G$ by adding to it a copy of
    $E(u_1, u_2, c_{i_1}, c_{i_2}, C, \chid, \Delta)$ and
    identifying $u_1$ with $v_1$, $u_2$ with $v_2$, and $c_i$ with $p^i$.
    Then, in any $(\chid,\Delta)$-coloring $c' : V(G') \rightarrow [\chid]$ of $G'$
    where a) $c'(p^i) \neq c'(p^j)$ for distinct $i,j \in [\chid]$ and b) $c'(v_1) = c'(p^{i_1})$,
    it holds that $c'(v_2) \neq c'(p^{i_2})$.
    Furthermore, if there exists a $(\chid,\Delta)$-coloring $c : V(G) \rightarrow [\chid]$ of $G$ where
    a) $c(p^i) \neq c(p^j)$ for distinct $i,j \in [\chid]$ and
    b) either $c(v_1) \neq c(p^{i_1})$ or $c(v_2) \neq c(p^{i_2})$,
    this can be extended to a $(\chid,\Delta)$-coloring of $G'$.
\end{lemma}

\begin{proof}
    For the first statement, consider a $(\chid,\Delta)$-coloring $c' : V(G') \rightarrow [\chid]$ of $G'$,
    where every vertex $p^i$ receives a distinct color.
    Assume that $c'(v_1) = c'(p^{i_1})$ and consider the following cases:
    \begin{itemize}
        \item If either a) $i_1 = i_2$, or b) $i_1 \neq i_2$ and $\chid = 2$,
        then vertex $a$ of $E$ has $\Delta - 1$ leaves of color $c'(p^{i_1})$.
        Moreover, it holds that $c'(u'_1) = c'(p^{i_1})$, due to the gadget $Q(u_1,u'_1,\chid, \Delta)$.
        Consequently, $c'(v_2) \neq c'(p^{i_2})$, since otherwise it holds that $c'(u'_2) = c'(p^{i_1})$
        due to either gadget $Q(u_2,u'_2,\chid, \Delta)$ or $D(u_2,u'_2,\chid, \Delta, \Delta+1)$,
        thus $a$ has $\Delta+1$ same colored neighbors, which is a contradiction.

        \item Otherwise, i.e.~if $i_1 \neq i_2$ and $\chid > 2$,
        it follows that $a_1$ has $\Delta$ leaves of color $c'(p^{i_1})$
        as well as a neighbor which is connected via an equality gadget with $v_1$.
        Consequently, $a_1$ has $\Delta+1$ neighbors of color $c'(p^{i_1})$, and due to the palette gadget,
        it follows that $c'(a_1) = c'(p^{i_3})$.
        In an analogous way, it follows that $c'(a_2) = c'(p^{i_1})$ and that $c'(a_3) = c'(p^{i_2})$.
        Assume that $c'(v_2) = c'(p^{i_2})$.
        Then, $a_3$ has $\Delta+1$ neighbors of color $c(p^{i_2})$,
        due to its $\Delta$ leaves as well as vertex $u'_2$,
        which is connected via an equality gadget with $v_2$,
        which leads to a contradiction.
    \end{itemize}

    For the second statement, consider a $(\chid,\Delta)$-coloring $c : V(G) \rightarrow [\chid]$ of $G$,
    where every vertex $p^i$ receives a distinct color and additionally
    either $c(v_1) \neq c(p^{i_1})$ or $c(v_2) \neq c(p^{i_2})$.
    Consider the following cases:
    \begin{itemize}
        \item If either a) $i_1 = i_2$, or b) $i_1 \neq i_2$ and $\chid = 2$,
        then let vertex $a$ of $E$, as well as the latter's $\Delta - 1$ leaves receive color $c(p^{i_1})$.
        Next, color $u'_1$ with $c(u_1)$.
        In case $i_1 = i_2$, then let $u'_2$ receive color $c(v_2)$,
        otherwise it receives color different than $c(v_2)$.
        In both cases $a$ has at most $\Delta$ same colored neighbors,
        and by using~\cref{lem:dc_equality_gadget,lem:dc_difference_gadget},
        we can color all the internal vertices of the equality and difference gadgets.

        \item Otherwise, i.e.~if $i_1 \neq i_2$ and $\chid > 2$,
        consider the following two cases:
        \begin{itemize}
            \item If $c(v_1) \neq c(p^{i_1})$, then let $a_1$ receive color $c(p^{i_1})$, $a_2$ color $c(p^{i_3})$ and $a_3$ color $c(p^{i_1})$.
            \item Alternatively, it holds that $c(v_2) \neq c(p^{i_2})$, and let $a_1$ receive color $c(p^{i_3})$, 
            $a_2$ color $c(p^{i_1})$ and $a_3$ color $c(p^{i_2})$.
        \end{itemize}
        In both cases, all vertices $a_1,a_2,a_3$ have exactly $\Delta$ same colored neighbors,
        and it remains to color
        the internal vertices of the equality and palette gadgets 
        using~\cref{lem:dc_equality_gadget,lem:dc_palette_gadget}.
    \end{itemize}
    This concludes the proof.
\end{proof}

\begin{definition}[Implication Gadget]
    For $\chid \geq 2$, $\Delta \geq 1$, $C = \setdef{c_i}{i \in [\chid]}$,
    and (not necessarily distinct) $i_1, i_2 \in [\chid]$,
    let $I(u_1, u_2, c_{i_1}, c_{i_2}, C, \chid, \Delta)$ be a graph defined as follows:
    $I$ contains vertices $u_1$ and $u_2$ and exclusion gadgets $E(u_1,u_2, c_{i_1}, c_k, C, \chid, \Delta)$, for all $k \in [\chid] \setminus \braces{i_2}$.
\end{definition}

\begin{lemma}\label{lem:dc_implication_gadget}
    Let $G = (V, E)$ be a graph with $v_1 , v_2 \in V$ and $P = \setdef{p^i}{i \in [\chid]} \subseteq V$,
    and let,
    for some (not necessarily distinct) $i_1,i_2 \in [\chid]$,    
    $G'$ be the graph obtained from $G$ by adding to it a copy of
    $I(u_1, u_2, c_{i_1}, c_{i_2}, C, \chid, \Delta)$ and
    identifying $u_1$ with $v_1$, $u_2$ with $v_2$, and $c_i$ with $p^i$.
    Then, in any $(\chid,\Delta)$-coloring $c' : V(G') \rightarrow [\chid]$ of $G'$ where
    a) $c'(p^i) \neq c'(p^j)$ for distinct $i,j \in [\chid]$ and
    b) $c'(v_1) = c'(p^{i_1})$,
    it holds that $c'(v_2) = c'(p^{i_2})$.
    Furthermore, if there exists a $(\chid,\Delta)$-coloring $c : V(G) \rightarrow [\chid]$ of $G$ where
    a) $c(p^i) \neq c(p^j)$ for distinct $i,j \in [\chid]$ and
    b) either $c(v_1) \neq c(p^{i_1})$ or $c(v_2) = c(p^{i_2})$,
    this can be extended to a $(\chid,\Delta)$-coloring of $G'$.
\end{lemma}

\begin{proof}
    For the first statement, consider a $(\chid,\Delta)$-coloring $c' : V(G') \rightarrow [\chid]$ of $G'$,
    where every vertex $p^i$ receives a distinct color.
    Assume that $c'(v_1) = c'(p^{i_1})$.
    Then, due to~\cref{lem:dc_exclusion_gadget}
    it follows that $c'(v_2) \neq c'(p^k)$ for all $k \in [\chid] \setminus \braces{i_2}$,
    thus $c'(v_2) = c'(p^{i_2})$ follows.

    For the second statement, consider a $(\chid,\Delta)$-coloring $c : V(G) \rightarrow [\chid]$ of $G$,
    where every vertex $p^i$ receives a distinct color and additionally
    either $c(v_1) \neq c(p^{i_1})$ or $c(v_2) = c(p^{i_2})$.
    Consequently, it holds that either $c(v_1) \neq c(p^{i_1})$ or $c(v_2) \neq c(p^k)$, for all $k \in [\chid] \setminus \braces{i_2}$,
    thus from~\cref{lem:dc_exclusion_gadget} the statement follows.
\end{proof}

The following lemma proves that the use of the previously described gadgets does not increase the pathwidth of the graph by much.
Notice that part of it has been proven in~\cite{siamdm/BelmonteLM20}.

\begin{lemma}\label{dc_gadgets_pw}
    Let $G = (V, E)$ be a graph and let $G'$ be the graph obtained from $G$
    by repeating the following operation:
    find a copy of one of the following gadgets
    \begin{itemize}
        \item $Q(u_1, u_2, \chid, \Delta)$,
        \item $P(u_1, u_2, u_3, \chid, \Delta)$,
        \item $D(u_1, u_2, \chid, \Delta,k)$,
        \item $E(u_1, u_2, c_{i_1}, c_{i_2}, C, \chid, \Delta)$,
        \item $I(u_1, u_2, c_{i_1}, c_{i_2}, C, \chid, \Delta)$,
    \end{itemize}
    remove all its internal vertices from the graph,
    and add all edges between its endpoints which are not already connected.
    Then $\tw(G) \leq  \max \{ \tw(G'), \bO(\chid) \}$ and $\pw(G) \leq \pw(G') + \bO(\chid)$.
\end{lemma}

\begin{proof}
    The result for the equality and palette gadgets has been shown in~\cite[Lemma~4.2]{siamdm/BelmonteLM20}.
    In particular, it is shown that one can obtain a path decomposition of an equality or a palette gadget,
    named $T_Q$ and $T_P$ respectively,
    of width $\chid$, every bag of which contains vertices $u_1,u_2$ (and $u_3$ in the case of the palette gadget).
    
    \proofsubparagraph*{Difference Gadget.}
    Remember that the difference gadget contains $k$ leaves $l_i$, attached to vertex $u_2$.
    First, as observed in the proof of~\cite[Lemma~4.2]{siamdm/BelmonteLM20},
    there is a path decomposition of $Q(u_1, l_i, \chid, \Delta)$ with width $\chid$,
    where every bag contains the vertices $u_1$ and $l_i$.
    In that case, there exists a path decomposition of $D(u_1, u_2, \chid, \Delta)$ of width $\chid+1$:
    serially connect the path decompositions of $Q(u_1, l_i, \chid, \Delta)$, for $i \in [k]$,
    and add to all the bags the vertex $u_2$.
    Call this path decomposition $T_D$, and notice that all of its bags contain both vertices $u_1$ and $u_2$.

    \proofsubparagraph*{Exclusion Gadget.}
    First consider the case where either $i_1 = i_2$ or $\chid = 2$.
    Then, the gadget consists of vertices $u_1, u_2, u'_1, u'_2, c_{i_1}$, and $a$.
    Vertex $a$ has an edge with both $u'_1, u'_2$,
    while there exist gadgets $Q(u_1,u'_1,\chid, \Delta)$, $Q(c_{i_1},a,\chid, \Delta)$,
    and $D(c_{i_1},a,\chid, \Delta,\Delta-1)$.
    If $i_1 = i_2$, there also exists the equality gadget $Q(u_2,u'_2,\chid, \Delta)$,
    else the difference gadget $D(u_2,u'_2,\chid, \Delta, \Delta+1)$.
    Construct a path decomposition $T_E$ of width $\bO(\chid)$ comprised of the following path decompositions:
    \begin{itemize}
        \item a path decomposition of $Q(u_1,u'_1,\chid, \Delta)$ of width $\chid$, every bag of which contains $u_1,u'_1$,

        \item a path decomposition of $Q(c_{i_1},a,\chid, \Delta)$ of width $\chid$, every bag of which contains $c_{i_1},a$,

        \item a path decomposition of $D(c_{i_1},a,\chid, \Delta,\Delta-1)$ of width $\chid+1$, every bag of which contains $c_{i_1},a$,

        \item depending on which case we are, a path decomposition of either $Q(u_2,u'_2,\chid, \Delta)$
        of width $\chid$,
        or of $D(u_2,u'_2,\chid, \Delta, \Delta+1)$ of width $\chid+1$,
        every bag of which contains vertices $u_2,u'_2$.
    \end{itemize}
    To conclude the construction of $T_E$, add to every bag of this path decomposition the necessary vertices such that $u_1,u_2,u'_1,u'_2,c_i$ are contained in every bag, for all $i \in [\chid]$.

    For the remaining case, the exclusion gadget consists of equality, palette, and difference gadgets.
    By connecting path decompositions of each respective gadget in an analogous way,
    while subsequently adding vertices in order to ensure that every bag contains
    $u_1, u_2, u'_1, u'_2, a_1, a_2, a_3, c_i$ for all $i \in [\chid]$,
    it follows that we get a path decomposition $T_E$ of width $\bO(\chid)$.

    \proofsubparagraph*{Implication Gadget.}
    Lastly, for the case of the implication gadget,
    remember that this consists of exclusion gadgets
    $E(u_1,u_2,c_{i_1},c_k,C,\chid,\Delta)$, for all $k \in [\chid] \setminus \braces{i_2}$.
    Again, consider a path decomposition of each of those $|C| - 1 = \chid - 1$ exclusion gadgets
    resulting in a path decomposition $T_I$ of width $\bO(\chid)$
    (notice that each bag of the decomposition of an exclusion gadget contains vertices $C \cup \{ u_1, u_2 \}$).

    \medskip
    
    We now take an optimal tree or path decomposition of $G'$, call it $T'$,
    and construct from it a decomposition of $G$.
    Consider a gadget $H \in \braces{Q,P,D,E,I}$ that appears in $G$ with endpoints $u_1, u_2$ (and possibly $u_3$ and $C = \braces{c_1, \ldots, c_{\chid}}$).
    Since in $G'$ these endpoints form a clique, there is a bag in $T'$ that contains all of them.
    Let $B$ be the smallest such bag, that is, the bag that contains the smallest number of vertices.
    Now, if $T'$ is a tree decomposition, we take $T_H$ and attach it to $B$.
    If $T'$ is a path decomposition, we insert in the decomposition immediately after $B$ the decomposition $T_H$
    where we have added all vertices of $B$ in all bags of $T_H$.
    It is not hard to see that in both cases the decompositions remain valid,
    and we can repeat this process for every $H$ until we have a decomposition of $G$.
\end{proof}
\subsubsection{Lower Bound}\label{sec:dc_tw_lb}

In the following, we will present a reduction from $q$-CSP-$B$ to \DC,
for any fixed $\Delta \geq 1$ and $\chid \geq 2$, where $B = \chid \cdot (\Delta + 1)$.
In that case, if there exists a $\sO ((\chid \cdot (\Delta + 1) - \varepsilon)^\pw)$ algorithm for \DC,
where $\varepsilon > 0$,
then there exists a $\sO ((B - \varepsilon)^n)$ algorithm for $q$-CSP-$B$,
for any constant $q$,
which due to~\cref{thm:q_CSP_B_SETH} results in SETH failing.

The reduction is similar in nature to the one presented in~\cref{subsec:bdd_tw_lb},
consisting of ``long paths'' of serially connected block gadgets,
each of which corresponds to a variable of the given CSP,
while each column of this construction is associated with one of its constraints.

In the whole section we will use the coloring gadgets presented in~\cref{sec:dc_coloring_gadgets}.
As was the case for \BDD, we first start with the case where $\Delta = 1$ and
then with the case where $\Delta \geq 2$.

\begin{theorem}
    For any constant $\varepsilon > 0$ and for any fixed $\chid \geq 2$,
    there is no $\sO((2\chid - \varepsilon)^{\pw})$ algorithm
    deciding whether $G$ admits a $(\chid,1)$-coloring,
    where \pw{} denotes the graph's pathwidth,
    unless the SETH is false.
\end{theorem}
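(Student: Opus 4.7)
The plan is to mirror the strategy of the \BDD{} $\Delta=1$ reduction, but replacing the ``which of the three path vertices do we delete'' choice with a coloring choice that offers exactly $B=2\chid$ equally good local configurations. Concretely, I will reduce from $q$-CSP-$B$ with $B=\chid\cdot(\Delta+1)=2\chid$: given a formula $\phi$ on $n$ variables taking values in $[2\chi_{\rm d}]$ and $m$ constraints, I construct in polynomial time a graph $G$ of pathwidth $n+\bO(1)$ which admits a $(\chi_{\rm d},1)$-coloring if and only if $\phi$ is satisfiable. If such a coloring could be found in $\sO((2\chi_{\rm d}-\varepsilon)^{\pw})$ time, then $q$-CSP-$B$ would be decidable in $\sO((B-\varepsilon)^n)$ time, contradicting~\cref{thm:q_CSP_B_SETH}.

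For the skeleton of the reduction I will introduce a \emph{palette} of $\chi_{\rm d}$ global vertices $p^1,\ldots,p^{\chi_{\rm d}}$ pairwise joined by difference gadgets $D$, so that in any valid coloring they receive all $\chi_{\rm d}$ colors bijectively; these vertices are then reused by the coloring gadgets of~\cref{sec:dc_coloring_gadgets} that need a color set $C$. The natural bijection between $[B]$ and the ``states'' a single bag vertex can be in under parameter $\Delta=1$ is $(\text{color},\text{saturation bit})\in[\chi_{\rm d}]\times\{0,1\}$: the first coordinate encodes the color, the second coordinate records whether the vertex has already spent its one allowed same-colored neighbor. The block gadget $\hat{B}_{i,j}$ will therefore consist of a small constant-size piece with two interface vertices $a=a(\hat{B}_{i,j})$ and $a'=a(\hat{B}_{i,j})$, designed so that the admissible $(\chi_{\rm d},1)$-colorings, restricted to $(a,a')$, are exactly the $2\chi_{\rm d}$ pairs where $a$ receives any of the $\chi_{\rm d}$ colors with a chosen ``free/saturated'' status. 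Concretely, I will attach to $a$ a small $K_2$-like auxiliary structure whose $\Delta$-degree is shared with $a$, so that the coloring may either force or forbid $a$ to use up its unique same-colored neighbor locally; serial composition identifies the $a'$ of one block with the $a$ of the next, so that the $2\chi_{\rm d}$ choices made on $a$ remain consistent along the entire variable path $P_i$ built from the $m$ column blocks.

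The constraint gadget $\hat{C}_j$ will be a collection of $s(c_j)=\bO(1)$ ``satisfying assignment'' witnesses, one per satisfying assignment of $c_j$. For each witness and each involved variable $x_i$ assigned value $(c,b)\in[\chi_{\rm d}]\times\{0,1\}$, I hardwire an exclusion gadget $E$ (via the palette) between the witness vertex and the interface $a$ of $\hat{B}_{i,j}$ that forbids $a$ from taking color $c$ with saturation $b$ whenever the witness is ``active''. An auxiliary palette gadget $P$ among the witnesses forces exactly one of them to be active (i.e.~the shared color is unique), so that the witness selected at column $j$ corresponds to a partial assignment consistent with all $a$-choices in that column. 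As in~\cref{subsec:bdd_tw_lb}, I then take $\kappa=2n+1$ serially concatenated copies $G_1,\ldots,G_\kappa$ of this column and, after counting the unavoidable ``global'' coloring cost, apply a pigeonhole argument to isolate a copy $G_\pi$ in which no variable path exhibits an inconsistency between consecutive blocks; projecting the $(c,b)$-states of this consistent copy back to the variables of $\phi$ yields the satisfying assignment.

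The main obstacle, as in the \BDD{} case, is to simultaneously achieve three things in the block gadget: (i) the set of admissible restrictions to $(a,a')$ is \emph{exactly} the $2\chi_{\rm d}$ ``diagonal'' pairs, (ii) the propagation along the path forbids any switch of state between consecutive blocks without paying a strictly larger cost, and (iii) the gadget has constant size and constant-width path decomposition so that, by~\cref{dc_gadgets_pw}, its insertion raises the pathwidth only by $\bO(\chi_{\rm d})$, giving an overall bound of $\pw(G)\le n+\bO(1)$ (the $n$ coming from keeping one interface vertex per variable path in every bag of the mixed search strategy). The correctness in the forward direction is a routine check: given a satisfying assignment of $\phi$, one colors each interface vertex according to its $(c,b)$-value and extends using~\cref{lem:dc_equality_gadget,lem:dc_palette_gadget,lem:dc_difference_gadget,lem:dc_exclusion_gadget,lem:dc_implication_gadget}. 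The reverse direction reduces, through the gadget lemmas, to establishing (i)--(iii) above and invoking the pigeonhole step; the slight subtlety specific to $\Delta=1$ is that the saturation bit is carried by a single local edge, so the block gadget must be crafted so that the ``saturated'' choice cannot silently turn ``free'' across a block boundary without creating a local overfull color class.
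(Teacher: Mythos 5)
Your overall strategy is the same as the paper's: reduce from $q$-CSP-$B$ with $B=2\chid$, encode each variable by the pair (color of an interface vertex, one ``saturation'' bit recording whether its single allowed same-colored neighbor is already used), string constant-size block gadgets into variable paths, attach per-constraint witness vertices tied to the blocks through palette-based implication/exclusion gadgets, take many serial copies plus a pigeonhole step, and bound the pathwidth by $n+\bO(1)$ via the gadget-replacement lemma. However, the proposal leaves the two load-bearing constructions unspecified and, as stated, specifies them in ways that would not work. First, the block gadget: you ask that the admissible colorings restricted to the pair $(a,a')$ be exactly $2\chid$ ``diagonal'' pairs, but if the interface colors are forced consistent along the path (which you also need), the pair $(a,a')$ can only carry $\chid$ states; the saturation bit cannot live on the interface colors themselves. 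In the paper's gadget it is carried by internal structure: a path $a-x-y-a'$ with a middle vertex $b$, a difference gadget between $a$ and $b$, an equality gadget between $a$ and $a'$, and (for $\chid\ge 3$) palette gadgets forcing $x,y$ into $\{c(a),c(b)\}$; the bit is whether $x$ (equivalently, whether $y$) shares the color of $a$, and the ``spent budget'' is transmitted to the next block because $y$ is adjacent to $a'$. Your ``$K_2$-like auxiliary structure'' is not enough detail to see that requirements (i)--(iii) can be met simultaneously, and this is precisely where the $\Delta=1$ case is delicate.

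Second, two mechanisms you invoke do not exist in this problem. There is no deletion budget in \DC, so ``paying a strictly larger cost'' and ``counting the unavoidable global coloring cost'' cannot bound the number of inconsistencies; the paper instead proves a monotonicity statement (once $c(x)\neq c(a)$ in some block, the next block's interface vertex already has a same-colored neighbor, so the ``saturated'' state persists forever), giving at most one switch per variable path and making $n+1$ copies suffice for the pigeonhole. Relatedly, your constraint gadget is underspecified and partially inverted: a single palette gadget among the $s(c_j)$ witnesses does not force one of them to be ``active'' (it only forces two of three designated vertices to share a color); the paper needs the chain of auxiliary vertices $r_i,k_i$ with equality gadgets to $p^2$ at the two ends and difference gadgets along the chain to guarantee that at least one witness $v^j_\ell$ receives the color of $p^1$. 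Moreover, an active witness for an assignment giving $x_i$ the value encoded by $(c,b)$ must \emph{force} that state on the block (implication gadgets $I(v^j_\ell,a,p^1,p^c)$ together with an implication or exclusion on $x$ depending on $b$), whereas your plan has the active witness \emph{forbid} $a$ from taking exactly that state, which is backwards. These gaps are fixable along the paper's lines, but as written the proposal does not yet contain the constructions on which correctness of both directions rests.
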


\begin{proof}
    Fix some positive $\varepsilon > 0$ for which we want to prove the theorem.
    Let $B = 2 \chid$.
    We will reduce $q$-CSP-$B$, for some $q$ that is a constant that only depends on $\varepsilon$,
    to \DC{} for $\Delta = 1$ and $\chid$ colors in a way that ensures that if the resulting \DC{} instance could be solved in time $\sO((2 \chid - \varepsilon)^{\pw})$,
    then we would obtain an algorithm for $q$-CSP-$B$ that would contradict the SETH due to~\cref{thm:q_CSP_B_SETH}.
    To this end, let $\phi$ be an instance of $q$-CSP-$B$ of $n$ variables $X = \setdef{x_i}{i \in [n]}$ taking values over the set $Y = [B]$
    and $m$ constraints $C = \setdef{c_j}{j \in [m]}$.
    For each constraint we are given a set of at most $q$ variables which are involved in this constraint and a list of satisfying assignments for these variables,
    the size of which is denoted by $s : C \rightarrow [B^q]$,
    i.e.~$s(c_j) \leq B^q = \bO(1)$ denotes the number of satisfying assignments for constraint $c_j$.
    We will construct in polynomial time an equivalent instance $G$ of \DC{} for $\Delta = 1$ and $\chid$ colors,
    where $\pw(G) \leq n + \bO (1)$.
    
    Since we will repeatedly use the equality, difference, and palette gadgets (see~\cref{sec:dc_coloring_gadgets}),
    we will use the following convention:
    whenever $v_1, v_2, v_3$ are vertices we have already introduced to $G$,
    when we say that we add an equality gadget $Q(v_1, v_2)$, a difference gadget $D(v_1, v_2)$,
    or a palette gadget $P(v_1, v_2, v_3)$,
    this means that we add to $G$ a copy of $Q(u_1, u_2, \chid, \Delta)$, of $D(u_1, u_2, \chid, \Delta, \Delta+1)$,
    or of $P(u_1, u_2, u_3, \chid, \Delta)$ respectively,
    and then identify $u_1, u_2(,u_3)$ with $v_1, v_2(,v_3)$ respectively.
    
    \proofsubparagraph*{Palette Vertices.}
    Construct a clique of $\chid$ vertices $P = \setdef{p^i}{i \in [\chid]}$.
    Attach to vertex $p^i$ a leaf $p^i_l$, and add equality gadgets $Q(p^i, p^i_l)$, where $i \in [\chid]$.
    
    Whenever $v_1, v_2$ are vertices we have already introduced to $G$,
    when we say that we add an exclusion gadget $E(v_1, v_2, v'_1, v'_2)$ or an implication gadget $I(v_1, v_2,v'_1, v'_2)$,
    this means that we add to $G$ a copy of $E(u_1, u_2, c_1, c_2, C, \chid, \Delta)$ or of $I(u_1, u_2, c_1, c_2, C, \chid, \Delta)$ respectively
    and then identify $u_1, u_2, c_i$ with $v_1, v_2, p^i$ respectively, for all $i \in [\chid]$.
    
    \proofsubparagraph*{Block and Variable Gadgets.}
    For every variable $x_i$ and every constraint $c_j$, construct a \emph{block gadget} $\hat{B}_{i,j}$ as depicted in~\cref{fig:dc_tw_lb_t1_block}.
    Dashed lines between $u_1$ and $u_2$ imply an equality $Q(u_1, u_2)$ or a difference gadget $D(u_1, u_2)$, and not an edge.
    If $\chid \geq 3$, add palette gadgets $P(a, b, x)$ and $P(a', b, y)$.
    Next, for $j \in [m-1]$, we serially connect the block gadgets $\hat{B}_{i,j}$ and $\hat{B}_{i,j+1}$ so that the vertex $a'$ of $\hat{B}_{i,j}$ is the vertex $a$ of $\hat{B}_{i,j+1}$,
    thus resulting in $n$ ``paths'' $P_1, \ldots, P_n$ consisting of $m$ serially connected block gadgets, called \emph{variable gadgets}.
    Intuitively, the variable gadget is meant to represent a variable $x_i$ and hence needs to have $2\chid$
    different viable configurations.
    These are made up by deciding on a color for $a$ ($\chid$ choices) and then deciding
    which of $x,y$ will receive the same color as $a$ (two choices).
    We will show that the gadget is set up so that exactly one of $x,y$ receives
    the same color as $a$ (and $a'$).

    \begin{figure}[ht]
    \centering 
      \begin{subfigure}[b]{0.4\linewidth}
      \centering
        \begin{tikzpicture}[scale=0.75, transform shape]
            \node[vertex] (a) at (0, 1) {};
            \node[above =0.05 of a] {$a$};
            
            \node[vertex, right=1.5 of a] (x) {};
            \node[above =0.05 of x] {$x$};
            
            \node[vertex, right=1.5 of x] (y) {};
            \node[above =0.05 of y] {$y$};
            
            \node[vertex, right=1.5 of y] (a') {};
            \node[above =0.05 of a'] {$a'$};
            
            \path (x) -- (y) node[vertex,midway,below=1] (b) {};
            \node[right =0.05 of b] {$b$};
            
            \path (a) -- (b) node[midway,below=0.01] {$\neq$};
            \path (a) -- (a') node[midway,above=0.75] {$=$};
            
            \draw[] (a)--(x)--(y)--(a');
            \draw[] (x)--(b);
            \draw[] (y)--(b);
            
            \draw[dashed] (a)--(b);
            \draw[dashed] (a) edge [bend left] (a');
            
            \end{tikzpicture}
            \caption{Block gadget when $\Delta = 1$.}
            \label{fig:dc_tw_lb_t1_block}
        \end{subfigure}
    \begin{subfigure}[b]{0.4\linewidth}
    \centering
        \begin{tikzpicture}[scale=0.6, transform shape]
        
    
            \node[vertex] (a) at (0, 1) {};
            
            \node[vertex, right=1.5 of a] (x) {};
            
            \node[vertex, right=1.5 of x] (y) {};
            
            \node[vertex, right=1.5 of y] (a') {};
            
            \path (x) -- (y) node[vertex,midway,below=1] (b) {};
            
            \path (a) -- (b) node[midway,below=0.01] {$\neq$};
            \path (a) -- (a') node[midway,above=0.75] {$=$};
    
            \node[vertex, right=1.5 of a'] (x2) {};
            
            \node[vertex, right=1.5 of x2] (y2) {};
            
            \node[vertex, right=1.5 of y2] (a2) {};
    
            \path (x2) -- (y2) node[vertex,midway,below=1] (b2) {};
            
            \path (a') -- (b2) node[midway,below=0.01] {$\neq$};
            \path (a') -- (a2) node[midway,above=0.75] {$=$};

            \draw[] (a)--(x)--(y)--(a');
            \draw[] (x)--(b);
            \draw[] (y)--(b);
            
            \draw[dashed] (a)--(b);
            \draw[dashed] (a) edge [bend left] (a');
    
            \draw[] (a')--(x2)--(y2)--(a2);
            \draw[] (x2)--(b2);
            \draw[] (y2)--(b2);
            
            \draw[dashed] (a')--(b2);
            \draw[dashed] (a') edge [bend left] (a2);
        \end{tikzpicture}
        \caption{Serially connected block gadgets.}
        \label{fig:dc_tw_lb_t1_variable}
      \end{subfigure}
    \caption{Variable gadgets are comprised of serially connected block gadgets.}
    \label{fig:dc_tw_lb_t1_block_and_variable}
    \end{figure}  
    
    \proofsubparagraph*{Constraint Gadget.}
    This gadget is responsible for determining constraint satisfaction,
    based on the choices made in the rest of the graph.
    For constraint $c_j$, construct the \emph{constraint gadget} $\hat{C}_j$ as depicted in~\cref{fig:dc_tw_lb_c2_constraint}:
    \begin{itemize}
        \item introduce vertices $r_w$, where $w \in [s(c_j)]$,
        and add $Q(p^2, r_1)$, as well as $P(p^1,p^2,r_w)$ when $\chid \geq 3$, for $w \in [2, s(j)]$,
        
        \item for $w \in [s(j)]$, introduce vertices $v^j_w$, as well as
        palette gadgets $P(p^1,p^2,v^j_w)$ when $\chid \geq 3$,
        and fix an arbitrary one-to-one mapping between those vertices and the satisfying assignments of $c_j$,
    
        \item introduce vertices $k_w$, where $w \in [s(c_j)]$, and add $Q(p^2, k_{s(c_j)})$, as well as $P(p^1, p^2, k_w)$ when $\chid \geq 3$ for $w \in [s(j)-1]$,
    
        \item add edges between vertex $k_w$ and vertices $r_w, v^j_w$, as well as $D(k_w, r_{w+1})$,
    
        \item if variable $x_i$ is involved in the constraint $c_j$ and $v^j_\ell$ corresponds to an assignment where $x_i$ has value $s \in Y$,
        then:
        \begin{romanenumerate}
            \item if $s \leq \chid$, then add implication gadgets $I(v^j_\ell, a, p^1, p^s)$ and $I(v^j_\ell, x, p^1, p^s)$,
    
            \item if on the other hand $\chid < s \leq 2 \chid$, then add implication gadget $I(v^j_\ell, a, p^1, p^{s'})$ and exclusion gadget $E(v^j_\ell, x, p^1, p^{s'})$,
            for $s' = s - \chid$,
        \end{romanenumerate}
        where vertices $a$ and $x$ belong to $\hat{B}_{i,j}$.
    \end{itemize}
    Intuitively, the constraint gadget is set up in a way that forces, for some $\ell \in [s(c_j)]$,
    vertex $v^j_\ell$ to receive color $1$,
    which in turn ``activates'' the implication and exclusion gadgets we have added to this vertex.
    This ensures that the assignment encoded by the variable gadgets agrees with the satisfying assignment
    of $c_j$ represented by $v^j_\ell$.
    
    \begin{figure}[ht]
    \centering
    \begin{tikzpicture}[transform shape]
    
    
    \node[vertex] (r) at (0,1) {};
    \node[above =0.2 of r] {$r_1$};
    
    \node[vertex, right = of r] (k1) {};
    \node[below right =0.01 of k1] {$k_1$};
    
    \node[vertex, below = of k1] (c1) {};
    \node[right =0.01 of c1] {$v^j_1$};
    
    \node[vertex, right = of k1] (r1) {};
    \node[above =0.2 of r1] {$r_2$};
    \path (k1) -- (r1) node[midway,above] {$\neq$};
    
    \node[vertex, right = of r1] (k2) {};
    \node[below right =0.01 of k2] {$k_2$};
    
    \node[vertex, below = of k2] (c2) {};
    \node[right =0.01 of c2] {$v^j_2$};
    
    \node[vertex, right =2 of k2] (rm) {};
    \node[above =0.2 of rm] {$r_{s(c_j)}$};
    
    \node[vertex, right = of rm] (km) {};
    \node[below right =0.01 of km] {$k_{s(c_j)}$};
    
    \node[vertex, below = of km] (cm) {};
    \node[right =0.01 of cm] {$v^j_{s(c_j)}$};
    
    \draw[] (r)--(k1);
    \draw[] (c1)--(k1);
    
    \draw[dashed] (k1)--(r1);
    
    \draw[] (r1)--(k2);
    \draw[] (c2)--(k2);
    
    \draw[dotted] (k2)--(rm);
    
    \draw[] (rm)--(km);
    \draw[] (cm)--(km);
    
    \end{tikzpicture}
    \caption{Constraint gadget $\hat{C}_j$ when $\Delta = 1$.}
    \label{fig:dc_tw_lb_c2_constraint}
    \end{figure}
    
    Let graph $G_0$ correspond to the graph containing all variable gadgets $P_i$ as well as all the constraint gadgets $\hat{C}_j$,
    for $i \in [n]$ and $j \in [m]$.
    We refer to the block gadget $\hat{B}_{i,j}$, to the variable gadget $P_i$,
    and to the constraint gadget $\hat{C}_j$ of $G_z$
    as $\hat{B}^{G_z}_{i,j}$, $P^{G_z}_i$ and $\hat{C}^{G_z}_j$ respectively.
    To construct graph $G$, introduce $\kappa = n+1$ copies $G_1, \ldots, G_\kappa$ of $G_0$,
    such that they are connected sequentially as follows:
    for $i \in [n]$ and $j \in [\kappa - 1]$, 
    the vertex $a'$ of $\hat{B}^{G_j}_{i,m}$ is the vertex $a$ of $\hat{B}^{G_{j+1}}_{i,1}$.
    Let $\mathcal{P}^i$ denote the ``path'' resulting from $P^{G_1}_i, \ldots, P^{G_\kappa}_i$.
    
    \begin{lemma}\label{lem:dc_tw_lb_D1_cor1}
        For any $\chid \geq 2$, if $\phi$ is satisfiable, then $G$ admits a $(\chid,1)$-coloring.
    \end{lemma}
    
    \begin{proof}
        Let $f : X \rightarrow Y$ denote an assignment which satisfies all the constraints $c_1, \ldots, c_m$.
        We will describe a $(\chid, 1)$-coloring $c : V(G) \rightarrow [\chid]$ of $G$.
    
        Let $c(p^i) = c(p^i_l) = i$, for $i \in [\chid]$.
        Next, for the vertices of block gadget $\hat{B}^{G_z}_{i,j}$,
        where $z \in [\kappa]$, $i \in [n]$ and $j \in [m]$,
        consider the following cases:
        \begin{enumerate}
            \item if $f(x_i) = k$, for $k \in [\chid]$, then let $k' \in [\chid] \setminus \{ k \}$ be an arbitrary color and set $c(a) = c(x) = k$, while $c(b) = c(y) = k'$,
            \item if $f(x_i) = \chid + k$, for $k \in [\chid]$, then let $k' \in [\chid] \setminus \{ k \}$ be an arbitrary color and set $c(a) = c(y) = k$, while $c(b) = c(x) = k'$.
        \end{enumerate}
        Regarding the constraint gadgets, let $c_j$ be one of the constraints of $\phi$.
        Since $f$ is a satisfying assignment, there exists at least one vertex among $v^j_1, \ldots, v^j_{s(j)}$ in $\hat{C}^{G_z}_j$, for some $z \in [\kappa]$, mapping to the restriction of $f$ to the variables appearing in $c_j$.
        Let $v^j_\ell$ be one such vertex of minimum index.
        Then, set $c(v^j_\ell) = 1$, while any other vertex $v^j_{\ell'}$, with $\ell' \neq \ell$, receives color $2$.
        Moreover, let $k_w$ receive color $1$ for $w < \ell$ and color $2$ for $\ell \leq w \leq s(c_j)$.
        On the other hand, let $r_w$ receive color $2$ for $w \leq \ell$ and color $1$ for $\ell < w \leq s(c_j)$.   
        
        Lastly, properly color the internal vertices of the equality/palette/difference/exclusion/implication gadgets 
        using~\cref{lem:dc_equality_gadget,lem:dc_palette_gadget,lem:dc_difference_gadget,lem:dc_exclusion_gadget,lem:dc_implication_gadget}.
        To see that all gadgets are properly colored using these lemmas,
        observe that any vertex colored so far has at most $1$ same-colored neighbor, while the following hold:
        \begin{itemize}
            \item $P = \setdef{p^i}{i \in [\chid]}$ consists of $\chid$ vertices,
            each receiving a distinct color,
            and for all $i \in [\chid]$,
            $c(p^i) = c(p^i_l)$,
            
            \item in all block gadgets, $c(a) = c(a')$,
            $c(a) \neq c(b)$,
            and vertices $x,y$ are colored either $c(a)$ or $c(b)$,
            
            \item for $z \in [\kappa]$ and $j \in [m]$, in constraint gadget $\hat{C}^{G_z}_j$ it holds that
            $c(r_1) = c(p^2)$,
            $c(k_{s(c_j)}) = c(p^2)$,
            $c(k_w) \neq c(r_{w+1})$ for $w \in [s(c_j)-1]$,
            vertices $k_w, r_w, v^j_w$ for $w \in [s(c_j)]$ are colored either $c(p^1)$ or $c(p^2)$,
            and lastly if $c(v^j_\ell) = c(p^1)$ and $x_i$ denotes a variable appearing in $c_j$ such that
            vertex $v^j_\ell$ corresponds to an assignment where $x_i$ receives value $s \in [2 \chid]$
            (which is a restriction of assignment $f$),
            then
            \begin{itemize}
                \item if $s \leq \chid$, $c(a) = s $ and $c(x) = s$,
                \item else $c(a) = s'$ and $c(x) \neq s'$, for $s' = s - \chid$,
            \end{itemize}
            where vertices $a,x$ belong to $\hat{B}^{G_z}_{i,j}$.
        \end{itemize}
        This concludes the proof.
    \end{proof}
    
    \begin{lemma}\label{lem:dc_tw_lb_D1_cor2}
        For any $\chid \geq 2$, if $G$ admits a $(\chid,1)$-coloring,
        then $\phi$ is satisfiable.
    \end{lemma}
    
    \begin{proof}
        Let $c : V(G) \rightarrow [\chid]$ be a $(\chid,1)$-coloring of $G$.
        Due to the properties of the equality gadgets, it holds that $c(p^i) = c(p^i_l)$,
        for all $i \in [\chid]$.
        Since $c$ is a $(\chid,1)$-coloring, it follows that $c(p^i) \neq c(p^j)$, for distinct $i,j \in [\chid]$.
        Assume without loss of generality that $c(p^i) = i$, for $i \in [\chid]$.
    
        For $G_z$, consider a mapping between the coloring of vertices of $\hat{B}^{G_z}_{i,j}$ and the value of $x_i$ for some assignment of the variables of $\phi$.
        In particular, the coloring of both vertices $a$ and $x$ with color $k \in [\chid]$ is mapped with an 
        assignment where $x_i$ receives value $k$,
        while if only $a$ receives color $k$, with an assignment where $x_i$ receives value $\chid + k$.
        We will say that an \emph{inconsistency} occurs in a variable gadget $P^{G_z}_i$ if there exist block gadgets $\hat{B}^{G_z}_{i,j}$ and $\hat{B}^{G_z}_{i,j+1}$,
        such that the coloring of the vertices of each block gadget maps to different assignments of $x_i$.
        We say that $G_z$ is \emph{consistent} if no inconsistency occurs in its variable gadgets $P^{G_z}_i$, for every $i \in [n]$.
    
        \begin{claim}
            There exists $\pi \in [\kappa]$ such that $G_\pi$ is consistent.
        \end{claim}
    
        \begin{claimproof}
            We will prove that every path $\mathcal{P}^i$ may induce at most 1 inconsistency.
            In that case, since there are $n$ such paths and $\kappa = n + 1$ copies of $G_0$,
            due to the pigeonhole principle there exists some $G_\pi$ without any inconsistencies.
    
            Consider a path $\mathcal{P}^i$ as well as a block gadget $\hat{B}^{G_z}_{i,j}$, for some $z \in [\kappa]$ and $j \in [m]$.
            Let $N(\hat{B}^{G_z}_{i,j})$ denote the block gadget right of $\hat{B}^{G_z}_{i,j}$,
            i.e.~vertex $a'$ of $\hat{B}^{G_z}_{i,j}$ coincides with vertex $a$ of $N(\hat{B}^{G_z}_{i,j})$.
            Moreover, let $\hat{B}^{G_{z'}}_{i,j'}$, where either a) $z' = z$ and $j' > j$ or b) $z' > z$ and $j' \in [m]$,
            denote some block gadget which appears to the right of $\hat{B}^{G_z}_{i,j}$.
            For every block gadget, due to the properties of the equality gadget,
            it holds that $c(a) = c(a')$,
            therefore the color of vertex $a$ is the same for all block gadgets belonging to the same path $\mathcal{P}^i$.
            Consider the following two cases regarding the vertices of $\hat{B}^{G_z}_{i,j}$:
            \begin{itemize}
                \item If $c(a) \neq c(x)$, it follows that $c(a) = c(y)$, since alternatively $b$ would have $2$ same colored neighbors.
                Then, it holds that $a'$, which is the vertex $a$ of $N(\hat{B}^{G_z}_{i,j})$, has a same colored neighbor in $\hat{B}^{G_z}_{i,j}$,
                thus for the vertex $x$ of $N(\hat{B}^{G_z}_{i,j})$ it follows that $c(x) \neq c(a)$, and inductively,
                it follows that $c(a) \neq c(x)$ for all block gadgets $\hat{B}^{G_{z'}}_{i,j'}$.
    
                \item If $c(a) = c(x)$, it follows that $c(a) \neq c(y)$, since $c(x)$ cannot have two same colored neighbors.
                Then, it holds that $a'$, which is the vertex $a$ of $N(\hat{B}^{G_z}_{i,j})$, has no same colored neighbor in $\hat{B}^{G_z}_{i,j}$.
                Consequently, for the vertices $a$ and $x$ of $N(\hat{B}^{G_z}_{i,j})$ it follows that either $c(a) = c(x)$ or $c(a) \neq c(x)$.
                The same holds for all block gadgets $\hat{B}^{G_{z'}}_{i,j'}$.
            \end{itemize}
            Thus, it follows that every path can induce at most $1$ inconsistency, and since there is a total of $n$ paths,
            there exists a copy $G_\pi$ which is consistent.
        \end{claimproof}

        Consider an assignment $f : X \rightarrow Y$ as follows.
        Let $a$ and $x$ denote vertices of the block gadget $\hat{B}^{G_\pi}_{i,j}$,
        where $c(a) = k \in [\chid]$.
        Then, set $f(x_i) = k$ if $c(x) = k$, and $f(x_i) = \chid + k$ otherwise.
        Notice that one of the above cases holds for every block gadget,
        thus all variables $x_i$ are assigned a value and $f$ is well defined.
    
        It remains to prove that this assignment satisfies all constraints.
        Consider the constraint gadget $\hat{C}^{G_\pi}_j$, where $j \in [m]$.
        We first prove that $c(v^j_\ell) = 1$, for some $\ell \in [s(j)]$.
        Assume that this is not the case.
        Then it follows that every vertex $k_w$ has two neighbors of color $2$,
        consequently $c(k_w) = 1$, for all $w \in [s(j)]$.
        However, due to $Q(p^2, k_{s(j)})$, it follows that $c(k_{s(j)}) = 2$, which is a contradiction.
        Let $v^j_\ell$ such that $c(v^j_\ell) = 1$.
        In that case, due to the implication/exclusion gadgets involving $v^j_\ell$,
        it follows that, if variable $x_i$ is involved in the constraint $c_j$ and
        $v^j_\ell$ corresponds to an assignment where $x_i$ has value $s \in Y$,
        then
        \begin{romanenumerate}
            \item if $s = k$, where $k \in [\chid]$, then $c(a) = k = c(x)$,
            \item if $s = \chid + k$, where $k \in [\chid]$, then $c(a) = k \neq c(x)$,
        \end{romanenumerate}
        where vertices $a$ and $x$ belong to $\hat{B}^{G_\pi}_{i,j}$.
        However, in that case, the assignment that corresponds to $v^j_\ell$ is a restriction of $f$,
        thus $f$ satisfies the constraint $c_j$.
        Since $j = 1, \ldots, m$ was arbitrary, this concludes the proof that $f$ is a satisfying assignment for $\phi$.
    \end{proof}
    
    \begin{lemma}\label{lem:dc_tw_lb_D1_pw}
        It holds that $\pw(G) \leq n + \bO(1)$.
    \end{lemma}
    
    \begin{proof}
        Due to~\cref{dc_gadgets_pw}, it holds that $\pw(G) = \pw(G' - P) + 3\chid$,
        where $G'$ is the graph we obtain from $G$ by removing all the equality/palette/difference/exclusion/implication gadgets
        and add all edges between their endpoints which are not already connected.
        It therefore suffices to show that $\pw(G' - P) = n + \bO(1)$.
    
        We will do so by providing a mixed search strategy to clean $G' - P$ using at most this many searchers simultaneously.
        Since for the mixed search number \ms{} it holds that $\pw(G' - P) \leq \ms(G' - P) \leq \pw(G' - P) + 1$,
        we will show that $\ms(G' - P) \leq n + 5 + B^q$ and the statement will follow.
    
        Start with graph $G_1$.
        Place $s(c_1) + 1$ searchers to the vertices $r_1$ and $v^1_w$ of $\hat{C}^{G_1}_1$, for $w \in [s(c_1)]$,
        as well as $n$ searchers on vertices $a$ of block gadgets $\hat{B}^{G_1}_{i,1}$, for $i \in [n]$.
        By moving the searcher placed on $r_1$ along the path formed by $k_1, r_2, k_2, \ldots$, all the edges of the constraint gadget can be cleaned.
        Next we will describe the procedure to clean $\hat{B}^{G_1}_{i,1}$.
        Move four extra searchers to all other vertices of $\hat{B}^{G_1}_{i,1}$, namely $x, y, b, a'$.
        Afterwards, remove the searchers from vertices $a,x,y,b$.
        Repeat the whole procedure for all $i \in [n]$.
    
        In order to clean the rest of the graph, we first move the searchers from $\hat{C}^{G_z}_j$ to $\hat{C}^{G_z}_{j+1}$ if $j < m$
        or to $\hat{C}^{G_{z+1}}_1$ alternatively (possibly introducing new searchers if required),
        clean the latter, and then proceed by cleaning the corresponding block gadgets.
        By repeating this procedure, in the end we clean all the edges of $G'-P$ by using at most $n + 5 + B^q = n + \bO (1)$ searchers.
    \end{proof}
    
    Therefore, in polynomial time, we can construct a graph $G$,
    of pathwidth $\pw(G) \leq n + \bO(1)$ due to~\cref{lem:dc_tw_lb_D1_pw},
    such that, due to~\cref{lem:dc_tw_lb_D1_cor1,lem:dc_tw_lb_D1_cor2},
    deciding whether $G$ admits a $(\chid,1)$-coloring is equivalent to deciding whether $\phi$ is satisfiable.
    In that case, assuming there exists a $\sO((2 \chid - \varepsilon)^{\pw(G)})$ algorithm for \DC{} for $\chid$ colors and $\Delta = 1$,
    then for $B = 2 \chid$,
    one could decide $q$-CSP-$B$ in time $\sO((2 \chid - \varepsilon)^{\pw(G)}) = \sO((B-\varepsilon)^{n + \bO(1)}) = \sO((B-\varepsilon)^n)$,
    which contradicts the SETH due to~\cref{thm:q_CSP_B_SETH}.
\end{proof}

\begin{theorem}
    For any constant $\varepsilon > 0$ and for any fixed $\chid \geq 2$, $\Delta \geq 2$,
    there is no $\sO((\chid \cdot (\Delta + 1) - \varepsilon)^{\pw})$ algorithm
    deciding whether $G$ admits a $(\chid,\Delta)$-coloring,
    where \pw{} denotes the graph's pathwidth,
    unless the SETH is false.
\end{theorem}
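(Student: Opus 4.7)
The plan is to mirror the reduction used for \BDD{} with $\Delta \geq 2$ in \cref{subsec:bdd_tw_lb} and combine it with the block/constraint-gadget blueprint from the $\Delta=1$ case for \DC{} presented in \cref{sec:dc_tw_lb}. Starting from an instance $\phi$ of $q$-CSP-$B$ with $B = \chid \cdot (\Delta+1)$ (so that a $\sO((\chid(\Delta+1)-\varepsilon)^{\pw})$ algorithm for \DC{} would give a $\sO((B-\varepsilon)^n)$ algorithm for $q$-CSP-$B$, contradicting \cref{thm:q_CSP_B_SETH}), I would construct in polynomial time a graph $G$ of pathwidth $n + \bO(1)$ which admits a $(\chid,\Delta)$-coloring iff $\phi$ is satisfiable. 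Each value in $[B]$ is encoded by a pair $(c,p)\in[\chid]\times[0,\Delta]$, where $c$ is the color of a designated head vertex $a$ of a block, and $p$ is a ``level'' counting how many same-colored neighbors $a$ has inside the block.

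The core ingredient is a block gadget $\hat{B}_{i,j}$ with an input vertex $a$ and an output vertex $a'$. We enforce $c(a)=c(a')$ via an equality gadget $Q(a,a')$, and we introduce $\Delta$ slot pairs $(\chi_t,y_t)$ together with auxiliary vertices and palette/exclusion gadgets that force, for each $t$, exactly one of $\chi_t,y_t$ to receive color $c(a)$; then $p$ is defined to be the number of $\chi_t$'s sharing the color of $a$. The defect budget is spent so that $a$ uses $p$ same-color edges toward its $\chi$-slots of the current block and $\Delta-p'$ toward the $y$-slots of the previous block (where $p'$ is that block's level), so the total defect $p + (\Delta - p')$ of $a$ in the global graph must be at most $\Delta$, giving monotone drift $p \le p'$ along a variable gadget. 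Serially connecting $m$ blocks in the usual path-like manner produces a variable gadget $P_i$ for each $x_i$. The constraint gadget $\hat{C}_j$ is built as in the $\Delta=1$ case of \cref{sec:dc_tw_lb}: a selector chain forces exactly one of $s(c_j)$ candidate vertices $v_\ell^j$ to receive the ``selected'' palette color, and for each value $v=(c-1)(\Delta+1)+p+1$ used by a satisfying assignment we attach implication and exclusion gadgets from $v_\ell^j$ to the head vertex $a$ and to the $\chi$-vertices of $\hat{B}_{i,j}$, certifying both $c(a)=p^c$ and that exactly $p$ of the $\chi_t$'s share color $p^c$.

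Because the level can drift monotonically across blocks, I would take $\kappa=\kappa_1\kappa_2$ copies $G_1,\ldots,G_\kappa$ of the base graph with $\kappa_1=n+1$ (to nail color inconsistencies, as in the $\Delta=1$ case) and $\kappa_2 = (2\Delta+1)n+1$ (to nail level drift, as in the \BDD{} case for $\Delta\ge 2$), and apply a double pigeonhole argument to locate a copy $G_\pi$ in which every variable gadget exhibits a single consistent $(c,p)$-state; the corresponding assignment $f$ then satisfies every constraint, because the selected $v_\ell^j$'s implications force $f$ to restrict to a satisfying assignment of $c_j$. The converse direction is the standard forward simulation: start from a satisfying $f$, color heads $a$ by their assigned colors, place the $\chi$-/$y$-slots to realize the prescribed levels, and extend to all internal vertices using \cref{lem:dc_equality_gadget,lem:dc_palette_gadget,lem:dc_difference_gadget,lem:dc_exclusion_gadget,lem:dc_implication_gadget}. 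Finally, the pathwidth bound follows from \cref{dc_gadgets_pw} together with a mixed search strategy that keeps one searcher on each of the $n$ ``current'' head vertices and $\bO(\chid + B^q)$ searchers on the currently processed constraint gadget, sweeping left to right through the $\kappa$ copies.

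The main obstacle I expect is the design of the block gadget so that simultaneously (i) it admits exactly $\chid(\Delta+1)$ valid internal states per color of $a$, (ii) the level-monotonicity property is clean enough that only $\bO(\Delta)$ drift events can occur per variable path, and (iii) the interface with the constraint gadget can be realized by a constant number of implication/exclusion gadgets without inflating pathwidth beyond $\bO(\chid)$. The delicate part is ruling out ``unstable'' colorings of the slot pairs that would create spurious extra states, which requires threading palette gadgets $P(a,\chi_t,y_t)$ (or variants using $p^1,p^2,p^3$ when $\chid\ge 3$) to collapse any non-canonical configuration back into the $(c,p)$ encoding. Once this gadget works, the rest of the argument reuses the existing machinery almost verbatim, and the final SETH-refuting chain of equalities $\sO((\chid(\Delta+1)-\varepsilon)^{n+\bO(1)}) = \sO((B-\varepsilon)^n)$ closes the proof.
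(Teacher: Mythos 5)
Your overall strategy coincides with the paper's: a reduction from $q$-CSP-$B$ with $B=\chid(\Delta+1)$, a block gadget encoding each value as a pair (color of a head vertex $a$, number $p\in[0,\Delta]$ of designated slots sharing that color), equality-linked heads along a variable path, a monotone-drift argument over the levels with pigeonhole over polynomially many copies, a selector-chain constraint gadget wired by implication/exclusion gadgets, and the pathwidth bound via the gadget-contraction lemma plus a mixed search sweep. The copy count you propose ($\kappa_1\kappa_2$ with a separate factor for color inconsistencies) is redundant once $Q(a,a')$ forces the head color to propagate, but that only costs a polynomial factor and does not affect correctness.

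The genuine gap is the block gadget itself, which is the main technical content of this case and which you leave unconstructed. The property you want, ``for each $t$ exactly one of $\chi_t,y_t$ receives the color of $a$,'' cannot be obtained the way you hint: the palette gadget $P(a,\chi_t,y_t)$ only guarantees that two of the three vertices share \emph{some} color (e.g.\ $\chi_t$ and $y_t$ may share a color different from $c(a)$), and the implication/exclusion gadgets are defined relative to the fixed palette vertices $p^1,\dots,p^{\chid}$, so they cannot condition on the a priori unknown color of $a$. The paper circumvents exactly this difficulty by introducing two auxiliary vertices $b_1,b_2$ per block with $Q(a,b_2)$, $Q(b_2,a')$ and $D(b_1,b_2,\Delta+1)$, making every slot $\chi_t,y_t$ adjacent to both $b_1$ and $b_2$, and adding $P(b_1,b_2,\chi_t)$, $P(b_1,b_2,y_t)$ when $\chid\geq 3$; then each slot is forced to take the color of $b_1$ or of $b_2=a$, and the degree budgets of $b_1$ and $b_2$ force exactly $\Delta$ of the $2\Delta$ slots onto each color, which is a \emph{global} count rather than your per-pair constraint but yields the same invariant and the same drift inequality ($a'$ already spends $\Delta-p$ defect on the previous block's $y$-slots). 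Without such a concrete mechanism your sketch does not rule out spurious slot colorings (slots colored neither $c(a)$ nor a single fixed alternative, or counts other than $\Delta$), and both the state count $\chid(\Delta+1)$ and the monotonicity argument collapse; so as written the proof is incomplete at its central step, even though everything surrounding that step matches the paper's argument.
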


\begin{proof}
    Fix some positive $\varepsilon > 0$ for which we want to prove the theorem.
    Let $B = \chid \cdot (\Delta+1)$.
    We will reduce $q$-CSP-$B$, for some $q$ that is a constant that only depends on $\varepsilon$,
    to \DC{} for maximum degree $\Delta \geq 2$ and $\chid$ colors in a way that
    ensures that if the resulting \DC{} instance could be solved in time $\sO((\chid \cdot (\Delta+1) - \varepsilon)^{\pw})$,
    then we would obtain an algorithm for $q$-CSP-$B$ that would contradict the SETH due to~\cref{thm:q_CSP_B_SETH}.
    To this end, let $\phi$ be an instance of $q$-CSP-$B$ of $n$ variables $X = \setdef{x_i}{i \in [n]}$ taking values over the set $Y = [0, B-1]$
    and $m$ constraints $C = \setdef{c_j}{j \in [m]}$.
    For each constraint we are given a set of at most $q$ variables which are involved in this constraint and a list of satisfying assignments for these variables,
    the size of which is denoted by $s : C \rightarrow [B^q]$,
    i.e.~$s(c_j) \leq B^q = \bO(1)$ denotes the number of satisfying assignments for constraint $c_j$.
    We will construct in polynomial time an equivalent instance $G$ of \DC{} for $\Delta \geq 2$ and $\chid$ colors,
    where $\pw(G) \leq n + \bO (1)$.
    
    Since we will repeatedly use the equality, difference, and palette gadgets (see~\cref{sec:dc_coloring_gadgets}),
    we will use the following convention:
    whenever $v_1, v_2, v_3$ are vertices we have already introduced to $G$,
    when we say that we add an equality gadget $Q(v_1, v_2)$, a difference gadget $D(v_1, v_2, \delta)$ or a palette gadget $P(v_1, v_2, v_3)$,
    this means that we add to $G$ a copy of $Q(u_1, u_2, \chid, \Delta)$, of $D(u_1, u_2, \chid, \Delta, \delta)$ or of $P(u_1, u_2, u_3, \chid, \Delta)$ respectively
    and then identify $u_1, u_2(,u_3)$ with $v_1, v_2(,v_3)$ respectively.
    
    \proofsubparagraph*{Palette Vertices.}
    Construct a clique of $\chid$ vertices $P = \setdef{p^i}{i \in [\chid]}$.
    For $i \in [\chid]$, attach to vertex $p^i$ leaves $p^i_l$, for $l \in [\Delta]$,
    and add equality gadgets $Q(p^i, p^i_l)$.
    
    Whenever $v_1, v_2$ are vertices we have already introduced to $G$,
    when we say that we add an exclusion gadget $E(v_1, v_2, v'_1, v'_2)$ or an implication gadget $I(v_1, v_2,v'_1, v'_2)$,
    this means that we add to $G$ a copy of $E(u_1, u_2, c_1, c_2, C, \chid, \Delta)$ or of $I(u_1, u_2, c_1, c_2, C, \chid, \Delta)$ respectively
    and then identify $u_1, u_2, c_i$ with $v_1, v_2, p^i$ respectively, for all $i \in [\chid]$.
    
    \proofsubparagraph*{Block and Variable Gadgets.}
    For every variable $x_i$ and every constraint $c_j$, construct a \emph{block gadget} $\hat{B}_{i,j}$ as depicted in~\cref{fig:dc_tw_lb_t2_block}.
    In order to do so, we introduce vertices $a, a', b_1, b_2, \chi_i$ and $y_i$, for $i \in [\Delta]$.
    Then, we add gadgets $Q(a, b_2)$, $Q(b_2, a')$ and $D(b_1, b_2,\Delta+1)$.
    Finally, we add edges $\braces{a,\chi_i}$, $\braces{a',y_i}$, as well as between vertices $b_1, b_2$ and every vertex $\chi_i, y_i$.
    Moreover, if $\chid \geq 3$, we add palette gadgets $P(b_1, b_2, \chi_i)$ and $P(b_1, b_2, y_i)$, for all $i \in [\Delta]$.
    Next, for $j \in [m-1]$, we serially connect the block gadgets $\hat{B}_{i,j}$ and $\hat{B}_{i,j+1}$ so that the vertex $a'$ of $\hat{B}_{i,j}$ is the vertex $a$ of $\hat{B}_{i,j+1}$,
    thus resulting in $n$ ``paths'' $P_1, \ldots, P_n$ consisting of $m$ serially connected block gadgets, called \emph{variable gadgets}.
    Intuitively, the variable gadget is meant to represent a variable $x_i$ and hence needs to have
    $\chid (\Delta + 1)$ different viable configurations.
    These are made up by deciding on a color for $a$ ($\chid$ choices) and then deciding
    how many vertices of $\setdef{\chi_i}{i \in [\Delta]}$ will receive the same color as $a$ ($\Delta+1$ choices).

    \begin{figure}[ht]
    \centering 
      \begin{subfigure}[b]{0.4\linewidth}
      \centering
        \begin{tikzpicture}[scale=0.75, transform shape]
            
            
            \node[vertex] (a) at (0,5) {};
            \node[] () at (0,5.3) {$a$};
            
            \node[vertex] (x1) at (1.5,6) {};
            \node[] () at (1.5,6.3) {$\chi_1$};
            
            \node[vertex] (xD) at (1.5,4) {};
            \node[] () at (1.5,4.3) {$\chi_\Delta$};
            
            \node[] (dots1) at (1.5,5.1) {$\vdots$};

            \node[vertex] (b1) at (3,5) {};
            \node[] () at (3,5.3) {$b_1$};
            
            \node[vertex] (b2) at (3,3) {};
            \node[] () at (3,2.7) {$b_2$};

            \node[vertex] (y1) at (4.5,6) {};
            \node[] () at (4.5,6.3) {$y_1$};
            
            \node[vertex] (yD) at (4.5,4) {};
            \node[] () at (4.5,4.3) {$y_\Delta$};
            
            \node[] (dots2) at (4.5,5.1) {$\vdots$};
            
            \node[vertex] (a') at (6,5) {};
            \node[] () at (6,5.3) {$a'$};
            
            
            \draw[] (a)--(x1)--(b1)--(y1)--(a');
            \draw[] (a)--(xD)--(b2)--(yD)--(a');
            \draw[] (x1)--(b2)--(y1);
            \draw[] (xD)--(b1)--(yD);
            \draw[dashed] (a) edge [bend right] (b2);
            \draw[dashed] (a') edge [bend left] (b2);
            \draw[dashed] (b1)--(b2);
            
            \path (b1) -- (b2) node[midway] {$\neq$};
            \path (a) -- (b2) node[midway,below=0.7] {$=$};
            \path (a') -- (b2) node[midway,below=0.7] {$=$};

            \end{tikzpicture}
            \caption{Block gadget when $\Delta \geq 2$.}
            \label{fig:dc_tw_lb_t2_block}
        \end{subfigure}
    \begin{subfigure}[b]{0.4\linewidth}
    \centering
        \begin{tikzpicture}[scale=0.6, transform shape]
        
            
            \node[vertex] (a) at (0,5) {};
            
            \node[vertex] (x1) at (1.5,6) {};
            
            \node[vertex] (xD) at (1.5,4) {};
            
            \node[] (dots1) at (1.5,5.1) {$\vdots$};
            
            \node[vertex] (b1) at (3,5) {};
            
            \node[vertex] (b2) at (3,3) {};
                    
            \node[vertex] (y1) at (4.5,6) {};
            
            \node[vertex] (yD) at (4.5,4) {};
            
            \node[] (dots2) at (4.5,5.1) {$\vdots$};
            
            \node[vertex] (a') at (6,5) {};
    
            \node[vertex] (x1') at (7.5,6) {};
            
            \node[vertex] (xD') at (7.5,4) {};
            
            \node[] (dots1') at (7.5,5.1) {$\vdots$};
            
            \node[vertex] (b1') at (9,5) {};
            
            \node[vertex] (b2') at (9,3) {};
                    
            \node[vertex] (y1') at (10.5,6) {};
            
            \node[vertex] (yD') at (10.5,4) {};
            
            \node[] (dots2') at (10.5,5.1) {$\vdots$};
            
            \node[vertex] (a2) at (12,5) {};

            
            \draw[] (a)--(x1)--(b1)--(y1)--(a');
            \draw[] (a)--(xD)--(b2)--(yD)--(a');
            \draw[] (x1)--(b2)--(y1);
            \draw[] (xD)--(b1)--(yD);
            \draw[dashed] (a) edge [bend right] (b2);
            \draw[dashed] (a') edge [bend left] (b2);
            \draw[dashed] (b1)--(b2);
    
            \draw[] (a')--(x1')--(b1')--(y1')--(a2);
            \draw[] (a')--(xD')--(b2')--(yD')--(a2);
            \draw[] (x1')--(b2')--(y1');
            \draw[] (xD')--(b1')--(yD');
            \draw[dashed] (a') edge [bend right] (b2');
            \draw[dashed] (a2) edge [bend left] (b2');
            \draw[dashed] (b1')--(b2');
            
            \path (b1) -- (b2) node[midway] {$\neq$};
            \path (a) -- (b2) node[midway,below=0.7] {$=$};
            \path (a') -- (b2) node[midway,below=0.7] {$=$};
    
            \path (b1') -- (b2') node[midway] {$\neq$};
            \path (a') -- (b2') node[midway,below=0.7] {$=$};
            \path (a2) -- (b2') node[midway,below=0.7] {$=$};
        \end{tikzpicture}
        \caption{Serially connected block gadgets.}
        \label{fig:dc_tw_lb_t2_variable}
      \end{subfigure}
    \caption{Variable gadgets are comprised of serially connected block gadgets.}
    \label{fig:dc_tw_lb_t2_block_and_variable}
    \end{figure}
    
    \proofsubparagraph*{Constraint Gadget.}
    This gadget is responsible for determining constraint satisfaction,
    based on the choices made in the rest of the graph.
    For constraint $c_j$, construct the \emph{constraint gadget} $\hat{C}_j$ as depicted in~\cref{fig:dc_tw_lb_c_toul_3_constraint}:
    \begin{itemize}
        \item introduce vertices $r_w$, where $w \in [s(c_j)]$,
        and add $Q(p^2, r_1)$, as well as $P(p^1,p^2,r_w)$ when $\chid \geq 3$, for $w \in [2, s(j)]$
    
        \item for $w \in [s(j)]$, introduce vertices $v^j_w$,
        as well as palette gadgets $P(p^1,p^2,v^j_w)$ when $\chid \geq 3$,
        and fix an arbitrary one-to-one mapping between those vertices and the satisfying assignments of $c_j$,
    
        \item introduce vertices $k_w$,
        where $w \in [s(c_j)]$, and add gadgets $D(p^2,k_w,\Delta-1)$, $Q(p^2, k_{s(c_j)})$,
        as well as $P(p^1, p^2, k_w)$ when $\chid \geq 3$ for $w \in [s(j)-1]$
    
        \item add edges between vertex $k_w$ and vertices $r_w, v^j_w$, as well as $D(k_w, r_{w+1}, \Delta+1)$,
    
        \item if variable $x_i$ is involved in the constraint $c_j$ and $v^j_\ell$ corresponds to an assignment where $x_i$ has value $s \in Y$,
        where $s = (\Delta + 1) \cdot (k-1) + \delta$, for $k \in [\chid]$ and $\delta \in [0, \Delta]$,
        then add implication gadgets $I(v^j_\ell, a, p^1, p^k)$, $I(v^j_\ell, \chi_{i_1}, p^1, p^k)$ and
        exclusion gadgets $E(v^j_\ell, \chi_{i_2}, p^1, p^k)$,
        for $i_1 \in [\delta]$ and $i_2 \in [\delta+1, \Delta]$,
        where vertices $a$ and $\chi$ belong to $\hat{B}_{i,j}$.
    \end{itemize}
    Intuitively, the constraint gadget is set up in a way that forces, for some $\ell \in [s(c_j)]$,
    vertex $v^j_\ell$ to receive color $1$,
    which in turn ``activates'' the implication and exclusion gadgets we have added to this vertex.
    This ensures that the assignment encoded by the variable gadgets agrees with the satisfying assignment
    of $c_j$ represented by $v^j_\ell$.

    \begin{figure}[ht]
    \centering
    \begin{tikzpicture}[transform shape]
    
    
    \node[vertex] (r) at (0,1) {};
    \node[above =0.2 of r] {$r_1$};
    
    \node[black_vertex, right = of r] (k1) {};
    \node[below right =0.01 of k1] {$k_1$};
    
    \node[vertex, below = of k1] (c1) {};
    \node[right =0.01 of c1] {$v^j_1$};
    
    \node[vertex, right = of k1] (r1) {};
    \node[above =0.2 of r1] {$r_2$};
    \path (k1) -- (r1) node[midway,above] {$\neq$};
    
    \node[black_vertex, right = of r1] (k2) {};
    \node[below right =0.01 of k2] {$k_2$};
    
    \node[vertex, below = of k2] (c2) {};
    \node[right =0.01 of c2] {$v^j_2$};
    
    \node[vertex, right =2 of k2] (rm) {};
    \node[above =0.2 of rm] {$r_{s(c_j)}$};
    
    \node[black_vertex, right = of rm] (km) {};
    \node[below right =0.01 of km] {$k_{s(c_j)}$};
    
    \node[vertex, below = of km] (cm) {};
    \node[right =0.01 of cm] {$v^j_{s(c_j)}$};
    
    \draw[] (r)--(k1);
    \draw[] (c1)--(k1);
    
    \draw[dashed] (k1)--(r1);
    
    \draw[] (r1)--(k2);
    \draw[] (c2)--(k2);
    
    \draw[dotted] (k2)--(rm);
    
    \draw[] (rm)--(km);
    \draw[] (cm)--(km);
    
    \end{tikzpicture}
    \caption{Constraint gadget.
    For every black vertex $k_w$, there exists a gadget $D(p^2, k_w, \Delta-1)$.}
    \label{fig:dc_tw_lb_c_toul_3_constraint}
    \end{figure}
    
    Let graph $G_0$ correspond to the graph containing all variable gadgets $P_i$ as well as all the constraint gadgets $\hat{C}_j$,
    for $i \in [n]$ and $j \in [m]$.
    We refer to the block gadget $\hat{B}_{i,j}$, to the variable gadget $P_i$,
    and to the constraint gadget $\hat{C}_j$ of $G_z$
    as $\hat{B}^{G_z}_{i,j}$, $P^{G_z}_i$, and $\hat{C}^{G_z}_j$ respectively.
    To construct graph $G$, introduce $\kappa = n \cdot \Delta + 1$ copies $G_1, \ldots, G_\kappa$ of $G_0$,
    such that they are connected sequentially as follows:
    for $i \in [n]$ and $j \in [\kappa - 1]$, 
    the vertex $a'$ of $\hat{B}^{G_j}_{i,m}$ is the vertex $a$ of $\hat{B}^{G_{j+1}}_{i,1}$.
    Let $\mathcal{P}^i$ denote the ``path'' resulting from $P^{G_1}_i, \ldots, P^{G_\kappa}_i$.
        
    \begin{lemma}\label{lem:dc_tw_lb_D_toul_2_cor1}
        For any $\chid \geq 2$, if $\phi$ is satisfiable, then $G$ admits a $(\chid,\Delta)$-coloring.
    \end{lemma}
    
    \begin{proof}
        Let $f : X \rightarrow Y$ denote an assignment which satisfies all the constraints $c_1, \ldots, c_m$.
        We will describe a $(\chid, \Delta)$-coloring $c : V(G) \rightarrow [\chid]$ of $G$.
    
        Let $c(p^i) = c(p^i_l) = i$, for $i \in [\chid]$ and $l \in [\Delta]$.
        Next, for the vertices of block gadget $\hat{B}^{G_z}_{i,j}$, where $z \in [\kappa]$, $i \in [n]$ and $j \in [m]$,
        if $f(x_i) = (\Delta + 1) \cdot (k-1) + \delta$, for $k \in [\chid]$ and $\delta \in [0, \Delta]$,
        then let
        \begin{itemize}
            \item $c(a) = c(\chi_{i_1}) = c(y_{i_2}) = k$, where $i_1 \in [\delta]$ and $i_2 \in [\delta+1, \Delta]$,
            \item $c(b) = c(\chi_{i_1}) = c(y_{i_2}) = k'$, for some arbitrary $k' \in [\chid] \setminus \{ k \}$,
            where $i_1 \in [\delta+1, \Delta]$ and $i_2 \in [\delta]$.
        \end{itemize}
        Regarding the constraint gadgets, let $c_j$ be one of the constraints of $\phi$.
        Let the $\Delta-1$ leaves attached to vertex $k_w$ receive color $2$, for $w \in [s(c_j)]$. 
        Since $f$ is a satisfying assignment, there exists at least one vertex among $v^j_1, \ldots, v^j_{s(j)}$ in $\hat{C}^{G_z}_j$,
        for $z \in [\kappa]$, mapped to a restriction of $f$.
        Let $v^j_\ell$ be one such vertex of minimum index.
        Then, let $c(v^j_\ell) = 1$, while any other vertex $v^j_{\ell'}$, with $\ell' \neq \ell$, receives color $2$.
        Moreover, let $k_w$ receive color $1$ for $w < \ell$ and color $2$ for $\ell \leq w \leq s(c_j)$.
        On the other hand, let $r_w$ receive color $2$ for $w \leq \ell$ and color $1$ for $\ell < w \leq s(c_j)$.

        Lastly, properly color the internal vertices of the equality/palette/difference/exclusion/implication gadgets 
        using~\cref{lem:dc_equality_gadget,lem:dc_palette_gadget,lem:dc_difference_gadget,lem:dc_exclusion_gadget,lem:dc_implication_gadget}.
        To see that all gadgets are properly colored using these lemmas,
        observe that any vertex colored so far has at most $\Delta$ same-colored neighbors, while the following hold:
        \begin{itemize}
            \item $P = \setdef{p^i}{i \in [\chid]}$ consists of $\chid$ vertices,
            each receiving a distinct color,
            and for all $i \in [\chid]$, $l \in [\Delta]$,
            $c(p^i) = c(p^i_l)$,
            
            \item in all block gadgets, $c(a) = c(b_2)$,
            $c(b_2) = c(a')$,            
            $c(b_1) \neq c(b_2)$,
            and vertices $\chi_i,y_i$ for $i \in [\Delta]$ are colored either $c(b_1)$ or $c(b_2)$,
            
            \item for $z \in [\kappa]$ and $j \in [m]$, in constraint gadget $\hat{C}^{G_z}_j$ it holds that
            $c(r_1) = c(p^2)$,
            $c(k_{s(c_j)}) = c(p^2)$,
            $c(k_w) \neq c(r_{w+1})$ for $w \in [s(c_j)-1]$,
            vertices $k_w, r_w, v^j_w$ for $w \in [s(c_j)]$ are colored either $c(p^1)$ or $c(p^2)$
            while the leaves attached to $k_w$ all receive color $c(p^2)$,
            and lastly if $c(v^j_\ell) = c(p^1)$ and $x_i$ denotes a variable appearing in $c_j$ such that
            vertex $v^j_\ell$ corresponds to an assignment where $x_i$ receives value
            $s = (\Delta+1) \cdot (k-1) + \delta$ for $k \in [\chid]$ and $\delta \in [0, \Delta]$
            (which is a restriction of assignment $f$),
            then $c(a) = c(\chi_{i_1}) = k$ and $c(\chi_{i_2}) \neq k$, for $i_1 \in [\delta]$ and $i_2 \in [\delta+1, \Delta]$,
            where vertices $a,\chi_{i_1}, \chi_{i_2}$ belong to $\hat{B}^{G_z}_{i,j}$.
        \end{itemize}
        This concludes the proof.
    \end{proof}
    
    \begin{lemma}\label{lem:dc_tw_lb_D_toul_2_cor2}
        For any $\chid \geq 2$, if $G$ admits a $(\chid,\Delta)$-coloring,
        then $\phi$ is satisfiable.
    \end{lemma}
    
    \begin{proof}
        Let $c : V(G) \rightarrow [\chid]$ be a $(\chid,\Delta)$-coloring of $G$.
        Due to the properties of the equality gadgets, it holds that $c(p^i) = c(p^i_l)$,
        for all $i \in [\chid], l \in [\Delta]$.
        Since $c$ is a $(\chid,\Delta)$-coloring, it follows that $c(p^i) \neq c(p^j)$, for distinct $i,j \in [\chid]$.
        Assume without loss of generality that $c(p^i) = i$, for $i \in [\chid]$.
    
        For $G_z$, consider a mapping between the coloring of vertices of $\hat{B}^{G_z}_{i,j}$ and the value of $x_i$ for some assignment of the variables of $\phi$.
        In particular, the coloring of vertex $a$ as well as of $\delta \in [0,\Delta]$ vertices of $\setdef{\chi_i}{i \in [\Delta]}$ with color $k \in [\chid]$
        is mapped to an assignment where $x_i$ receives value $(\Delta + 1) \cdot (k-1) + \delta$.    
        We will say that an \emph{inconsistency} occurs in a variable gadget $P^{G_z}_i$ if there exist block gadgets $\hat{B}^{G_z}_{i,j}$ and $\hat{B}^{G_z}_{i,j+1}$,
        such that the coloring of the vertices of each block gadget maps to different assignments of $x_i$.
        We say that $G_z$ is \emph{consistent} if no inconsistency occurs in its variable gadgets $P^{G_z}_i$, for every $i \in [n]$.
    
        Notice that, for the vertices of the block gadget $\hat{B}^{G_z}_{i,j}$,
        it holds that vertices $\chi_{i_1}, y_{i_2}$, for $i_1,i_2 \in [\Delta]$,
        are colored either $c(a)$ or $c(b_1)$,
        and since there are $2 \Delta$ such vertices in total, exactly half of them are colored with each color,
        since otherwise either $b_1$ or $b_2$ have more than $\Delta$ same colored neighbors.
        We will now prove the following claim.
    
        \begin{claim}
            There exists $\pi \in [\kappa]$ such that $G_\pi$ is consistent.
        \end{claim}
    
        \begin{claimproof}
            We will prove that every path $\mathcal{P}^i$ may induce at most $\Delta$ inconsistencies.
            In that case, since there are $n$ such paths and $\kappa = n \cdot \Delta + 1$ copies of $G_0$,
            due to the pigeonhole principle there exists some $G_\pi$ without any inconsistencies.
    
            Consider a path $\mathcal{P}^i$ as well as a block gadget $\hat{B}^{G_z}_{i,j}$, for some $z \in [\kappa]$ and $j \in [m]$.
            Let $N(\hat{B}^{G_z}_{i,j})$ denote the block gadget right of $\hat{B}^{G_z}_{i,j}$,
            i.e.~vertex $a'$ of $\hat{B}^{G_z}_{i,j}$ coincides with vertex $a$ of $N(\hat{B}^{G_z}_{i,j})$.
            Moreover, let $\hat{B}^{G_{z'}}_{i,j'}$, where either a) $z' = z$ and $j' > j$ or b) $z' > z$ and $j' \in [m]$,
            denote some block gadget which appears to the right of $\hat{B}^{G_z}_{i,j}$.
            For every block gadget, due to the properties of the equality gadget,
            it holds that $c(a) = c(a')$,
            therefore the color of vertex $a$ is the same for all block gadgets belonging to the same path $\mathcal{P}^i$.
            For the vertices of $\hat{B}^{G_z}_{i,j}$, assume that exactly $\delta$ vertices $\chi_{i_1}$ are colored with color $c(a)$.
            We will prove that then, in every gadget $\hat{B}^{G_{z'}}_{i,j'}$, at most $\delta$ vertices $\chi_{i_1}$ are colored with color $c(a)$.
            For the base of the induction, notice that in $\hat{B}^{G_z}_{i,j}$, exactly $\Delta - \delta$ vertices $y_{i_2}$ are colored with color $c(a)$.
            Thus, in $N(\hat{B}^{G_z}_{i,j})$, at most $\delta$ vertices $\chi_{i_1}$ receive color $c(a)$,
            since vertex $a'$ of $\hat{B}^{G_z}_{i,j}$ coincides with vertex $a$ of $N(\hat{B}^{G_z}_{i,j})$.
            Assume that this is the case for some gadget $\hat{B}^{G_{z'}}_{i,j'}$ to the right of $\hat{B}^{G_z}_{i,j}$.
            Then, since there are at least $\Delta - \delta$ vertices $y_{i_2}$ of $\hat{B}^{G_z}_{i,j}$ receiving color $c(a)$,
            it follows that there are at most $\delta$ vertices $\chi_{i_1}$ of $N(\hat{B}^{G_{z'}}_{i,j'})$ receiving color $c(a)$.
            Consequently, it follows that every path can induce at most $\Delta$ inconsistencies, and since there is a total of $n$ paths,
            there exists a copy $G_\pi$ which is consistent.
        \end{claimproof}
    
        Let $f : X \rightarrow Y$ be an assignment such that $f(x_i) = (\Delta + 1) \cdot (k-1) + \delta$,
        where, for the vertices of the block gadget $\hat{B}^{G_\pi}_{i,j}$,
        it holds that $c(a) = k \in [\chid]$ and exactly $\delta \in [0, \Delta]$ vertices of $\setdef{\chi_i}{i \in [\Delta]}$ are of color $k$.
    
        It remains to prove that this is an assignment that satisfies all constraints.
        Consider the constraint gadget $\hat{C}^{G_\pi}_j$, where $j \in [m]$.
        We first prove that $c(v^j_\ell) = 1$, for some $\ell \in [s(j)]$.
        Assume that this is not the case.
        Then it follows that every vertex $k_w$ has $\Delta+1$ neighbors of color $2$
        (remember that due to $D(p^2, k_i, \Delta-1)$,
        $k_w$ has $\Delta - 1$ neighboring leaves of color $c(p^2)$),
        consequently $c(k_w) \neq 2$, for every $w \in [s(j)]$.
        However, due to $Q(p^2, k_{s(j)})$, it follows that $c(k_{s(j)}) = 2$, which is a contradiction.
        Let $v^j_\ell$ such that $c(v^j_\ell) = 1$.
        In that case, due to the implication and exclusion gadgets involving $v^j_\ell$,
        it follows that, if variable $x_i$ is involved in the constraint $c_j$ and
        $v^j_\ell$ corresponds to an assignment where $x_i$ has, for $k \in [\chid]$ and $\delta \in [0, \Delta]$,
        value $(\Delta + 1) \cdot (k-1) + \delta$,
        then in $\hat{B}^{G_\pi}_{i,j}$, vertex $a$ as well as exactly $\delta$ vertices of $\setdef{\chi_i}{i \in [\Delta]}$ have color $k$.
        In that case, the assignment corresponding to $v^j_\ell$ is a restriction of $f$,
        thus $f$ satisfies constraint $c_j$.
        Since $j = 1, \ldots, m$ was arbitrary, this concludes the proof that $f$ is a satisfying assignment for $\phi$.
    \end{proof}
    
    \begin{lemma}\label{lem:dc_tw_lb_D_toul_2_pw}
        It holds that $\pw(G) \leq n + \bO(1)$.
    \end{lemma}
    
    \begin{proof}
        Due to~\cref{dc_gadgets_pw}, it holds that $\pw(G) = \pw(G' - P) + 3\chid$,
        where $G'$ is the graph we obtain from $G$ by removing all the equality/palette/difference/exclusion/implication gadgets
        and add all edges between their endpoints which are not already connected.
        It therefore suffices to show that $\pw(G' - P) = n + \bO(1)$.
    
        We will do so by providing a mixed search strategy to clean $G' - P$ using at most this many searchers simultaneously.
        Since for the mixed search number \ms{} it holds that $\pw(G' - P) \leq \ms(G' - P) \leq \pw(G' - P) + 1$,
        we will show that $\ms(G' - P) \leq n + 5 + B^q$ and the statement will follow.
    
        Start with graph $G_1$.
        Place $s(c_1) + 1$ searchers to the vertices $r_1$ and $v^1_w$ of $\hat{C}^{G_1}_1$, for $w \in [s(c_1)]$,
        as well as $n$ searchers on vertices $a$ of block gadgets $\hat{B}^{G_1}_{i,1}$, for $i \in [n]$.
        By moving the searcher placed on $r_1$ along the path formed by $k_1, r_2, k_2, \ldots$, all the edges of the constraint gadget can be cleaned.
        Next we will describe the procedure to clean $\hat{B}^{G_1}_{i,1}$.
        Move four extra searchers to vertices $a', b_1, b_2, \chi_1$ of $\hat{B}^{G_1}_{i,1}$.
        Move the latter searcher to all other vertices $\chi_p$ and $y_p$, thus successfully cleaning $\hat{B}^{G_1}_{i,1}$.
        Lastly, remove all the searchers from $\hat{B}^{G_1}_{i,1}$ apart from the one present on vertex $a'$.
        Repeat the whole procedure for all $i \in [n]$.
    
        In order to clean the rest of the graph, we first move the searchers from $\hat{C}^{G_z}_j$ to $\hat{C}^{G_z}_{j+1}$ if $j < m$
        or to $\hat{C}^{G_{z+1}}_1$ alternatively (possibly introducing new searchers if required),
        clean the latter, and then proceed by cleaning the corresponding block gadgets.
        By repeating this procedure, in the end we clean all the edges of $G'-P$ by using at most $n + 5 + B^q = n + \bO (1)$ searchers.
    \end{proof}
    
    Therefore, in polynomial time, we can construct a graph $G$,
    of pathwidth $\pw(G) \leq n + \bO(1)$ due to~\cref{lem:dc_tw_lb_D_toul_2_pw},
    such that, due to~\cref{lem:dc_tw_lb_D_toul_2_cor1,lem:dc_tw_lb_D_toul_2_cor2},
    deciding whether $G$ admits a $(\chid,\Delta)$-coloring is equivalent to deciding whether $\phi$ is satisfiable.
    In that case, assuming there exists a $\sO((\chid \cdot (\Delta+1) - \varepsilon)^{\pw(G)})$ algorithm for \DC,
    then for $B = \chid \cdot (\Delta+1)$,
    one could decide $q$-CSP-$B$ in time $\sO((\chid \cdot (\Delta+1) -\varepsilon)^{\pw(G)}) = \sO((B-\varepsilon)^{n + \bO(1)}) = \sO((B-\varepsilon)^n)$,
    which contradicts the SETH due to~\cref{thm:q_CSP_B_SETH}.
\end{proof}

\subsubsection{Algorithm}\label{sec:dc_tw_algo}

Here we present an algorithm for \DC{} parameterized by the treewidth of the input graph plus the target degree $\Delta$.
The algorithm uses standard techniques,
and closely follows the approach previously sketched in the $(\chid \Delta)^{\bO (\tw)} n^{\bO (1)}$ algorithm of~\cite{siamdm/BelmonteLM20,dmtcs/BelmonteLM22}.
The novelty is the use of a convolution technique presented in~\cite{tcs/CyganP10} in order to speed up the computation in the case of the join nodes.

\begin{theorem}
    Given an instance $\mathcal{I} = (G, \chid, \Delta)$ of \DC,
    as well as a nice tree decomposition of $G$ of width \tw,
    there exists an algorithm that decides $\mathcal{I}$ in time $\sO ((\chid \cdot (\Delta+1))^{\tw})$.
\end{theorem}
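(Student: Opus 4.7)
The plan is to run a standard dynamic programming on a nice tree decomposition (with explicit edge-introduction nodes) and to invoke the Cygan--Pilipczuk FFT trick only at join nodes in order to match the target table-size bound. For every bag $B_t$, a DP cell is indexed by a pair $(c, d)$ where $c : B_t \to [\chid]$ is a tentative color assignment on the bag and $d : B_t \to [0, \Delta]$ records, for every $v \in B_t$, the number of already-processed same-colored neighbors of $v$ in the subgraph associated with $t$; each edge of $G$ is accounted for exactly once, at its introduce-edge node. The cell $T[t][c,d]$ is Boolean and stores whether this partial state can be extended to a valid $(\chid, \Delta)$-coloring of the subgraph processed so far (so every already-forgotten vertex has in total at most $\Delta$ same-colored neighbors). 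The number of cells per bag is $(\chid(\Delta+1))^{|B_t|}$, so to hit the claimed bound we need per-bag work quasi-linear in this quantity.

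For leaf, introduce-vertex, introduce-edge, and forget-vertex nodes the transitions are routine and run in time quasi-linear in the table size: introducing $v$ guesses one of $\chid$ colors and sets $d_v=0$; introducing an edge $\{u,v\}$ acts as the identity unless $c(u)=c(v)$, in which case both $d_u$ and $d_v$ are incremented and states where either exceeds $\Delta$ are dropped; forgetting $v$ projects out the $v$-coordinate, keeping only entries where $d_v \leq \Delta$. The only nontrivial case is a join node $t$ with children $t_1, t_2$ sharing the bag $B_t$, for which the recurrence reads
\[
T[t][c,d] \;=\; \bigvee_{d^1 + d^2 = d} T[t_1][c, d^1] \wedge T[t_2][c, d^2],
\]
because the colorings on the shared bag must agree and, under the edge-introduction convention, the two subtrees contribute disjoint sets of same-colored edges, so their degree contributions simply add.

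The main obstacle is evaluating this join recurrence within the desired bound, since the naive approach costs the square of the table size. The key observation is that, once the coloring $c$ on $B_t$ is fixed (there are $\chid^{|B_t|}$ such colorings), the recurrence is precisely a $|B_t|$-dimensional convolution on the grid $[0, \Delta]^{|B_t|}$. Applying the iterated FFT-based convolution technique of Cygan and Pilipczuk~\cite{tcs/CyganP10} one coordinate at a time (working over the integers modulo a sufficiently large prime, so that nonzero entries correctly identify satisfiable Boolean states) computes the convolution in time $\tO((\Delta+1)^{|B_t|})$. Summing over all colorings gives $\tO((\chid(\Delta+1))^{|B_t|})$ per join node, and multiplying by the polynomial number of nodes in the decomposition together with $|B_t| \leq \tw + 1$ yields the overall $\sO((\chid(\Delta+1))^{\tw})$ running time. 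Thus the join step is the whole technical content of the algorithm; every other transition is already quasi-linear in the table size, and the FFT convolution is precisely what brings the join down from quadratic to quasi-linear in that size.
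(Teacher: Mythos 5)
Your overall plan coincides with the paper's: a table indexed by a (color, counted same-colored degree) pair per bag vertex, all node types except join handled in time quasi-linear in the table size, and the join accelerated by first fixing the bag coloring and then invoking the FFT-based technique of Cygan and Pilipczuk. (The paper works without introduce-edge nodes, letting $\delta$ count same-colored neighbors among already-forgotten vertices and counting colorings rather than storing Booleans, but these are cosmetic differences.) The gap is at the one step that carries the technical content of the theorem: how the join convolution is actually evaluated in $\tO((\Delta+1)^{|B_t|})$ per fixed coloring. Computing a plain $|B_t|$-dimensional convolution ``one coordinate at a time'' by iterated FFTs does not deliver the claimed base: to avoid wraparound you must pad every coordinate to length at least $2\Delta+1$, which costs an extra $2^{|B_t|}=2^{\Theta(\tw)}$ factor and only gives $\sO((2\chid(\Delta+1))^{\tw})$, while without padding the cyclic wraparound lets invalid pairs with $d^1_j+d^2_j>\Delta$ contribute to legitimate cells, i.e., false positives. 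Since the whole point of this theorem is to match the SETH lower bound with base exactly $\chid(\Delta+1)$, the factor $2^{\tw}$ is not an acceptable loss, and your parenthetical about working modulo a large prime addresses only the Boolean-to-counting conversion, not this overflow issue.

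What is needed (and what the cited technique, as used in the paper, actually does) is a carry-detection device on a \emph{single} scalar encoding: encode the whole degree vector of a child state as one integer in base $\Delta+1$, so the polynomials have degree less than $(\Delta+1)^{|B_t|}$ with no padding; partition each child's entries into polynomially many groups according to the digit sum $\Sigma$ of that encoding; multiply the $\bO((\tw\Delta)^2)$ pairs of group polynomials by FFT; and keep a product monomial only when the digit sum of its exponent equals $\Sigma_1+\Sigma_2$. The paper proves that this equality holds exactly when no coordinate overflows, i.e., $d^1_j+d^2_j\le\Delta$ for all $j$, so summing the retained coefficients over all admissible group pairs recovers each parent cell correctly, and the total join cost stays $\sO((\chid(\Delta+1))^{\tw})$. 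Your write-up needs this digit-sum/no-carry argument (or an equivalent mechanism) to make the join step genuinely quasi-linear in the table size; with it, the rest of your proposal goes through as in the paper.
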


\begin{proof}
To avoid confusion, we will refer to the vertices of the nice tree decomposition as \emph{nodes},
and to the vertices of $G$ as vertices.
Let $\mathcal{T} = (T, \braces{X_t}_{t \in V(T)})$ denote the nice tree decomposition of $G$,
where by $r$ we denote the root node.
For a node $t$ of $T$, let $X^\downarrow_t$ denote the union of all the bags present in the subtree rooted at $t$, including $X_t$.
Moreover, let $s : V(T) \to [\tw+1]$ such that $s(t) = |X_t|$,
i.e.~$s(t)$ denotes the size of the bag $X_t$, and assume that $X_t = \braces{v^t_1, \ldots, v^t_{s(t)}}$.

Assuming there exists a $(\chid, \Delta)$-coloring of $G$, there are $\chid \cdot (\Delta+1)$ different possibilities for every vertex $v \in V(G)$,
since it belongs to one of the $\chid$ color classes with some degree $\delta \in [0,\Delta]$ in the corresponding subgraph.
Therefore, for each node $t$ of $T$, we consider tuples $z^t_i = (w^{t,i}_1, \ldots, w^{t,i}_{|X_t|})$,
where each $w^{t,i}_j$ is a pair $w^{t,i}_j = (c^{t,i}_j, \delta^{t,i}_j)$ such that
$c^{t,i}_j \in [\chid]$ and $\delta^{t,i}_j \in [0, \Delta]$, used to encode this information for vertex $v^t_j$,
for a total of $(\chid \cdot (\Delta+1))^{s(t)}$ tuples per node $t$.

In that case, for node $t$, let $S[t, z^t_i]$, where $i \in [(\chid \cdot (\Delta+1))^{s(t)}]$,
denote the number of $(\chid, \Delta)$-colorings of graph $G[X^\downarrow_t]$,
where vertex $v^t_j$ receives color $c^{t,i}_j$ and has exactly $\delta^{t,i}_j$ same-colored neighbors in $X^\downarrow_t \setminus X_t$.
If $s(t) = 0$, then let $S[t,\emptyset]$ be equal to the $(\chid, \Delta)$-colorings of graph $G[X^\downarrow_t]$.
Then, in order to find the number of $(\chid,\Delta)$-colorings of $G$, it suffices to compute $S[r, \emptyset]$.

Notice that each such tuple $z^t_i$ induces $\chid$ sets of same colored vertices $V^{t,i}_c = \setdef{v^t_j \in X_t}{c^{t,i}_j = c}$, for $c \in [\chid]$.
For a tuple $z^t_i$ to be considered, it must hold that,
\begin{equation}\label{eq:prereq}
    \forall v^t_j \in X_t, v^t_j \in V^{t,i}_c \implies |N(v^t_j) \cap V^{t,i}_c| + \delta^{t,i}_j \leq \Delta,
\end{equation}
or in other words, that if some vertex has $\delta$ same colored neighboring vertices in the subgraph which do not belong to the bag,
it should have at most $\Delta - \delta$ same colored neighbors inside the bag.

\proofsubparagraph*{Leaf Node.} If node $t$ is a leaf, then $X_t = \{ v^t_1 \}$ and
\[
    S[t, (c^{t,i}_1, \delta^{t,i}_1)] =
    \begin{cases}
        1,  & \text{if } \delta^{t,i}_1 = 0,\\
        0,  & \text{otherwise}
    \end{cases}
\]
since no matter what color vertex $v^t_1$ is assigned, $X^\downarrow_t \setminus X_t$ is empty,
thus it cannot have any same colored neighbors.

\proofsubparagraph*{Introduce Node.} Suppose $t$ is an introduce node with child node $t_1$
such that $X_t = X_{t_1} \cup \{ v^{t}_{s(t)} \}$ for $v^{t}_{s(t)} \notin X_{t_1}$,
where for $j \in [s(t_1)]$, vertices $v^t_j$ and $v^{t_1}_j$ coincide.
Consider a tuple $z^{t}_{i} = (w^{t,i}_1, \ldots, w^{t,i}_{s(t)})$ of node $t$.
We will compute the value $S[t, z^t_i]$.
First, we verify that~\cref{eq:prereq} is satisfied
(if not we can put the value $0$ as answer).
Then,
\[
    S[t, z^t_i] =
    \begin{cases}
        S[t_1, z^{t_1}_{i_1}],  & \text{if } \delta^{t,i}_{s(t)} = 0,\\
        0,                      & \text{otherwise},
    \end{cases}
\]
where $z^{t_1}_{i_1} = (w^{t_1,i_1}_1, \ldots, w^{t_1,i_1}_{s(t_1)})$ is the tuple of node $t_1$
where $c^{t,i}_j = c^{t_1,i_1}_j$ and $\delta^{t,i}_j = \delta^{t_1,i_1}_j$, for all $j \in [s(t_1)]$.
Intuitively, vertex $v^{t}_{s(t)}$ cannot have any neighbors in $X^\downarrow_t \setminus X_t$,
therefore the only valid value for $\delta^{t,i}_{s(t)}$ is $0$.

\proofsubparagraph*{Forget Node.} Suppose $t$ is a forget node with child node $t_1$
such that $X_t = X_{t_1} \setminus \{ v^{t_1}_{s(t_1)} \}$ for $v^{t_1}_{s(t_1)} \in X_{t_1}$,
where for $j \in [s(t)]$, vertices $v^t_j$ and $v^{t_1}_j$ coincide.
Consider a tuple $z^{t}_{i} = (w^{t,i}_1, \ldots, w^{t,i}_{s(t)})$ of node $t$.
We will compute the value $S[t, z^t_i]$.
First, we verify that~\cref{eq:prereq} is satisfied
(if not we can put the value $0$ as answer).
Then,
\[
    S[t, z^t_i] = \sum_{i_1 \in \mathcal{R}_1(i)} S[t_1, z^{t_1}_{i_1}],
\]
where $i_1 \in \mathcal{R}_1(i)$ if, for all $j \in [s(t)]$,
\begin{itemize}
    \item $c^{t,i}_j = c^{t_1,i_1}_j$ and
    \item if $v^{t_1}_{s(t_1)} \in N(v^{t_1}_j)$ and $c^{t_1,i_1}_{s(t_1)} = c^{t_1,i_1}_j$,
    then $\delta^{t,i}_j = \delta^{t_1,i_1}_j + 1$, otherwise $\delta^{t,i}_j = \delta^{t_1,i_1}_j$.
\end{itemize}
In this case, we consider all possibilities regarding the forgotten vertex, taking into account how it affects the rest of the vertices;
if it was a same-colored neighbor of another vertex of the bag,
then the latter's same-colored neighbors in the subtree, excluding the bag,
increased by one.

\proofsubparagraph*{Join Node.} Suppose $t$ is a join node with children nodes $t_1,t_2$
such that $X_t = X_{t_1} = X_{t_2}$,
where for $j \in [s(t)]$, vertices $v^t_j, v^{t_1}_j$ and $v^{t_2}_j$ coincide.
Consider a tuple $z^{t}_{i} = (w^{t,i}_1, \ldots, w^{t,i}_{s(t)})$ of node $t$.
We will compute the value $S[t, z^t_i]$.
First, we verify that~\cref{eq:prereq} is satisfied
(if not we can put the value $0$ as answer).
Then,
\[
    S[t, z^t_i] = \sum_{(i_1, i_2) \in \mathcal{R}_2(i)} S[t_1, z^{t_1}_{i_1}] \cdot S[t_2, z^{t_2}_{i_2}],
\]
where $(i_1, i_2) \in \mathcal{R}_2(i)$ if, for all $j \in [s(t)]$,
\begin{itemize}
    \item $c^{t,i}_j = c^{t_1,i_1}_j = c^{t_2,i_2}_j$ and
    \item $\delta^{t_1,i_1}_j + \delta^{t_2,i_2}_j = \delta^{t,i}_j$,
    where $0 \leq \delta^{t_1,i_1}_j, \delta^{t_2,i_2}_j \leq \delta^{t,i}_j$.
\end{itemize}
Intuitively, in a join node, for every vertex in the bag,
we should take into account its same-colored neighbors in both
of its children subtrees, excluding the vertices of the bag,
as well as consider all possibilities regarding how these neighbors are partitioned in those subtrees.

Notice that table $S$ has at most $(\chid \cdot (\Delta+1))^{\tw+1}$ cells.
Moreover, by employing dynamic programming, we can fill all of its cells with a bottom-up approach.
In order to check~\cref{eq:prereq}, $n^{\bO(1)}$ time is required.
Then, each tuple of a leaf or introduce node can be computed in constant time,
while each tuple of forget nodes in time $\chid \cdot (\Delta+1)$.
However, in the case of join nodes, the time required per tuple is $\bO((\Delta+1)^{2\tw})$,
since we need to take into account all possible values the degree of each vertex may have in each child node.
In order to circumvent this, we employ a technique based on FFT introduced in~\cite{tcs/CyganP10}.
This allows us to compute, for a given join node,
the values of the table for all of its $\bO ((\chid \cdot (\Delta + 1))^\tw)$ tuples in time $\sO ((\chid \cdot (\Delta + 1))^\tw)$.

\proofsubparagraph*{Faster Join Computations.}
Let $t$ be a join node with children nodes $t_1$ and $t_2$.
First, we fix a coloring on the vertices of $X_t$, thus choosing among the $\chid^{s(t)} = \bO(\chid^{\tw})$ different choices.
We will describe how to compute $S[t,z^t_i]$ for any tuple $z^t_i$ respecting said coloring.
In the following, for every $z^t_i$ considered, we assume that it respects this coloring.

Let, for tuple $z^t_i$, $\Sigma(z^{t}_i) = \sum_{j=1}^{s(t)} \delta^{t,i}_j$ denote its \emph{sum}.
Since $\delta^{t,i}_j \in [0,\Delta]$ for all $j \in [s(t)]$,
it follows that $\Sigma (z^t_i) \in [0, s(t) \cdot \Delta]$.
For every tuple $z^{t_p}_i$ of $t_1$ and $t_2$, where $p \in \braces{1,2}$,
we will construct an identifier $i(z^{t_p}_{i})$ as follows:
\[
    i(z^{t_p}_{i}) = \sum_{j=1}^{s(t)} (\Delta+1)^{j-1} \cdot \delta^{t_p,i}_j \leq
    \Delta \cdot \sum_{j=1}^{s(t)} (\Delta+1)^{j-1} =
    \Delta \cdot \frac{1 - (\Delta+1)^{s(t)}}{1 - (\Delta+1)} =
    (\Delta+1)^{s(t)} - 1.
\]
    
Next, we introduce polynomials $P^{t_p}_{\Sigma_q} (x)$, where $p \in \braces{1,2}$ and $\Sigma_q \in [0, s(t) \cdot \Delta]$,
such that $P^{t_p}_{\Sigma_q} (x)$ is comprised of monomials $S[t_p, z^{t_p}_i] \cdot x^{i(z^{t_p}_i)}$, for every tuple $z^{t_p}_{i}$ such that
$\Sigma(z^{t_p}_{i}) = \Sigma_q$.
Subsequently, by using FFT,
we compute the polynomial $P^{t_1}_{\Sigma_{1}} \cdot P^{t_2}_{\Sigma_{2}}$, for all $\Sigma_1, \Sigma_2 \in [0, s(t) \cdot \Delta]$.
Since the multiplication of two polynomials of degree $n$ requires $\bO (n \log n)$ time~\cite{issac/Moenck76},
and we perform $(s(t) \cdot \Delta + 1)^2 = \bO(n^2)$ such multiplications on polynomials of degree at most $(\Delta+1)^{s(t)} - 1$,
it follows that in total $\bO(n^2 \cdot (\Delta+1)^{s(t)} \cdot s(t) \log(\Delta + 1)) = \sO((\Delta+1)^{s(t)})$ time is required.

\begin{claim}\label{claim:dc_tw_algo_claim}
    For $z^{t_1}_{i_1}, z^{t_2}_{i_2}$,
    let $i(z^{t_1}_{i_1}) + i(z^{t_2}_{i_2}) = \sum_{j=1}^{s(t)+1} a_j \cdot (\Delta + 1)^{j-1}$, where $0 \leq a_j \leq \Delta$.
    Then, the following are equivalent:
    \begin{itemize}
        \item $\Sigma(z^{t_1}_{i_1}) + \Sigma (z^{t_2}_{i_2}) = \sum_{j=1}^{s(t)} a_j$,
        \item for all $j \in [s(t)]$, $\delta^{t_1,i_1}_j + \delta^{t_2,i_2}_j \leq \Delta$.
    \end{itemize}
\end{claim}

\begin{claimproof}
    Express $i(z^{t_1}_{i_1})$, $i(z^{t_2}_{i_2})$, as well as $\Sigma(z^{t_1}_{i_1}) + \Sigma (z^{t_2}_{i_2})$
    as numbers $d^1,d^2$ and $a$ respectively in the $(\Delta+1)$-ary system,
    where each of their digits $d^1_j, d^2_j, a_j$ is a number between $0$ and $\Delta$.
    Then, $\delta^{t_1,i_1}_j$, $\delta^{t_2,i_2}_j$, and $a_j$ correspond to the $j$-th digit
    of $d^1$, $d^2$, and $a$ respectively.

    Now, assume that we add the numbers $d^1$ and $d^2$,
    and let $s_j = d^1_j + d^2_j$ if $d^1_j + d^2_j \leq \Delta$, while $s_j = 1 + d^1_j + d^2_j - (\Delta + 1) < d^1_j + d^2_j$ otherwise, where we assumed that $\Delta > 0$.
    Notice that it holds that $\sum_{j=1}^{s(t)} s_j \geq \sum_{j=1}^{s(t)} a_j$.
    In that case, assuming $\Sigma(z^{t_1}_{i_1}) + \Sigma (z^{t_2}_{i_2}) = \sum_{j=1}^{s(t)} a_j$ implies that
    $\sum_{j=1}^{s(t)} d^1_j + \sum_{j=1}^{s(t)} d^2_j \leq \sum_{j=1}^{s(t)} s_j$,
    which in turn implies that $s_j = d^1_j + d^2_j \implies d^1_j + d^2_j \leq \Delta$ for all $j$.

    On the other hand, if $d^1_j + d^2_j \leq \Delta$ for every $j$,
    it follows that $a_j = d^1_j + d^2_j$,
    and thus $\sum_{j=1}^{s(t)} d^1_j + \sum_{j=1}^{s(t)} d^2_j = \sum_{j=1}^{s(t)} a_j \implies \Sigma(z^{t_1}_{i_1}) + \Sigma (z^{t_2}_{i_2}) = \sum_{j=1}^{s(t)} a_j$.
\end{claimproof}

As a consequence of~\cref{claim:dc_tw_algo_claim}, we can easily distinguish whether
a monomial of $P^{t_1}_{\Sigma_{q_1}} \cdot P^{t_2}_{\Sigma_{q_2}}$ corresponds to a tuple of $t$
occurring from the addition of tuples of $t_1$ and $t_2$ of sum $\Sigma_{q_1}$ and $\Sigma_{q_2}$ respectively.
Moreover, any such tuple of $t$ is encoded by some monomial, whose coefficient dictates
the number of different ways this tuple can occur from pairs of tuples of such sums.
Therefore, in order to identify the number of ways a tuple of $t$ may occur,
it suffices to add all the coefficients of the corresponding monomial in all $\bO(n^2)$ polynomial multiplications performed.

Lastly, in order to compute the value of $S$ for the rest of the tuples of $t$,
it suffices to repeat the whole procedure for all different colorings of $X_t$,
thus resulting in $\chid^{s(t)}$ iterations.

In the end, in order to compute the value of $S$ for all $\bO ((\chid (\Delta+1))^\tw)$ tuples of $t$,
we need $\sO (\chid^\tw \cdot (\Delta + 1)^\tw)$ time.

\proofsubparagraph*{Complexity.}
For the final complexity of the algorithm, notice that, for a node $t$,
it holds that in order to compute the value of $S$ for all of its $\bO((\chid \cdot (\Delta+1)^\tw)$ tuples,
we require time:
\begin{itemize}
    \item $\bO (1)$ per tuple, if $t$ is a leaf or an introduction node,
    
    \item $\bO (\chid (\Delta + 1))$ per tuple, if $t$ is a forget node,
    
    \item $\sO ((\chid (\Delta + 1))^\tw)$ for all tuples if $t$ is a join node.
\end{itemize}
Therefore, the total running time of the algorithm is upper bounded by $\sO ((\chid (\Delta + 1))^\tw)$.
\end{proof}

\section{Tree-depth Lower Bounds}\label{sec:td_lb}
In this section we present tight lower bounds on the complexity of solving \BDD{} and \DC,
when parameterized by the tree-depth of the input graph.
As in previous reductions, we start from a \kMC\ instance $G = (V,E)$,
where the vertices are partitioned into $k$ sets $V_i$, for $i \in [k]$.
Our main technical contribution is a recursive construction which allows us to
keep the tree-depth of the constructed graph linear with respect to $k$, thereby tightening previously
known lower bounds.
In the following we provide a high level sketch of the new ingredients of our construction.
For an illustration we refer to~\cref{fig:td_lb_construction}.

For every set $V_i$ we design a simple choice gadget $\hat{C}_i$ which encodes the choice of a vertex in $V_i$.
We also design a simple ``copy'' gadget, which, using a constant number of extra vertices per copy,
allows us to produce multiple choice gadgets which encode the same value.
In previous reductions (\cite{siamdm/BelmonteLM20,algorithmica/GanianKO21}),
we would now construct for each of the $k$ main choice gadgets at most $k-1$ copies,
and then for each $i_1, i_2 \in [k]$ we would select a distinct pair of copies of $\hat{C}_{i_1}, \hat{C}_{i_2}$
and add some machinery on these copies to verify that the choices for these groups encode the endpoints of an edge.
This approach naturally leads to a graph with tree-depth $\bO(k^2)$,
and as a matter of fact it establishes hardness even for more restrictive
parameters: as~\cite{algorithmica/GanianKO21} points out,
if we remove the $\bO(k^2)$ vertices that ensure that the copies of the choice gadgets
encode the same values,
we obtain a collection of graphs of constant tree-depth,
i.e.~the parameter is in fact modulator size to constant tree-depth,
rather than tree-depth itself.

The new ingredient in our approach is to observe that if we allow our reduction
to use the full power of tree-depth as a parameter,
we can avoid the quadratic blow-up in this construction.
Consider the slightly more general problem, where we have two intervals $I_1, I_2 \subseteq [k]$
and we want to construct some machinery that checks if for each $i_1 \in I_1$ and $i_2 \in I_2$
our choices for $V_{i_1}, V_{i_2}$ are valid, that is, encode the endpoints of an edge.
On a high level, this is the problem we want to solve for $I_1 = I_2 = [k]$,
and suppose we have some base gadget for the case $|I_1| = |I_2| = 1$.
We now observe that one way to solve the general case is recursive:
we cut the two intervals in half, say $I_1 = I_1^L \cup I_1^H$ and $I_2 = I_2^L \cup I_2^H$,
and then check the same condition for each pair in $(I_1^L, I_2^L), (I_1^L, I_2^H), (I_1^H, I_2^L), (I_1^H, I_2^H)$.
To this end, we make two copies of each choice gadget, thus constructing $\bO(k)$ new separator vertices,
but reducing to four instances of the same problem where all interval sizes have been cut in half.
As a result, to calculate the tree-depth of such a construction we get a recurrence of the form
$T(k) \leq \bO(k) + T(k/2)$, which in the end gives tree-depth $\bO(k)$.
Observe that this technique manages to produce better results than previous reductions
exactly because we are exploiting the full power of tree-depth:
we construct an instance that has $ck$ vertices whose removal,
rather than breaking the graph down into trivial components,
gives components which (recursively) have $ck/2$ vertices whose removal produces even simpler components,
and so on, through a recursion depth of height $\log k$.
In other words, unlike previous reductions, we crucially rely on the recursive definition of tree-depth.

\begin{figure}[ht]
\centering
\begin{tikzpicture}[scale=0.85, transform shape]
    \node[rectangle,draw,minimum width=1.5cm,minimum height = 1.1cm] (ci1) at (3,1) {$\hat{C}_{i_1}$};
    \node () at (3,2.1) {$\vdots$};
    \node[rectangle,draw,minimum width=1.5cm,minimum height = 1.1cm] (ci2) at (3,3) {$\hat{C}_{\floor*{\frac{i_1+i_2}{2}}}$};
    \node[rectangle,draw,minimum width=1.5cm,minimum height = 1.1cm] (ci3) at (3,7) {$\hat{C}_{\ceil*{\frac{i_1+i_2}{2}}}$};
    \node () at (3,8.1) {$\vdots$};
    \node[rectangle,draw,minimum width=1.5cm,minimum height = 1.1cm] (ci4) at (3,9) {$\hat{C}_{i_2}$};

    \node[rectangle,draw,minimum width=1.5cm,minimum height = 1.1cm] (ci'1) at (9,12) {$\hat{C}_{i'_1}$};
    \node () at (10.5,12) {$\cdots$};
    \node[rectangle,draw,minimum width=1.5cm,minimum height = 1.1cm] (ci'2) at (12,12) {$\hat{C}_{\floor*{\frac{i'_1+i'_2}{2}}}$};
    \node[rectangle,draw,minimum width=1.5cm,minimum height = 1.1cm] (ci'3) at (17,12) {$\hat{C}_{\ceil*{\frac{i'_1+i'_2}{2}}}$};
    \node () at (18.5,12) {$\cdots$};
    \node[rectangle,draw,minimum width=1.5cm,minimum height = 1.1cm] (ci'4) at (20,12) {$\hat{C}_{i'_2}$};

    \node[rectangle,draw,minimum width=1.5cm,minimum height = 1.1cm] (ci11) at (6,1) {$\hat{C}_{i_1}$};
    \node () at (6,2.1) {$\vdots$};
    \node[rectangle,draw,minimum width=1.5cm,minimum height = 1.1cm] (ci21) at (6,3) {$\hat{C}_{\floor*{\frac{i_1+i_2}{2}}}$};
    \node[rectangle,draw,minimum width=1.5cm,minimum height = 1.1cm] (ci31) at (6,7) {$\hat{C}_{\ceil*{\frac{i_1+i_2}{2}}}$};
    \node () at (6,8.1) {$\vdots$};
    \node[rectangle,draw,minimum width=1.5cm,minimum height = 1.1cm] (ci41) at (6,9) {$\hat{C}_{i_2}$};

    \node[rectangle,draw,minimum width=1.5cm,minimum height = 1.1cm] (ci'11) at (8,1) {$\hat{C}_{i'_1}$};
    \node () at (8,2.1) {$\vdots$};
    \node[rectangle,draw,minimum width=1.5cm,minimum height = 1.1cm] (ci'21) at (8,3) {$\hat{C}_{\floor*{\frac{i'_1+i'_2}{2}}}$};
    \node[rectangle,draw,minimum width=1.5cm,minimum height = 1.1cm] (ci'12) at (8,7) {$\hat{C}_{i'_1}$};
    \node () at (8,8.1) {$\vdots$};
    \node[rectangle,draw,minimum width=1.5cm,minimum height = 1.1cm] (ci'22) at (8,9) {$\hat{C}_{\floor*{\frac{i'_1+i'_2}{2}}}$};

    \node[rectangle,draw,minimum width=1.5cm,minimum height = 1.1cm] (ci12) at (14,1) {$\hat{C}_{i_1}$};
    \node () at (14,2.1) {$\vdots$};
    \node[rectangle,draw,minimum width=1.5cm,minimum height = 1.1cm] (ci22) at (14,3) {$\hat{C}_{\floor*{\frac{i_1+i_2}{2}}}$};
    \node[rectangle,draw,minimum width=1.5cm,minimum height = 1.1cm] (ci32) at (14,7) {$\hat{C}_{\ceil*{\frac{i_1+i_2}{2}}}$};
    \node () at (14,8.1) {$\vdots$};
    \node[rectangle,draw,minimum width=1.5cm,minimum height = 1.1cm] (ci42) at (14,9) {$\hat{C}_{i_2}$};

    \node[rectangle,draw,minimum width=1.5cm,minimum height = 1.1cm] (ci'31) at (16,1) {$\hat{C}_{\ceil*{\frac{i'_1+i'_2}{2}}}$};
    \node () at (16,2.1) {$\vdots$};
    \node[rectangle,draw,minimum width=1.5cm,minimum height = 1.1cm] (ci'41) at (16,3) {$\hat{C}_{i'_2}$};
    \node[rectangle,draw,minimum width=1.5cm,minimum height = 1.1cm] (ci'32) at (16,7) {$\hat{C}_{\ceil*{\frac{i'_1+i'_2}{2}}}$};
    \node () at (16,8.1) {$\vdots$};
    \node[rectangle,draw,minimum width=1.5cm,minimum height = 1.1cm] (ci'42) at (16,9) {$\hat{C}_{i'_2}$};
    
    \draw[dashed] (ci1)--(ci11);
    \draw[dashed] (ci2)--(ci21);
    \draw[dashed] (ci3)--(ci31);
    \draw[dashed] (ci4)--(ci41);
    \draw[dashed] (ci1) edge [bend right] (ci12);
    \draw[dashed] (ci2) edge [bend left] (ci22);
    \draw[dashed] (ci3) edge [bend right] (ci32);
    \draw[dashed] (ci4) edge [bend left] (ci42);

    \draw[dashed] (ci'1) edge [bend left] (ci'11);
    \draw[dashed] (ci'1) edge [bend left] (ci'12);
    \draw[dashed] (ci'2) edge [bend left] (ci'21);
    \draw[dashed] (ci'2) edge [bend left] (ci'22);
    \draw[dashed] (ci'3) edge [bend left] (ci'31);
    \draw[dashed] (ci'3) edge [bend left] (ci'32);
    \draw[dashed] (ci'4) edge [bend left] (ci'41);
    \draw[dashed] (ci'4) edge [bend left] (ci'42);


    \draw[thick] (5,0.25) rectangle (9,3.75);
    \draw[thick] (5,6.25) rectangle (9,9.75);

    \draw[thick] (13,0.25) rectangle (17,3.75);
    \draw[thick] (13,6.25) rectangle (17,9.75);
    
    \end{tikzpicture}
    \caption{
    Illustration where $I_1 = [i_1, i_2]$ and $I_2 = [i'_1, i'_2]$.
    Dashed lines denote copies,
    while the rectangles denote the reduced instances.
    }
    \label{fig:td_lb_construction}
\end{figure}

\subsection{Bounded Degree Vertex Deletion}\label{subsec:bdd_td_lb}

\begin{theorem}
    For any computable function $f$,
    if there exists an algorithm that solves \BDD\ in time $f(\td)n^{o(\td)}$,
    where \td{} denotes the tree-depth of the input graph,
    then the ETH is false.
\end{theorem}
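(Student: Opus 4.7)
The plan is to reduce from \kMC{} with $k$ a power of $2$, following the recursive blueprint sketched just above the theorem. Given an instance $G$ with color classes $V_1, \ldots, V_k$ each of size $n$, I would construct in polynomial time an instance $(H, \kappa, \Delta)$ of \BDD{} with $\td(H) = \bO(k)$ that is equivalent to the \kMC{} instance. Then any algorithm for \BDD{} running in time $f(\td) n^{o(\td)}$ would decide \kMC{} in time $f(\bO(k)) n^{o(k)}$, contradicting the ETH.

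First I would design a \emph{choice gadget} $\hat{C}_i$ whose candidate vertices are in one-to-one correspondence with $V_i$, together with auxiliary high-degree structure in the spirit of the block gadgets of~\cref{subsec:bdd_tw_lb}, so that any deletion meeting the local budget must remove exactly one candidate; this deletion encodes the vertex of $V_i$ selected for the clique. Second, I would build a constant-size \emph{copy gadget} that glues two instances of $\hat{C}_i$ while forcing the chosen candidates to coincide, and that crucially forms a small vertex separator between the two instances. Both gadgets are local in the sense that their correctness will follow from tight degree-counting arguments on a few specially marked vertices.

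Third, I would define the \emph{adjacency gadget} $\hat{A}(i_1, i_2, i'_1, i'_2)$ recursively as in~\cref{fig:td_lb_construction}. In the base case $i_1=i_2$ and $i'_1=i'_2$ the gadget consists of the two relevant choice gadgets together with a constant-depth checker on $\bO(n^2)$ vertices, one per pair $(u,v)\in V_{i_1}\times V_{i'_1}$, whose local degree budget can be met precisely when the candidates deleted in the two choice gadgets correspond to an edge of $G$. In the inductive step, the gadget holds one master copy of each $\hat{C}_i$ for $i \in [i_1,i_2]\cup[i'_1,i'_2]$ and routes these, via copy gadgets, into four child adjacency gadgets on the four pairs of halves of $[i_1,i_2]$ and $[i'_1,i'_2]$. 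The full instance is $\hat{A}(1,k,1,k)$, with $\kappa$ set to the cumulative local budget. Equivalence with the \kMC{} instance then follows by structural induction: choice gadgets combined with copy gadgets force globally consistent vertex picks, and the base-case checkers certify adjacency for all $\binom{k}{2}$ pairs.

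The main obstacle is the tree-depth bound, which I would prove by giving an explicit elimination ordering: at the root of $\hat{A}(i_1,i_2,i'_1,i'_2)$ place the $\bO(k)$ vertices of the master choice gadgets and the copy gadgets at the current level. Because the copy gadgets are engineered to be separators, removing these vertices disconnects the four child gadgets, and applying the argument recursively yields $T(k) \le \bO(k) + T(k/2)$, which resolves to $T(k) = \bO(k)$. The delicate technical point will be to tune the choice and copy gadgets so that degree-counting simultaneously (i) pins down a unique local deletion pattern, (ii) lets the copy gadgets act as vertex separators, and (iii) keeps the total instance size polynomial; reconciling these three requirements in a single construction is the technical heart of the proof and is what forces the recursive, rather than flat, design.
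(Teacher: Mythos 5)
Your high-level plan (reduce from \kMC{} with $k$ a power of two, choice/copy/adjacency gadgets, recursion $T(k)\le \bO(k)+T(k/2)$, separator-based elimination) matches the paper, but there is a genuine gap at the point you yourself flag as the ``technical heart'': the encoding you propose for the choice gadget does not support a constant-size copy gadget. You encode the selected vertex of $V_i$ by the \emph{identity} of a single deleted candidate among $n$ candidates, and then ask a constant number of separator vertices to force the same candidate to be deleted in two instances via degree counting. Degree constraints at a constant number of vertices can only measure aggregate deletion counts in their neighborhoods; when exactly one candidate is deleted per gadget, every such count is $0$ or $1$, so constantly many threshold vertices cannot distinguish which of the $n$ candidates was chosen (you would need $\Omega(\log n)$ of them, which inflates the tree-depth to $\Theta(k\log n)$ and destroys the $n^{o(\td)}$ conclusion, since the parameter is then no longer bounded by a function of $k$ alone). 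The paper avoids this by encoding the choice $v^i_j$ in \emph{unary counts}: the choice gadget has two halves $\hat{C}^l_i,\hat{C}^h_i$ of $n$ vertices each, and choosing $v^i_j$ means deleting exactly $j$ vertices from the low half and $n-j$ from the high half. Equality of counts across two instances is then enforceable by exactly two separator vertices $g_1,g_2$ (each with $\Delta-n$ leaves, one adjacent to ``low of $\mathcal{I}_1$ plus high of $\mathcal{I}_2$'' and vice versa), because two opposite threshold inequalities pin the count down; this is the key idea your proposal is missing.

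A related problem affects your base-case checker: you place one constant-depth checker vertex per pair $(u,v)\in V_{i_1}\times V_{i'_1}$ whose ``budget can be met precisely when the deleted candidates correspond to an edge.'' Deleting a vertex only \emph{decreases} its neighbors' degrees, so a per-pair vertex adjacent to the two candidates is relaxed, not violated, when both are selected; there is no direct way to penalize a selected non-edge this way. The paper instead builds one edge gadget per edge of $E^{i_1,i'_1}$ together with four threshold vertices $\ell^l,\ell^h,r^l,r^h$ wired to prefixes/suffixes of the $s$-vertices of every edge gadget, and a global budget argument showing exactly one edge gadget per base adjacency gadget receives one extra deletion, which is only possible when that gadget's edge matches the encoded counts on both sides (self-loops are added so the diagonal gadgets $\hat{A}(i,i,i,i)$ are also satisfiable). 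So while your recursion and tree-depth bookkeeping are sound, the construction as specified would not go through without replacing the identity-based selection by the count-based (threshold) encoding and redesigning the adjacency check accordingly.
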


\begin{proof}
    Let $(G,k)$ be an instance of \kMC,
    such that every vertex of $G$ has a self loop,
    i.e.~$\braces{v,v} \in E(G)$ for all $v \in V(G)$.
    Recall that we assume that $G$ is given to us partitioned into $k$ independent sets $V_1, \ldots, V_k$,
    where $V_i = \braces{v^i_1, \ldots, v^i_n}$.
    Assume without loss of generality that $k = 2^z$ for some $z \in \N$
    (one can do so by adding dummy independent sets connected to all the other vertices of the graph).
    Moreover, let $E^{i_1,i_2} \subseteq E(G)$ denote the edges of $G$ with one endpoint in $V_{i_1}$ and the other in $V_{i_2}$.
    Set $\Delta = n^3$.
    We will construct in polynomial time a graph $H$ of tree-depth $\td(H) = \bO(k)$ and size $|V(H)| = k^{\bO(1)} \cdot n^{\bO(1)}$,
    such that there exists $S \subseteq V(H)$, $|S| \leq k'$, and $H - S$ has maximum degree at most $\Delta$, for some $k'$,
    if and only if $G$ has a $k$-clique.
    
    \proofsubparagraph*{Choice Gadget.}
    For an independent set $V_i$, we construct the \emph{choice gadget} $\hat{C}_i$ as depicted in~\cref{fig:bdd_td_lb_choice_gadget}.
    We first construct independent sets $\hat{C}^p_i = \braces{v^{i,p}_1, \ldots, v^{i,p}_n}$, where $p \in \braces{h,l}$.
    Afterwards, we connect $v^{i,h}_j$ and $v^{i,l}_j$ with a vertex $q^i_j$, and add to the latter $\Delta - 1$ leaves.
    Intuitively, we will consider a one-to-one mapping between the vertex $v^i_j$ of $V_i$ belonging to a supposed $k$-clique of $G$ and
    the deletion of exactly $j$ vertices of $\hat{C}^l_i$ and $n-j$ vertices of $\hat{C}^h_i$.
    
    \proofsubparagraph*{Copy Gadget.}
    Given two instances $\mathcal{I}_1$, $\mathcal{I}_2$ of a choice gadget $\hat{C}_i$,
    when we say that we connect them with a \emph{copy gadget},
    we introduce two vertices $g_1$ and $g_2$, attach to each of those $\Delta - n$ leaves,
    and lastly add an edge between $g_1$ (respectively, $g_2$) and the vertices of $\hat{C}^l_{i}$ of instance $\mathcal{I}_1$ (respectively, $\mathcal{I}_2$),
    as well as the vertices of $\hat{C}^h_{i}$ of instance $\mathcal{I}_2$ (respectively, $\mathcal{I}_1$),
    as depicted in~\cref{fig:bdd_td_lb_copy_choice_gadget}.
    
    \begin{figure}[ht]
    \centering 
      \begin{subfigure}[b]{0.4\linewidth}
      \centering
        \begin{tikzpicture}[scale=0.75, transform shape]
        \node[vertex] (v1l) at (1,1) {};
        \node[] () at (1,0.6) {$v^{i,l}_1$};
        \node[] () at (2,1) {$\cdots$};
        \node[vertex] (vnl) at (3,1) {};
        \node[] () at (3,0.6) {$v^{i,l}_n$};
        
        \node[vertex] (v1h) at (1,5) {};
        \node[] () at (1,5.4) {$v^{i,h}_1$};
        \node[] () at (2,5) {$\cdots$};
        \node[vertex] (vnh) at (3,5) {};
        \node[] () at (3,5.4) {$v^{i,h}_n$};
        
        \node[black_vertex] (q1) at (1,3) {};
        \node[] () at (1.2,3.4) {$q^i_1$};
        \node[] () at (2,3) {$\cdots$};
        \node[black_vertex] (qn) at (3,3) {};
        \node[] () at (2.8,3.4) {$q^i_n$};
        
        
        \node[] () at (2,0.3) {$\hat{C}^l_{i}$};
        \node[] () at (2,5.7) {$\hat{C}^h_{i}$};
        
        \draw[] (v1l)--(q1)--(v1h);
        \draw[] (vnl)--(qn)--(vnh);

        \draw[dashed] (0.5,0) rectangle (3.5,2);
        \draw[dashed] (0.5,4) rectangle (3.5,6);
        
        \end{tikzpicture}
        \caption{Choice gadget $\hat{C}_i$.}
        \label{fig:bdd_td_lb_choice_gadget}
      \end{subfigure}
    \begin{subfigure}[b]{0.4\linewidth}
    \centering
        \begin{tikzpicture}[scale=0.7, transform shape]
        
        \node[vertex] (v1l) at (1,1) {};
        \node[] () at (2,1) {$\cdots$};
        \node[vertex] (vnl) at (3,1) {};
        
        \node[vertex] (v1h) at (1,5) {};
        \node[] () at (2,5) {$\cdots$};
        \node[vertex] (vnh) at (3,5) {};
        
        \node[] () at (2,0.3) {$\hat{C}^l_{i}$};
        \node[] () at (2,5.7) {$\hat{C}^h_{i}$};
    
        \node[] () at (2,7) {$\mathcal{I}_1$};

        \node[vertex] (v1l') at (7,1) {};
        \node[] () at (8,1) {$\cdots$};
        \node[vertex] (vnl') at (9,1) {};
        
        \node[vertex] (v1h') at (7,5) {};
        \node[] () at (8,5) {$\cdots$};
        \node[vertex] (vnh') at (9,5) {};
        
        \node[] () at (8,0.3) {$\hat{C}^l_{i}$};
        \node[] () at (8,5.7) {$\hat{C}^h_{i}$};
    
        \node[] () at (8,7) {$\mathcal{I}_2$};

        \node[gray_vertex] (g1) at (5,2.5) {};
        \node[] () at (5,2.2) {$g_1$};
        \node[gray_vertex] (g2) at (5,3.5) {};
        \node[] () at (5,3.8) {$g_2$};
        
        \draw[] (v1l) edge [bend left] (g1);
        \draw[] (vnl) edge [bend left] (g1);
        \draw[] (v1h) edge [bend right] (g2);
        \draw[] (vnh) edge [bend right] (g2);
    
        \draw[] (v1l') edge [bend right] (g2);
        \draw[] (vnl') edge [bend right] (g2);
        \draw[] (v1h') edge [bend left] (g1);
        \draw[] (vnh') edge [bend left] (g1);

        \draw[dashed] (0.5,0) rectangle (3.5,2);
        \draw[dashed] (0.5,4) rectangle (3.5,6);
        \draw[dashed] (6.5,0) rectangle (9.5,2);
        \draw[dashed] (6.5,4) rectangle (9.5,6);

        \end{tikzpicture}
        \caption{Making a copy of a choice gadget $\hat{C}_{i}$.}
        \label{fig:bdd_td_lb_copy_choice_gadget}
      \end{subfigure}
    \caption{Black vertices have $\Delta-1$ and gray vertices $\Delta - n$ leaves attached.}
    \end{figure}  
    
    \proofsubparagraph*{Edge Gadget.}
    Let $e = \braces{v^{i_1}_{j_1}, v^{i_2}_{j_2}} \in E^{i_1,i_2}$ be an edge of $G$.
    Construct the \emph{edge gadget} $\hat{E}_e$ as depicted in~\cref{fig:bdd_td_lb_edge_gadget},
    where every vertex $c^i_j$ has $\Delta$ leaves attached.
    
    \begin{figure}[ht]
    \centering
    \begin{tikzpicture}[scale=1.0, transform shape]
    
    \node[vertex] (r) at (5,5) {};
    \node[] () at (5,5.3) {$r$};
    
    \node[black_vertex] (v1) at (4,3) {};
    \node[] () at (4.4,3) {$c^{i_1}_1$};
    \node[vertex] (s1) at (3,3) {};
    \node[] () at (2.7,3) {$s^{i_1}_1$};
    \node[] () at (3,3.6) {$\vdots$};
    \node[black_vertex] (v2) at (4,4) {};
    \node[] () at (4,4.4) {$c^{i_1}_{j_1}$};
    \node[vertex] (s2) at (3,4) {};
    \node[] () at (2.7,4) {$s^{i_1}_{j_1}$};
    
    \node[black_vertex] (v3) at (4,6) {};
    \node[] () at (4,5.6) {$c^{i_1}_{j_1+1}$};
    \node[vertex] (s3) at (3,6) {};
    \node[] () at (2.7,5.8) {$s^{i_1}_{j_1+1}$};
    \node[] () at (3,6.6) {$\vdots$};
    \node[black_vertex] (v4) at (4,7) {};
    \node[] () at (4.4,7) {$c^{i_1}_n$};
    \node[vertex] (s4) at (3,7) {};
    \node[] () at (2.7,7) {$s^{i_1}_n$};
    
    \node[black_vertex] (v5) at (6,3) {};
    \node[] () at (5.6,3) {$c^{i_2}_1$};
    \node[vertex] (s5) at (7,3) {};
    \node[] () at (7.4,3) {$s^{i_2}_1$};
    \node[] () at (7,3.6) {$\vdots$};
    \node[black_vertex] (v6) at (6,4) {};
    \node[] () at (6,4.4) {$c^{i_2}_{j_2}$};
    \node[vertex] (s6) at (7,4) {};
    \node[] () at (7.4,4) {$s^{i_2}_{j_2}$};
    
    \node[black_vertex] (v7) at (6,6) {};
    \node[] () at (6.2,5.6) {$c^{i_2}_{j_2+1}$};
    \node[vertex] (s7) at (7,6) {};
    \node[] () at (7.6,6) {$s^{i_2}_{j_2+1}$};
    \node[] () at (7,6.6) {$\vdots$};
    \node[black_vertex] (v8) at (6,7) {};
    \node[] () at (5.6,7) {$c^{i_2}_n$};
    \node[vertex] (s8) at (7,7) {};
    \node[] () at (7.4,7) {$s^{i_2}_n$};
    
    \draw[] (r)--(v1)--(s1);
    \draw[] (r)--(v2)--(s2);
    \draw[] (r)--(v3)--(s3);
    \draw[] (r)--(v4)--(s4);
    \draw[] (r)--(v5)--(s5);
    \draw[] (r)--(v6)--(s6);
    \draw[] (r)--(v7)--(s7);
    \draw[] (r)--(v8)--(s8);
    
    \end{tikzpicture}
    \caption{Edge gadget $\hat{E}_e$ for $e = \braces{v^{i_1}_{j_1}, v^{i_2}_{j_2}}$.
    Black vertices have $\Delta$ leaves attached.}
    \label{fig:bdd_td_lb_edge_gadget}
    \end{figure}
    
    \proofsubparagraph*{Adjacency Gadget.}
    For $i_1 \leq i_2$ and $i'_1 \leq i'_2$, we define the \emph{adjacency gadget} $\hat{A}(i_1, i_2, i'_1, i'_2)$ as follows:
    \begin{itemize}
        \item Consider first the case when $i_1 = i_2$ and $i'_1 = i'_2$.
        Let the adjacency gadget contain instances of the edge gadgets $\hat{E}_e$, for $e \in E^{i_1,i'_1}$,
        the choice gadgets $\hat{C}_{i_1}$ and $\hat{C}_{i'_1}$,
        as well as vertices $\ell^l_{i_1,i'_1}$, $\ell^h_{i_1,i'_1}$, $r^l_{i_1,i'_1}$ and $r^h_{i_1,i'_1}$.
        Add edges between
        \begin{multicols}{2}
        \begin{itemize}
            \item $\ell^l_{i_1,i'_1}$ and $\hat{C}^l_{i_1}$,
            \item $\ell^h_{i_1,i'_1}$ and $\hat{C}^h_{i_1}$,
            \item $r^l_{i_1,i'_1}$ and $\hat{C}^l_{i'_1}$,
            \item $r^h_{i_1,i'_1}$ and $\hat{C}^h_{i'_1}$.
        \end{itemize}
        \end{multicols}
        If $e = \braces{v^{i_1}_{j_1}, v^{i'_1}_{j_2}} \in E^{i_1,i'_1}$, then add the following edges adjacent to $\hat{E}_e$:
        \begin{multicols}{2}
        \begin{itemize}
            \item $\ell^l_{i_1,i'_1}$ with $s^{i_1}_\kappa$, for $\kappa \in [j_1]$,
            \item $\ell^h_{i_1,i'_1}$ with $s^{i_1}_\kappa$, for $\kappa \in [j_1+1, n]$,
            \item $r^l_{i_1,i'_1}$ with $s^{i'_1}_\kappa$, for $\kappa \in [j_2]$,
            \item $r^h_{i_1,i'_1}$ with $s^{i'_1}_\kappa$, for $\kappa \in [j_2+1, n]$.
        \end{itemize}
        \end{multicols}
        Let $\tau(x)$, where $x \in \braces{\ell^l_{i_1,i'_1}, \ell^h_{i_1,i'_1}, r^l_{i_1,i'_1}, r^h_{i_1,i'_1}}$,
        denote the number of neighbors of $x$ belonging to some edge gadget.
        Attach $\Delta - \tau(x)$ leaves to vertex $x$.
        For an illustration see~\cref{fig:bdd_td_lb_adj_edge_case}.
    
        \item Now consider the case when $i_1 < i_2$ and $i'_1 < i'_2$.
        Then, let $\hat{A}(i_1,i_2,i'_1,i'_2)$ contain choice gadgets $\hat{C}_{i}$ and $\hat{C}_{i'}$, where $i \in [i_1,i_2]$ and $i' \in [i'_1, i'_2]$,
        which we will refer to as the \emph{original choice gadgets} of $\hat{A}(i_1,i_2,i'_1,i'_2)$,
        as well as the adjacency gadgets
        \begin{multicols}{2}
        \begin{itemize}
            \item $\hat{A}\parens*{i_1, \floor*{\frac{i_1+i_2}{2}}, i'_1, \floor*{\frac{i'_1+i'_2}{2}}}$,
            \item $\hat{A}\parens*{i_1, \floor*{\frac{i_1+i_2}{2}}, \ceil*{\frac{i'_1+i'_2}{2}}, i'_2}$,
            \item $\hat{A}\parens*{\ceil*{\frac{i_1+i_2}{2}}, i_2, i'_1, \floor*{\frac{i'_1+i'_2}{2}}}$,
            \item $\hat{A}\parens*{\ceil*{\frac{i_1+i_2}{2}}, i_2, \ceil*{\frac{i'_1+i'_2}{2}}, i'_2}$.
        \end{itemize}
        \end{multicols}
        Lastly, we connect with a copy gadget any choice gadgets $\hat{C}_i$ and $\hat{C}_{i'}$ appearing in said adjacency gadgets,
        with the corresponding original choice gadget $\hat{C}_i$ and $\hat{C}_{i'}$.
        Notice that then, every original choice gadget is taking part in two copy gadgets.
        For a high level illustration see~\cref{fig:td_lb_construction}.
    \end{itemize}
    
    \begin{figure}[ht]
    \centering
    \begin{tikzpicture}[scale=1.0, transform shape]
    
    
    \node[] () at (0.6,3.4) {$\hat{C}^l_{i_1}$};
    \node[vertex] (v1) at (1,3) {};
    \node[] () at (2,3) {$\cdots$};
    \node[vertex] (v2) at (3,3) {};
    
    \node[] () at (0.75,6.6) {$\hat{C}^h_{i_1}$};
    \node[vertex] (v3) at (1,7) {};
    \node[] () at (2,7) {$\cdots$};
    \node[vertex] (v4) at (3,7) {};
    
    \node[black_vertex] (ll) at (5,4) {};
    \node[] at (5,3.6) {$\ell^l_{i_1,i'_1}$};
    
    \node[black_vertex] (lh) at (5,6) {};
    \node[] at (5,6.5) {$\ell^h_{i_1,i'_1}$};
    
    \node[circle,draw] (e1) at (6.5,4) {$\hat{E}_{e_1}$};
    \node[] at (6.5,5) {$\vdots$};
    \node[circle,draw] (em) at (6.5,6) {$\hat{E}_{e_\lambda}$};
    
    \node[black_vertex] (rl) at (8,4) {};
    \node[] at (8,3.6) {$r^l_{i_1,i'_1}$};
    
    \node[black_vertex] (rh) at (8,6) {};
    \node[] at (8,6.5) {$r^h_{i_1,i'_1}$};
    
    \node[] () at (12.25,3.4) {$\hat{C}^l_{i'_1}$};
    \node[vertex] (v'1) at (10,3) {};
    \node[] () at (11,3) {$\cdots$};
    \node[vertex] (v'2) at (12,3) {};
    
    \node[] () at (12.25,6.6) {$\hat{C}^h_{i'_1}$};
    \node[vertex] (v'3) at (10,7) {};
    \node[] () at (11,7) {$\cdots$};
    \node[vertex] (v'4) at (12,7) {};
    
    \draw[dashed] (0,2) rectangle (4,4);
    \draw[dashed] (0,6) rectangle (4,8);
    \draw[dashed] (9,2) rectangle (13,4);
    \draw[dashed] (9,6) rectangle (13,8);
    
    
    \draw[] (v1) edge [bend left] (ll);
    \draw[] (v2)--(ll);
    \draw[] (v3) edge [bend right] (lh);
    \draw[] (v4)--(lh);
    
    \draw[] (lh)--(em.west);
    \draw[] (lh) edge [bend right] (e1.north west);
    
    \draw[] (ll)--(e1.west);
    \draw[] (ll) edge [bend left] (em.south west);
    
    \draw[] (rh)--(em.east);
    \draw[] (rh) edge [bend left] (e1.north east);
    
    \draw[] (rl)--(e1.east);
    \draw[] (rl) edge [bend right] (em.south east);
    
    \draw[] (v'1)--(rl);
    \draw[] (v'2) edge [bend right] (rl);
    \draw[] (v'3)--(rh);
    \draw[] (v'4) edge [bend left] (rh);

    \end{tikzpicture}
    \caption{Adjacency gadget $\hat{A}(i_1,i_1,i'_1,i'_1)$,
    where $E^{i_1,i'_1} = \setdef{e_i}{i \in [\lambda]}$.
    Black vertices have leaves attached.}
    \label{fig:bdd_td_lb_adj_edge_case}
    \end{figure}
    
    Let graph $H$ be the adjacency gadget $\hat{A}(1,k,1,k)$.
    Notice that it holds that $|V(H)| = (n \cdot k)^{\bO(1)}$.
    Let $\beta = 2k (2k - 1)$,
    and set $k' = 2(|E(G)| - kn) \cdot 2n + kn \cdot 2n + 2 \binom{k}{2} + k + n \cdot \beta$.
    
    \begin{lemma}\label{lem:bdd_td_lb_helper}
        $H$ has the following properties:
        \begin{itemize}
            \item The number of instances of choice gadgets present in $H$ is $\beta$,
            \item The number of instances of edge gadget $\hat{E}_e$ present in $H$, where $e = \braces{v^{i_1}_{j_1}, v^{i_2}_{j_2}} \in E(G)$,
            is one if $i_1 = i_2$, and two otherwise.
        \end{itemize}
    \end{lemma}
    
    \begin{proof}
        For the first statement,
        notice that the number of instances of choice gadgets is given by the recursive formula $T(k) = 2k + 4T(k/2)$, where $T(1) = 2$.
        In that case, it follows that
        \[
            T(k) = \sum_{i=0}^{\log k} \parens*{4^i \cdot 2 \cdot \frac{k}{2^i}} =
            2k \sum_{i=0}^{\log k} 2^i =
            2k (2k - 1) =
            \beta.
        \]
    
        For the second statement, first we will prove that for every adjacency gadget $\hat{A}(i_1,i_2,i'_1,i'_2)$ appearing in $H$,
        it holds that $i_2 - i_1 = i'_2 - i'_1 = 2^c - 1$, for some $c \in \N$.
        The statement holds for $\hat{A}(1,k,1,k)$, as well as when $i_2 - i_1 = i'_2 - i'_1 = 0$.
        Suppose that it holds for some $\hat{A}(i_1,i_2,i'_1,i'_2)$, i.e.~$i_2 - i_1 = i'_2 - i'_1 = 2^c - 1 > 0$, for some $c \in \N$.
        Then, it follows that $\floor*{\frac{i_1+i_2}{2}} - i_1 = \floor*{i_2 - 2^{c-1} + 0.5} - i_1 = i_2 - i_1 - 2^{c-1} = 2^{c-1} - 1$.
        Moreover, it follows that $i_2 - \ceil*{\frac{i_1+i_2}{2}} = i_2 - \ceil*{i_1 + 2^{c-1} - 0.5} = i_2 - (i_1 + 2^{c-1}) = 2^{c-1} - 1$.
        Therefore, the stated property holds.
    
        In that case, for some $\hat{A}(i_1,i_2,i'_1,i'_2)$, in every step of the recursion,
        intervals $[i_1,i_2]$ and $[i'_1,i'_2]$ are partitioned in the middle,
        and an adjacency gadget is considered for each of the four combinations.
        In that case, starting from $\hat{A}(1,k,1,k)$, there is a single way to produce
        every adjacency gadget $\hat{A}(i_1,i_1,i_2,i_2)$, where $i_1,i_2 \in [k]$.
        Consider an edge gadget $\hat{E}_e$, where $e = \braces{v^{i_1}_{j_1}, v^{i_2}_{j_2}} \in E(G)$.
        There are two cases:
        \begin{itemize}
            \item if $i_1 = i_2$, then this gadget appears only in the adjacency gadget $\hat{A}(i_1,i_1,i_1,i_1)$,
            \item alternatively, it appears in both adjacency gadgets $\hat{A}(i_1,i_1,i_2,i_2)$ and $\hat{A}(i_2,i_2,i_1,i_1)$.
        \end{itemize}
        This concludes the proof.
    \end{proof}

    \begin{lemma}\label{lem:bdd_td_lb_td}
        It holds that $\td(H) = \bO(k)$.
    \end{lemma}
    
    \begin{proof}
        Let $T(\kappa)$ denote the tree-depth of $\hat{A}(i_1,i_2,i'_1,i'_2)$ in the case when $i_2 - i_1 = i'_2 - i'_1 = \kappa$.
        
        First, notice that, for $i_1,i_2 \in [k]$, the tree-depth of $\hat{A}(i_1,i_1,i_2,i_2)$ is less than $8$:
        remove vertices $\ell^l_{i_1,i_2}$, $\ell^h_{i_1,i_2}$, $r^l_{i_1,i_2}$ and $r^h_{i_1,i_2}$,
        and all remaining connected components are trees of height at most $3$.
        Consequently, $T(1) \leq 8$.
    
        Now, consider the adjacency gadget $\hat{A}(i_1,i_2,i'_1,i'_2)$, where $i_2 - i_1 = i'_2 - i'_1 = \kappa$.
        This is comprised of adjacency gadgets
        \begin{multicols}{2}
        \begin{itemize}
            \item $\hat{A}\parens*{i_1, \floor*{\frac{i_1+i_2}{2}}, i'_1, \floor*{\frac{i'_1+i'_2}{2}}}$,
            \item $\hat{A}\parens*{i_1, \floor*{\frac{i_1+i_2}{2}}, \ceil*{\frac{i'_1+i'_2}{2}}, i'_2}$,
            \item $\hat{A}\parens*{\ceil*{\frac{i_1+i_2}{2}}, i_2, i'_1, \floor*{\frac{i'_1+i'_2}{2}}}$,
            \item $\hat{A}\parens*{\ceil*{\frac{i_1+i_2}{2}}, i_2, \ceil*{\frac{i'_1+i'_2}{2}}, i'_2}$.
        \end{itemize}
        \end{multicols}
        as well as of exactly $2\kappa$ original choice gadgets,
        each of which is connected with two copy gadgets to other instances of choice gadgets present in the adjacency gadgets.
        By removing all vertices $g_1$ and $g_2$ of the copy gadgets (see~\cref{fig:bdd_td_lb_copy_choice_gadget}),
        all the original choice gadgets as well as the adjacency gadgets are disconnected.
        Therefore, it holds that $T(\kappa) \leq 8 \kappa + T(\kappa / 2)$,
        thus, it follows that $T(k) \leq 8 \sum_{i=0}^{\log k} \frac{k}{2^i} = \bO(k)$.
    \end{proof}
    
    \begin{lemma}\label{lem:bdd_td_lb_cor1}
        If $G$ contains a $k$-clique,
        then there exists $S \subseteq V(H)$, with $|S| \leq k'$,
        such that $H - S$ has maximum degree at most $\Delta$.
    \end{lemma}
    
    \begin{proof}
        Let $\mathcal{V} \subseteq V(G)$ be a $k$-clique of $G$, consisting of vertices $v^i_{s(i)} \in V_i$, for $i \in [k]$.
        We will construct a deletion set $S \subseteq V(H)$ as follows:
        \begin{itemize}
            \item Let $S$ contain vertices $v^{i,h}_{j_1}$ and $v^{i,l}_{j_2}$, for $j_1 \in [s(i)]$ and $j_2 \in [s(i)+1, n]$,
            from every instance of the choice gadget $\hat{C}_{i}$.
            \item Let $\hat{E}_e$ be the edge gadget of edge $e$,
            where $e = \braces{v^{i_1}_{j_1}, v^{i_2}_{j_2}}$.
            If $v^{i_1}_{j_1}, v^{i_2}_{j_2} \in \mathcal{V}$,
            then let $S$ include from $\hat{E}_e$ the vertices $r$, $s^{i_1}_{j}$ and $s^{i_2}_{j}$,
            where $j \in [n]$.
            Alternatively, let $S$ include from $\hat{E}_e$ all the vertices $c^{i_1}_{j}$ and $c^{i_2}_{j}$, where $j \in [n]$.
        \end{itemize}
        The edges for which the first case holds are exactly $\binom{k}{2} + k$.
        Due to~\cref{lem:bdd_td_lb_helper}, it holds that $\hat{E}_e$ appears once in $H$ if $e$ is a self loop and twice if not,
        while exactly $\beta$ instances of choice gadgets are present in $H$.
        In the end, $S$ contains $2(|E(G)| - kn) \cdot 2n + kn \cdot 2n + 2 \binom{k}{2} + k$ vertices due to the edge gadgets,
        plus $\beta \cdot n$ vertices due to the choice gadgets,
        thus $|S| = k'$ follows.
    
        It remains to prove that $H - S$ has maximum degree at most $\Delta$.
        One can easily verify that every vertex $g_1$ and $g_2$ of a copy gadget has degree exactly $\Delta$.
        Moreover, every vertex $q^i_j$ in the instances of choice gadgets has exactly one neighbor in $S$.
        For every vertex $c^i_j$ in an edge gadget, either itself or its neighboring vertices $r$ and $s^i_j$ belong to $S$.
        Lastly, for $i_1, i_2 \in [k]$, let $v^{i_1}_{s(i_1)}, v^{i_2}_{s(i_2)} \in \mathcal{V}$, where $e$ denotes the edge connecting them.
        Then, for $\ell^l_{i_1,i_2}$, it holds that it has exactly $\Delta - \tau(\ell^l_{i_1,i_2}) + v^{i_1}_{s(i_1)} + \tau(\ell^l_{i_1,i_2}) - v^{i_1}_{s(i_1)} = \Delta$ neighbors in $H - S$,
        due to its leaves, its neighbors in $\hat{C}^l_{i_1}$ and its neighbors in all the edge gadgets.
        In an analogous way, one can show that $\ell^h_{i_1,i_2}$, $r^l_{i_1,i_2}$ and $r^h_{i_1,i_2}$ all have degree $\Delta$ in $H - S$.
    \end{proof}
    
    \begin{lemma}\label{lem:bdd_td_lb_cor2}
        If there exists $S \subseteq V(H)$, with $|S| \leq k'$,
        such that $H - S$ has maximum degree at most $\Delta$,
        then $G$ contains a $k$-clique.
    \end{lemma}
    
    \begin{proof}
        Let $S \subseteq V(H)$, with $|S| \leq k'$,
        such that $H - S$ has maximum degree at most $\Delta$.
        Due to~\cref{lem:bdd_td_lb_helper}, it holds that $H$ contains exactly $\beta$ instances of choice gadgets,
        while the edge gadget $\hat{E}_e$, where $e = \braces{v^{i_1}_{j_1}, v^{i_2}_{j_2}}$, appears once if $i_1 = i_2$ and twice otherwise.
        Notice that $S$ must contain at least $n$ vertices per choice gadget,
        since every vertex $q^i_j$ has degree $\Delta+1$, while no two share any neighbors,
        for a total of at least $n \cdot \beta$ vertices.
        Moreover, $S$ contains at least $2n$ vertices per edge gadget, since vertices $c^i_j$ are of degree $\Delta+2$ and share only a single neighbor.
        Notice that there are $2(|E(G)| - kn) + kn$ instances of edge gadgets, for a total of $2(|E(G)| - kn) \cdot 2n + kn \cdot 2n$ vertices.
    
        \begin{claim}
            There exists a single edge gadget in $\hat{A}(i_1,i_1,i_2,i_2)$ from which $S$ contains exactly $2n+1$ vertices,
            for every $i_1,i_2 \in [k]$.
        \end{claim}
    
        \begin{claimproof}
            Since $S$ contains at least $n$ and $2n$ vertices per choice and edge gadget respectively,
            it follows that we are left with an additional budget of at most $2 \binom{k}{2} + k$.
            The number of adjacency gadgets $\hat{A}(i_1,i_1,i_2,i_2)$ is $2 \binom{k}{2} + k$, due to~\cref{lem:bdd_td_lb_helper}.
            Consequently, we will prove for every such adjacency gadget,
            $S$ contains $2n + 2n \cdot |E^{i_1,i_2}| + 1$ (remember that each such gadget has $2$ choice gadgets as well as $|E^{i_1,i_2}|$ edge gadgets).
            In fact, we will prove that for each such adjacency gadget,
            there exists a single edge gadget from which $S$ contains $2n+1$ vertices.
        
            First, we will prove that $S$ contains exactly $2n + 2n \cdot |E^{i_1,i_2}| + 1$ vertices per adjacency gadget $\hat{A}(i_1,i_1,i_2,i_2)$.
            If it contains less, then it necessarily holds that it contains $n$ vertices from each of the two choice gadgets,
            as well as $2n|E^{i_1,i_2}|$ vertices $c^i_j$ from all the edge gadgets of the adjacency gadget.
            In that case, both vertices $\ell^l_{i_1,i_2}$ and $\ell^h_{i_1,i_2}$ must have no neighbors from $\hat{C}^l_{i_1}$ and $\hat{C}^h_{i_1}$ respectively,
            which cannot be the case since only $n$ vertices of $\hat{C}_{i_1}$ belong to $S$.
            On the other hand, if $S$ contains more than $2n + 2n \cdot |E^{i_1,i_2}| + 1$ vertices from some adjacency gadget $\hat{A}(i_1,i_1,i_2,i_2)$,
            then it follows that there exists another adjacency gadget $\hat{A}(i'_1,i'_1,i'_2,i'_2)$ from which it contains less than that many vertices, contradiction.
        
            Now, suppose that for some adjacency gadget $\hat{A}(i_1,i_1,i_2,i_2)$, there is no edge gadget from which $S$ contains $2n+1$ vertices.
            In that case, from every edge gadget, all the vertices $c^i_j$ belong to $S$, and consequently,
            for every vertex $\ell^l_{i_1,i_2}$, $\ell^h_{i_1,i_2}$,
            $r^l_{i_1,i_2}$ and $r^h_{i_1,i_2}$, it holds that either all of its neighbors in the choice gadgets belong to $S$ or that it itself does.
            Since at most one of those vertices may belong to $S$, this leads to a contradiction.
            As a consequence of the above, it follows that $S$ contains exactly $n$ vertices per choice gadget of $\hat{A}(i_1,i_1,i_2,i_2)$,
            as well as all vertices $c^i_j$ from all but one edge gadgets present.
            For the extra edge gadget, we can assume that $S$ contains vertex $r$ as well as all the vertices $s^i_j$ (if that is not the case,
            there exists some deletion set $S'$ of same cardinality for which this holds, since only vertices $s^i_j$ have an edge with vertices outside of the edge gadget).
        \end{claimproof}
        
        Consequently, $S$ contains exactly $n$ vertices per choice gadget,
        as well as $2n \cdot |E^{i_1,i_2}| + 1$ vertices from the edge gadgets of the adjacency gadget $\hat{A}(i_1,i_1,i_2,i_2)$.
        Notice that no vertex $g_1$ and $g_2$ of a copy gadget belongs to $S$, and both have at most $n$ neighbors in $H-S$
        from the corresponding parts of the choice gadgets (see~\cref{fig:bdd_td_lb_copy_choice_gadget}).
        In that case, it follows that only vertices $v^{i,l}_j$ and $v^{i,h}_j$ belong to $S$ from the choice gadgets.
        Additionally, it follows that, in~\cref{fig:bdd_td_lb_copy_choice_gadget},
        the number of vertices of $\hat{C}^j_{i}$ belonging to $S$ of instance $\mathcal{I}_1$ is the same as the one of instance $\mathcal{I}_2$,
        for $j \in \braces{l,h}$.
        Therefore, we conclude that the number of vertices belonging to $S$ from $\hat{C}^l_{i}$ (respectively, $\hat{C}^h_{i}$)
        is the same in all the instances of the choice gadget $\hat{C}_i$.
    
        Let $\mathcal{V} \subseteq V(G)$ be a set of cardinality $k$, containing vertex $v^i_{s(i)} \in V_i$ if,
        for choice gadget $\hat{C}_{i}$,
        it holds that $|S \cap \hat{C}^h_{i}| = s(i)$ and $|S \cap \hat{C}^l_{i}| = n - s(i)$.
        Notice that $\mathcal{V} \cap V_i \neq \emptyset$, for all $i \in [k]$.
        We will prove that $\mathcal{V}$ is a clique.
        
        Let $v^{i_1}_{s(i_1)}, v^{i_2}_{s(i_2)}$ belong to $\mathcal{V}$.
        Consider the adjacency gadget $\hat{A}(i_1,i_1,i_2,i_2)$.
        We will prove that it contains an edge gadget $\hat{E}_e$, where $e = \braces{v^{i_1}_{s(i_1)}, v^{i_2}_{s(i_2)}}$.
        Consider the vertices $\ell^h_{i_1,i_2}$, $\ell^l_{i_1,i_2}$, $r^h_{i_1,i_2}$ and $r^l_{i_1,i_2}$.
        It holds that $x$ has $\Delta - \tau(x)$ leaves, as well as $\tau(x)$ neighbors in the edge gadgets,
        where $x \in \braces{ \ell^h_{i_1,i_2}, \ell^l_{i_1,i_2}, r^h_{i_1,i_2}, r^l_{i_1,i_2}  }$.
        Moreover,
        \begin{itemize}
            \item $\ell^h_{i_1,i_2}$ has $n - s(i_1)$ neighbors due to $\hat{C}^h_{i_1}$ in $H - S$,
            \item $\ell^l_{i_1,i_2}$ has $s(i_1)$ neighbors due to $\hat{C}^l_{i_1}$ in $H - S$,
            \item $r^h_{i_1,i_2}$ has $n - s(i_2)$ neighbors due to $\hat{C}^h_{i_2}$ in $H - S$,
            \item $r^l_{i_1,i_2}$ has $s(i_1)$ neighbors due to $\hat{C}^l_{i_2}$ in $H - S$.
        \end{itemize}
        Notice that from all but one edge gadgets, $S$ contains all the $c^i_j$ vertices.
        Since all $x$ have degree at most $\Delta$ in $H-S$,
        it follows that there exists an edge gadget $\hat{E}_{e'}$, where
        \begin{itemize}
            \item $\ell^h_{i_1,i_2}$ has at least $n - s(i_1)$ neighbors in,
            \item $\ell^l_{i_1,i_2}$ has at least $s(i_1)$ neighbors in,
            \item $r^h_{i_1,i_2}$ has at least $n - s(i_2)$ neighbors in,
            \item $r^l_{i_1,i_2}$ has at least $s(i_1)$ neighbors in,
        \end{itemize}
        and from which all the vertices $s^i_j$ belong to $S$.
        The only case this may happen is when $e' = e$, thus there exists such an edge in $G$.
    
        Since this holds for any two vertices belonging to $\mathcal{V}$, it follows that $G$ has a $k$-clique.
    \end{proof}

    Therefore, in polynomial time, we can construct a graph $H$,
    of tree-depth $\td = \bO(k)$ due to~\cref{lem:bdd_td_lb_td},
    such that, due to~\cref{lem:bdd_td_lb_cor1,lem:bdd_td_lb_cor2},
    deciding whether there exists $S \subseteq V(H)$ of size $|S| \leq k'$ and $H - S$ has maximum degree at most $\Delta = n^3$
    is equivalent to deciding whether $G$ has a $k$-clique.
    In that case, assuming there exists a $f(\td) |V(H)|^{o(\td)}$ algorithm for \BDD,
    where $f$ is any computable function,
    one could decide \kMC{} in time $f(\td) |V(H)|^{o(\td)} = g(k) \cdot n^{o(k)}$,
    for some computable function $g$, which contradicts the ETH.
\end{proof}

\subsection{Defective Coloring}\label{sec:dc_td_lb}

The proof will closely follow the hardness proof presented in~\cite{siamdm/BelmonteLM20}.
Since we will repeatedly use the equality and palette gadgets (see~\cref{sec:dc_coloring_gadgets}),
we will use the following convention:
whenever $v_1, v_2$ are two vertices we have already introduced to the constructed graph $H$,
when we say that we add an equality gadget $Q(v_1, v_2)$,
this means that we add to $H$ a copy of $Q(u_1, u_2, \chid, \Delta)$
and then identify $u_1, u_2$ with $v_1, v_2$ respectively (similarly for palette gadgets).

\begin{theorem}
    For any fixed $\chid \geq 2$,
    if there exists an algorithm that solves \DC\ in time $f(\td)n^{o(\td)}$,
    where $f$ is any computable function and \td{} denotes the tree-depth of the input graph,
    then the ETH is false.
\end{theorem}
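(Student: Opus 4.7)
The plan is to follow the same recursive template used in the \BDD{} tree-depth reduction of~\cref{subsec:bdd_td_lb}, but replace degree-based enforcement with the coloring-based toolbox of~\cref{sec:dc_coloring_gadgets}, adapting the choice/edge gadgets from~\cite{siamdm/BelmonteLM20} so that the construction becomes linear in the parameter rather than quadratic. Starting from an instance $(G,k)$ of \kMC{} with $k$ a power of $2$, I will build in polynomial time a graph $H$ of tree-depth $\bO(k)$ such that $G$ has a $k$-clique if and only if $H$ admits a $(\chid,\Delta)$-coloring. The ETH lower bound for \kMC{} then rules out any $f(\td)n^{o(\td)}$ algorithm for \DC.

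First I would fix a palette, exactly as in~\cref{sec:dc_tw_lb}: introduce vertices $p^1,\ldots,p^{\chid}$ forming a clique, each equipped with enough leaves joined via equality gadgets so that $c(p^i)\neq c(p^j)$ for $i\neq j$ in every valid $(\chid,\Delta)$-coloring. Next, for each independent set $V_i$, I would build a choice gadget $\hat{C}_i$ in the style of~\cite{siamdm/BelmonteLM20}: two tracks of $n$ vertices, linked through auxiliary vertices and palette gadgets so that any valid coloring forces a single index $s(i)\in[n]$ to be distinguished. A copy gadget is obtained by adding a pair of auxiliary vertices linked through equality gadgets to the corresponding tracks of two instances of $\hat{C}_i$, in the spirit of the vertices $g_1,g_2$ of~\cref{fig:bdd_td_lb_copy_choice_gadget}; these enforce that the two copies encode the same chosen index.

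The core of the construction is the adjacency gadget $\hat{A}(i_1,i_2,i'_1,i'_2)$, defined by exactly the recursion of~\cref{subsec:bdd_td_lb}. At the base case $i_1=i_2$, $i'_1=i'_2$, it contains an edge gadget $\hat{E}_e$ per edge $e\in E^{i_1,i'_1}$, together with the two choice gadgets $\hat{C}_{i_1},\hat{C}_{i'_1}$, wired through implication and exclusion gadgets to the palette so that $\hat{E}_e$ admits a consistent coloring exactly when $\hat{C}_{i_1}$ and $\hat{C}_{i'_1}$ select the endpoints of $e$. In the recursive case, $[i_1,i_2]$ and $[i'_1,i'_2]$ are halved and the four subproblems are glued to original copies of $\hat{C}_i$ and $\hat{C}_{i'}$ through copy gadgets. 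Setting $H=\hat{A}(1,k,1,k)$, the tree-depth recurrence $T(k)\le\bO(k)+T(k/2)$ holds because deleting the $\bO(k)$ distinguished vertices of the copy gadgets at the current recursion level disconnects the four recursive instances from the original choice gadgets, giving $\td(H)=\bO(k)$, in analogy with~\cref{lem:bdd_td_lb_td}.

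The main obstacle I expect is the design and correctness analysis of the edge gadget $\hat{E}_e$ and its interface with the two surrounding choice gadgets: it must be colorable iff the two choices fall on the endpoints of $e$, must have constant tree-depth so it does not inflate the recursion, and must remain compatible with being shared by at most two base-level adjacency gadgets in the sense of~\cref{lem:bdd_td_lb_helper}. I plan to realize this with a small chain of implication and exclusion gadgets that transfer the palette colors picked out by the tracks of $\hat{C}_{i_1}$ and $\hat{C}_{i'_1}$ into a contradiction whenever the selected indices do not correspond to an edge of $E^{i_1,i'_1}$. The two directions of correctness then follow by the usual scheme: given a $k$-clique in $G$, color every choice gadget according to the corresponding clique vertex and extend to every auxiliary structure via~\cref{lem:dc_equality_gadget,lem:dc_palette_gadget,lem:dc_difference_gadget,lem:dc_implication_gadget,lem:dc_exclusion_gadget}; conversely, from a valid $(\chid,\Delta)$-coloring, the copy gadgets force a single chosen index per $V_i$ to propagate across all instances of $\hat{C}_i$, and the edge gadgets at the base of the recursion then certify that these indices induce a $k$-clique in $G$.
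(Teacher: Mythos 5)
Your skeleton (recursive adjacency gadgets, original choice gadgets tied to the recursive copies via copy gadgets, and the recurrence $T(k)\le \bO(k)+T(k/2)$) matches the paper, but the verification mechanism you sketch has a genuine gap, and it is exactly the part you flag as "the main obstacle". You specify that each edge gadget $\hat{E}_e$ should be "colorable iff the two choices fall on the endpoints of $e$". That specification cannot work: in a base adjacency gadget $\hat{A}(i_1,i_1,i'_1,i'_1)$ there is one edge gadget per edge of $E^{i_1,i'_1}$, and the chosen pair of indices matches at most one of them, so under your specification all the other edge gadgets would be uncolorable even when $G$ has a $k$-clique. What is actually needed is a disjunction: every edge gadget must be colorable in a "deactivated" mode, at least one edge gadget per base adjacency gadget must be "activated", and any activated gadget must agree with the two choice gadgets. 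The paper does not realize this with implication/exclusion gadgets at all (those are only used in the pathwidth reduction); instead it uses a counting/budget scheme in the spirit of the deletion budget of the \BDD{} tree-depth proof: $\Delta$ is set to the specific value $2(|E(G)|-kn)+kn-(2\binom{k}{2}+k)$, every edge-gadget center $c_e$ carries $\Delta$ leaves forced to color $B$, a single global vertex $u$ forced to color $A$ is joined to all centers, and the defect bound on $u$ forces at least (hence, by counting instances via \cref{lem:dc_td_lb_helper}, exactly) one activated center per base adjacency gadget; the threshold vertices $\ell^{l},\ell^{h},r^{l},r^{h}$ with $\Delta-n$ forced leaves then pin the activated edge to the counts encoded by the choice gadgets. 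Your proposal contains no analogue of this selection/budget mechanism, and it also never says how $\Delta$ is chosen, even though the whole argument hinges on its exact value.

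Two further mismatches follow from this. First, your choice gadget is described as distinguishing "a single index", with copy gadgets enforcing equality of indices via equality gadgets; the paper's choice and copy gadgets instead work with \emph{counts} of color-$A$ vertices on the two tracks ($f^A_i,f^B_i$ with $\Delta-n$ forced leaves, and $g_1,g_2$ in the copy gadget preserve these counts, \cref{lem:dc_td_lb_copy_gadget}), which is what the threshold vertices of the base adjacency gadget can read off. Second, if you did try to implement the per-adjacency-gadget disjunction with the implication/exclusion machinery, you would need something like the chain $r_i,k_i,v^j_i$ of the pathwidth constraint gadget ranging over all of $E^{i_1,i'_1}$; such a chain has tree-depth $\Theta(\log n)$, so the base case would no longer be constant and the final bound would degrade to $\td(H)=\bO(k+\log n)$, which does not directly give the $f(\td)n^{o(\td)}$ contradiction from \kMC. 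So the recursive frame is right, but the heart of the proof --- the globally tuned $\Delta$, the vertex $u$, and the threshold-based edge selection --- is missing and needs to be supplied rather than delegated to unspecified implication/exclusion chains.
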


\begin{proof}
    Let $(G,k)$ be an instance of \kMC, such that every vertex of $G$ has a self loop,
    i.e.~$\braces{v,v} \in E(G)$ for all $v \in V(G)$.
    Recall that we assume that $G$ is given to us partitioned into $k$ independent sets $V_1, \ldots, V_k$,
    where $V_i = \braces{v^i_1, \ldots, v^i_n}$, and each independent set has size exactly $n$.
    Assume without loss of generality that $k = 2^z$ for some $z \in \N$
    (one can do so by adding dummy independent sets connected to all the other vertices of the graph).
    Moreover, let $E^{i_1,i_2} \subseteq E(G)$ denote the edges of $G$ with one endpoint in $V_{i_1}$ and the other in $V_{i_2}$.
    Set $\Delta = 2(|E(G)| - kn) + kn - (2\binom{k}{2} + k)$.
    We will construct in polynomial time a graph $H$ of tree-depth $\td(H) = \bO(k)$ such that
    $H$ admits a $(\chid, \Delta)$-coloring
    if and only if $G$ has a $k$-clique.
    
    \proofsubparagraph*{Palette Vertices.}
    Construct two vertices $p^A$ and $p^B$, which we call main palette vertices, and add an edge connecting them.
    Next, construct vertices $p^A_i$ and $p^B_i$, for $i \in [\Delta]$,
    add equality gadgets $Q(p^j, p^j_i)$ as well as edges between $p^j$ and $p^j_i$,
    where $j \in \braces{A, B}$ and $i \in [\Delta]$.
    
    \proofsubparagraph*{Choice Gadget.}
    For an independent set $V_i$, we construct the \emph{choice gadget} $\hat{C}_{i}$ as depicted in~\cref{fig:dc_td_lb_choice_gadget}.
    We first construct independent sets $\hat{C}^p_{i} = \setdef{v^{i,p}_j}{j \in [n]}$, where $p \in \braces{h,l}$.
    We will refer to these vertices as \emph{choice vertices}.
    Next, we introduce vertices $f^A_i$ and $f^B_i$, connected with all choice vertices,
    and add equality gadgets $Q(p^A, f^A_i)$ and $Q(p^B, f^B_i)$.
    Finally, we attach to $f^A_i$ (respectively, $f^B_i$) $\Delta - n$ leaves $l^{f^A_i}_j$ (respectively, $l^{f^B_i}_j$),
    for $j \in [\Delta-n]$,
    and add equality gadgets $Q(p^A,l^{f^A_i}_j)$ (respectively, $Q(p^B,l^{f^B_i}_j)$).
    If $\chid \geq 3$, then we add a palette gadget $P(p^A, p^B, v^{i,q}_j)$ for all choice vertices $v^{i,q}_j$.
    Intuitively, we consider a one-to-one mapping between the vertex $v^i_j$ of $V_i$ belonging to a supposed $k$-clique of $G$ and the coloring of exactly $j$ vertices
    of $\hat{C}^l_{i}$ and $n-j$ of $\hat{C}^h_{i}$ with the same color as the one used to color $p^A$.
    
    \proofsubparagraph*{Copy Gadget.}
    Given two instances $\mathcal{I}_1$, $\mathcal{I}_2$ of a choice gadget $\hat{C}_i$,
    when we say that we connect them with a \emph{copy gadget},
    we introduce two vertices $g_1$ and $g_2$ and add equality gadgets $Q(p^A,g_1)$ and $Q(p^A,g_2)$.
    Moreover, we add an edge between $g_1$ (respectively, $g_2$) and the vertices of $\hat{C}^l_{i}$  of instance $\mathcal{I}_1$ (respectively, $\mathcal{I}_2$),
    as well as the vertices of $\hat{C}^h_{i}$ of instance $\mathcal{I}_2$ (respectively, $\mathcal{I}_1$),
    as depicted in~\cref{fig:dc_td_lb_copy_choice_gadget}.
    Lastly, we attach to each of $g_1$ and $g_2$ $\Delta - n$ leaves $l^{g_1}_j$ and $l^{g_b}_j$ respectively,
    where $j \in [\Delta - n]$, and add equality gadgets $Q(p^A,l^{g_1}_j)$ and $Q(p^A,l^{g_2}_j)$.
    
    \begin{figure}[ht]
    \centering 
      \begin{subfigure}[b]{0.4\linewidth}
      \centering
        \begin{tikzpicture}[scale=0.75, transform shape]
        \node[vertex] (v1l) at (1,1) {};
        \node[] () at (1,0.6) {$v^{i,l}_1$};
        \node[] () at (2,1) {$\cdots$};
        \node[vertex] (vnl) at (3,1) {};
        \node[] () at (3,0.6) {$v^{i,l}_n$};
        
        \node[vertex] (v1h) at (1,5) {};
        \node[] () at (1,5.4) {$v^{i,h}_1$};
        \node[] () at (2,5) {$\cdots$};
        \node[vertex] (vnh) at (3,5) {};
        \node[] () at (3,5.4) {$v^{i,h}_n$};
    
        \node[gray_vertex] (g1) at (1,3) {};
        \node[] () at (0.7,3.4) {$f^A_i$};
        \node[gray_vertex] (g2) at (3,3) {};
        \node[] () at (3.3,3.4) {$f^B_i$};
    
    
        \node[] () at (2,0.3) {$\hat{C}^l_{i}$};
        \node[] () at (2,5.7) {$\hat{C}^h_{i}$};
        
        \draw[] (v1l)--(g1)--(v1h);
        \draw[] (vnl)--(g2)--(vnh);
        \draw[] (vnh) edge [bend right] (g1);
        \draw[] (vnl) edge [bend left] (g1);
        \draw[] (v1h) edge [bend left] (g2);
        \draw[] (v1l) edge [bend right] (g2);
        
        \draw[dashed] (0.5,0) rectangle (3.5,2);
        \draw[dashed] (0.5,4) rectangle (3.5,6);
        
        \end{tikzpicture}
        \caption{Choice gadget $\hat{C}_{i}$.}
        \label{fig:dc_td_lb_choice_gadget}
      \end{subfigure}
    \begin{subfigure}[b]{0.4\linewidth}
    \centering
        \begin{tikzpicture}[scale=0.7, transform shape]
        
        \node[vertex] (v1l) at (1,1) {};
        \node[] () at (2,1) {$\cdots$};
        \node[vertex] (vnl) at (3,1) {};
        
        \node[vertex] (v1h) at (1,5) {};
        \node[] () at (2,5) {$\cdots$};
        \node[vertex] (vnh) at (3,5) {};
        
        \node[] () at (2,0.3) {$\hat{C}^l_{i}$};
        \node[] () at (2,5.7) {$\hat{C}^h_{i}$};
    
        \node[] () at (2,7) {$\mathcal{I}_1$};
    
        \node[vertex] (v1l') at (7,1) {};
        \node[] () at (8,1) {$\cdots$};
        \node[vertex] (vnl') at (9,1) {};
        
        \node[vertex] (v1h') at (7,5) {};
        \node[] () at (8,5) {$\cdots$};
        \node[vertex] (vnh') at (9,5) {};
        
        \node[] () at (8,0.3) {$\hat{C}^l_{i}$};
        \node[] () at (8,5.7) {$\hat{C}^h_{i}$};
    
        \node[] () at (8,7) {$\mathcal{I}_2$};
    
        \node[gray_vertex] (g1) at (5,2.5) {};
        \node[] () at (5,2.2) {$g_1$};
        \node[gray_vertex] (g2) at (5,3.5) {};
        \node[] () at (5,3.8) {$g_2$};
        
        \draw[] (v1l) edge [bend left] (g1);
        \draw[] (vnl) edge [bend left] (g1);
        \draw[] (v1h) edge [bend right] (g2);
        \draw[] (vnh) edge [bend right] (g2);
    
        \draw[] (v1l') edge [bend right] (g2);
        \draw[] (vnl') edge [bend right] (g2);
        \draw[] (v1h') edge [bend left] (g1);
        \draw[] (vnh') edge [bend left] (g1);

        \draw[dashed] (0.5,0) rectangle (3.5,2);
        \draw[dashed] (0.5,4) rectangle (3.5,6);
        \draw[dashed] (6.5,0) rectangle (9.5,2);
        \draw[dashed] (6.5,4) rectangle (9.5,6);

        \end{tikzpicture}
        \caption{Making a copy of a choice gadget $\hat{C}_i$.}
        \label{fig:dc_td_lb_copy_choice_gadget}
      \end{subfigure}
    \caption{Gray vertices have $\Delta-n$ leaves attached.}
    \end{figure}  
    
    \proofsubparagraph*{Edge Gadget.}
    Let $e = \braces{v^{i_1}_{j_1}, v^{i_2}_{j_2}} \in E^{i_1,i_2}$ be an edge of $G$.
    Construct the \emph{edge gadget} $\hat{E}_e$ as depicted in~\cref{fig:bdd_td_lb_edge_gadget},
    where vertex $c_e$ has $\Delta$ leaves $l^e_j$, $j \in [\Delta]$ attached,
    and add equality gadgets $Q(p^B,l^e_j)$ for each such leaf.
    Moreover, if $\chid \geq 3$, then add a palette gadget $P(p^A, p^B, v)$, for every vertex $v \in \braces{c_e, s^{i_1}_j,s^{i_2}_j}$.
    
    \begin{figure}[ht]
    \centering
    \begin{tikzpicture}[scale=1, transform shape]
    
    \node[black_vertex] (r) at (5,5) {};
    \node[] () at (5,5.3) {$c_e$};
    
    \node[vertex] (s1) at (3,3) {};
    \node[] () at (2.7,3) {$s^{i_1}_1$};
    \node[] () at (3,3.6) {$\vdots$};
    \node[vertex] (s2) at (3,4) {};
    \node[] () at (2.7,4) {$s^{i_1}_{j_1}$};
    
    \node[vertex] (s3) at (3,6) {};
    \node[] () at (2.7,5.8) {$s^{i_1}_{j_1+1}$};
    \node[] () at (3,6.6) {$\vdots$};
    \node[vertex] (s4) at (3,7) {};
    \node[] () at (2.7,7) {$s^{i_1}_n$};
    
    \node[vertex] (s5) at (7,3) {};
    \node[] () at (7.4,3) {$s^{i_2}_1$};
    \node[] () at (7,3.6) {$\vdots$};
    \node[vertex] (s6) at (7,4) {};
    \node[] () at (7.4,4) {$s^{i_2}_{j_2}$};
    
    \node[vertex] (s7) at (7,6) {};
    \node[] () at (7.6,6) {$s^{i_2}_{j_2+1}$};
    \node[] () at (7,6.6) {$\vdots$};
    \node[vertex] (s8) at (7,7) {};
    \node[] () at (7.4,7) {$s^{i_2}_n$};
    
    \draw[] (r)--(s1);
    \draw[] (r)--(s2);
    \draw[] (r)--(s3);
    \draw[] (r)--(s4);
    \draw[] (r)--(s5);
    \draw[] (r)--(s6);
    \draw[] (r)--(s7);
    \draw[] (r)--(s8);
    
    \end{tikzpicture}
    \caption{Edge gadget $\hat{E}_e$ for $e = \braces{v^{i_1}_{j_1}, v^{i_2}_{j_2}}$.
    The black vertex has $\Delta$ leaves attached.}
    \label{fig:dc_td_lb_edge_gadget}
    \end{figure}
    
    \proofsubparagraph*{Adjacency Gadget.}
    For $i_1 \leq i_2$ and $j_1 \leq j_2$, we define the \emph{adjacency gadget} $\hat{A}(i_1, i_2, j_1, j_2)$ as follows:
    \begin{itemize}
        \item Consider first the case when $i_1 = i_2$ and $i'_1 = i'_2$.
        Let the adjacency gadget contain instances of the edge gadgets $\hat{E}_e$, for $e \in E^{i_1,i'_1}$,
        the choice gadgets $\hat{C}_{i_1}$ and $\hat{C}_{i'_1}$,
        as well as vertices $\ell^l_{i_1,i'_1}$, $\ell^h_{i_1,i'_1}$, $r^l_{i_1,i'_1}$ and $r^h_{i_1,i'_1}$.
        Add edges between
        \begin{multicols}{2}
        \begin{itemize}
            \item $\ell^l_{i_1,i'_1}$ and $\hat{C}^l_{i_1}$,
            \item $\ell^h_{i_1,i'_1}$ and $\hat{C}^h_{i_1}$,
            \item $r^l_{i_1,i'_1}$ and $\hat{C}^l_{i'_1}$,
            \item $r^h_{i_1,i'_1}$ and $\hat{C}^h_{i'_1}$.
        \end{itemize}
        \end{multicols}
        If $e = \braces{v^{i_1}_{j_1}, v^{i'_1}_{j_2}} \in E^{i_1,i'_1}$, then add the following edges adjacent to $\hat{E}_e$:
        \begin{multicols}{2}
        \begin{itemize}
            \item $\ell^l_{i_1,i'_1}$ with $s^{i_1}_\kappa$, for $\kappa \in [j_1]$,
            \item $\ell^h_{i_1,i'_1}$ with $s^{i_1}_\kappa$, for $\kappa \in [j_1+1, n]$,
            \item $r^l_{i_1,i'_1}$ with $s^{i'_1}_\kappa$, for $\kappa \in [j_2]$,
            \item $r^h_{i_1,i'_1}$ with $s^{i'_1}_\kappa$, for $\kappa \in [j_2+1, n]$.
        \end{itemize}
        \end{multicols}
        For every vertex $x \in \braces{\ell^l_{i_1,i'_1}, \ell^h_{i_1,i'_1}, r^l_{i_1,i'_1}, r^h_{i_1,i'_1}}$,
        add an equality gadget $Q(p^A,x)$ and attach $\Delta - n$ leaves.
        Lastly, for each leaf $l$, add an equality gadget $Q(p^A,l)$.
        For an illustration see~\cref{fig:dc_td_lb_adj_edge_case}.
    
        \item Now consider the case when $i_1 < i_2$ and $i'_1 < i'_2$.
        Then, let $\hat{A}(i_1,i_2,i'_1,i'_2)$ contain choice gadgets $\hat{C}_{i}$ and $\hat{C}_{i'}$, where $i \in [i_1,i_2]$ and $i' \in [i'_1, i'_2]$,
        which we will refer to as the \emph{original choice gadgets} of $\hat{A}(i_1,i_2,i'_1,i'_2)$,
        as well as the adjacency gadgets
        \begin{multicols}{2}
        \begin{itemize}
            \item $\hat{A}\parens*{i_1, \floor*{\frac{i_1+i_2}{2}}, i'_1, \floor*{\frac{i'_1+i'_2}{2}}}$,
            \item $\hat{A}\parens*{i_1, \floor*{\frac{i_1+i_2}{2}}, \ceil*{\frac{i'_1+i'_2}{2}}, i'_2}$,
            \item $\hat{A}\parens*{\ceil*{\frac{i_1+i_2}{2}}, i_2, i'_1, \floor*{\frac{i'_1+i'_2}{2}}}$,
            \item $\hat{A}\parens*{\ceil*{\frac{i_1+i_2}{2}}, i_2, \ceil*{\frac{i'_1+i'_2}{2}}, i'_2}$.
        \end{itemize}
        \end{multicols}
        Lastly, we connect with a copy gadget any choice gadgets $\hat{C}_i$ and $\hat{C}_{i'}$ appearing in said adjacency gadgets,
        with the corresponding original choice gadget $\hat{C}_i$ and $\hat{C}_{i'}$.
        Notice that then, every original choice gadget is taking part in two copy gadgets.
        For a high level illustration see~\cref{fig:td_lb_construction}.
    \end{itemize}
    
    \begin{figure}[ht]
    \centering
    \begin{tikzpicture}[scale=1, transform shape]
    
    
    \node[] () at (0.6,3.3) {$\hat{C}^l_{i_1}$};
    \node[vertex] (v1) at (1,3) {};
    \node[] () at (2,3) {$\cdots$};
    \node[vertex] (v2) at (3,3) {};
    
    \node[] () at (0.75,6.7) {$\hat{C}^h_{i_1}$};
    \node[vertex] (v3) at (1,7) {};
    \node[] () at (2,7) {$\cdots$};
    \node[vertex] (v4) at (3,7) {};
    
    \node[gray_vertex] (ll) at (5,4) {};
    \node[] at (5,3.6) {$\ell^l_{i_1,i'_1}$};
    
    \node[gray_vertex] (lh) at (5,6) {};
    \node[] at (5,6.4) {$\ell^h_{i_1,i'_1}$};
    
    \node[circle,draw] (e1) at (6.5,4) {$\hat{E}_{e_1}$};
    \node[] at (6.5,5) {$\vdots$};
    \node[circle,draw] (em) at (6.5,6) {$\hat{E}_{e_\lambda}$};
    
    \node[gray_vertex] (rl) at (8,4) {};
    \node[] at (8,3.6) {$r^l_{i_1,i'_1}$};
    
    \node[gray_vertex] (rh) at (8,6) {};
    \node[] at (8,6.4) {$r^h_{i_1,i'_1}$};
    
    \node[] () at (12.25,3.3) {$\hat{C}^l_{i'_1}$};
    \node[vertex] (v'1) at (10,3) {};
    \node[] () at (11,3) {$\cdots$};
    \node[vertex] (v'2) at (12,3) {};
    
    \node[] () at (12.25,6.7) {$\hat{C}^h_{i'_1}$};
    \node[vertex] (v'3) at (10,7) {};
    \node[] () at (11,7) {$\cdots$};
    \node[vertex] (v'4) at (12,7) {};
    
    \draw[dashed] (0,2) rectangle (4,4);
    \draw[dashed] (0,6) rectangle (4,8);
    \draw[dashed] (9,2) rectangle (13,4);
    \draw[dashed] (9,6) rectangle (13,8);
    
    
    \draw[] (v1) edge [bend left] (ll);
    \draw[] (v2)--(ll);
    \draw[] (v3) edge [bend right] (lh);
    \draw[] (v4)--(lh);
    
    \draw[] (lh)--(em.west);
    \draw[] (lh) edge [bend right] (e1.north west);
    
    \draw[] (ll)--(e1.west);
    \draw[] (ll) edge [bend left] (em.south west);
    
    \draw[] (rh)--(em.east);
    \draw[] (rh) edge [bend left] (e1.north east);
    
    \draw[] (rl)--(e1.east);
    \draw[] (rl) edge [bend right] (em.south east);
    
    \draw[] (v'1)--(rl);
    \draw[] (v'2) edge [bend right] (rl);
    \draw[] (v'3)--(rh);
    \draw[] (v'4) edge [bend left] (rh);

    \end{tikzpicture}
    \caption{Adjacency gadget $\hat{A}(i_1,i_1,i'_1,i'_1)$,
    where $E^{i_1,i'_1} = \setdef{e_i}{i \in [\lambda]}$.
    Gray vertices have $\Delta - n$ leaves attached.}
    \label{fig:dc_td_lb_adj_edge_case}
    \end{figure}
    
    To construct graph $H$, first construct the adjacency gadget $\hat{A}(1,k,1,k)$.
    Then, introduce a vertex $u$, which has an edge with the vertex $c_e$ of $\hat{E}_e$, for all $e \in E(G)$.
    Lastly, add an equality gadget $Q(p^A,u)$.
    
    \begin{lemma}\label{lem:dc_td_lb_helper}
        The number of instances of edge gadget $\hat{E}_e$ present in $H$, where $e = \braces{v^{i_1}_{j_1}, v^{i_2}_{j_2}} \in E(G)$,
        is one if $i_1 = i_2$, and two otherwise.
    \end{lemma}
    
    \begin{proof}
        First we will prove that for every adjacency gadget $\hat{A}(i_1,i_2,i'_1,i'_2)$ appearing in $H$,
        it holds that $i_2 - i_1 = i'_2 - i'_1 = 2^c - 1$, for some $c \in \N$.
        The statement holds for $\hat{A}(1,k,1,k)$, as well as when $i_2 - i_1 = i'_2 - i'_1 = 0$.
        Suppose that it holds for some $\hat{A}(i_1,i_2,i'_1,i'_2)$, i.e. $i_2 - i_1 = i'_2 - i'_1 = 2^c - 1 > 0$, for some $c \in \N$.
        Then, it follows that $\floor*{\frac{i_1+i_2}{2}} - i_1 = \floor*{i_2 - 2^{c-1} + 0.5} - i_1 = i_2 - i_1 - 2^{c-1} = 2^{c-1} - 1$.
        Moreover, it follows that $i_2 - \ceil*{\frac{i_1+i_2}{2}} = i_2 - \ceil*{i_1 + 2^{c-1} - 0.5} = i_2 - (i_1 + 2^{c-1}) = 2^{c-1} - 1$.
        Therefore, the stated property holds.
    
        In that case, for some $\hat{A}(i_1,i_2,i'_1,i'_2)$, in every step of the recursion,
        intervals $[i_1,i_2]$ and $[i'_1,i'_2]$ are partitioned in the middle,
        and an adjacency gadget is considered for each of the four combinations.
        In that case, starting from $\hat{A}(1,k,1,k)$, there is a single way to produce
        every adjacency gadget $\hat{A}(i_1,i_1,i_2,i_2)$, where $i_1,i_2 \in [k]$.
        Consider an edge gadget $\hat{E}_e$, where $e = \braces{v^{i_1}_{j_1}, v^{i_2}_{j_2}} \in E(G)$.
        There are two cases:
        \begin{itemize}
            \item if $i_1 = i_2$, then this gadget appears only in the adjacency gadget $\hat{A}(i_1,i_1,i_1,i_1)$,
            \item alternatively, it appears in both adjacency gadgets $\hat{A}(i_1,i_1,i_2,i_2)$ and $\hat{A}(i_2,i_2,i_1,i_1)$.
        \end{itemize}
        This concludes the proof.
    \end{proof}
    
    \begin{lemma}\label{lem:dc_td_lb_copy_gadget}
        Let $\mathcal{I}_1, \mathcal{I}_2$ be two instances of $\hat{C}_i$ connected by a copy gadget.
        Then, for any $(\chid,\Delta)$-coloring $c : V(H) \rightarrow [\chid]$ of $H$,
        it holds that
        \begin{itemize}
            \item the number of vertices of color $c(p^A)$ of $\hat{C}^l_i$ (respectively, $\hat{C}^h_i$) of $\mathcal{I}_1$ is equal
            to the number of vertices of color $c(p^A)$ of $\hat{C}^l_i$ (respectively, $\hat{C}^h_i$) of $\mathcal{I}_2$,
            \item the number of vertices of color $c(p^B)$ of $\hat{C}^l_i$ (respectively, $\hat{C}^h_i$) of $\mathcal{I}_1$ is equal
            to the number of vertices of color $c(p^B)$ of $\hat{C}^l_i$ (respectively, $\hat{C}^h_i$) of $\mathcal{I}_2$.
        \end{itemize}
    \end{lemma}
    
    \begin{proof}
        Due to the palette gadgets, any choice vertex of $\hat{C}_i$ is colored either $c(p^A)$ or $c(p^B)$.
        Due to the properties of the equality gadgets, it follows that vertex $f^A_i$ (respectively, $f^B_i$) is of color $c(p^A)$ (respectively, $c(p^B)$),
        and has $\Delta - n$ same colored neighboring leaves.
        Consequently, exactly $n$ choice vertices of $\hat{C}_i$ receive color $c(p^A)$ and exactly $n$ color $c(p^B)$.
    
        Let $g_1,g_2$ be the vertices present in the copy gadget connecting instances $\mathcal{I}_1, \mathcal{I}_2$.
        It follows that each of them has at most $n$ neighbors of color $c(p^A)$ apart from its leaves.
        In that case, for $j \in \braces{l,h}$, if the number of vertices of color $c(p^A)$ of $\hat{C}^j_i$ of instance $\mathcal{I}_1$ differs from the respective number of instance $\mathcal{I}_2$,
        either $g_1$ or $g_2$ has more than $\Delta$ same colored neighbors, contradiction.
    \end{proof}
    
    \begin{lemma}\label{lem:dc_td_lb_td}
        It holds that $\td(H) = \bO(k)$.
    \end{lemma}
    
    \begin{proof}
        We first observe that all equality and palette gadgets added to the graph have at most one endpoint outside of $\braces{p^A, p^B}$.
        Hence, by~\cite[Lemmata~3.6 and 3.9]{siamdm/BelmonteLM20},
        we can conclude that $\td(H) = \td(H' \setminus  \{p^A, p^B\}) + \chid + 1$,
        where $H'$ is the graph we obtain from $H$ by removing all the equality and palette gadgets.
        It therefore suffices to show that $\td(H') = \bO(k)$.
        We first remove vertex $u$.
        
        Now, let $T(\kappa)$ denote the tree-depth of $\hat{A}(i_1,i_2,i'_1,i'_2)$ in the case when $i_2 - i_1 = i'_2 - i'_1 = \kappa$.
        First, notice that, for $i_1,i_2 \in [k]$, the tree-depth of $\hat{A}(i_1,i_1,i_2,i_2)$ is at most $8$:
        remove vertices $\ell^l_{i_1,i_2}$, $\ell^h_{i_1,i_2}$, $r^l_{i_1,i_2}$ and $r^h_{i_1,i_2}$,
        resulting in the choice gadgets becoming disconnected with the edge gadgets.
        Then, it suffices to remove the vertices $f^A_i, f^B_i, f^A_j$ and $f^B_j$ from the choice gadgets,
        while the edge gadgets are trees of height $2$.
        Consequently, $T(1) \leq 8$.
    
        Now, consider the adjacency gadget $\hat{A}(i_1,i_2,i'_1,i'_2)$, where $i_2 - i_1 = i'_2 - i'_1 = \kappa$.
        This is comprised of adjacency gadgets
        \begin{multicols}{2}
        \begin{itemize}
            \item $\hat{A}\parens*{i_1, \floor*{\frac{i_1+i_2}{2}}, i'_1, \floor*{\frac{i'_1+i'_2}{2}}}$,
            \item $\hat{A}\parens*{i_1, \floor*{\frac{i_1+i_2}{2}}, \ceil*{\frac{i'_1+i'_2}{2}}, i'_2}$,
            \item $\hat{A}\parens*{\ceil*{\frac{i_1+i_2}{2}}, i_2, i'_1, \floor*{\frac{i'_1+i'_2}{2}}}$,
            \item $\hat{A}\parens*{\ceil*{\frac{i_1+i_2}{2}}, i_2, \ceil*{\frac{i'_1+i'_2}{2}}, i'_2}$.
        \end{itemize}
        \end{multicols}
        as well as of exactly $2\kappa$ original choice gadgets,
        each of which is connected with two copy gadgets to other instances of choice gadgets present in the adjacency gadgets.
        By removing all vertices $g_1$ and $g_2$ of the copy gadgets (see~\cref{fig:dc_td_lb_copy_choice_gadget}),
        all the original choice gadgets as well as the adjacency gadgets are disconnected.
        Therefore, it holds that $T(\kappa) \leq 8 \kappa + T(\kappa / 2)$,
        thus, it follows that $T(k) \leq 8 \sum_{i=0}^{\log k} \frac{k}{2^i} = \bO(k)$,
        i.e.~$\td(H') = \bO(k)$.
    \end{proof}
    
    \begin{lemma}\label{lem:dc_td_lb_cor1}
        For any $\chid \geq 2$, if $G$ contains a $k$-clique,
        then $H$ admits a $(\chid, \Delta)$-coloring.
    \end{lemma}
    
    \begin{proof}
        Let $\mathcal{V} \subseteq V(G)$ be a $k$-clique of $G$,
        consisting of vertices $v^i_{s(i)} \in V_i$, for $i \in [k]$.
        We will produce a $(\chid, \Delta)$-coloring of $H$ as follows:
        \begin{itemize}
            \item Vertex $p^A$ receives color 1 and vertex $p^B$ color 2.
            \item All vertices for which we have added an equality gadget with one endpoint identified with $p^A$ (respectively, $p^B$)
            take color 1 (respectively, 2).
            \item We use~\cite[Lemma~3.5]{siamdm/BelmonteLM20} to properly color the internal vertices of the equality gadgets.
            \item For choice gadget $\hat{C}_{i}$, we color the vertices of $\setdef{v^{i,h}_j \in \hat{C}^h_{i}}{j \in [s(i)]} \cup \setdef{v^{i,l}_j \in \hat{C}^l_{i}}{j \in [s(i)+1,n]}$
            with color 1, while we color the vertices of $\setdef{v^{i,l}_j \in \hat{C}^l_{i}}{j \in [s(i)]} \cup \setdef{v^{i,h}_j \in \hat{C}^h_{i}}{j \in [s(i)+1,n]}$
            with color 2.
            \item For every $e = \braces{v^{i_1}_{s(i_1)}, v^{i_2}_{s(i_2)}}$ that is contained in the clique,
            we color all vertices $s^{i_1}_j, s^{i_2}_j$ of $\hat{E}_e$, for $j \in [n]$,
            with color $1$, and the vertex $c_e$ with color $2$.
            For all other edges $e'$ not belonging to the clique,
            we use the opposite coloring,
            coloring vertices $s^{i_1}_j, s^{i_2}_j$ with color $2$ and vertex $c_{e'}$ with color $1$.
            \item We use~\cite[Lemma~3.8]{siamdm/BelmonteLM20} to properly color the internal vertices of palette gadgets,
            since all palette gadgets that we add use either color $1$ or color $2$ twice in their endpoints.
        \end{itemize}
        This completes the coloring.
    
        It remains to prove that this is indeed a $(\chid,\Delta)$-coloring.
        We first note that by Lemmata~3.5 and~3.8 of~\cite{siamdm/BelmonteLM20},
        internal vertices of equality and palette gadgets are properly colored.
        Vertices $p^A$, $p^B$ have exactly $\Delta$ neighbors with the same color.
        Vertices $f^A_i$ and $f^B_i$ have exactly $n$ neighbors of the same color among vertices of $\hat{C}^l_{i} \cup \hat{C}^h_{i}$,
        thus exactly $\Delta$ neighbors of the same color overall.
        The same holds for every vertex $g_1$ and $g_2$ of a copy gadget.
        Choice vertices have a constant number of neighbors of the same color (due to vertices $f^A_i, f^B_i, g_1, g_2, \ell^l_{i_1,i_2}, \ell^h_{i_1,i_2}, r^l_{i_1,i_2}, r^h_{i_1,i_2}$).
        The vertex $u$ has exactly $\deg_H (u) - (2 \binom{k}{2} + k) = \Delta$ neighbors with color $1$,
        since from~\cref{lem:dc_td_lb_helper},
        it follows that for adjacency gadget $\hat{A}(i,i,j,j)$, if $i \neq j$, then it appears twice in $H$ and once otherwise,
        while the clique contains $\binom{k}{2}$ edges $e = \braces{v^{i_1}_{s(i_1)}, v^{i_2}_{s(i_2)}}$ where $i_1 \neq i_2$ and $k$ where $i_1 = i_2$.
        Finally, for the vertices $\ell^l_{i_1,i_2}, \ell^h_{i_1,i_2}, r^l_{i_1,i_2}, r^h_{i_1,i_2}$,
        we note that $\ell^h_{i_1,i_2}$ (respectively, $\ell^l_{i_1,i_2}$) has exactly $s(i_1)$ (respectively, $n - s(i_1)$)
        neighbors with color $1$ among the choice vertices.
        Moreover, $\ell^h_{i_1,i_2}$ (respectively, $\ell^l_{i_1,i_2}$) has exactly $n - s(i_1)$ (respectively, $s(i_1)$)
        neighbors with color $1$ in the edge gadgets, those belonging to $\hat{E}_e$, where $e = \braces{v^{i_1}_{s(i_1)},v^{i_2}_{s(i_2)}}$.
        In a similar way, one can show that $r^l_{i_1,i_2}$ and $r^h_{i_1,i_2}$ have exactly $\Delta$ neighbors of the same color.
    \end{proof}
    
    \begin{lemma}\label{lem:dc_td_lb_cor2}
        For any $\chid \geq 2$, if $H$ admits a $(\chid, \Delta)$-coloring,
        then $G$ contains a $k$-clique.
    \end{lemma}
    
    \begin{proof}
        Assume $c : V(H) \rightarrow [\chid]$ is a $(\chid, \Delta)$-coloring of $H$.
        It holds that $c(p^A) \neq c(p^B)$,
        since each vertex has $\Delta$ neighboring leaves of the same color due to the properties of the equality gadget.
        Assume without loss of generality that $c(p^A) = 1$ and $c(p^B) = 2$.
        For every instance of a choice gadget $\hat{C}_{i}$,
        it holds that half of its choice vertices are colored with color $1$ and the other half with color $2$:
        due to the palette gadgets, every vertex is colored with one of these two colors, and if more than $n$ were colored with the same color,
        one of $f^A_i, f^B_i$ would have more than $\Delta$ same colored neighbors.
        Moreover, due to~\cref{lem:dc_td_lb_copy_gadget}, it follows that on every instance of the choice gadget $\hat{C}_{i}$ present in $H$,
        the number of vertices colored with color $1$ in $\hat{C}^l_{i}$ (respectively, $\hat{C}^h_{i}$) is the same.
        We will first prove the following claim.
    
        \begin{claim}\label{claim:dc_td_claim}
            For every $i_1,i_2 \in [k]$,
            there exists a single edge gadget $\hat{E}_e$ present in the adjacency gadget $\hat{A}(i_1,i_1,i_2,i_2)$ where $c(c_e) = 2$.
        \end{claim}
    
        \begin{claimproof}
            First, notice that due to~\cref{lem:dc_td_lb_helper}, $u$ has a total of $2(E(G) - kn) + kn$ neighbors.
            Consequently, at least $2(E(G) - kn) + kn - \Delta = 2 \binom{k}{2} + k$ of its neighbors are colored with color $2$,
            since $u$ is of color $1$ due to the equality gadget $Q(p^A, u)$.
            Notice that this is also the number of adjacency gadgets $\hat{A}(i_1,i_1,i_2,i_2)$, since every such gadget appears twice if $i_1 \neq i_2$ and once otherwise.
            Assume that for some adjacency gadget $\hat{A}(i_1,i_1,i_2,i_2)$ there is an edge gadget $\hat{E}_e$ such that $c(c_e) = 2$.        
            In that case, $c_e$ has $\Delta$ same colored neighboring leaves, therefore all vertices $s^{i_1}_j, s^{i_2}_j$ of $\hat{E}_e$ are colored with color $1$
            (if $\chid \geq 3$, this is a consequence of the palette gadget attached to those vertices).
            If there existed a second edge gadget $\hat{E}_{e'}$ present in $\hat{A}(i_1,i_1,i_2,i_2)$ with $c(c_{e'}) = 2$,
            then it follows that $\ell^l_{i_1,i_2}$ and $\ell^h_{i_1,i_2}$ have in total $2n$ neighbors colored with color $1$ and belonging to the edge gadgets.
            Consequently, since both already have $\Delta - n$ neighboring same colored leaves, it follows that each has exactly $n$ neighbors from the edge gadgets
            which are colored with color $1$, and no such neighbor from the choice gadget $\hat{C}_{i_1}$, which results in a contradiction,
            since exactly $n$ choice vertices of $\hat{C}_{i_1}$ are colored with color $1$.
            Consequently, for $u$ to have at least $2 \binom{k}{2} + k$ neighbors with color $2$,
            it follows that each adjacency gadget $\hat{A}(i_1,i_1,i_2,i_2)$ has a single edge gadget $\hat{E}_e$ such that $c(c_e) = 2$.
        \end{claimproof}
    
        Let $\mathcal{V} \subseteq V(G)$ be a set of cardinality $k$, containing vertex $v^i_{s(i)} \in V_i$ if, for choice gadget $\hat{C}_{i}$,
        it holds that exactly $s(i)$ vertices of $\hat{C}^h_{i}$ and $n - s(i)$ vertices of $\hat{C}^l_{i}$ are colored with color $1$.
        Notice that $\mathcal{V} \cap V_i \neq \emptyset$, for every $i$.
        We will prove that $\mathcal{V}$ is a clique.
    
        Let $v^{i_1}_{s(i_1)}, v^{i_2}_{s(i_2)}$ belong to $\mathcal{V}$.
        Consider the adjacency gadget $\hat{A}(i_1,i_1,i_2,i_2)$.
        We will prove that it contains an edge gadget $\hat{E}_e$, where $e = \braces{v^{i_1}_{s(i_1)}, v^{i_2}_{s(i_2)}}$.
        Consider the vertices $\ell^h_{i_1,i_2}$, $\ell^l_{i_1,i_2}$, $r^h_{i_1,i_2}$ and $r^l_{i_1,i_2}$.
        Each one of them has attached $\Delta - n$ neighboring leaves of color $1$.
        Moreover, it holds that
        \begin{itemize}
            \item $\ell^h_{i_1,i_2}$ has $s(i_1)$ same colored neighbors from $\hat{C}^h_{i_1}$,
            \item $\ell^l_{i_1,i_2}$ has $n - s(i_1)$ same colored neighbors from $\hat{C}^l_{i_1}$,
            \item $r^h_{i_1,i_2}$ has $s(i_2)$ same colored neighbors from $\hat{C}^h_{i_2}$,
            \item $r^l_{i_1,i_2}$ has $n - s(i_2)$ same colored neighbors from $\hat{C}^l_{i_2}$.
        \end{itemize}
        Notice that due to~\cref{claim:dc_td_claim}, there exists a single edge gadget $\hat{E}_{e'}$, where $e' = \braces{v^{i_1}_{j_1}, v^{i_2}_{j_2}}$,
        for which $c(c_{e'}) = 2$ holds, thus all vertices $s^{i_1}_j, s^{i_2}_j$ have color $1$.
        For $c$ to be a $(\chid,\Delta)$-coloring,
        it holds that $\ell^l_{i_1,i_2}$ (respectively, $\ell^h_{i_1,i_2}$) has at most $s(i_1)$ (respectively, $n - s(i_1)$) neighbors from the set $\setdef{s^{i_1}_j}{j \in [n]}$.
        Consequently, it follows that $\ell^l_{i_1,i_2}$ (respectively, $\ell^h_{i_1,i_2}$) has exactly $s(i_1)$ (respectively, $n - s(i_1)$) neighbors from said set.
        Similarly, it follows that $r^l_{i_1,i_2}$ (respectively, $r^h_{i_1,i_2}$) has exactly $s(i_2)$ (respectively, $n - s(i_2)$) neighbors from the set $\setdef{s^{i_2}_j}{j \in [n]}$.
        Consequently, $e' = \braces{v^{i_1}_{j_1}, v^{i_2}_{j_2}}$, thus there exists such an edge in $G$.
    
        Since this holds for any two vertices belonging to $\mathcal{V}$, it follows that $G$ has a $k$-clique.
    \end{proof}
    
    Therefore, in polynomial time, we can construct a graph $H$,
    of tree-depth $\td = \bO(k)$ due to~\cref{lem:dc_td_lb_td},
    such that, due to~\cref{lem:dc_td_lb_cor1,lem:dc_td_lb_cor2},
    deciding whether $H$ admits a $(\chid, \Delta)$-coloring
    is equivalent to deciding whether $G$ has a $k$-clique.
    In that case, assuming there exists a $f(\td) |V(H)|^{o(\td)}$ algorithm for \DC,
    where $f$ is any computable function,
    one could decide \kMC{} in time $f(\td) |V(H)|^{o(\td)} = g(k) n^{o(k)}$,
    for some computable function $g$,
    which contradicts the ETH.
\end{proof}

\section{Vertex Cover Lower Bounds}\label{sec:vc_lb}
In this section we present lower bounds on the complexity of solving \BDD{} and \DC{} when parameterized by the vertex cover number of the input graph.
In both cases we start from a 3-SAT instance of $n$ variables, and produce an equivalent instance
where the input graph has vertex cover $\bO (n / \log n)$, hence any algorithm
solving the latter problem in time $\vc^{o(\vc)}n^{\bO(1)}$ would refute the ETH.
As a consequence of the above, already known algorithms for both of these problems are essentially optimal.
We start by presenting some necessary tools used in both reductions,
and then prove the stated results.

\subsection{Preliminary Tools}

We first define a constrained version of 3-SAT, called \SD.
This variant is closely related to the (3,4)-SAT problem~\cite{dam/Tovey84} which asks whether a given formula $\phi$ is satisfiable,
where $\phi$ is a 3-SAT formula each clause of which contains exactly $3$ different variables and each variable occurs in at most $4$ clauses.
As observed by Bonamy et al.~\cite{toct/BonamyKPSW19},
a corollary of Tovey's work~\cite{dam/Tovey84} is that there is no $2^{o(n)}$ algorithm for (3,4)-SAT unless the ETH is false,
where $n$ denotes the number of variables of the formula.
Here we prove an analogous lower bound for \SD.
Subsequently, by closely following Lemma~3.2 from~\cite{toct/BonamyKPSW19},
we present a way to partition the formula's variables and clauses into groups such that
variables appearing in clauses of the same clause group belong to different variable groups.

\problemdef{\SD}
{A 3-SAT formula $\phi$ every clause of which contains exactly 3 distinct variables and each variable appears in at most 4 clauses.}
{Determine whether there exists an assignment to the variables of $\phi$ such that each clause has exactly one True literal.}

\begin{theorem}\label{thm:3_4_xsat_hardness}
    \SD{} cannot be decided in time $2^{o(n)}$,
    where $n$ denotes the number of variables of the input formula,
    unless the ETH fails.
\end{theorem}

\begin{proof}
    We will closely follow a known reduction from 3SAT to 1-\textsc{in}-3-SAT, which is a simplification of Schaefer's work~\cite{stoc/Schaefer78}.
    Let $\phi$ be a (3,4)-SAT formula of $n$ variables and $m \leq \frac{4}{3} \cdot n$ clauses.
    Let $V$ denote the set of its variables, and $C$ the set of its clauses.
    We will construct an equivalent instance $\phi'$ of \SD,
    where $V'$ denotes the set of its variables and $C'$ the set of its clauses,
    as follows:
    \begin{itemize}
        \item for every variable $x \in V$, introduce a variable $x \in V'$,

        \item for every clause $c_i \in C$, introduce variables $\alpha_i, \beta_i, \gamma_i, \delta_i \in V'$,
        
        \item for every clause $c_i = x \lor y \lor z$ of $\phi$,
        introduce clauses $c^1_i = \neg x \lor \alpha_i \lor \beta_i$,
        $c^2_i = y \lor \beta_i \lor \gamma_i$ and
        $c^3_i = \neg z \lor \gamma_i \lor \delta_i$ in $\phi'$.
    \end{itemize}
    Notice that $\phi'$ is a valid \SD{} instance, since every one of its $3m$ clauses contains exactly 3 different variables,
    while all of its $n + 4m$ variables appear in at most 4 clauses.
    In the following we prove that $\phi$ is satisfiable if and only if there exists an assignment to the variables of $\phi'$ such that each of its clauses has exactly one True literal.

    For the forward direction, let $f : V \to \braces{T, F}$ be a satisfying assignment for $\phi$.
    Consider an assignment $f' : V' \to \braces{T, F}$ such that $f'(x) = f(x)$, for all $x \in V$.
    It remains to determine the value of $f'$ for any variable belonging to $V' \setminus V$.
    Let $c_i = x \lor y \lor z$ be a clause of $\phi$.
    Since $f$ is a satisfying assignment, it holds that at least one of $x, y, z$ has a truthful assignment.
    If $f(y) = T$ then let $f'(\beta_i) = f'(\gamma_i) = F$, while $f'(\alpha_i) = f(x)$ and $f'(\delta_i) = f(z)$.
    On the other hand, if $f(y) = F$ then one of the following cases holds:
    \begin{romanenumerate}
        \item $f(x) = f(z) = T$. Then, set $f'(\alpha_i) = T$, $f'(\beta_i) = F$, $f'(\gamma_i) = T$, and $f'(\delta_i) = F$.
        \item $f(x) = T$ and $f(z) = F$. Then, set $f'(\alpha_i) = F$, $f'(\beta_i) = T$, $f'(\gamma_i) = F$, and $f'(\delta_i) = F$.
        \item $f(x) = F$ and $f(z) = T$. Then, set $f'(\alpha_i) = F$, $f'(\beta_i) = F$, $f'(\gamma_i) = T$, and $f'(\delta_i) = F$.
    \end{romanenumerate}        
    Notice that $f'$ is an assignment on $V'$ such that all clauses of $\phi'$ have exactly one True literal.

    For the converse direction, let $f' : V' \to \braces{T, F}$ be an assignment such that every clause of $\phi'$ has exactly one True literal,
    and let $f : V \to \braces{T, F}$ be the restriction of $f'$ to $V$, i.e.~$f(x) = f'(x)$, for all $x \in V$.
    We will prove that $f$ is a satisfying assignment for $\phi$.
    Indeed, assume there exists a clause $c_i = x \lor y \lor z$ of $\phi$ which is not satisfied by $f$.
    In that case, $f(x) = f(y) = f(z) = F$, and since $f$ is a restriction of $f'$ it follows that
    \begin{itemize}
        \item $f'(\alpha_i) = f'(\beta_i) = F$ due to $c^1_i$,
        \item $f'(\beta_i) \neq f'(\gamma_i)$ due to $c^2_i$ and
        \item $f'(\gamma_i) = f'(\delta_i) = F$ due to $c^3_i$,
    \end{itemize}
    which is a contradiction.
    
    Lastly, assume there exists a $2^{o(|V'|)}$ algorithm for \SD.
    Then, since $|V'| = n + 4m$, (3,4)-SAT could be decided in $2^{o(n)}$, thus the ETH fails.
\end{proof}

We proceed by proving that, given a \SD{} instance, we can partition the variables and clauses of the formula into groups such that
variables appearing in clauses of the same clause group belong to different variable groups.

\begin{lemma}\label{lem:x_3_4_sat_partition}
    Let $\phi$ be an instance of \SD, where $V$ denotes the set of its $n$ variables and $C$ the set of its clauses.
    Moreover, let $b \leq \sqrt{n}$.
    One can produce in time $n^{\bO(1)}$ a partition of $\phi$'s variables into $n_V$ disjoint sets $V_1, \ldots, V_{n_V}$ of size at most $b$
    as well as a partition of its clauses into $n_C$ disjoint sets $C_1, \ldots, C_{n_C}$ of size at most $\sqrt{n}$,
    for some integers $n_V = \bO(n / b)$ and $n_C = \bO(\sqrt{n})$, such that,
    for any $i \in [n_C]$, any two variables appearing in clauses of $C_i$ belong to different variable subsets,
    while any variable appears in at most $1$ clause of $C_i$.
\end{lemma}

\begin{proof}
    We will first partition the variable set $V$ into disjoint subsets $V_1, \ldots, V_{n_V}$,
    where $n_V = \bO(n / b)$,
    such that $|V_i| \leq b$, and variables appearing in the same clause belong to different $V_i$'s.
    Consider the graph $G_1$, which has a vertex per variable of $\phi$,
    while two vertices have an edge if there exists a clause in $\phi$ that both corresponding variables appear in,
    i.e.~$G_1$ is the primal graph of $\phi$.
    $G_1$ does not have any loops, since no clause contains repeated variables.
    Since every variable of $\phi$ occurs in at most four clauses and since those clauses contain two other variables,
    the maximum degree of $G_1$ is at most $8$.
    Hence, $G_1$ can be greedily colored with $9$ colors, thus inducing a partition into different colored groups of size $n_1, \ldots, n_9$ respectively,
    where $n_1 + \ldots + n_9 = n$.
    Subsequently, we refine said partition so that every group has size at most $b$,
    resulting in at most
    \[
        n_V =
        \sum_{i=1}^9 \ceil*{\frac{n_i}{b}} \leq
        9 + \sum_{i=1}^9 \frac{n_i}{b} =
        9 + \frac{n}{b}
    \]
    groups $V_1, \ldots, V_{n_V}$.
    Notice that it holds that, variables appearing in the same clause belong to different $V_i$'s,
    since any two such variables are adjacent in $G_1$ and thus get different colors.

    Next, we partition the clause set $C$ into disjoint subsets $C_1, \ldots, C_{n_C}$, where $n_C = \bO(\sqrt{n})$,
    such that $|C_i| \leq \sqrt{n}$,
    and any two variables appearing in clauses of a group $C_i$ belong to different variable groups.
    For this, consider the graph $G_2$ with clauses as vertices and with an edge between clauses if they contain variables from the same variable group.
    $G_2$ has no loops, since any variables occurring in the same clause belong to different $V_i$'s.
    Since every clause contains exactly $3$ variables,
    each variable group has size at most $b$,
    and every variable occurs in at most $4$ clauses,
    the maximum degree of $G_2$ is at most $12 b$.
    We can therefore color $G_2$ greedily with $12 b + 1$ colors.
    Similarly as before, we refine said partition into $n_C \leq 12 b + 1 + |C| / \sqrt{n}$ subsets
    $C_1, \ldots, C_{n_C}$ of size at most $\sqrt{n}$ each.
    By the construction of the coloring, it follows that if a variable $v \in V_i$ appears in some clause $c \in C_j$,
    then no variable belonging to $V_i$ appears in any clause in $C_j \setminus \braces{c}$,
    while no other variable of $V_i$ appears in $c$.
    Moreover, since $|C| \leq \frac{4}{3} \cdot n$ and $b \leq \sqrt{n}$, it follows that $n_C = \bO(\sqrt{n})$.
\end{proof}

\begin{definition}
    A \emph{$d$-detecting family} for a finite set $U$ is a family $\mathcal{F} \subseteq 2^U$ of subsets of $U$
    such that, for every two functions $f, g : U \to \braces{0, \ldots, d-1}$ where $f \neq g$,
    there exists $S \in \mathcal{F}$ such that
    $\sum_{x \in S} f(x) \neq \sum_{x \in S} g(x)$.
\end{definition}

Lindstr\"{o}m~\cite{lindstrom_1965} has provided a deterministic construction of sublinear $d$-detecting families,
while Bonamy et al.~\cite{toct/BonamyKPSW19} were the first to use them in the context of computational complexity,
proving tight lower bounds for the \textsc{Multicoloring} problem under the ETH.
The following theorem will be crucial towards proving the stated lower bounds.

\begin{theorem}[\cite{lindstrom_1965}]\label{thm:d_detecting}
    For every constant $d \in \N$ and finite set $U$,
    there is a $d$-detecting family $\mathcal{F}$ on $U$ of size $\frac{2 |U|}{\log_d |U|} \cdot (1 + o(1))$.
    Moreover, $\mathcal{F}$ can be constructed in time polynomial in $|U|$.
\end{theorem}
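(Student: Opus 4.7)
The plan is to recast the question as a matrix problem, derive an information-theoretic lower bound that already reveals the order $n/\log_d n$, and then match it up to a factor of $2$ via a Bose--Chowla style construction. Writing $n = |U|$, I identify $\mathcal{F}$ with a 0/1 matrix $M \in \{0,1\}^{m \times n}$, where $m = |\mathcal{F}|$ and the entry in row $S \in \mathcal{F}$, column $u \in U$ is $1$ iff $u \in S$. Two functions $f, g : U \to \{0, \dots, d-1\}$ agree on the sums over every $S \in \mathcal{F}$ precisely when $M(f - g) = 0$, so $\mathcal{F}$ is $d$-detecting iff $Mv \ne 0$ for every nonzero $v \in \{-(d-1), \dots, d-1\}^n$. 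A lower bound of the right order then falls out by counting: each coordinate of $Mv$ lies in a set of size $2(d-1)n + 1$, and injectivity on the $d^n$ functions $v \in \{0, \dots, d-1\}^n$ forces $(2(d-1)n+1)^m \ge d^n$, giving $m \ge (1-o(1)) \cdot n / \log_d n$.

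For the matching upper bound I would follow Lindstr\"om's original route via $B_h$-sequences. Pick a prime $q = n(1+o(1))$ and work inside $\mathbb{F}_{q^h}$ with $h = 2(d-1)$, where the doubling accommodates the two-sided coefficient range $\{-(d-1), \dots, d-1\}$. The Bose--Chowla construction, based on a primitive element of $\mathbb{F}_{q^h}$, produces $n$ explicit elements $a_1, \dots, a_n$ such that every nonzero signed combination $\sum_i v_i a_i$ with $v_i \in \{-(d-1), \dots, d-1\}$ is nonzero. Writing each $a_i$ in a carefully chosen base $B$ and converting each base-$B$ digit into its $B$-entry indicator vector produces 0/1 columns for $M$ whose linear-independence property over the prescribed coefficient range is inherited from the $B_h$-property of the $a_i$. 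Optimising $B$ as a function of $d$ and $n$ balances the length of the base-$B$ expansion against the unary blow-up of each digit, and the total row count settles at $\frac{2n}{\log_d n}(1+o(1))$.

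The main technical obstacle is pinning down the precise leading constant $2$ rather than a looser $O(n / \log_d n)$: this forces the choice $h = 2(d-1)$ (a $B_{d-1}$-set would only handle nonnegative combinations) together with a fine tuning of the digit base $B$, so that each row of $M$ actually extracts $\log((d-1)n)$ bits of information about $v$ rather than merely $O(1)$. Polynomial-time constructibility is then immediate from the Bose--Chowla recipe, as primes of the required size and primitive elements of $\mathbb{F}_{q^h}$ can both be produced deterministically in time polynomial in $n$ for fixed $d$, and the remaining digit-unpacking is a trivial postprocessing step.
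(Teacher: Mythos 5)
You should first note that the paper never proves this statement: it is imported as a black box from Lindstr\"om (via Bonamy et al.), so there is no internal proof to compare against and your sketch has to stand on its own. Its two framing steps are fine: the matrix reformulation is exactly right, and the digit-indicator transfer is sound, i.e.\ if $a_1,\dots,a_n$ are integers such that $\sum_i v_i a_i \neq 0$ for every nonzero $v \in \{-(d-1),\dots,d-1\}^n$, then writing each $a_i$ in base $B$ and replacing each digit by its indicator vector does give a $d$-detecting $0/1$ matrix. The fatal step is the claim that a Bose--Chowla $B_h$ set with $h = 2(d-1)$ supplies such $a_i$. The $B_h$ property only rules out vanishing combinations of bounded $\ell_1$-norm (at most $h$ summands on each side of the equality), whereas the vectors $v$ you must handle are bounded in $\ell_\infty$-norm by $d-1$ but may have up to $n$ nonzero entries, i.e.\ $\ell_1$-norm up to $(d-1)n$. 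In fact the claim is provably false: a Bose--Chowla set lives in $\mathbb{Z}_{q^h-1}$ with $q = n(1+o(1))$ and $h$ constant, so its elements are $n^{O(1)}$, and since there are $d^n$ functions but only $(d-1)n \cdot n^{O(1)}$ possible values of $\sum_i f(i)a_i$, collisions $\sum_i v_i a_i = 0$ with nonzero $v$ in the box are unavoidable.

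The approach also cannot be repaired by choosing better integers. The property you need says precisely that $f \mapsto \sum_i f(i) a_i$ is injective on $\{0,\dots,d-1\}^n$ (for $d=2$: all subset sums distinct), which forces $(d-1)\,n\,\max_i a_i \geq d^n - 1$, i.e.\ $\max_i a_i = \Omega(d^n/n)$. Any base-$B$ expansion then has $L = \Omega(n/\log B)$ digits, and your matrix has $BL = \Omega(nB/\log B) = \Omega(n)$ rows for every choice of $B$, missing the target $\frac{2n}{\log_d n}(1+o(1))$ by a $\Theta(\log n)$ factor. This is exactly why the known constructions meeting the bound (Lindstr\"om, and Cantor--Mills for $d=2$) do not route everything through a single family of integers: they build the detecting matrix directly, by a recursive/linear-algebraic scheme in which each of the $m$ rows jointly extracts $\Theta(\log n)$ bits. (Your lower-bound paragraph is correct in spirit but irrelevant to the theorem, which is purely an upper-bound and constructibility statement; the polynomial-time remarks would be unproblematic if the construction worked, since all fields involved have size $n^{O(1)}$.)
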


\subsection{Bounded Degree Vertex Deletion}\label{subsec:bdd_vc_lb}

\begin{theorem}\label{thm:bdd_vc_lb}
    There is no $\vc^{o (\vc)}n^{\bO(1)}$ time algorithm for \BDD,
    where \vc{} denotes the size of the minimum vertex cover of the input graph,
    unless the ETH fails.
\end{theorem}

\begin{proof}
    Let $\phi$ be an instance of \SD{} of $n$ variables.
    Assume without loss of generality that $n$ is a power of $4$
    (this can be achieved by adding dummy variables to the instance if needed).
    Making use of~\cref{lem:x_3_4_sat_partition} one can obtain in time $n^{\bO(1)}$ the following:
    \begin{itemize}
        \item a partition of $\phi$'s variables into subsets $V_1, \ldots, V_{n_V}$,
        where $|V_i| \leq \log n$ and $n_V = \bO(n / \log n)$,
        \item a partition of $\phi$'s clauses into subsets $C_1, \ldots, C_{n_C}$,
        where $|C_i| \leq \sqrt{n}$ and $n_C = \bO(\sqrt{n})$,
    \end{itemize}
    where any two variables occurring in clauses of the same clause subset belong to different variable subsets.
    For $i \in [n_C]$, let $\braces{C_{i, 1}, \ldots, C_{i, n^i_\mathcal{F}}}$ be a 4-detecting family of subsets of $C_i$ for some $n^i_\mathcal{F} = \bO(\sqrt{n} / \log n)$,
    produced in time $n^{\bO(1)}$ due to~\cref{thm:d_detecting}.
    Moreover, let $n_\mathcal{F} = \max_{i \in [n_C]} n^i_\mathcal{F}$.
    Define $\Delta = n^3$ and $k = n_V$.
    We will construct a graph $G = (V, E)$ such that there exists $S \subseteq V(G)$
    of size $|S| \leq k$ and $G - S$ has maximum degree at most $\Delta$
    if and only if there exists an assignment such that every clause of $\phi$ has exactly one True literal.
    
    \proofsubparagraph*{Choice Gadget.}
    For each variable subset $V_i$ we define the choice gadget graph $G_i$ as follows:
    \begin{itemize}
        \item introduce vertices $\kappa_i$, $\lambda_i$, and $v^i_j$, where $j \in [n]$, 
        \item add edges $\braces{\kappa_i, v^i_j}$ and $\braces{\lambda_i, v^i_j}$ for all $j \in [n]$,
        \item attach sufficiently many leaves to $\kappa_i$ and $\lambda_i$ such that their degree is $\Delta + 1$.
    \end{itemize}
    Let $\mathcal{V}_i = \setdef{v^i_j}{j \in [n]}$, for $i = 1, \ldots, n_V$.
    We fix an arbitrary one-to-one mapping so that every vertex of $\mathcal{V}_i$ corresponds to a different assignment for the variables of $V_i$.
    Since $2^{|V_i|} \leq n$, there are sufficiently many vertices to uniquely encode all the different assignments of $V_i$.
    Let $\mathcal{V} = \mathcal{V}_1 \cup \ldots \cup \mathcal{V}_{n_V}$ denote the set of all such vertices.

    \proofsubparagraph*{Clause Gadget.}
    For $i \in [n_C]$, let $C_i$ be a clause subset and $\braces{C_{i, 1}, \ldots, C_{i, n^i_\mathcal{F}}}$ its 4-detecting family.
    For every subset $C_{i,j}$ of the 4-detecting family introduce vertices $c_{i,j}$ and $c'_{i,j}$.
    Add an edge between $c_{i,j}$ and $v^p_q$ if there exists variable $x \in V_p$ such that $x$ appears in some clause $c \in C_{i,j}$,
    and $v^p_q$ corresponds to an assignment of $V_p$ that satisfies $c$.
    Due to~\cref{lem:x_3_4_sat_partition}, $c_{i,j}$ has exactly $|C_{i,j}| \cdot \frac{3n}{2}$ such edges:
    there are exactly $3 |C_{i,j}|$ different variables appearing in clauses of $C_{i,j}$, each belonging to a different variable subset,
    and for each such variable, half the assignments of the corresponding variable subset result in the satisfaction of the corresponding clause of $C_{i,j}$.
    Attach to $c_{i,j}$ a sufficient number of leaves such that its total degree is $\Delta + |C_{i,j}|$.
    Moreover, for $v \in \mathcal{V}$, let $v \in N(c'_{i,j})$ if $v \notin N(c_{i,j})$.
    Notice that then, it holds that $N(c_{i,j}) \cup N(c'_{i,j}) \supseteq \mathcal{V}$,
    while $N(c_{i,j}) \cap N(c'_{i,j}) = \emptyset$.
    Lastly, attach to $c'_{i,j}$ a sufficient number of leaves such that its total degree is $\Delta + (k - |C_{i,j}|)$.
    
    Let $\mathcal{I} = (G, \Delta, k)$ be an instance of \BDD.
    
    \begin{lemma}\label{lem:bdd_vc_bound}
        It holds that $\vc(G) = \bO(n / \log n)$.
    \end{lemma}
    
    \begin{proof}
        Notice that the deletion of all vertices $\kappa_i$, $\lambda_i$, $c_{i,j}$ and $c'_{i,j}$ induces an independent set.
        Therefore,
        \[
            \vc(G) \leq 2 n_V + 2 \sum_{i = 1}^{n_C} n^i_\mathcal{F} \leq
            2 n_V + 2 n_C \cdot n_\mathcal{F} =
            \bO \parens*{\frac{n}{\log n} + \sqrt{n} \cdot \frac{\sqrt{n}}{\log n}} =
            \bO \parens*{\frac{n}{\log n}},
        \]
        and the statement follows.
    \end{proof}
    
    \begin{lemma}\label{lem:bdd_vc_correctness_1}
        If $\phi$ is a Yes instance of \SD, then $\mathcal{I}$ is a Yes instance of \BDD.
    \end{lemma}
    
    \begin{proof}
        Let $f : V_1 \cup \ldots \cup V_{n_V} \to \braces{T,F}$ be an assignment such that every clause of $\phi$ has exactly 1 True literal.
        Let $S$ contain from each $\mathcal{V}_i$ the vertex corresponding to this assignment restricted to $V_i$.
        It holds that $|S| = n_V = k$.
        We will prove that $G - S$ has maximum degree at most $\Delta$.
        First, notice that in $G-S$, any vertex $v^q_p$ has at most $2 + n_C \cdot n_\mathcal{F} \leq \Delta$ neighbors,
        while any vertex $\kappa_i$ and $\lambda_i$ has degree $\Delta$,
        since $|S \cap \mathcal{V}_i| = 1$ for all $i \in [n_V]$.
        Lastly, for each vertex $c_{i,j}$ corresponding to a clause set $C_{i,j}$,
        it holds that, since $f$ is an assignment where every clause of $\phi$ has exactly 1 True literal,
        and due to~\cref{lem:x_3_4_sat_partition}, $S$ contains exactly $|C_{i,j}|$ neighbors of $c_{i,j}$.
        In that case, the remaining $k - |C_{i,j}|$ vertices belonging to $S$ are neighbors of $c'_{i,j}$.
    \end{proof}
    
    \begin{lemma}\label{lem:bdd_vc_correctness_2}
        If $\mathcal{I}$ is a Yes instance of \BDD, then $\phi$ is a Yes instance of \SD.
    \end{lemma}
    
    \begin{proof}
        Let $S \subseteq V(G)$, where $|S| \leq k$ and $G - S$ has maximum degree at most $\Delta$.
        We will first prove that $S$ contains a single vertex from every set $\mathcal{V}_i$.
    
        \begin{claim}
            $|S \cap \mathcal{V}_i| = 1$, for all $i \in [n_V]$.
        \end{claim}
    
        \begin{claimproof}
            We will first prove that $|S \cap V(G_i)| = 1$, for all $i \in [n_V]$.
            Suppose that this is not the case.
            Then, since $|S| \leq n_V$, it holds that there exists some $i$ such that $|S \cap V(G_i)| = 0$.
            In that case, $\deg_{G - S} (\kappa_i) = \Delta + 1$, contradiction.
            Lastly, suppose there exists $i$ such that, $v \in S$ and $v \in V(G_i) \setminus \mathcal{V}_i$.
            Since $\deg_G (\kappa_i) = \deg_G (\lambda_i) = \Delta + 1$, and $v \notin N(\kappa_i) \cap N(\lambda_i)$,
            it follows that one of $\kappa_i, \lambda_i$ does not belong in $S$ and has degree $\Delta + 1$ in $G - S$, contradiction.
        \end{claimproof}
        Now consider the assignment $h : V_1 \cup \ldots \cup V_{n_V} \to \braces{T,F}$ of the variables of $\phi$ depending on which vertex of $\mathcal{V}_i$ belongs to $S$.
    
        Let $C_i$, where $i \in [n_C]$, be a clause subset resulting from the
        partition due to~\cref{lem:x_3_4_sat_partition}.
        Let $f : C_i \to \braces{0,1,2,3}$ be the function assigning to each clause $c \in C_i$ the number of vertices $v \in S$ such that
        $v \in \mathcal{V}_j$ corresponds to an assignment $V_j \to \braces{T,F}$ that satisfies $c$.
        Notice that since every clause contains 3 literals and $|\mathcal{V}_i \cap S| = 1$,
        $f(c) \leq 3$ follows.
        It holds that $\sum_{c \in C_{i,j}} f(c) = |C_{i,j}|$, for any $j \in [n^i_\mathcal{F}]$:
        $|S \cap N(c_{i,j})| \geq |C_{i,j}|$, $|S \cap N(c'_{i,j})| \geq k - |C_{i,j}|$, while $N(c_{i,j}) \cap N(c'_{i,j}) = \emptyset$
        and $|S| = k$.
    
        Now consider $g : C_i \to \braces{0,1,2,3}$ to be the constant function $g \equiv 1$.
        Notice that $\sum_{c \in C_{i,j}} f(c) = \sum_{c \in C_{i,j}} g(c)$, for all $j \in [n^i_\mathcal{F}]$.
        Since $\braces{C_{i,1}, \ldots, C_{i, n^i_\mathcal{F}}}$ is a 4-detecting family, this implies that $f = g$.
        Thus, for every clause $c \in C_i$, it holds that $f(c) = 1$,
        meaning that there exists a single vertex in $S$
        which corresponds to a partial assignment that satisfies $c$.
        Since $i = 1, \ldots, n_C$ was arbitrary, this concludes the proof that $h$ is a satisfying assignment for $\phi$.
    \end{proof}
        
    Therefore, in polynomial time, we can construct a graph $G$ with vertex cover number $\vc  = \bO(n / \log n)$ due to~\cref{lem:bdd_vc_bound},
    such that, due to~\cref{lem:bdd_vc_correctness_1,lem:bdd_vc_correctness_2},
    deciding whether there exists $S \subseteq V(G)$ of size $|S| \leq k$ and $G - S$ has maximum degree at most $\Delta$
    is equivalent to deciding if there exists an assignment such that every clause of $\phi$ has exactly one True literal.
    In that case, assuming there exists a $\vc^{o(\vc)} n^{\bO(1)}$ algorithm for \BDD,
    one could decide \SD{} in time
    \[
        \vc^{o(\vc)} n^{\bO(1)} = \parens*{\frac{n}{\log n}}^{o(n / \log n)} n^{\bO(1)} =
        2^{(\log n - \log \log n) o(n / \log n)} =
        2^{o(n)},
    \]
    which contradicts the ETH due to~\cref{thm:3_4_xsat_hardness}.
\end{proof}

\subsection{Defective Coloring}\label{sec:dc_vc_lb}

\begin{theorem}\label{thm:dc_vc_lb}
    For any fixed $\chid \geq 2$,
    if there exists an algorithm that solves \DC{} in time $\vc^{o (\vc)}n^{\bO(1)}$,
    where \vc{} denotes the size of the minimum vertex cover of the input graph,
    then the ETH is false.
\end{theorem}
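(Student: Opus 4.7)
The plan is to mirror the \BDD{} reduction from~\cref{subsec:bdd_vc_lb}, translating the deletion-based choice and clause gadgets into coloring gadgets drawn from~\cref{sec:dc_coloring_gadgets}. Starting from a \SD{} instance $\phi$ on $n$ variables, I invoke~\cref{lem:x_3_4_sat_partition} to partition the variables into groups $V_1, \ldots, V_{n_V}$ of size at most $\log n$, with $n_V = \bO(n/\log n)$, and the clauses into groups $C_1, \ldots, C_{n_C}$ of size at most $\sqrt{n}$, with $n_C = \bO(\sqrt{n})$, such that variables occurring in the same clause group lie in distinct variable groups. For each $C_i$ I fix a $4$-detecting family $\{C_{i,1}, \ldots, C_{i,n^i_\mathcal{F}}\}$ of size $n^i_\mathcal{F} = \bO(\sqrt{n}/\log n)$ via~\cref{thm:d_detecting}, and set $\Delta$ to be a sufficiently large polynomial in $n$.

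The construction comprises: (i) palette vertices $p^1, \ldots, p^{\chid}$ forming a clique, each forced to a distinct color via equality gadgets attached to anchors; (ii) for each variable group $V_i$, a choice gadget with $n$ assignment-vertices $\mathcal{V}_i = \{v^j_i\}$ together with two control vertices $\kappa_i, \lambda_i$, where palette gadgets $P(p^1, p^2, v^j_i)$ (for $\chid \geq 3$) restrict each $v^j_i$ to $\{c(p^1), c(p^2)\}$, and $\kappa_i$ (forced to $c(p^1)$) together with $\Delta - 1$ color-$p^1$ anchors allows at most one $v^j_i$ to carry $c(p^1)$, while $\lambda_i$ (forced to $c(p^2)$) with $\Delta - (n-1)$ color-$p^2$ anchors allows at most $n-1$ vertices of $\mathcal{V}_i$ to carry $c(p^2)$; combined, exactly one $v^j_i$ per group receives $c(p^1)$; (iii) for each subset $C_{i,j}$ of the $4$-detecting family, two clause vertices $c_{i,j}, c'_{i,j}$, both forced to $c(p^1)$, where $c_{i,j}$ is adjacent to every $v^q_p$ whose corresponding assignment satisfies some clause in $C_{i,j}$ and also to $\Delta - |C_{i,j}|$ color-$p^1$ anchors, while $c'_{i,j}$ is adjacent to $\mathcal{V} \setminus N(c_{i,j})$ (where $\mathcal{V} = \bigcup_i \mathcal{V}_i$) and to $\Delta - (n_V - |C_{i,j}|)$ further color-$p^1$ anchors. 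Crucially, the color anchors are shared across all gadgets, so only $\bO(n/\log n)$ distinct such vertices (each forced via a single equality gadget of constant vertex cover for fixed $\chid$) are required.

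Correctness follows the template of~\cref{lem:bdd_vc_correctness_1,lem:bdd_vc_correctness_2}. If $\phi$ is a yes instance, color the chosen $v^j_i$ per group with $c(p^1)$, the remaining $v^j_i$ with $c(p^2)$, and extend internally via the gadget lemmas of~\cref{sec:dc_coloring_gadgets}; the degree constraints are then satisfied. Conversely, given a $(\chid, \Delta)$-coloring, the choice gadgets force exactly one color-$p^1$ vertex per $\mathcal{V}_i$, inducing an assignment $h$ of $\phi$; the clause gadgets then impose a two-sided bound whose two halves sum to $n_V$, the exact number of color-$p^1$ vertices in $\mathcal{V}$, so both bounds are tight and exactly $|C_{i,j}|$ chosen assignments satisfy some clause in $C_{i,j}$. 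Applying the $4$-detecting property to the count function $f: C_i \to \{0,1,2,3\}$ (well-defined by~\cref{lem:x_3_4_sat_partition}, which guarantees no two chosen assignments co-satisfy any clause) against the constant $g \equiv 1$ yields $f \equiv g$, so every clause of $\phi$ has exactly one true literal under $h$. Finally, $\vc(G) = \bO(n/\log n)$ (dominated by the $\kappa_i, \lambda_i, c_{i,j}, c'_{i,j}$, the palette vertices, the shared color anchors, and the constant-sized internals of the $\bO(n/\log n)$ equality and palette gadgets), so a $\vc^{o(\vc)} n^{\bO(1)}$ algorithm would solve \SD{} in time $2^{o(n)}$, contradicting~\cref{thm:3_4_xsat_hardness}.

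The main obstacle is enforcing the ``exactly $|C_{i,j}|$ color-$p^1$ neighbors'' property through coloring rather than deletion, where the fixed deletion budget of~\cref{subsec:bdd_vc_lb} no longer applies directly; the pincer via $c_{i,j}$ and $c'_{i,j}$ combined with the choice-gadget invariant that $\mathcal{V}$ contains exactly $n_V$ color-$p^1$ vertices supplies the replacement. A subsidiary obstacle is keeping the vertex cover contribution of the color-forcing infrastructure bounded by $\bO(n/\log n)$ even though every gadget requests $\Omega(\Delta)$ same-color anchors; sharing a common pool of $\bO(n/\log n)$ anchors across all gadgets, each maintained via a single equality gadget, resolves this.
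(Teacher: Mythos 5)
There is a genuine gap, and it is exactly where you flagged the ``subsidiary obstacle'': the color-forcing infrastructure is incompatible with the vertex cover bound, and your parameters are mutually inconsistent. If $\Delta$ is a ``sufficiently large polynomial in $n$'', then $\kappa_i$ alone must have $\Delta-1$ \emph{distinct} neighbors pre-colored with $c(p^1)$, so the shared anchor pool has $\Omega(\Delta)=n^{\Omega(1)}$ vertices -- contradicting your claim that $\bO(n/\log n)$ anchors suffice. Moreover each anchor is forced by its own copy of an equality gadget, and each such copy has private internal vertices and internal edges, so every copy contributes at least one private vertex to any vertex cover; with $n^{\Omega(1)}$ anchors the cover is polynomial, not $\bO(n/\log n)$. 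The same accounting problem hits the per-vertex palette gadgets $P(p^1,p^2,v^j_i)$: with groups of size $\log n$ you have up to $n$ assignment vertices per group, i.e.\ $\Theta(n^2/\log n)$ palette-gadget copies, again destroying the cover bound (and if you drop them for $\chid\geq 3$, a coloring can park assignment vertices on a third color and evade the counting). Finally, if you instead shrink $\Delta$ to $\bO(n/\log n)$ to keep the anchor pool small, your choice gadget breaks: $\lambda_i$ needs $\Delta-(n-1)\geq 0$ filler neighbors, forcing $\Delta\geq n-1$.

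The paper's construction in~\cref{sec:dc_vc_lb} resolves precisely these points, and differently from your plan: it uses \emph{no} equality or palette gadgets. Colors are forced by a complete bipartite ``main palette'' on $\Delta+1$ plus $2\Delta+1$ vertices (any two vertices on the same side share $2\Delta+1$ common neighbors, so they must agree in color), and the remaining $\chid-2$ colors are eliminated globally by secondary palette independent sets of sizes $i\Delta+1$ joined to everything else -- a one-shot cost of $\bO(\chid^2\Delta)$ vertices rather than a per-gadget cost. Crucially, $\Delta$ is set to exactly $n_V+n_C\cdot n_\mathcal{F}=\Theta(n/\log n)$, so the palette itself fits inside the $\bO(n/\log n)$ vertex cover and the palette vertices respect their own degree bound; to make all the $\Delta-(\cdot)$ filler counts nonnegative, the variable groups are taken of size $\log\sqrt[4]{n}$ (so only $\sqrt[4]{n}$ choice vertices per group), not $\log n$. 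A smaller difference: the paper's clause pincer gives $c'_{i,j}$ the \emph{same} $\mathcal{V}$-neighborhood as $c_{i,j}$ but the opposite color, squeezing the blue/red counts inside that common neighborhood of size $|C_{i,j}|\cdot\tfrac{3\sqrt[4]{n}}{2}$, whereas you pincer via the complement neighborhood and the global count of $n_V$ chosen vertices; your variant of the counting step would be fine, but only once the forcing and the parameter regime are repaired along the lines above.
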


\begin{proof}
    Let $\phi$ be an instance of \SD{} of $n$ variables.
    Assume without loss of generality that $n$ is a power of $16$
    (this can be achieved by adding dummy variables to the instance if needed).
    Making use of~\cref{lem:x_3_4_sat_partition}, one can obtain in time $n^{\bO(1)}$ the following:
    \begin{itemize}
        \item a partition of $\phi$'s variables into subsets $V_1, \ldots, V_{n_V}$,
        where $|V_i| \leq \log \sqrt[4]{n}$ and $n_V = \bO(n / \log n)$,
        \item a partition of $\phi$'s clauses into subsets $C_1, \ldots, C_{n_C}$,
        where $|C_i| \leq \sqrt{n}$ and $n_C = \bO(\sqrt{n})$,
    \end{itemize}
    where any two variables occurring in clauses of the same clause subset belong to different variable subsets.
    For $i \in [n_C]$, let $\braces{C_{i, 1}, \ldots, C_{i, n^i_\mathcal{F}}}$ be a 4-detecting family of subsets of $C_i$ for some $n^i_\mathcal{F} = \bO(\sqrt{n} / \log n)$,
    produced in time $n^{\bO(1)}$ due to~\cref{thm:d_detecting}.
    Moreover, let $n_\mathcal{F} = \max_{i \in [n_C]} n^i_\mathcal{F}$.
    Define $\Delta = n_V + n_C \cdot n_\mathcal{F}$.
    We will construct a graph $G = (V, E)$ such that $G$ admits a $(\chid, \Delta)$-coloring
    if and only if there exists an assignment such that every clause of $\phi$ has exactly one True literal.
    
    \proofsubparagraph*{Main Palette Vertices.}
    Introduce vertices $b_i$ and $r_j$, where $i \in [\Delta + 1]$ and $j \in [2 \Delta + 1]$,
    such that they comprise a complete bipartite graph
    (i.e.~all edges $\braces{b_i, r_j}$ are present).
    We will refer to those as the \emph{main palette vertices}.
    
    \proofsubparagraph*{Choice Gadget.}
    For each variable subset $V_i$ we define the choice gadget graph $G_i$ as follows:
    \begin{itemize}
        \item introduce vertices $\kappa_i$, $\lambda_i$, and $v^i_j$, where $j \in [\sqrt[4]{n}]$,
        \item add edges $\braces{\kappa_i, v^i_j}$ and $\braces{\lambda_i, v^i_j}$, for all $j \in [\sqrt[4]{n}]$,
        \item add edges $\braces{\kappa_i, r_p}$ and $\braces{\kappa_i, b_q}$, for $p \in [\Delta + 1]$ and $q \in [\Delta - 1]$,
        \item add edges $\braces{\lambda_i, b_p}$ and $\braces{\lambda_i, r_q}$, for $p \in [\Delta + 1]$ and $q \in [\Delta - (\sqrt[4]{n} - 1)]$.
    \end{itemize}
    Let $\mathcal{V}_i = \setdef{v^i_j}{j \in [\sqrt[4]{n}]}$, for $i = 1, \ldots, n_V$.
    We fix an arbitrary one-to-one mapping so that every vertex of $\mathcal{V}_i$ corresponds to a different assignment for the variables of $V_i$.
    Since $2^{|V_i|} \leq \sqrt[4]{n}$, there are sufficiently many vertices to uniquely encode all the different assignments of $V_i$.
    Let $\mathcal{V} = \mathcal{V}_1 \cup \ldots \cup \mathcal{V}_{n_V}$ denote the set of all such vertices.

    \proofsubparagraph*{Clause Gadget.}
    For $i \in [n_C]$, let $C_i$ be a clause subset and $\braces{C_{i, 1}, \ldots, C_{i, n^i_\mathcal{F}}}$ its 4-detecting family.
    For every subset $C_{i,j}$ of the 4-detecting family, introduce vertices $c_{i,j}$ and $c'_{i,j}$.
    Add an edge between $c_{i,j}$ and $v^p_q$ if there exists variable $x \in V_p$ such that $x$ occurs in some clause $c \in C_{i,j}$,
    and $v^p_q$ corresponds to an assignment of $V_p$ that satisfies $c$.
    Due to~\cref{lem:x_3_4_sat_partition}, $c_{i,j}$ has exactly $|C_{i,j}| \cdot \frac{3 \sqrt[4]{n}}{2}$ such edges:
    there are exactly $3 |C_{i,j}|$ different variables appearing in clauses of $C_{i,j}$, each belonging to a different variable subset,
    and for each such variable, half the assignments of the corresponding variable subset result in the satisfaction of the corresponding clause of $C_{i,j}$.
    Additionally, add edges $\braces{c_{i,j}, r_p}$ and $\braces{c_{i,j}, b_q}$, where $p \in [\Delta + 1]$ and $q \in [\Delta - |C_{i,j}|]$.
    Regarding $c'_{i,j}$, add an edge $\braces{c'_{i,j}, v}$ for all $v \in \mathcal{V}$
    such that $v \in N(c_{i,j})$.
    Finally, add edges $\braces{c'_{i,j}, b_p}$ and $\braces{c'_{i,j}, r_q}$, where $p \in [\Delta + 1]$ and $q \in [\Delta - (|C_{i,j}| \cdot \frac{3 \sqrt[4]{n}}{2} - |C_{i,j}|)]$.

    \proofsubparagraph*{Secondary Palette Vertices.}
    If $\chid \geq 3$, then introduce independent sets $P_i = \setdef{p^i_j}{j \in [i \cdot \Delta + 1]}$, for $i \in [3, \chid]$.
    We will refer to the vertices of $P = P_3 \cup \ldots \cup P_{\chid}$ as \emph{secondary palette vertices}.
    Add an edge from every secondary palette vertex $p^i_j \in P_i$
    to all vertices introduced except those belonging to $P_i$.    
    
    \begin{lemma}\label{lem:dc_vc_size}
        It holds that $\vc(G) = \bO(n / \log n)$.
    \end{lemma}
    
    \begin{proof}
        Notice that the deletion of all main and secondary palette vertices as well as of all vertices
        $\kappa_i$, $\lambda_i$, $c_{i,j}$ and $c'_{i,j}$ induces an independent set.
        Therefore,
        \[
            \vc(G) \leq
            \sum_{i=1}^{\chid} (i \cdot \Delta + 1) + 2 n_V + 2 \sum_{i = 1}^{n_C} n^i_\mathcal{F} \leq
            \chid^2 \cdot \Delta + \chid + 2 n_V + 2 n_C \cdot n_\mathcal{F} =
            \bO \parens*{\frac{n}{\log n}},
        \]
        and the statement follows.
    \end{proof}
    
    \begin{lemma}\label{lem:dc_vc_correctness_1}
        For any $\chid \geq 2$,
        if $\phi$ is a Yes instance of \SD, then $G$ has a $(\chid, \Delta)$-coloring.
    \end{lemma}
    
    \begin{proof}
        Let $f : V_1 \cup \ldots \cup V_{n_V} \to \braces{T,F}$ be an assignment such that every clause of $\phi$ has exactly 1 True literal.
        If $\chid \geq 3$, color every secondary palette vertex of $P_i$ with color $i$, for $i \in [3,\chid]$,
        thus resulting in that many independent sets of distinct colors.
        Refer to the (remaining) two colors as blue and red and consider the following coloring:
        \begin{itemize}
            \item vertices $b_i$, $\kappa_i$, and $c_{i,j}$ are colored blue,
            \item vertices $r_i$, $\lambda_i$, and $c'_{i,j}$ are colored red,
            \item for every $\mathcal{V}_i$, a single vertex corresponding to the assignment $f$ restricted to $V_i$ is colored blue,
            while the rest are colored red.
        \end{itemize}
        Let $B$ and $R$ be the subsets of $V(G)$ colored blue and red respectively.
        We will prove that $G[B]$ and $G[R]$ have maximum degree at most $\Delta$.
        Regarding vertices $b_i$ and $r_i$, notice that each has at most $n_V + \sum_{i=1}^{n_C} n^i_\mathcal{F} \leq \Delta$ same colored neighbors.
        Regarding vertices $\kappa_i$ and $\lambda_i$, it holds that $|\mathcal{V}_i \cap B| = 1$ and $|\mathcal{V}_i \cap R| = \sqrt[4]{n} - 1$,
        thus each has degree equal to $\Delta$ in $G[B]$ and $G[R]$ respectively.
        For every $v \in \mathcal{V}_i$, it holds that it has at most $1 + \sum_{i=1}^{n_C} n^i_\mathcal{F} \leq \Delta$ same colored neighbors,
        due to either $\kappa_i$ plus the vertices $c_{i,j}$ in case it is colored blue or $\lambda_i$ plus the vertices $c'_{i,j}$ otherwise.
        Lastly, for each vertex $c_{i,j}$ corresponding to a clause set $C_{i,j}$,
        it holds that, since $f$ is an assignment where every clause of $\phi$ has exactly 1 True literal,
        and due to~\cref{lem:x_3_4_sat_partition}, exactly $|C_{i,j}|$ neighbors of $c_{i,j}$ in $\mathcal{V}$ are colored blue.
        It that case, the rest of its neighbors in $\mathcal{V}$ are colored red.
        Consequently, it follows that each $c_{i,j}$ and $c'_{i,j}$ has exactly $\Delta$ same colored neighbors.
    \end{proof}

    \begin{lemma}\label{lem:dc_vc_correctness_2}
        For any $\chid \geq 2$,
        if $G$ has a $(\chid, \Delta)$-coloring, then $\phi$ is a Yes instance of \SD.
    \end{lemma}
    
    \begin{proof}
        Let $q^* : V(G) \to [\chid]$ be a $(\chid,\Delta)$-coloring of $G$.
        We start with the following claim.

        \begin{claim}
            The restriction of $q^*$ to $V(G - P)$ is a $(2,\Delta)$-coloring of $G - P$.
        \end{claim}

        \begin{claimproof}
            If $\chid = 2$ this is true since $P = \emptyset$.
            Assume that $\chid \geq 3$, and let for $i \in [3,\chid]$, $G_i = G - \bigcup_{k=i+1}^{\chid} P_{k}$, while $G_{\chid} = G$.
            We will prove that if there exists a $(i,\Delta)$-coloring for $G_i$,
            then the restriction of the same coloring to $G_{i-1}$ is a $(i-1,\Delta)$-coloring of $G_{i-1}$,
            for $i \in [3,\chid]$.
            The statement holds for $\chid$:
            notice that, since $P_{\chid}$ contains $\chid \cdot \Delta + 1$ vertices,
            there exist at least $\Delta + 1$ vertices of the same color.
            In that case, since all those vertices are connected to any other vertex of $G$ not belonging to $P_{\chid}$,
            it follows that at most $\chid - 1$ colors are used in the rest of the graph,
            thus the restriction of $q^*$ to $V(G - P_{\chid})$ is a $(i - 1,\Delta)$-coloring of $G - P_{\chid}$.
            Now assume that there exists a $(i,\Delta)$-coloring for $G_i$, where $i \in [3,\chid-1]$.
            Then, notice that, since $P_{i}$ contains $i \cdot \Delta + 1$ vertices,
            there exist at least $\Delta + 1$ vertices of the same color.
            In that case, since all those vertices are connected to any other vertex of $G_i$ not belonging to $P_{i}$,
            it follows that at most $i - 1$ colors are used in the rest of the graph,
            thus the restriction of $q^*$ to $V(G_{i-1})$ is a $(i - 1,\Delta)$-coloring of $G_{i-1}$.
        \end{claimproof}
        Let $q : V(G-P) \to \braces{b,r}$ be the restriction of $q^*$ to $G-P$, where $q(v)$ denotes the (blue or red) color of every vertex $v$ of $G-P$.
        Assume without loss of generality that $q(b_1) = b$.
        \begin{itemize}
            \item We first prove that $q(b_i) = b$, for all $i \in [\Delta + 1]$.
            Assume that this is not the case, i.e. there exists $i$ such that $q(b_i) = r$.
            It holds that $r_j \in N(b_1) \cap N(b_i)$, for all $j \in [2\Delta + 1]$,
            and since every vertex $r_j$ is colored either blue or red,
            it follows that there exist at least $\Delta+1$ such vertices of the same color.
            In that case, either $b_1$ or $b_i$ has $\Delta+1$ same colored neighbors, contradiction.
            
            \item Since every vertex $r_i$, $\lambda_i$, and $c'_{i,j}$ has $\Delta+1$ blue colored neighbors, it follows that $q(r_i) = q(\lambda_i) = q(c'_{i,j}) = r$.
    
            \item Since every vertex $\kappa_i$ and $c_{i,j}$ has $\Delta+1$ red colored neighbors, it follows that $q(\kappa_i) = q(c_{i,j}) = b$.
            
            \item In that case, any $r_i$ has at most $n_V + \sum_{i=1}^{n_C} n^i_\mathcal{F} < \Delta$ same colored neighbors due to the vertices $\lambda_i$ and $c'_{i,j}$.
            Symmetrically, any $b_i$ has at most $n_V + \sum_{i=1}^{n_C} n^i_\mathcal{F} < \Delta$ same colored neighbors.
        \end{itemize}
    
        For every subset $\mathcal{V}_i$, it holds that exactly one vertex is colored blue:
        if all vertices were colored red, then $\lambda_i$ would have in total $\Delta - (\sqrt[4]{n} - 1) + \sqrt[4]{n} > \Delta$ red neighbors,
        while if more than one vertices, say $p > 1$, were blue then $\kappa_i$ would have $\Delta - 1 + p > \Delta$ blue neighbors.
    
        Now consider the assignment $h : V_1 \cup \ldots \cup V_{n_V} \to \braces{T,F}$ of the variables of $\phi$ depending on which vertex of $\mathcal{V}_i$ is colored blue.
        Let $C_i$, where $i \in [n_C]$ be a clause subset resulting from the partition due to~\cref{lem:x_3_4_sat_partition}.
        Let $f : C_i \to \braces{0,1,2,3}$ be the function assigning to each clause $c \in C_i$ the number of blue vertices $v \in \mathcal{V}$ such that
        $v \in \mathcal{V}_j$ corresponds to an assignment $V_j \to \braces{T,F}$ that satisfies $c$.
        Notice that since every clause contains 3 literals, while from every subset $\mathcal{V}_i$ exactly one vertex is colored blue,
        $f(c) \leq 3$ follows.
        It holds that $\sum_{c \in C_{i,j}} f(c) = |C_{i,j}|$, for any $j \in [n^i_\mathcal{F}]$:
        $c_{i,j}$ and $c'_{i,j}$ have the same neighborhood in $\mathcal{V}$ of size $|C_{i,j}| \cdot \frac{3 \sqrt[4]{n}}{2}$,
        and out of those vertices, at most $|C_{i,j}|$ may be blue while at most $|C_{i,j}| \cdot \frac{3 \sqrt[4]{n}}{2} - |C_{i,j}|$ may be red.
    
        Now consider $g : C_i \to \braces{0,1,2,3}$ to be the constant function $g \equiv 1$.
        Notice that $\sum_{c \in C_{i,j}} f(c) = \sum_{c \in C_{i,j}} g(c)$, for all $j \in [n^i_\mathcal{F}]$.
        Since $\braces{C_{i,1}, \ldots, C_{i, n^i_\mathcal{F}}}$ is a 4-detecting family, this implies that $f = g$.
        Thus, for every clause $c \in C_i$, it holds that $f(c) = 1$, meaning that there exists a single blue vertex in $\mathcal{V}$
        which corresponds to a partial assignment that satisfies $c$.
        Since $i = 1, \ldots, n_C$ was arbitrary, this concludes the proof that $h$ is a satisfying assignment for $\phi$.
    \end{proof}

    Therefore, in polynomial time, we can construct a graph $G$, with vertex cover number $\vc = \bO(n / \log n)$ due to~\cref{lem:dc_vc_size},
    such that, due to~\cref{lem:dc_vc_correctness_1,lem:dc_vc_correctness_2},
    deciding whether $G$ admits a $(\chid,\Delta)$-coloring
    is equivalent to deciding if there exists an assignment such that every clause of $\phi$ has exactly one True literal.
    In that case, assuming there exists a $\vc^{o(\vc)} n^{\bO(1)}$ algorithm for \DC,
    one could decide \SD{} in time
    \[
        \vc^{o(\vc)} n^{\bO(1)} = \parens*{\frac{n}{\log n}}^{o(n / \log n)} n^{\bO(1)} =
        2^{(\log n - \log \log n) o(n / \log n)} =
        2^{o(n)},
    \]
    which contradicts the ETH due to~\cref{thm:3_4_xsat_hardness}.
\end{proof}

\section{Conclusion}\label{sec:conclusion}
In this work we have examined in depth the complexity of \BDD{} and \DC{} under the perspective of parameterized complexity.
In particular, we have precisely determined the complexity of both problems parameterized by some of the most commonly used structural parameters.
As a direction for future research, we consider the question of whether we could obtain a $n^{o(\fvs)}$ lower bound for \BDD{} as well as
for \DC{} when $\chid = 2$, where $\fvs$ denotes the size of the minimum feedback vertex set of the input graph.

\bibliography{bibliography}

\begin{thebibliography}{10}

\bibitem{andrews1985generalization}
James~A. Andrews and Michael~S. Jacobson.
\newblock On a generalization of chromatic number.
\newblock {\em Congressus Numerantium}, 47:33--48, 1985.

\bibitem{jgaa/AngeliniBLD0KMR17}
Patrizio Angelini, Michael~A. Bekos, Felice~De Luca, Walter Didimo, Michael
  Kaufmann, Stephen~G. Kobourov, Fabrizio Montecchiani, Chrysanthi~N.
  Raftopoulou, Vincenzo Roselli, and Antonios Symvonis.
\newblock Vertex-coloring with defects.
\newblock {\em J. Graph Algorithms Appl.}, 21(3):313--340, 2017.
\newblock \href {https://doi.org/10.7155/jgaa.00418}
  {\path{doi:10.7155/jgaa.00418}}.

\bibitem{jgt/Archdeacon87}
Dan Archdeacon.
\newblock A note on defective colorings of graphs in surfaces.
\newblock {\em J. Graph Theory}, 11(4):517--519, 1987.
\newblock \href {https://doi.org/10.1002/jgt.3190110408}
  {\path{doi:10.1002/jgt.3190110408}}.

\bibitem{ior/BalasundaramBH11}
Balabhaskar Balasundaram, Sergiy Butenko, and Illya~V. Hicks.
\newblock Clique relaxations in social network analysis: The maximum
  \emph{k}-plex problem.
\newblock {\em Oper. Res.}, 59(1):133--142, 2011.
\newblock \href {https://doi.org/10.1287/opre.1100.0851}
  {\path{doi:10.1287/opre.1100.0851}}.

\bibitem{siamdm/BelmonteLM20}
R{\'{e}}my Belmonte, Michael Lampis, and Valia Mitsou.
\newblock Parameterized (approximate) defective coloring.
\newblock {\em {SIAM} J. Discret. Math.}, 34(2):1084--1106, 2020.
\newblock \href {https://doi.org/10.1137/18M1223666}
  {\path{doi:10.1137/18M1223666}}.

\bibitem{dmtcs/BelmonteLM22}
R{\'{e}}my Belmonte, Michael Lampis, and Valia Mitsou.
\newblock Defective coloring on classes of perfect graphs.
\newblock {\em Discret. Math. Theor. Comput. Sci.}, 24, 2022.
\newblock \href {https://doi.org/10.46298/dmtcs.4926}
  {\path{doi:10.46298/dmtcs.4926}}.

\bibitem{algorithmica/BetzlerBBNU14}
Nadja Betzler, Hans~L. Bodlaender, Robert Bredereck, Rolf Niedermeier, and
  Johannes Uhlmann.
\newblock On making a distinguished vertex of minimum degree by vertex
  deletion.
\newblock {\em Algorithmica}, 68(3):715--738, 2014.
\newblock \href {https://doi.org/10.1007/s00453-012-9695-6}
  {\path{doi:10.1007/s00453-012-9695-6}}.

\bibitem{dam/BetzlerBNU12}
Nadja Betzler, Robert Bredereck, Rolf Niedermeier, and Johannes Uhlmann.
\newblock On bounded-degree vertex deletion parameterized by treewidth.
\newblock {\em Discret. Appl. Math.}, 160(1-2):53--60, 2012.
\newblock \href {https://doi.org/10.1016/j.dam.2011.08.013}
  {\path{doi:10.1016/j.dam.2011.08.013}}.

\bibitem{tcs/BetzlerU09}
Nadja Betzler and Johannes Uhlmann.
\newblock Parameterized complexity of candidate control in elections and
  related digraph problems.
\newblock {\em Theor. Comput. Sci.}, 410(52):5425--5442, 2009.
\newblock \href {https://doi.org/10.1016/j.tcs.2009.05.029}
  {\path{doi:10.1016/j.tcs.2009.05.029}}.

\bibitem{toct/BonamyKPSW19}
Marthe Bonamy, Lukasz Kowalik, Michal Pilipczuk, Arkadiusz Socala, and Marcin
  Wrochna.
\newblock Tight lower bounds for the complexity of multicoloring.
\newblock {\em {ACM} Trans. Comput. Theory}, 11(3):13:1--13:19, 2019.
\newblock \href {https://doi.org/10.1145/3313906} {\path{doi:10.1145/3313906}}.

\bibitem{iwpec/BorradaileL16}
Glencora Borradaile and Hung Le.
\newblock Optimal dynamic program for r-domination problems over tree
  decompositions.
\newblock In {\em 11th International Symposium on Parameterized and Exact
  Computation, {IPEC} 2016}, volume~63 of {\em LIPIcs}, pages 8:1--8:23.
  Schloss Dagstuhl - Leibniz-Zentrum f{\"{u}}r Informatik, 2016.
\newblock \href {https://doi.org/10.4230/LIPIcs.IPEC.2016.8}
  {\path{doi:10.4230/LIPIcs.IPEC.2016.8}}.

\bibitem{jgt/ChoiE19}
Ilkyoo Choi and Louis Esperet.
\newblock Improper coloring of graphs on surfaces.
\newblock {\em J. Graph Theory}, 91(1):16--34, 2019.
\newblock \href {https://doi.org/10.1002/jgt.22418}
  {\path{doi:10.1002/jgt.22418}}.

\bibitem{jgt/CowenCW86}
Lenore~J. Cowen, Robert Cowen, and Douglas~R. Woodall.
\newblock Defective colorings of graphs in surfaces: Partitions into subgraphs
  of bounded valency.
\newblock {\em J. Graph Theory}, 10(2):187--195, 1986.
\newblock \href {https://doi.org/10.1002/jgt.3190100207}
  {\path{doi:10.1002/jgt.3190100207}}.

\bibitem{books/CyganFKLMPPS15}
Marek Cygan, Fedor~V. Fomin, Lukasz Kowalik, Daniel Lokshtanov, D{\'{a}}niel
  Marx, Marcin Pilipczuk, Michal Pilipczuk, and Saket Saurabh.
\newblock {\em Parameterized Algorithms}.
\newblock Springer, 2015.
\newblock \href {https://doi.org/10.1007/978-3-319-21275-3}
  {\path{doi:10.1007/978-3-319-21275-3}}.

\bibitem{jacm/CyganKN18}
Marek Cygan, Stefan Kratsch, and Jesper Nederlof.
\newblock Fast hamiltonicity checking via bases of perfect matchings.
\newblock {\em J. {ACM}}, 65(3):12:1--12:46, 2018.
\newblock \href {https://doi.org/10.1145/3148227} {\path{doi:10.1145/3148227}}.

\bibitem{talg/CyganNPPRW22}
Marek Cygan, Jesper Nederlof, Marcin Pilipczuk, Michal Pilipczuk, Johan M.~M.
  van Rooij, and Jakub~Onufry Wojtaszczyk.
\newblock Solving connectivity problems parameterized by treewidth in single
  exponential time.
\newblock {\em {ACM} Trans. Algorithms}, 18(2):17:1--17:31, 2022.
\newblock \href {https://doi.org/10.1145/3506707} {\path{doi:10.1145/3506707}}.

\bibitem{tcs/CyganP10}
Marek Cygan and Marcin Pilipczuk.
\newblock Exact and approximate bandwidth.
\newblock {\em Theor. Comput. Sci.}, 411(40-42):3701--3713, 2010.
\newblock \href {https://doi.org/10.1016/j.tcs.2010.06.018}
  {\path{doi:10.1016/j.tcs.2010.06.018}}.

\bibitem{Diestel17}
Reinhard Diestel.
\newblock {\em Graph Theory}, volume 173 of {\em Graduate texts in
  mathematics}.
\newblock Springer, 2017.
\newblock \href {https://doi.org/10.1007/978-3-662-53622-3}
  {\path{doi:10.1007/978-3-662-53622-3}}.

\bibitem{dmtcs/DubloisLP21}
Louis Dublois, Michael Lampis, and Vangelis~Th. Paschos.
\newblock New algorithms for mixed dominating set.
\newblock {\em Discret. Math. Theor. Comput. Sci.}, 23(1), 2021.
\newblock \href {https://doi.org/10.46298/dmtcs.6824}
  {\path{doi:10.46298/dmtcs.6824}}.

\bibitem{tcs/DubloisLP22}
Louis Dublois, Michael Lampis, and Vangelis~Th. Paschos.
\newblock Upper dominating set: Tight algorithms for pathwidth and
  sub-exponential approximation.
\newblock {\em Theor. Comput. Sci.}, 923:271--291, 2022.
\newblock \href {https://doi.org/10.1016/j.tcs.2022.05.013}
  {\path{doi:10.1016/j.tcs.2022.05.013}}.

\bibitem{jcss/FellowsGMN11}
Michael~R. Fellows, Jiong Guo, Hannes Moser, and Rolf Niedermeier.
\newblock A generalization of nemhauser and trotter's local optimization
  theorem.
\newblock {\em J. Comput. Syst. Sci.}, 77(6):1141--1158, 2011.
\newblock \href {https://doi.org/10.1016/j.jcss.2010.12.001}
  {\path{doi:10.1016/j.jcss.2010.12.001}}.

\bibitem{soda/FockeMINSSW23}
Jacob Focke, D{\'{a}}niel Marx, Fionn~Mc Inerney, Daniel Neuen, Govind~S.
  Sankar, Philipp Schepper, and Philip Wellnitz.
\newblock Tight complexity bounds for counting generalized dominating sets in
  bounded-treewidth graphs.
\newblock In {\em Proceedings of the 2023 {ACM-SIAM} Symposium on Discrete
  Algorithms, {SODA} 2023}, pages 3664--3683. {SIAM}, 2023.
\newblock \href {https://doi.org/10.1137/1.9781611977554.ch140}
  {\path{doi:10.1137/1.9781611977554.ch140}}.

\bibitem{talg/FockeMR22}
Jacob Focke, D{\'{a}}niel Marx, and Pawel Rzazewski.
\newblock Counting list homomorphisms from graphs of bounded treewidth: Tight
  complexity bounds.
\newblock {\em ACM Trans. Algorithms}, 20(2), 2024.
\newblock \href {https://doi.org/10.1145/3640814} {\path{doi:10.1145/3640814}}.

\bibitem{dam/Fujito98}
Toshihiro Fujito.
\newblock A unified approximation algorithm for node-deletion problems.
\newblock {\em Discret. Appl. Math.}, 86(2-3):213--231, 1998.
\newblock \href {https://doi.org/10.1016/S0166-218X(98)00035-3}
  {\path{doi:10.1016/S0166-218X(98)00035-3}}.

\bibitem{ciac/Fujito17}
Toshihiro Fujito.
\newblock Approximating bounded degree deletion via matroid matching.
\newblock In {\em Algorithms and Complexity - 10th International Conference,
  {CIAC} 2017}, volume 10236 of {\em Lecture Notes in Computer Science}, pages
  234--246, 2017.
\newblock \href {https://doi.org/10.1007/978-3-319-57586-5\_20}
  {\path{doi:10.1007/978-3-319-57586-5\_20}}.

\bibitem{talg/GanianHKOS22}
Robert Ganian, Thekla Hamm, Viktoriia Korchemna, Karolina Okrasa, and Kirill
  Simonov.
\newblock The fine-grained complexity of graph homomorphism parameterized by
  clique-width.
\newblock {\em ACM Trans. Algorithms}, 2024.
\newblock \href {https://doi.org/10.1145/3652514} {\path{doi:10.1145/3652514}}.

\bibitem{algorithmica/GanianKO21}
Robert Ganian, Fabian Klute, and Sebastian Ordyniak.
\newblock On structural parameterizations of the bounded-degree vertex deletion
  problem.
\newblock {\em Algorithmica}, 83(1):297--336, 2021.
\newblock \href {https://doi.org/10.1007/s00453-020-00758-8}
  {\path{doi:10.1007/s00453-020-00758-8}}.

\bibitem{stacs/GroenlandMNS22}
Carla Groenland, Isja Mannens, Jesper Nederlof, and Krisztina Szil{\'{a}}gyi.
\newblock Tight bounds for counting colorings and connected edge sets
  parameterized by cutwidth.
\newblock In {\em 39th International Symposium on Theoretical Aspects of
  Computer Science, {STACS} 2022}, volume 219 of {\em LIPIcs}, pages
  36:1--36:20. Schloss Dagstuhl - Leibniz-Zentrum f{\"{u}}r Informatik, 2022.
\newblock \href {https://doi.org/10.4230/LIPIcs.STACS.2022.36}
  {\path{doi:10.4230/LIPIcs.STACS.2022.36}}.

\bibitem{algorithmica/HanakaKLOS20}
Tesshu Hanaka, Ioannis Katsikarelis, Michael Lampis, Yota Otachi, and Florian
  Sikora.
\newblock Parameterized orientable deletion.
\newblock {\em Algorithmica}, 82(7):1909--1938, 2020.
\newblock \href {https://doi.org/10.1007/s00453-020-00679-6}
  {\path{doi:10.1007/s00453-020-00679-6}}.

\bibitem{networks/HavetKS09}
Fr{\'{e}}d{\'{e}}ric Havet, Ross~J. Kang, and Jean{-}S{\'{e}}bastien Sereni.
\newblock Improper coloring of unit disk graphs.
\newblock {\em Networks}, 54(3):150--164, 2009.
\newblock \href {https://doi.org/10.1002/net.20318}
  {\path{doi:10.1002/net.20318}}.

\bibitem{dam/JaffkeJ23}
Lars Jaffke and Bart M.~P. Jansen.
\newblock Fine-grained parameterized complexity analysis of graph coloring
  problems.
\newblock {\em Discret. Appl. Math.}, 327:33--46, 2023.
\newblock \href {https://doi.org/10.1016/j.dam.2022.11.011}
  {\path{doi:10.1016/j.dam.2022.11.011}}.

\bibitem{dam/KatsikarelisLP19}
Ioannis Katsikarelis, Michael Lampis, and Vangelis~Th. Paschos.
\newblock Structural parameters, tight bounds, and approximation for (k,
  r)-center.
\newblock {\em Discret. Appl. Math.}, 264:90--117, 2019.
\newblock \href {https://doi.org/10.1016/j.dam.2018.11.002}
  {\path{doi:10.1016/j.dam.2018.11.002}}.

\bibitem{dam/KatsikarelisLP22}
Ioannis Katsikarelis, Michael Lampis, and Vangelis~Th. Paschos.
\newblock Structurally parameterized d-scattered set.
\newblock {\em Discret. Appl. Math.}, 308:168--186, 2022.
\newblock \href {https://doi.org/10.1016/j.dam.2020.03.052}
  {\path{doi:10.1016/j.dam.2020.03.052}}.

\bibitem{icalp/Lampis14}
Michael Lampis.
\newblock Parameterized approximation schemes using graph widths.
\newblock In {\em Automata, Languages, and Programming - 41st International
  Colloquium, {ICALP} 2014}, volume 8572 of {\em Lecture Notes in Computer
  Science}, pages 775--786. Springer, 2014.
\newblock \href {https://doi.org/10.1007/978-3-662-43948-7\_64}
  {\path{doi:10.1007/978-3-662-43948-7\_64}}.

\bibitem{siamdm/Lampis20}
Michael Lampis.
\newblock Finer tight bounds for coloring on clique-width.
\newblock {\em {SIAM} J. Discret. Math.}, 34(3):1538--1558, 2020.
\newblock \href {https://doi.org/10.1137/19M1280326}
  {\path{doi:10.1137/19M1280326}}.

\bibitem{esa/LampisV23}
Michael Lampis and Manolis Vasilakis.
\newblock Structural parameterizations for two bounded degree problems
  revisited.
\newblock In {\em 31st Annual European Symposium on Algorithms, {ESA} 2023},
  volume 274 of {\em LIPIcs}, pages 77:1--77:16. Schloss Dagstuhl -
  Leibniz-Zentrum f{\"{u}}r Informatik, 2023.
\newblock \href {https://doi.org/10.4230/LIPIcs.ESA.2023.77}
  {\path{doi:10.4230/LIPIcs.ESA.2023.77}}.

\bibitem{lindstrom_1965}
Bernt Lindstr\"{o}m.
\newblock On a combinatorial problem in number theory.
\newblock {\em Canadian Mathematical Bulletin}, 8(4):477–490, 1965.
\newblock \href {https://doi.org/10.4153/CMB-1965-034-2}
  {\path{doi:10.4153/CMB-1965-034-2}}.

\bibitem{talg/LokshtanovMS18}
Daniel Lokshtanov, D{\'{a}}niel Marx, and Saket Saurabh.
\newblock Known algorithms on graphs of bounded treewidth are probably optimal.
\newblock {\em {ACM} Trans. Algorithms}, 14(2):13:1--13:30, 2018.
\newblock \href {https://doi.org/10.1145/3170442} {\path{doi:10.1145/3170442}}.

\bibitem{soda/LokshtanovMRSZ21}
Daniel Lokshtanov, Pranabendu Misra, M.~S. Ramanujan, Saket Saurabh, and Meirav
  Zehavi.
\newblock Fpt-approximation for {FPT} problems.
\newblock In {\em Proceedings of the 2021 {ACM-SIAM} Symposium on Discrete
  Algorithms, {SODA} 2021}, pages 199--218. {SIAM}, 2021.
\newblock \href {https://doi.org/10.1137/1.9781611976465.14}
  {\path{doi:10.1137/1.9781611976465.14}}.

\bibitem{jco/McCloskyH12}
Benjamin McClosky and Illya~V. Hicks.
\newblock Combinatorial algorithms for the maximum k-plex problem.
\newblock {\em J. Comb. Optim.}, 23(1):29--49, 2012.
\newblock \href {https://doi.org/10.1007/s10878-010-9338-2}
  {\path{doi:10.1007/s10878-010-9338-2}}.

\bibitem{issac/Moenck76}
Robert~T. Moenck.
\newblock Practical fast polynomial multiplication.
\newblock In {\em Proceedings of the third {ACM} Symposium on Symbolic and
  Algebraic Manipulation, {SYMSAC} 1976}, pages 136--148. {ACM}, 1976.
\newblock \href {https://doi.org/10.1145/800205.806332}
  {\path{doi:10.1145/800205.806332}}.

\bibitem{jco/MoserNS12}
Hannes Moser, Rolf Niedermeier, and Manuel Sorge.
\newblock Exact combinatorial algorithms and experiments for finding maximum
  k-plexes.
\newblock {\em J. Comb. Optim.}, 24(3):347--373, 2012.
\newblock \href {https://doi.org/10.1007/s10878-011-9391-5}
  {\path{doi:10.1007/s10878-011-9391-5}}.

\bibitem{dam/NishimuraRT05}
Naomi Nishimura, Prabhakar Ragde, and Dimitrios~M. Thilikos.
\newblock Fast fixed-parameter tractable algorithms for nontrivial
  generalizations of vertex cover.
\newblock {\em Discret. Appl. Math.}, 152(1-3):229--245, 2005.
\newblock \href {https://doi.org/10.1016/j.dam.2005.02.029}
  {\path{doi:10.1016/j.dam.2005.02.029}}.

\bibitem{siamcomp/OkrasaR21}
Karolina Okrasa and Pawel Rzazewski.
\newblock Fine-grained complexity of the graph homomorphism problem for
  bounded-treewidth graphs.
\newblock {\em {SIAM} J. Comput.}, 50(2):487--508, 2021.
\newblock \href {https://doi.org/10.1137/20M1320146}
  {\path{doi:10.1137/20M1320146}}.

\bibitem{ipl/OkunB03}
Michael Okun and Amnon Barak.
\newblock A new approach for approximating node deletion problems.
\newblock {\em Inf. Process. Lett.}, 88(5):231--236, 2003.
\newblock \href {https://doi.org/10.1016/j.ipl.2003.08.005}
  {\path{doi:10.1016/j.ipl.2003.08.005}}.

\bibitem{cocoa/RamanSS08}
Venkatesh Raman, Saket Saurabh, and Sriganesh Srihari.
\newblock Parameterized algorithms for generalized domination.
\newblock In {\em Combinatorial Optimization and Applications, Second
  International Conference, {COCOA} 2008}, volume 5165 of {\em Lecture Notes in
  Computer Science}, pages 116--126. Springer, 2008.
\newblock \href {https://doi.org/10.1007/978-3-540-85097-7\_11}
  {\path{doi:10.1007/978-3-540-85097-7\_11}}.

\bibitem{stoc/Schaefer78}
Thomas~J. Schaefer.
\newblock The complexity of satisfiability problems.
\newblock In {\em Proceedings of the 10th Annual {ACM} Symposium on Theory of
  Computing}, pages 216--226. {ACM}, 1978.
\newblock \href {https://doi.org/10.1145/800133.804350}
  {\path{doi:10.1145/800133.804350}}.

\bibitem{tcs/TakahashiUK95}
Atsushi Takahashi, Shuichi Ueno, and Yoji Kajitani.
\newblock Mixed searching and proper-path-width.
\newblock {\em Theor. Comput. Sci.}, 137(2):253--268, 1995.
\newblock \href {https://doi.org/10.1016/0304-3975(94)00160-K}
  {\path{doi:10.1016/0304-3975(94)00160-K}}.

\bibitem{dam/Tovey84}
Craig~A. Tovey.
\newblock A simplified np-complete satisfiability problem.
\newblock {\em Discret. Appl. Math.}, 8(1):85--89, 1984.
\newblock \href {https://doi.org/10.1016/0166-218X(84)90081-7}
  {\path{doi:10.1016/0166-218X(84)90081-7}}.

\bibitem{jgaa/GeffenJKM20}
Bas A.~M. van Geffen, Bart M.~P. Jansen, Arnoud A. W.~M. de~Kroon, and Rolf
  Morel.
\newblock Lower bounds for dynamic programming on planar graphs of bounded
  cutwidth.
\newblock {\em J. Graph Algorithms Appl.}, 24(3):461--482, 2020.
\newblock \href {https://doi.org/10.7155/jgaa.00542}
  {\path{doi:10.7155/jgaa.00542}}.

\bibitem{csr/Rooij21}
Johan M.~M. van Rooij.
\newblock A generic convolution algorithm for join operations on tree
  decompositions.
\newblock In {\em Computer Science - Theory and Applications - 16th
  International Computer Science Symposium in Russia, {CSR} 2021}, volume 12730
  of {\em Lecture Notes in Computer Science}, pages 435--459. Springer, 2021.
\newblock \href {https://doi.org/10.1007/978-3-030-79416-3\_27}
  {\path{doi:10.1007/978-3-030-79416-3\_27}}.

\bibitem{cocoon/Xiao16}
Mingyu Xiao.
\newblock A parameterized algorithm for bounded-degree vertex deletion.
\newblock In {\em Computing and Combinatorics - 22nd International Conference,
  {COCOON} 2016}, volume 9797 of {\em Lecture Notes in Computer Science}, pages
  79--91. Springer, 2016.
\newblock \href {https://doi.org/10.1007/978-3-319-42634-1\_7}
  {\path{doi:10.1007/978-3-319-42634-1\_7}}.

\bibitem{jcss/Xiao17}
Mingyu Xiao.
\newblock On a generalization of nemhauser and trotter's local optimization
  theorem.
\newblock {\em J. Comput. Syst. Sci.}, 84:97--106, 2017.
\newblock \href {https://doi.org/10.1016/j.jcss.2016.08.003}
  {\path{doi:10.1016/j.jcss.2016.08.003}}.

\end{thebibliography}

\end{document}